    \theoremstyle{acmdefinition}
  \theoremstyle{acmplain}
  \newtheorem{theorem*}{Theorem}
  \newtheorem{conjecture*}{Conjecture}}
\numberwithin{equation}{section}
\let\rel\mathbf                 %
\let\clo\mathscr                %
\let\equals\approx              %
\let\fro\leftarrow              %
\let\bind\sharp
\let\tensor\otimes
\let\incl\hookrightarrow
\let\iff\Leftrightarrow
\let\rcirc\circ
\newcommand{\False}{\bot}     %
\DeclareMathOperator{\ar}{ar}
\DeclareMathOperator{\im}{im}
\DeclareMathOperator{\supp}{supp}
\DeclareMathOperator{\CSP}{CSP}
\DeclareMathOperator{\PCSP}{PCSP}
\DeclareMathOperator{\PMC}{PMC}
\DeclareMathOperator{\Pol}{Pol}
\DeclareMathOperator{\SA}{SA}
\DeclareMathOperator{\FPC}{\textsf{FPC}}
\DeclareMathOperator{\FPR}{\textsf{FPR}}
\DeclareMathOperator{\Free}{{\rel F}}
\newcommand{\leqsub}[1]{\mathrel{\leq_{#1}}}
\newcommand{\leqdl}{\leqsub{\textsf{\normalshape DL}}}
\newcommand{\leqcons}[1][k]{\leqsub{#1\text{\normalshape -cons}}}
\newcommand{\leqarc}{\leqsub{\textsf{\normalshape AC}}}
\newcommand{\ac}{\kappa_{\text{arc}}}
\newcommand{\conv}{{\text{conv}}}
\newcommand{\NP}{\textsf{NP}}
\newcommand{\Ptime}{\textsf{P}}
\newcommand{\Datalog}{Datalog\texorpdfstring{$^\cup$}{} }
\newcounter{commentno}
\title{Local consistency as a reduction between constraint satisfaction problems}
\author{Víctor Dalmau}
\email{victor.dalmau@upf.edu}
\affiliation{%
  \institution{Universitat Pompeu Fabra}
  \city{Barcelona}
  \country{Spain}
}
\author{Jakub Opršal}
\email{j.oprsal@bham.ac.uk}
\affiliation{%
  \institution{University of Birmingham}
  \city{Birmingham}
  \country{UK}
}
\keywords{constraint satisfaction problem, Datalog, Karp reduction, polymorphism}
\begin{abstract}
  We study the use of local consistency methods as reductions between constraint satisfaction problems (CSPs), and promise version thereof, with the aim to classify these reductions in a similar way as the algebraic approach classifies gadget reductions between CSPs. This research is motivated by the requirement of more expressive reductions in the scope of promise CSPs. While gadget reductions  are enough to provide all necessary hardness in the scope of (finite domain) non-promise CSP, in promise CSPs a wider class of reductions needs to be used.

  We provide a general framework of reductions, which we call \emph{consistency reductions}, that covers most (if not all) reductions recently used for proving NP-hardness of promise CSPs. We prove some basic properties of these reductions, and provide the first steps towards understanding the power of consistency reductions by characterizing a fragment associated to arc-consistency in terms of polymorphisms of the template. In addition to showing hardness, consistency reductions can also be used to provide feasible algorithms by reducing to a fixed tractable (promise) CSP, for example, to solving systems of affine equations. In this direction, among other results, we describe the well-known Sherali--Adams hierarchy for CSP in terms of a consistency reduction to linear programming.
\end{abstract}
\thanks{This is a full version of a paper published at LICS 2024 \cite{DalmauO24}.\par
This project has received funding from the European Research Council (ERC) under the European Union's Horizon 2020 research and innovation programme (grant agreement No 714532).
This project has received funding from the European Union’s Horizon 2020 research and innovation programme under the Marie Skłodowska-Curie Grant Agreement No 101034413.
Jakub Opršal was also supported by the UK EPSRC grant EP/R034516/1.
Victor Dalmau was supported by the MICIN under grants PID2019-109137GB-C22 and PID2022-138506NB-C22, and the Maria de Maeztu program (CEX2021-001195-M)}
\begin{document}

\maketitle

\begin{acks}
  A part of this paper is based on an unpublished note by Marcin Wrochna \cite{Wro22} which contains the characterisation of the arc-con\-sis\-tency reduction and several observations and statements that served as a ground for research presented in this paper.

  The second author would also like to thank Anuj Dawar, Joanna Ochremiak, Adam Ó Conghaile, Antoine Mottet, and Manuel Bodirsky for inspiring discussions.
  We would also like to thank the reviewers of this paper for careful reading and their insightful remarks, and all who read a previous version of this paper and provided useful feedback.
\end{acks}

\section{Introduction}
  \label{sec:introduction}

Is it possible to find a mathematical invariant that characterises when one computational problem reduces to another by a po\-ly\-no\-mial-time \emph{(Karp) reduction}? This question is far beyond our current understanding as it would imply a characterisation of polynomial-time solvable problems, possibly resolving the \Ptime\@ vs.~\NP\@ problem. The present paper is a contribution to the effort to provide a partial answer to this question by characterising certain \emph{well-structured efficient reductions} between structured problems.

A characterisation of a subclass of log-space reductions between \emph{constraint satisfaction problems (CSPs)} is the core of a theory referred to as \emph{the algebraic approach} to the CSP.
The development of this theory started with the work of Jeavons, Cohen, and Gyssens \cite{JCG97}; Bulatov, Jeavons, and Krokhin \cite{BJK05}, and after 20 years of active research provided the celebrated \emph{dichotomy theorem} proven by Bulatov \cite{Bul17} and Zhuk \cite{Zhu20} after being conjectured by Feder and Vardi \cite{FV98}.
The scope of the theory and the dichotomy theorem are finite-template CSPs. Finite-template CSPs can be formulated as deciding whether there is a homomorphism from a given structure to a fixed finite (relational) structure (called \emph{template}), or alternatively as deciding whether the template satisfies a primitive-positive sentence; the atoms occurring in this sentence are called \emph{constraints}.
By varying the template one can encode a large family of computational problems, including SAT (and its variants), graph $k$-colouring, or solving systems of linear equations over a finite field. Many of these problems naturally arise in various settings.
A general theorem \cite{BJK05} characterises when one such problem can be reduced to another using a simple \emph{gadget reduction} by the means of the \emph{polymorphisms} of the template (see also Theorem~\ref{thm:gadget-characterisation} below).
In the context of CSPs and this paper, gadget reductions are those reductions that replace each constraint of the input instance with a `gadget' consisting of several constraints over the output language (see Section~\ref{sec:gadgets}). This is a special case of a more general concept in complexity theory where one only requires that every bit of the output depends on a bounded number of bits of the input.
The dichotomy theorem can be formulated as follows: For each finite template, either its CSP can be proven to be \NP-complete via a gadget reduction from graph $3$-colouring \cite{BJK05}, or there is a polynomial time algorithm solving it \cite{Bul17,Zhu20}.

The motivation of the present paper stems from two failures of gadget reductions.
Firstly, their inability to sufficiently explain the tractability side of the dichotomy; we would like to see that all tractable CSPs (assuming $\Ptime \neq \NP$) reduce to a small class of problems that are tractable for an obvious reason (e.g., solving systems of linear equations over finite fields). This is not the case for gadget reductions, and a lot of effort has been put into understanding the hardest (w.r.t.\ gadget reductions) tractable CSPs (see, e.g., Barto, Brady, Bulatov, Kozik, and Zhuk \cite{BartoBBKZ21}, or Barto, Bodor, Kozik, Mottet, and Pinsker \cite{BartoBKMP23}). In this paper, we suggest an alternative approach that might make this work unnecessary.

Secondly, gadget reductions fail to explain \NP-hardness in a slightly more general setting of \emph{promise constraint satisfaction problems}.
In the promise version of the problem, each instance comes with two versions of each constraint, a \emph{stronger} and a \emph{weaker} one. The goal is then decide between two (disjoint, but not complementary) cases: all stronger constraints can be satisfied, or not even the weaker constraints can be satisfied. A prominent problem described in this scope is \emph{approximate graph colouring} which asks, e.g., to decide between graphs that are 3-colourable and those that are not even 6-colourable.
More formally, a template of a promise CSP consists of a pair of structures, such that the first structure maps homomorphically to the second. The goal is to decide, given as input a third structure, whether the input maps homomorphically to the first structure of the template, or it does not map homomorphically even to the second.
A systematic study of promise CSPs has been started by Austrin, Guruswami, and Håstad \cite{AGH17} where a certain promise version of SAT was considered, and continued in a series of papers including generalisations of the algebraic approach by Brakensiek and Guruswami \cite{BG21} and Barto, Bulín, Krokhin, and Opršal \cite{BBKO21}.
In particular, \cite{BBKO21} characterises gadget reductions in terms of (generalised) \emph{polymorphisms} in a similar way as \cite{BJK05} characterises gadget reductions in the scope of (classic) CSPs (see Theorem~\ref{thm:gadget-characterisation} for a formal statement).
Many of the \NP-hardness results in the promise setting \cite[e.g.,][]{DRS05,Hua13,AGH17,KO19,FKOS19,WZ20,BWZ21,BBB21,NZ22} cannot be adequately explained by gadget reductions. These hardness results usually rely on variations of the PCP theorem of Arora and Safra \cite{AS98} which is better suited for a different, more analytical, approximation version of CSPs. Indeed, new reductions for promise CSPs have been called for by Barto, Bulín, Krokhin, and Opršal \cite{BBKO21}, and by Barto and Kozik \cite{BK22}.
The goal is to find a class of reductions that would explain all of these hardness results, and subsequently extend the hardness to other interesting promise CSPs, e.g., the approximate graph colouring.

\subsection*{Our contributions}

The primary objective of the present paper is to offer a novel perspective on the complexity landscape of CSPs and promise CSPs that is focused on reductions rather than algorithms or algebraic hardness conditions.
We aim to identify a class of efficient reductions that addresses both the tractability side of CSPs and the hardness side of promise CSPs.
To this end, the suggested class of reductions should have the following properties:
\begin{itemize}
  \item \emph{robustness}, informally meaning that the notion of reduction remains invariant under minor variations in its definition.
  \item \emph{expressiveness}, that is, it should significantly extend the capabilities of gadget reductions.
  \item \emph{characterisability}, meaning that it must be possible to characterise, in some terms similar to the algebraic characterisation of gadget reductions, whether a given (promise) CSP reduces to another given (promise) CSP.
\end{itemize}

We introduce a new general class of reductions between (promise) CSPs, referred to as \emph{$k$-consistency reductions} or \emph{\Datalog reductions}, and supplement it with several results that showcase the robustness and expressiveness of this family of reductions. Furthermore, we take initial steps towards its characterisability. A fully fledged theory for $k$-consistency reductions would have several overreaching consequences and, hence, it falls beyond the scope of the present paper, which aims solely to initiate it.

\begin{figure*}[t]
  \begin{subfigure}{.33\textwidth}
    \[
      \includegraphics[height=1.75in]{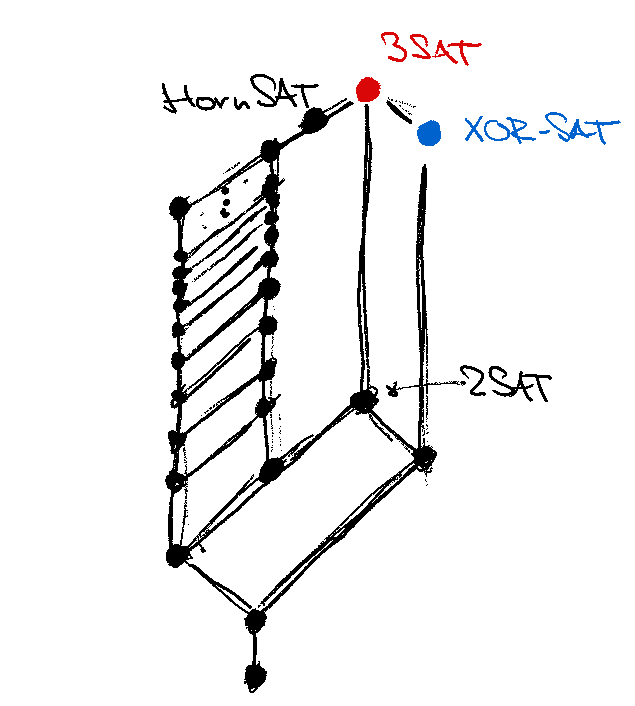}
    \]
    \caption{gadget reductions}
    \label{fig:bv20}
  \end{subfigure}%
  \begin{subfigure}{.33\textwidth}
    \[
      \includegraphics[height=1.75in]{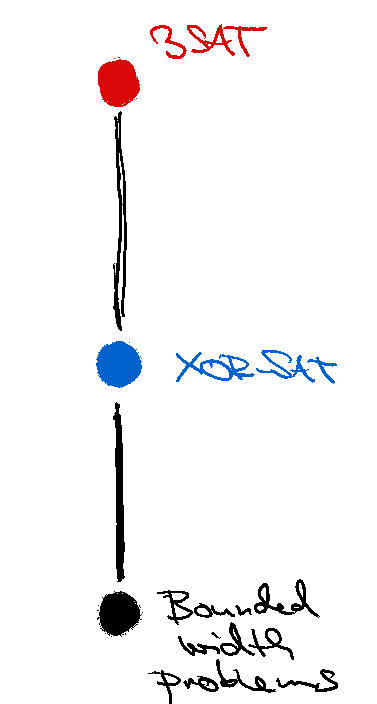}
    \]
    \caption{\Datalog reductions}
    \label{fig:boolean-datalog}
  \end{subfigure}%
  \begin{subfigure}{.33\textwidth}
    \[
      \includegraphics[height=1.4in]{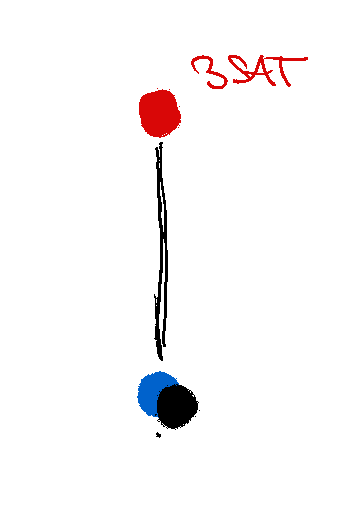}
    \]
    \caption{polynomial-time reductions}
    \label{fig:dichotomy}
  \end{subfigure}
  \caption{Boolean CSPs ordered by three classes of reductions with examples of problems belonging to some classes.}
  \label{fig:boolean}
\end{figure*}

\subsubsection*{Robustness}

There are two key properties of a robust class of reductions in the scope of (promise) CSPs: \emph{monotonicity} with respect to the homomorphism preorder, which we require in order to preserve the structure of the problem, and \emph{closure under composition}, which is a natural requirement that is essential in order to chain reductions together. Most of the reductions used in the scope of promise CSPs are monotone, but not all compose; in particular, the reductions provided by Barto and Kozik \cite{BK22} do not compose (see Example~\ref{ex:bk-do-not-compose}).

First, we describe our suggested class of reductions as logical interpretations in a monotone fragment of the fixed-point logic, namely in the logic of Datalog programs endowed with an operator that allows to take a disjoint union of relations.
We call them \emph{\Datalog reductions}.

It is relatively easy to show that this reduction has the first two properties. %
In particular, we show that \Datalog reductions compose (Theorem~\ref{thm:ddatalog-composes}).
Moreover, we show that \Datalog reductions encapsulate gadget reductions, which is a generalisation of a result of Atserias, Bulatov, and Dawar \cite[Theorem 5]{ABD09} to the promise setting, although we use disjoint unions instead of parameters.

\begin{theorem*}[Theorem~\ref{thm:gadget-is-ddatalog} informally]
  \label{thm:main-2}
  Every gadget reduction is expressible as a \Datalog reduction up to homomorphic equivalence.
\end{theorem*}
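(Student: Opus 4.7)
My plan is to describe, given a gadget reduction, an explicit \Datalog interpretation whose output is homomorphically equivalent to the one produced by the gadget reduction. The overall strategy mirrors the parameter-based construction of Atserias, Bulatov, and Dawar \cite{ABD09}, but with the parameters replaced by the disjoint-union operator, which is exactly what fits the template of \Datalog reductions used in this paper.

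First, I fix notation: a gadget reduction associates to each constraint symbol $R$ of the input template a ``gadget'', i.e., a finite pp-formula (equivalently, an instance) over the output template, together with a designated tuple of external variables whose length equals the arity of $R$; the remaining variables of the gadget are called internal. Applied to an input instance $\rel X$, the reduction produces an instance whose universe consists of (i) all elements of $\rel X$ and (ii) for each constraint $R(\bar x)$ of $\rel X$, a fresh private copy of the internal variables of the $R$-gadget; its constraints are obtained by taking, for each constraint $R(\bar x)$ of $\rel X$ and each atom of the $R$-gadget, the corresponding atom with external variables substituted by the components of $\bar x$ and internal variables replaced by the elements of the corresponding private copy.

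Second, I describe the \Datalog interpretation. Its output universe is constructed by the disjoint-union operator: one component is the input universe itself (accounting for external variables), and for every constraint symbol $R$ whose gadget has internal-variable set $I_R$, I add one further component isomorphic to (input $R$-tuples) $\times I_R$. Taking such fresh copies per constraint instance is precisely what the disjoint-union operator in \Datalog is designed to express. The relations of the output are then defined by non-recursive Datalog rules, one per atom of each gadget: for each atom $S(\bar y)$ in the $R$-gadget, include the rule that, whenever $R(\bar x)$ holds in the input, places in $S$ the tuple whose external entries are read off from $\bar x$ and whose internal entries are the corresponding elements of the disjoint-union component indexed by $(R, \bar x)$. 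Each such rule is expressible in the \Datalog fragment.

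Third, I verify correctness. By construction the output of the interpretation is isomorphic (not merely homomorphically equivalent) to the instance produced by the gadget reduction, so the claimed equivalence is immediate; any slack of ``up to homomorphic equivalence'' is only needed to absorb trivial encoding differences, such as the fact that the interpretation represents the universe as an explicit disjoint union. Monotonicity under homomorphism of inputs, as well as the polynomial size bound, follow from the fact that each rule is existential positive and the disjoint-union component size is bounded by $|I_R|$ times the number of $R$-constraints. The main obstacle is bookkeeping: one has to match exactly the formal definition of \Datalog interpretation given later in the paper, handle possible coincidences of external variables in a gadget (captured by equality of arguments in the rules), and make sure that gadgets containing coincident external/internal variables are handled uniformly. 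With that matching in place, the argument is entirely structural and reduces to writing down the program associated with the gadget.
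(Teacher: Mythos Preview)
Your proposal handles a strictly weaker notion of gadget than the one the paper uses, and this is a genuine gap. In the paper (Definition~\ref{def:gadget}), a gadget replaces each element of type $t$ by an entire $\Sigma$-structure $\rel D_t^\gamma$, not by a single element, and each constraint by a structure $\rel R^\gamma$ together with homomorphisms $p_{R,i}^\gamma\colon \rel D_{\ar_R(i)}^\gamma \to \rel R^\gamma$ that need not be injective and may have overlapping images. The equality constraints introduced in step~(3) of the gadget replacement can therefore chain together across the whole instance, and their collapse in step~(4) computes a reflexive symmetric transitive closure. Your model, in which the input universe is kept verbatim as ``external variables'' and internal variables are fresh per constraint, corresponds only to the special case where each $\rel D_t^\gamma$ is a singleton and the maps $p_{R,i}^\gamma$ are injective with pairwise disjoint images; Example~\ref{ex:strict-gadget} (the reduction from $\CSP(\rel K_2)$ to $\CSP(\rel K_\infty)$, where $\rel D_t^\gamma = \rel K_2$) already falls outside your framework.

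Consequently, two of your claims fail in general. First, non-recursive rules do not suffice: the paper's proof (Lemma~\ref{lem:gadget-into-datalog}) introduces binary IDBs $I_{h_1,h_2}$ and recursive rules computing their reflexive symmetric transitive closure, precisely to track which gadget elements get identified; Example~\ref{ex:2-colouring-reduces} shows that width-$3$ recursion is unavoidable even for $\CSP(\rel K_2)$. Second, the output is only homomorphically equivalent, not isomorphic, to $\gamma(\rel A)$ --- the Datalog interpretation never actually collapses elements, it only relates them, so one gets a structure that maps onto $\gamma(\rel A)$ and admits a section (cf.\ Example~\ref{ex:k2-to-kinfty-datalog-1}). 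The ``up to homomorphic equivalence'' in the statement is doing real work, not just absorbing encoding differences.
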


Secondly, we provide a combinatorial counterpart of \Datalog reductions. This alternative construction has two ingredients: gadget reductions and the local consistency algorithm for (promise) CSP. Local consistency is ubiquitous in constraint programming (see for example \cite{RBW06}). Intuitively, the consistency algorithm (with a parameter $k$) iteratively adds all newly implied constraints that can be derived by simultaneously considering $k$ variables.  From a theoretical standpoint, it has been more common to consider a decision version (which we call \emph{consistency test}) that merely outputs `no' if the set of derived constraints contains an unsatisfiable constraint and `yes' otherwise. More specifically, several papers have investigated which CSPs are solvable by such algorithms, leading to a complete characterisation in the non-promise case by Barto and Kozik \cite{BK14}. Datalog programs, too, can be used to solve (promise) CSPs by turning them into a decision algorithm choosing a nullary goal predicate. It was soon established by Kolaitis and Vardi \cite{KV00} that the consistency test and Datalog programs have the same power in solving non-promise CSPs. This equivalence extends to the promise setting (set $\mathscr A$ to be the union of positive and negative instances, and $B$ to be the first structure of the template in the statement of \cite[Theorem 4.8]{KV00}).

Getting back to reductions, we refer to the combination of a gadget reduction and the local consistency algorithm with parameter $k$ as the \emph{$k$-consistency reduction}. We show that, indeed, consistency reductions and Datalog reductions have the same power.

\begin{theorem*}[Theorem~\ref{thm:canonical-width} informally] \label{thm:main-3}
  A promise CSP reduces to a second promise CSP via a \Datalog reduction if and only if it reduces via the $k$-consistency reduction.
\end{theorem*}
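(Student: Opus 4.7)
The plan is to prove the two implications separately; one direction is essentially an assembly of pieces already available, while the other is the substantive one.

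For the ``consistency implies \Datalog'' direction, I would decompose a $k$-consistency reduction into its two factors---a gadget reduction followed by the $k$-consistency closure---and show each is itself a \Datalog reduction. Theorem~\ref{thm:gadget-is-ddatalog} handles the gadget factor up to homomorphic equivalence. For the $k$-consistency factor, the closure can be written as a Datalog program in the usual way: introduce one IDB predicate per potential derived constraint, each of arity at most $k$, with rules that mirror the derivation steps performed by the consistency algorithm, and define the output relations by these IDB predicates. Invoking Theorem~\ref{thm:ddatalog-composes} then yields that the composition is a \Datalog reduction.

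For the converse, the plan is constructive. Given a \Datalog interpretation $\Phi$, let $k$ be its width, i.e., the maximal number of distinct variables appearing in any rule. I would define a gadget reduction $G$ whose variables correspond to $k$-tuples of elements of the input, with constraints encoding the EDB facts together with enough scaffolding to let the $k$-consistency algorithm simulate rule-by-rule derivations of the original Datalog program. The standard correspondence between $k$-consistency and width-$k$ Datalog, due to Kolaitis and Vardi~\cite{KV00} and extended to the promise setting as noted in the introduction, then ensures that $k$-consistency applied to $G(X)$ derives precisely the tuples that the IDB predicates of $\Phi$ derive on $X$, so that the result is homomorphically equivalent to $\Phi(X)$.

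The main obstacle I expect is the disjoint union operator that distinguishes \Datalog from plain Datalog: a Datalog program defines relations in a single vocabulary, whereas a \Datalog reduction may combine derived relations side by side. I would handle this by tagging the components contributing to a disjoint union with fresh unary marks in the gadget $G$, so that the $k$-consistency closure on the tagged gadget cannot derive spurious tuples across copies but nevertheless produces the intended disjoint union within each component. The last step is to verify that the construction respects the promise---yes-instances map to yes-instances and no-instances to no-instances---which follows from the monotonicity of both \Datalog and $k$-consistency reductions along the homomorphism preorder, together with the fact that the homomorphic equivalence obtained above preserves membership in both the positive and negative parts of the target promise CSP.
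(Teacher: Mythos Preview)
Your proposal rests on a misconception about what the $k$-consistency reduction is. In the paper (Definition~\ref{alg:k-consistency-reduction}), $\kappa_k^{\rel A, \rel B}$ is a single, fully determined map: given $\rel X$, it first enforces $k$-consistency of $\rel X$ with respect to $\rel A$, and then applies the \emph{universal} gadget built from powers of $\rel B$. There is no preliminary gadget $G$ for you to design. So for the hard direction (\Datalog $\Rightarrow$ $k$-consistency), the task is not to engineer some gadget-then-consistency pipeline that mimics the given interpretation $\Phi$; it is to show that this one fixed reduction $\kappa_k^{\rel A, \rel B}$ is sound whenever \emph{some} width-$k$ \Datalog reduction is. Your construction of $G$ and the appeal to Kolaitis--Vardi on $G(\rel X)$ therefore do not address the actual claim. (Your ``consistency $\Rightarrow$ \Datalog'' direction also has the two factors in the wrong order, but that side is the easy one.)

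What the paper does instead is a universality argument: for any width-$k$ Datalog interpretation $\phi$ and any gadget $\gamma$ with $\gamma\phi(\rel A)\to\rel B$, one has $\gamma\phi(\rel X)\to\kappa_k^{\rel A,\rel B}(\rel X)$ for every $\rel X$ (Corollary~\ref{cor:soundness-of-consistency}). This is obtained by factoring through label cover, comparing the resulting label cover instance with $\kappa_k^{\rel A}(\rel X)$ via Lemma~\ref{lem:universality-of-kcons} (this is where the width-$k$ Datalog/consistency correspondence actually enters, but at the level of \emph{relations} rather than Boolean acceptance), and then invoking the universal property of $\pi_{\rel B}$ (Lemma~\ref{lem:universality-of-gadget}). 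Soundness of $\kappa_k^{\rel A,\rel B}$ then follows immediately. Note also that your worry about disjoint unions evaporates in this framing: a \Datalog reduction is $\upsilon\circ\phi$ with $\upsilon$ a union gadget, and union gadgets are gadgets, so the pair already has the shape $\gamma\circ\phi$ required by Theorem~\ref{thm:canonical-width}(1); no tagging is needed.
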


Alternatively, the preceding theorem can be viewed as an extension of the well-known equivalence between the decision versions of consistency and Datalog to the realm of reductions. The proof, however, faces two primary technical challenges. Firstly, we are dealing here with the full-fledged version (instead of the Boolean version) of consistency and Datalog. And, more importantly, it is additionally necessary to understand how Datalog interpretations interact with gadget reductions.%

\subsubsection*{Expressiveness}

We present several examples illustrating that $k$-consistency reductions effectively tackle, at least to some extent, the aforementioned shortcomings of gadget reductions.

Firstly, it is conceivable and consistent with our current understanding of the complexity of promise CSPs, that a promise CSP is \NP-hard if and only if it allows a $k$-consistency reduction from an \NP-hard CSP, e.g., graph $3$-colouring.
For example, it is relatively easy to check that the reductions of Barto and Kozik \cite{BK22} are covered by $k$-consistency reductions, and consequently many results about \NP-hardness of promise CSPs can be explained by a $k$-consistency reduction including the hardness of $5$ colouring $3$-colourable graphs \cite{BBKO21} and the \NP-hardness in \cite{DRS05,AGH17,KO19,FKOS19,BBB21,NZ22}.
Furthermore, the reduction used by Wrochna and Živný \cite{WZ20} to provide \NP-hardness of colouring $k$-colourable graphs with $\binom k{\lfloor k/2\rfloor} - 1$ colours is a \Datalog reduction from approximate graph colouring which is \NP-hard by a previous result of Huang \cite{Hua13}. The result of Huang is one of the few exceptions; we currently do not know whether it can be also explained by a $k$-consistency reduction from graph $3$-colouring.

Second, several well-known hierarchies of relaxation algorithms for promise CSPs fall within our framework. A comprehensive look at hierarchies is given in Section~\ref{sec:hierarchies} where we describe several hierarchies in terms of $k$-consistency reductions to a fixed problem. This includes the \emph{bounded width hierarchy} (i.e., the hierarchy of promise CSPs expressible by Datalog), and the \emph{Sherali--Adams hierarchy} of linear programming for promise CSPs.
Furthermore, we show that the Sherali--Adams hierarchy of (promise) CSPs coincides with the hierarchy of (promise) CSPs reducible to linear programming by a $k$-consistency reduction (Theorem~\ref{thm:sa}).
A different framework to study hierarchies of relaxations of algorithms  was recently introduced by Ciardo and Živný \cite{CZ23-hierarchies}. The Ciardo-Živný hierarchy is always subsumed by the hierarchy of $k$-consistency reductions to a fixed problem, and in some cases, e.g., the Sherali--Adams hierarchy, both hierarchies align perfectly.

Since consistency reductions contain, as a particular case, gadget reductions, it follows that any class of (promise) CSPs closed under consistency reductions can be, at least theoretically, described in terms of polymorphisms. This is a very desirable feature as polymorphisms lie at the heart of the algebraic approach. As a direct consequence of our findings it follows that both the class of promise CSPs solvable by the consistency test and the class of promise CSPs solvable by some level of the Sherali--Adams relaxation are closed under gadget reductions. While the former result had previously been established \cite[Lemma 7.5]{BBKO21}, the latter result, in its full generality, is new. Prior to our work it had only been known for the particular case of non-promise CSPs where it can be derived by combining the results of Thapper and Živný \cite{TZ17} and Barto and Kozik \cite{BK14}.

Finally, we get to the tractability side of CSPs.
In the Boolean case (i.e., for templates with two elements) it is known that every tractable CSP is reducible via a gadget reduction to one of three fundamental tractable problems (HornSAT, 2SAT, and XOR-SAT) \cite{Schaefer78,BV20}. If we replace gadget reductions by \Datalog reductions then every tractable Boolean CSP can be reduced to a single problem, namely, solving systems of linear equations over $\mathbb Z_2$ (XOR-SAT).
See also Figure~\ref{fig:boolean} comparing gadget, Datalog$^\cup$, and general polynomial-time reductions on Boolean CSPs.\footnote{Fig.~\ref{fig:bv20} is due to Bodirsky and Vucaj \cite{BV20}; and Fig.~\ref{fig:dichotomy} is due to Bulatov \cite{Bul17} and Zhuk \cite{Zhu20} and assumes $\Ptime \neq \NP$.}

However, a finite-template tractable CSP does not necessarily reduce to XOR-SAT via a \Datalog reduction (when template is allowed to have more than two elements). Yet, this seems to be caused by the fact that that the problem has the ability to `express' equations modulo a prime different from $2$. This leads us to believe that a finite-template tractable CSP can perhaps reduce via a \Datalog reduction to solving systems of linear equations over $\mathbb Z_n$ for some integer $n$ (which is divisible by all the primes $p$ such that equations mod $p$ are `expressible' by the problem). Given that, in turn, this problem reduces via a \Datalog reduction to solving systems of linear equations over integers, we put forth the following conjecture, supported by evidence presented in Section~\ref{sec:hierarchies}.

\begin{conjecture*}[Conjecture~\ref{the-conjecture}]
  Every finite template CSP either allows a \Datalog reduction from 3-colouring (and hence it is \NP-hard), or is reducible to solving systems of affine equations over $\mathbb Z$ via a \Datalog reduction (and hence it is in \Ptime).
\end{conjecture*}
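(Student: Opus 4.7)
The plan is to use the Bulatov--Zhuk dichotomy theorem as a starting point: every finite-template CSP is either \NP-hard or in \Ptime{}, so I would just need to produce the appropriate Datalog reduction in each case. The \NP-hard side is essentially a corollary of existing machinery. By the Bulatov--Jeavons--Krokhin algebraic approach, any \NP-hard finite-template CSP admits a gadget reduction from $3$-colouring (in fact, from any fixed \NP-hard template, such as 1-in-3-SAT). Then Theorem~\ref{thm:gadget-is-ddatalog} upgrades this to a \Datalog reduction up to homomorphic equivalence, which is exactly what the first alternative of the conjecture requires.

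The tractable side is the genuine difficulty. I would try to leverage the structural decomposition underlying Zhuk's algorithm. Every tractable template has a WNU (equivalently cyclic) polymorphism, and the recursive algorithm peels off ``linear'' (Maltsev-like, few-subpowers) congruences and ``central'' / bounded-width pieces. For a template of bounded width, the \Datalog reduction to affine equations over $\mathbb Z$ is essentially free: the consistency test itself decides the instance, so the reduction can output a canonical solvable system on \emph{yes}-instances and a canonical unsolvable one on \emph{no}-instances (realised inside \Datalog using its bottom-up semantics). For templates whose behaviour is governed by a Maltsev polymorphism, one would directly encode the affine structure as linear equations: equations modulo a prime $p$ lift to equations over $\mathbb Z$ by standard rescaling, and the instance-to-instance translation is first-order (hence in particular a gadget reduction), so Theorem~\ref{thm:gadget-is-ddatalog} again gives what we want.

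The main obstacle is putting these two ingredients together for arbitrary tractable templates, where bounded-width and Maltsev behaviour are interleaved. Here I would try to translate each recursive step of Zhuk's algorithm into a \Datalog reduction on \emph{instances}: use local consistency to detect and propagate the bounded-width constraints (producing the derived relations directly in the reduction) and use a gadget-style encoding on each ``linear'' congruence block. The composition theorem (Theorem~\ref{thm:ddatalog-composes}) then lets one chain these steps together into a single \Datalog reduction to a system of affine equations over $\mathbb Z$. The delicate point is to ensure that the single output system faithfully records the interaction between different congruence blocks, rather than producing a disjunction of systems; this is where a careful bookkeeping of derived relations and a uniform choice of modulus (divisible by all primes appearing) becomes crucial.

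I expect the hardest technical step to be turning the ``case analysis'' of Zhuk's algorithm into a uniform \Datalog program: Zhuk's procedure makes global, non-local choices (reducing to a subalgebra, guessing a central block), and encoding such choices monotonically inside \Datalog$^\cup$ is nontrivial. A plausible route is to first establish an algebraic characterisation of \Datalog-reducibility to affine equations over $\mathbb Z$ in terms of polymorphisms, analogous to the characterisation of arc-consistency reductions outlined earlier in the paper, and then verify that every Taylor template satisfies the polymorphism condition so obtained. Without such an intrinsic characterisation, proving the conjecture directly would require a case-by-case simulation of the full Zhuk algorithm by a \Datalog reduction, which is likely to be very involved.
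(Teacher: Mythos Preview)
The statement you are attempting to prove is a \emph{conjecture} in the paper, not a theorem; the paper offers no proof and explicitly presents it as open. So there is no ``paper's own proof'' to compare against, and your proposal should be read as a research plan rather than a proof.

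On the \NP-hard side you are correct, and the paper makes the same observation: hardness under the Bulatov--Zhuk dichotomy is witnessed by a gadget reduction from $3$-colouring, and Theorem~\ref{thm:gadget-is-ddatalog} upgrades any gadget reduction to a \Datalog reduction. This alternative is not the issue.

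The tractable side, however, has a genuine gap. Your treatment of the Mal'cev case is oversimplified: a Mal'cev polymorphism does \emph{not} mean the template is (or gadget-reduces to) equations over some $\mathbb Z_p$. Tractable Mal'cev templates include modules over arbitrary finite rings and more exotic algebras, and there is no known ``direct encoding of the affine structure as linear equations'' over $\mathbb Z$ for these. The paper says this explicitly: it singles out the Mal'cev case as the key unresolved step and remarks that the few-subpowers algorithm ``does not compare well with consistency reductions.'' Your claim that the instance-to-instance translation is ``first-order (hence in particular a gadget reduction)'' is also incorrect as stated: first-order interpretations may use negation and are not monotone, whereas gadget reductions are; only the primitive-positive fragment corresponds to gadgets.

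Your fallback plan---simulate Zhuk's algorithm step by step inside \Datalog, or first find a polymorphism characterisation of \Datalog-reducibility to $\CSP(\mathbb Z)$---is a reasonable description of what a proof would have to accomplish, and you yourself flag the non-local choices in Zhuk's procedure as the obstruction. But that is precisely why the statement is a conjecture: no one currently knows how to do either of those things, and your proposal does not supply the missing idea.
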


The CSPs that are conjectured to be \NP-hard in the above conjecture include all CSPs hard under the Bulatov--Zhuk dichotomy (since gadget reductions are subsumed by \Datalog reductions due to Theorem~\ref{thm:main-2}). Naturally, this implies that the two classes coincide unless $\Ptime = \NP$. A direct proof (without assuming $\Ptime \neq \NP$) that every CSP allowing a \Datalog reduction from 3-colouring also allows a gadget reduction from 3-colouring is not known.
We have decided to express the dichotomy using \Datalog reductions instead of gadget reductions because this formulation generalizes better to promise CSPs:
It is consistent with our knowledge that the hardness condition given in the conjecture covers all \NP-complete finite-template promise CSPs, while there are \NP-hard promise CSPs (e.g., 5-colouring 3-colourable graphs) whose hardness can be shown via a \Datalog reduction but not via a gadget reduction from 3-colouring.
Likewise, the tractability side of our conjecture gives an algorithm which generalises in a straight-forward way to promise CSPs unlike the algorithms of Bulatov and Zhuk.

\subsubsection*{Characterisability}

We describe a characterisation of a special case of $k$-consistency reduction, referred to as \emph{arc-consistency reduction} (which is obtained by replacing the $k$-consistency algorithm in the $k$-consistency reduction by a weaker \emph{arc-consistency}), in terms of a certain transformation $\omega$ (see Definition~\ref{def:omega}) of \emph{polymorphisms} of the templates (see, e.g., \cite[Section 2.2]{BBKO21} for the necessary standard definitions).

\begin{theorem*} [Theorem~\ref{thm:unary} informally]
  A promise CSP with polymorphisms $\clo A$ reduces to a promise CSP with polymorphisms $\clo B$ via the arc-consistency reduction if and only if there is a `weak' minion homomorphism $\omega(\clo B) \to \clo A$.
\end{theorem*}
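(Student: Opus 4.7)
The plan is to mimic the classical algebraic characterisation of gadget reductions (Theorem~\ref{thm:gadget-characterisation}), adapting the free-structure argument to accommodate the arc-consistency closure step. Both directions use variants of the same machinery built around a suitably chosen `universal' instance, but the $\omega$-construction intervenes to encode what the arc-consistency pass is able to deduce.

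For the implication ``$\omega(\clo B) \to \clo A$ implies that an arc-consistency reduction exists'', given a weak minion homomorphism $\xi : \omega(\clo B) \to \clo A$, I would construct the reduction on an input $\mathbf X$ in two stages. First, a canonical gadget reduction replaces each variable of $\mathbf X$ by a tuple of variables over $\mathbf B$ and each constraint by a specific gadget; then arc-consistency is run on the result. Completeness is arranged so that any homomorphism $\mathbf X \to \mathbf A$ lifts to a homomorphism of the gadget output into $\mathbf B$, which is preserved by the arc-consistency closure. Soundness is the delicate part: assuming that the output maps to $\mathbf B'$ after arc-consistency, the resulting assignment can be packaged into an element of $\omega(\Pol(\mathbf B, \mathbf B'))$ which, via $\xi$, produces a homomorphism $\mathbf X \to \mathbf A'$, contradicting the assumption $\mathbf X \not\to \mathbf A'$.

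For the converse, I would apply the hypothetical reduction to a `universal' instance, namely a free structure $\Free_{\mathbf A}(n)$ (or a variant thereof tailored to arc-consistency) that captures $n$-ary polymorphisms of $(\mathbf A, \mathbf A')$ as homomorphisms into $\mathbf A'$. Running the reduction on this instance produces, after arc-consistency closure, a structure whose homomorphisms into $\mathbf B'$ are encoded precisely by elements of $\omega(\Pol(\mathbf B, \mathbf B'))$; tracking how variables are mapped along the reduction and reassembling the information at each variable of the original universal instance yields the desired weak minion homomorphism. Standard features of arc-consistency, in particular its compatibility with products and its commutation with homomorphisms, are used throughout.

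The main obstacle is pinpointing the definition of $\omega$ so that both directions go through cleanly: $\omega$ needs to be rich enough to account for the information propagated by arc-consistency (for necessity), yet structured enough that a minion-like homomorphism into $\clo A$ suffices to simulate a polymorphism of $(\mathbf A, \mathbf A')$ on the relevant free structure (for sufficiency). I expect the bulk of the technical work to lie in verifying that the correspondence between $\omega(\clo B)$-elements and arc-consistent solutions respects the appropriate minor identities, and in identifying why the \emph{weak} notion of minion homomorphism, rather than the usual one, naturally emerges from the arc-consistency closure.
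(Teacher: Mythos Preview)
Your high-level shape is right: the proof does adapt the free-structure machinery behind Theorem~\ref{thm:gadget-characterisation}, and the necessity direction does go through a universal instance (specifically $\pi_{\rel A}\omega(\clo B)$, the free structure of $\omega(\clo B)$ generated by $\rel A$, with a compactness step to handle infinity). But two things are off.

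First, the arc-consistency reduction is a \emph{fixed} procedure (Definition~\ref{def:arc-consistency}): one first runs arc-consistency on the input $\rel X$ with respect to the template $\rel A$, producing a label cover instance, and \emph{then} applies the universal gadget $\pi_{\rel B}$. You describe it in the opposite order (gadget first, arc-consistency on the result), and you speak of ``constructing the reduction'' as though you were free to design it. You are not; the theorem asks precisely when this given procedure is sound. Relatedly, the ``weak'' in the informal statement does not refer to a new notion of minion homomorphism: it is an ordinary minion homomorphism, just out of the modified minion $\omega(\clo B)$ rather than $\clo B$ itself.

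Second, and this is the real gap, you are missing the one lemma that makes everything work cleanly: the adjunction
\[
  \ac(\rel S) \to \clo M \quad\Longleftrightarrow\quad \rel S \to \omega(\clo M)
\]
for any label cover instance $\rel S$ and minion $\clo M$ (Lemma~\ref{lem:arc-adjoint}). This is the exact sense in which $\omega$ is dual to arc-consistency enforcement, and it replaces all the ad-hoc ``packaging into an element of $\omega(\clo B)$'' and ``tracking how variables are mapped'' that you sketch. With it in hand, soundness is a straight chain: $\pi_{\rel B}\ac\rho^{\rel A}(\rel X) \to \rel B'$ gives $\ac\rho^{\rel A}(\rel X) \to \clo B$ (Lemma~\ref{lem:3.16}), hence $\rho^{\rel A}(\rel X) \to \omega(\clo B)$ by the adjunction, hence $\rho^{\rel A}(\rel X) \to \clo A$ via $\xi$, hence $\rel X \to \rel A'$. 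Necessity runs the same chain backwards on finite substructures of $\pi_{\rel A}\omega(\clo B)$. Your closing remark that the bulk of the work lies in verifying that correspondences ``respect the appropriate minor identities'' is a symptom of not having isolated this adjunction; once it is stated, its proof is a few lines, and no product-compatibility or commutation-with-homomorphisms arguments are needed.
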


In comparison, the above-mentioned theorem of Barto, Bulín, Krokhin, and Opršal \cite{BBKO21} claims that there is a gadget reduction between the two problems if and only if there is a minion homomorphism $\clo B \to \clo A$.

We note that a different characterisation of the arc-consistency reduction was recently provided by Mottet \cite[Theorem 12]{Mottet24}.

\subsection*{Organisation of the paper}

We present basic definitions and notation in Section~\ref{sec:preliminaries}. This is followed by the definition of our framework and formulation of the basic theorems of our theory in Section~\ref{sec:local-reductions}. Section~\ref{sec:hierarchies} describes several hierarchies in terms of $k$-consistency reductions, and discusses evidence for our conjecture about tractability of finite-template CSPs. In Section~\ref{sec:arc-consistency}, we describe the characterisation of the arc-consistency reduction. And finally, in Section~\ref{sec:conclusion}, we outline a few directions for future research, and possible applications of the theory developed in the present paper.

The appendices contain detailed proofs of claims made in the main part of the paper, further results, and several examples establishing wider context of consistency reductions. We present appendices in slightly different order from the sections of the main part to account for interdependencies. Appendix~\ref{app:preliminaries} contains some definitions omitted from the main part, and Appendices~\ref{app:proofs-iii}, \ref{app:proofs-v}, and \ref{app:proofs-iv} contain extended materials for Sections~\ref{sec:local-reductions}, \ref{sec:arc-consistency}, and \ref{sec:hierarchies}, respectively.

\section{Preliminaries}
  \label{sec:preliminaries}

Many of the constructions described in this paper can be defined using several different languages, e.g., an instance of a CSP can be viewed as a list of constraints, a logical formula, or a relational structure. In this paper, the structural perspective takes precedence.

Sets are generally denoted by capital letters, and integers by lower-case letters. We often do not explicitly specify that such symbols represent sets, or integers when it is clear from the context, e.g., when the symbol appears in an index.
We denote by $[n]$ the set of the first $n$ positive integers, and by $A^X$ the set of all functions $f\colon X \to A$. We use the notation $\pi^{-1}(y)$ for the set of preimages of $y$ under $\pi$, i.e., the set $\{ x\in X \mid \pi(x) = y\}$.  We will denote entries in a tuple $a$ by lower indices, i.e., $a = (a_1, \dots, a_k)$, and occasionally view a tuple $a\in A^k$ as a function $a\colon [k] \to A$.
The identity map on $X$ is denoted by $1_X$.

\subsection{Constraint satisfaction problems}
  \label{sec:csp-prelim}

To settle on the playing field, we formally define the constraint satisfaction problem (CSP), and its promise variant. We refer to Barto, Krokhin, and Willard \cite{BKW17} for a deeper exposition of the algebraic theory of CSPs, and to Krokhin and Opršal \cite{KO22} for more background and examples of promise CSPs. We use the standard notation and definitions for promise CSPs as in Barto, Bulín, Krokhin, and Opršal \cite{BBKO21}.

The \emph{uniform CSP} is a decision problem whose instance consists of variables $v, w, \dots$ each of which is allowed to attain values from some domain and constraints, each of which involves several variables. The goal is to decide whether there is an assignment of values to variables that simultaneously satisfies all constraints.
In the present paper, we are concerned with fixed-template CSP which restricts the possible constraints in the input. This is formalised as a homomorphism problem between relational structures where the target structure is fixed.
To allow for different domains for distinct variables we allow structures to be typed (multisorted).
The multisorted framework does not increase the complexity of the proofs, merely the complexity of notation. Moreover, as noted in \cite[Section B.4]{BK22}, it provides a natural setting for the theory since it places \emph{(layered) label cover problem}, whose promise version is used as an intermediate step in reductions (both in this paper and in the algebraic characterisation of gadget reductions \cite[Theorem 3.12]{BBKO21}), into the framework of CSPs. In addition, several of the key concepts introduced here (e.g., the extension of Datalog with disjoint unions) are more naturally expressed in the multisorted setting.

\begin{definition} \label{def:structure}
  A \emph{(multisorted) relational signature} consists of a set of types (sorts) $\{t, s, \dots \}$ and a set of relational symbols $\{R, S, \dots\}$, where each symbol has a fixed \emph{arity} which is formally a tuple of types. We denote this tuple by $\ar_R, \ar_S, \dots$, and often say simply that $R$ is of arity $k$ where $k$ is the length of $\ar_R$.

  A \emph{relational structure} $\rel A$ of such a signature is a tuple
  \[
    \rel A = (A_t, A_s, \dots; R^{\rel A}, S^{\rel A}, \dots)
  \]
  where $A_t$, $A_s$, etc.\ are sets and $R^{\rel A}$, $S^{\rel A}$, etc.\ are relations on these sets of the corresponding arity, e.g., $R^{\rel A} \subseteq A_{\ar_R(1)} \times \dots \times A_{\ar_R(k)}$ assuming that $\ar_R$ has length $k$.
\end{definition}

We will usually assume that both the signature and the structures are finite, i.e., the signature contains only finitely many types and symbols, and all of the sets $A_t$, etc.\ are finite.
We call any $a\in A_t$ an \emph{element of $\rel A$ of type $t$} (we use sort and type interchangeably), and we do not consider two elements of $\rel A$ to be equal unless they have the same type. For example, if $\rel A$ has two domains $A_t = A_s = \{0, 1\}$, we will view it as a structure with four elements. We will write $x\in \rel A$ for `$x$ is an element of $\rel A$'.
All structures in this paper are typed relational structures, and signature of such structures is given implicitly.

A \emph{homomorphism} between two structures is a mapping that maps elements of one structure to elements of the second structure that preserves types and all relations. A necessary requirement for a homomorphism to exist is that both structures have the same signature.
Formally, a \emph{homomorphism} from a structure $\rel A$ to a structure $\rel B$ is a collection $f$ of maps $f_t\colon A_t \to B_t$, one for each type $t$, such that for each relational symbol $R$ and $(a_1, \dots, a_k) \in R^{\rel A}$, we have
\[
  (f_{\ar_R(1)}(a_1), \dots, f_{\ar_R(k)}(a_k)) \in R^{\rel B}.
\]
We write $f(a)$ instead of $f_t(a)$ if the type $t$ of $a$ is clear from the context, we write $f\colon \rel A \to \rel B$ if $f$ is a homomorphism, and $\rel A \to \rel B$ if such a homomorphism exists.

The \emph{(typed) constraint satisfaction problem with template $\rel A$} is a decision problem whose goal is: given a structure $\rel X$, decide whether $\rel X \to \rel A$. We denote this problem by $\CSP(\rel A)$.
Before we define promise CSP, recall that a promise problem is a generalisation of a decision problem where the \emph{yes} and \emph{no} instances form disjoint, but not necessarily complementary, sets.

\begin{definition}
  A pair of structures $\rel A$, $\rel A'$ with $\rel A \to \rel A'$ is called a \emph{promise template}. We assume that $\rel A$ in a promise template is finite unless explicitly specified otherwise.
  Given such a template, a \emph{(typed) promise constraint satisfaction problem} is a promise problem whose goal is, on an input $\rel X$ which is a structure (of the same signature), to output \emph{yes} if $\rel X \to \rel A$, and \emph{no} if $\rel X \not\to \rel A'$ (since $\rel A \to \rel A'$, these two cases are disjoint).
  We denote this problem by $\PCSP(\rel A, \rel A')$.
\end{definition}

A \emph{constraint} of an instance $\rel X$ of $\PCSP(\rel A, \rel A')$ is an expression of the form $(v_1, \dots, v_k) \in R^{\rel X}$ which we formally view as a pair $((v_1, \dots, v_k), R)$ where $v_1, \dots, v_k$ are elements of $X$ and $R$ is a relational symbol such that the expression is satisfied.
The tuple $(v_1, \dots, v_k)$ is called the \emph{scope} of the constraint.

Alternatively, one could define a promise CSP as a promise \emph{search} problem where the goal would be, given $\rel X$ that is promised to map to $\rel A$ via a homomorphism, to find a homomorphism $\rel X \to \rel A'$. It is currently not known whether these two versions of promise CSPs are of equivalent complexity. This paper focuses on the decision version of promise CSP, although a few results (mostly with some caveats) apply also to the search version. Finally, note that $\CSP(\rel A)$ is $\PCSP(\rel A, \rel A)$, and therefore every result about promise CSPs is applicable to CSPs.

Finally, we will denote by $\rel B^X$ the \emph{$X$-fold power} of a structure $\rel B$, i.e., a structure whose $t$-th domain is $B_t^X$ for each type $t$, and the relations are defined as follows: $(b_1, \dots, b_k) \in R^{\rel B^X}$ if and only if $(b_1(i), \dots, b_k(i)) \in R^{\rel B}$ for all $i\in X$.

\subsection{Reductions}
  \label{sec:reductions-prelim}

In this paper we deal with the simplest form of a reduction between promise problems. By a \emph{reduction} we mean a (usually efficiently computable) function between the instances of the two problems which maps positive instances to positive instances and negative instances to negative instances. In the case of promise CSPs, a reduction from $\PCSP(\rel A, \rel A')$ to $\PCSP(\rel B, \rel B')$ is a mapping $\psi$ between the corresponding categories of structures such that (1) if $\rel X \to \rel A$ then $\psi(\rel X) \to \rel B$, and (2) if $\rel X \not\to \rel A'$ then $\psi(\rel X) \not\to \rel B'$.
The item (1) is usually referred to as \emph{completeness}, and the item (2) as \emph{soundness}. We will usually use and prove the soundness in its converse form: if $\psi(\rel X) \to \rel B'$ then $\rel X \to \rel A'$.%
\footnote{In the case of a reduction between the search version of (promise) CSPs, it would be necessary to additionally require that this converse is witnessed by an efficiently computable function that, given a homomorphism $\psi(\rel X) \to \rel B'$, outputs a homomorphism $\rel X \to \rel A'$.}

Reductions are more general than algorithms: every decision algorithm can be viewed as a reduction to a problem with two admissible instances \emph{yes} and \emph{no} where \emph{yes} is the positive instance. We express this problem as a `trivial CSP' whose template is a structure in a very degenerate signature $\Omega$ which allows only two structures: $\Omega$ has no types and has a single nullary relational symbol $C$.  There are only two $\Omega$-structures: $\bot$ and $\top$. The structure $\bot$ is defined as the structure with empty nullary relation, and $\top$ is the structure with the non-empty nullary relation.

\begin{definition}
  The \emph{trivial CSP} is the CSP with template $\bot$.
  Note that $\bot$ is a positive instance of this CSP (since $\bot \to \bot$), and $\top$ is a negative instance (since $\top \not\to \bot$).
\end{definition}

We assume that all (promise) CSPs considered in this paper have negative instances --- this can be always ensured by adding a nullary constraint that cannot be satisfied.

\subsection{Gadget reductions}
  \label{sec:gadgets}

Gadget reductions are the more traditional reductions in the realm of finite-template CSPs and promise CSPs. They have been classified by the algebraic approach. We refer to Krokhin, Opršal, Wrochna, and Živný \cite[Section 4.1]{KOWZ23} for a detailed exposition.

A \emph{gadget replacement} is a transformation of a $\Pi$-structure $\rel X$ to a $\Sigma$-structure $\gamma(\rel X)$ that replaces every element $v \in \rel X$ by a fixed $\Pi$-structure that only depends on the type of $v$, and replaces every constraint $(v_1, \dots, v_k) \in R^{\rel X}$ by another $\Sigma$-structure that only depends on the relation $R$ identifying some of the vertices of this gadget with the elements of the gadgets introduced by replacing $v_1, \dots, v_k$ (see also \cite[Definition~2.9]{KO22}).%
\footnote{Gadgets and gadget replacements may be formalised elegantly in a categorical language: We view a signature $\Pi$ as a category whose objects are all types and all symbols of $\Pi$, and contains a morphism $p_i \colon R \to \ar_R(i)$ for each symbol $R$ of arity $k$ and $i \in [k]$. Hence a $\Pi$-structure can be viewed as a functor from $\Pi$ to the category of sets. A \emph{gadget} is then a contravariant functor from $\Pi$ to $\Sigma$-structures. Every such gadget then defines a pair of adjoint functors (see, e.g., Pultr \cite{Pul70}); \emph{gadget replacement} is the left adjoint, and the right adjoint corresponds (up to homomorphic equivalence) to \emph{pp-powers} \cite[see Definition 4.7]{BBKO21}.}
If such a gadget replacement yields a reduction, we call it a \emph{gadget reduction}. Gadget reductions are characterised using \emph{polymorphisms and minion homomorphisms} \cite[Definition 2.15]{BBKO21} (see also Definitions \ref{def:minion} and \ref{def:polymorphism-minion}).

\begin{theorem}[Barto, Bulín, Krokhin, and Opršal {\cite[Theorem 3.1]{BBKO21}}] \label{thm:gadget-characterisation}
  $\PCSP(\rel A, \rel A')$ reduces to $\PCSP(\rel B, \rel B')$ by a gadget reduction if and only if there is a minion homomorphism $\Pol(\rel B, \rel B') \to \Pol(\rel A, \rel A')$.
\end{theorem}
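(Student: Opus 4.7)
The plan is to prove the two directions separately, as they require very different techniques.

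For the easier direction ($\Leftarrow$ would be harder; let me start with $\Rightarrow$), suppose $\gamma$ is a gadget reduction. I would define a map $\xi\colon \Pol^{(n)}(\rel B, \rel B') \to \Pol^{(n)}(\rel A, \rel A')$ for each arity $n$ as follows. Given $f\colon \rel B^n \to \rel B'$, first note that the identity homomorphism $\rel A \to \rel A$ combined with completeness of $\gamma$ yields $\gamma(\rel A) \to \rel B$. For each projection $\pi_i\colon \rel A^n \to \rel A$, functoriality of the gadget replacement produces $\gamma(\pi_i)\colon \gamma(\rel A^n) \to \gamma(\rel A)$, and composition gives $n$ homomorphisms $\gamma(\rel A^n) \to \rel B$. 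Pairing them yields $\gamma(\rel A^n) \to \rel B^n$, and postcomposing with $f$ produces $\gamma(\rel A^n) \to \rel B'$. Soundness of $\gamma$ then forces $\rel A^n \to \rel A'$, which is by definition an element of $\Pol^{(n)}(\rel A, \rel A')$; this is $\xi(f)$. Verifying that $\xi$ is a minion homomorphism amounts to showing it commutes with minor-taking: for $\sigma\colon [n] \to [m]$, the induced reindexing homomorphism $\rel A^m \to \rel A^n$ produces, through $\gamma$, a commuting diagram that witnesses $\xi(f^\sigma) = \xi(f)^\sigma$.

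For the harder direction ($\Leftarrow$), suppose a minion homomorphism $\xi\colon \Pol(\rel B, \rel B') \to \Pol(\rel A, \rel A')$ is given. I would construct a gadget by working with the free structure $\Free_{\rel B,\rel B'}(\rel A)$ of $\PCSP(\rel B, \rel B')$ generated by $\rel A$: this is the $\Sigma$-structure whose elements of type $t$ are indexed by $A_t$ and whose relations encode all constraints that must hold so that homomorphisms from $\Free_{\rel B,\rel B'}(\rel A)$ into $\rel B$ (respectively $\rel B'$) correspond precisely to polymorphisms in $\Pol(\rel B, \rel B')$ with coordinates indexed by $A$. The key lemma asserts that $\Free_{\rel B,\rel B'}(\rel A) \to \rel B'$ if and only if a minion homomorphism $\xi$ exists, via a standard minion-theoretic argument. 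One then extracts an actual gadget reduction by localising the free construction to individual elements and relational tuples, that is, a gadget replacing each element of type $t$ and each constraint $R$ by appropriate substructures of $\Free_{\rel B,\rel B'}(\rel A)$. Completeness of the resulting gadget reduction follows readily from the definition; soundness is the substantial part.

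The main obstacle is the soundness verification in the ($\Leftarrow$) direction: given $\gamma(\rel X) \to \rel B'$, one must produce $\rel X \to \rel A'$, and the only tool available is the minion homomorphism $\xi$. The argument proceeds by using the homomorphism $\gamma(\rel X) \to \rel B'$ to extract, at each element of $\rel X$, a polymorphism in $\Pol(\rel B, \rel B')$; applying $\xi$ gives a polymorphism in $\Pol(\rel A, \rel A')$; and the minion-homomorphism property (compatibility with minors) is exactly what ensures these local choices glue together into a global homomorphism $\rel X \to \rel A'$. Making this gluing rigorous is where the bulk of the technical work of \cite{BBKO21} lies, and this is the step one would not want to grind through casually.
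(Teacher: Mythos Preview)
The paper does not give its own proof of this theorem; it is quoted from \cite[Theorem 3.1]{BBKO21} and used as a black box. The paper does remark (in Appendix~\ref{app:ac-proof}) that the result follows from Lemmas~\ref{lem:3.16} and~\ref{lem:4.3} together with \cite[Theorem 4.11]{KOWZ23}, and that the proof in \cite{BBKO21} proceeds by factoring both sides through the promise label cover problem $\PCSP(\clo P,\clo M)$: every $\PCSP(\rel A,\rel A')$ is gadget-interreducible with $\PCSP(\clo P,\Pol(\rel A,\rel A'))$, and gadget reductions between such problems correspond exactly to minion homomorphisms. Your $(\Leftarrow)$ sketch via free structures is essentially this approach.

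Your $(\Rightarrow)$ direction, however, has a genuine gap. You construct a specific homomorphism $\gamma(\rel A^n)\to\rel B'$ from $f$, and then invoke soundness to conclude $\rel A^n\to\rel A'$. But soundness is only an \emph{existence} statement: it tells you that \emph{some} homomorphism $\rel A^n\to\rel A'$ exists, not which one. So ``this is $\xi(f)$'' does not define a function, and the ``commuting diagram that witnesses $\xi(f^\sigma)=\xi(f)^\sigma$'' is vacuous because there is no well-defined $\xi$ to place in it. The repair is to use that gadget replacements are left adjoints: soundness of $\gamma$ is equivalent to the existence of a single homomorphism $s\colon\Omega(\rel B')\to\rel A'$ where $\Omega$ is the right adjoint, and then $\xi(f)$ can be defined as the composite $\rel A^n\to\Omega(\rel B')\xrightarrow{s}\rel A'$ obtained by transposing your $\gamma(\rel A^n)\to\rel B'$ across the adjunction. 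With $c$ and $s$ fixed once and for all, the minor-preservation check becomes a genuine diagram chase. You do not mention the adjoint, so as written the argument does not go through.
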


The gadget reduction provided by the above theorem in presence of a minion homomorphism between the polymorphisms of the templates is called a \emph{universal gadget reduction} (it should be noted that the reduction depends on the structures $\rel A$ and $\rel B$).

\section{Local consistency reductions}
  \label{sec:local-reductions}

In this section, we define the class of \Datalog and consistency reductions, we discuss some basic properties of these reductions, and provide formal statements of Theorems~\ref{thm:main-2} and \ref{thm:main-3}. We define the reductions from the logical perspective, followed by the combinatorial version and the equivalence of both definitions.

\subsection{Logical reductions}
  \label{sec:datalog} \label{sec:datalog-reduction}

We first describe the class of reductions we consider in this paper in terms of logic, or more precisely Datalog.
Since our signatures are multisorted this means that all logic formulae need to respect the types: We assume that every variable $x_1, x_2, \dots$ has an associated type and define an atomic $\Pi$-formula (where $\Pi$ is a relational signature) to be (1) $R(x_1,\dots, x_n)$ where $R$ is a $\Pi$-symbol, and the type of $x_i$ is $\ar_R(i)$ for all $i\in [k]$, or (2) an equality $x_1 = x_2$ where $x_1$ and $x_2$ are of the same type.

A \emph{Datalog program} $\phi$ with input signature $\Pi$ is constituted by a relational signature $\Delta$ satisfying $\Pi \subseteq \Delta$ (meaning that it has the same types as $\Pi$ and each symbol of $\Pi$ appears in $\Delta$ with the same arity) along with a finite collection of rules that are traditionally written in the form
\[
  t_0 \fro t_1, \dots, t_r
\]
where $t_0$, \dots, $t_r$ are atomic $\Delta$-formulae. In such a rule $t_0$ is called the \emph{head} and $t_1, \dots, t_r$ the \emph{body} of the rule. Moreover, we require that neither the symbols in $\Pi$ nor the equality appear in the head of any rule.  The symbols in $\Pi$ are called \emph{input symbols}, or \emph{EDBs} (standing for \emph{extensional database predicates}, while all the other symbols would be \emph{IDBs}, \emph{intensional database predicates}). Furthermore, one of the $\Delta$-symbols is designed as \emph{output}. \footnote{In database theory, Datalog programs very often have several output predicates (and define structures instead of relations). Such program can be viewed as a special case of a Datalog interpretation which we define later.} A Datalog program receives as input a $\Pi$-structure $\rel A$ and produces a relation with the same arity of the output predicate using standard fix-point semantics. That is, let $\rel B$ be the $\Delta$-structure computed in the following way.
\begin{enumerate}
  \item Start with setting $R^{\rel B} = R^{\rel A}$ if $R \in \Pi$ and $R^{\rel B} = \emptyset$ otherwise.
  \item If there is a rule  $t_0 \fro t_1, \dots, t_r$ and some assignment $h\colon X\rightarrow A$, where $X$ are the variables occurring in the rule, such that the atomic predicates $t_1, \dots, t_r$ hold in $\rel B$ after the substitution then include $(h(x_1), \dots, h(x_k))$ in $R^{\rel B}$ where $t_0 = R(x_1,\dots,x_k)$. Repeat until we get to a fixed point.
  \item Output the relation $R^{\rel B}\subseteq A_{\ar_R(1)} \times \dots \times A_{\ar_R(k)}$ where $R$ is the output predicate.
\end{enumerate}

We will denote the output of such a Datalog program by $\phi(\rel A)$.  Commonly in the context of CSPs, the output of a Datalog program is a nullary predicate, so that such a program outputs either \emph{true}, or \emph{false} --- we call such Datalog programs \emph{Datalog sentences}. The \emph{arity} of a Datalog program is the arity of the output predicate, and the \emph{width} of a Datalog program is the maximal number of variables in a rule.

We define \emph{Datalog interpretations} as a particular case of logical interpretations. This definition hinges on interpretation of Datalog programs as logical formulae. More precisely, a Datalog program $\phi$ whose output is a relation with arity $k$ is seen as a formula with $k$ free variables, so that $\rel A \models \phi(a_1,\dots, a_k)$ if and only if $(a_1, \dots, a_k) \in \phi(\rel A)$.

\begin{definition} \label{def:datalog-interpretation}
  Fix two relational signatures $\Sigma$ and $\Pi$. A \emph{Datalog interpretation} $\phi$ mapping $\Pi$-structures to $\Sigma$-structures (usually referred to as a Datalog interpretation of $\Sigma$ into $\Pi$) consist of: a Datalog program $\phi_t$ with input signature $\Pi$ for each $\Sigma$-type $t$, and a Datalog program $\phi_R$ also with input signature $\Pi$ for each $\Sigma$-symbol $R$, such that, for each symbol $R$ of arity $k$, the arity of $\phi_R$ is the concatenation of the arities of $\phi_{\ar_R(1)}$, \dots, $\phi_{\ar_R(k)}$.
  Such an interpretation is said to be of \emph{width at most $k$} if all $\phi_t$ and $\phi_R$ are of width at most $k$.

  The application of such a Datalog interpretation $\phi$ to a $\Pi$-struc\-ture $\rel A$ produces the $\Sigma$-struc\-ture
  \[
    \phi(\rel A) = (\phi_t(\rel A), \phi_s(\rel A), \dots;
      \phi_R(\rel A), \phi_S(\rel A), \dots),
  \]
  where, for each $\Sigma$-symbol $R$, $\phi_R(\rel A)$ is interpreted as a relation of the required arity $k$ in the natural way, i.e., it consists of all tuples
  \begin{multline*}
    ((a_{1},\dots, a_{j_1}), \dots, (a_{j_{k-1}+1}, \dots, a_{j_k})) \\
    \in \phi_{\ar_R(1)}(\rel A) \times \dots \times \phi_{\ar_R(k)}(\rel A),
  \end{multline*}
  such that $(a_{1},\dots, a_{j_k}) \in \phi_R(\rel A)$.
\end{definition}

Unless there is a clash of notation (in which case we will specify the output predicate explicitly) we shall be using $D_t$ and $R$ as output predicate of $\phi_t$ and $\phi_R$ respectively.

Since we do not allow parameters, nor inequality in Datalog interpretations, we obtain \emph{monotone} transformations in the following sense.

\begin{lemma} \label{lem:datalog-is-monotone}
  Let $\phi$ be a Datalog interpretation, and $\rel A$ and $\rel B$ structures. If $\rel A \to \rel B$, then $\phi(\rel A) \to \phi(\rel B)$.
\end{lemma}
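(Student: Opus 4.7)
The plan is to reduce the statement to a monotonicity property of a single Datalog program and then apply that property componentwise to the interpretation. The key observation is that rule bodies contain only positive atomic formulae — EDB atoms, IDB atoms, and equalities — and such formulae are preserved under homomorphisms.

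First I would prove the following auxiliary claim. Let $\psi$ be a Datalog program with input signature $\Pi$ and working signature $\Delta \supseteq \Pi$, and for a $\Pi$-structure $\rel C$ let $\rel C_\psi$ denote the $\Delta$-structure produced by the fixed-point computation. Then any homomorphism $f\colon \rel A \to \rel B$ is also a homomorphism $\rel A_\psi \to \rel B_\psi$, using the same underlying maps on each sort. This is shown by induction on the stage of the fixed-point iteration. At stage $0$, the EDBs coincide with $\rel A$ (resp.\ $\rel B$) and the IDBs are empty, so $f$ trivially extends. For the inductive step, suppose a tuple $(h(x_1),\dots,h(x_k))$ is added to $R^{\rel A_\psi}$ by a rule $R(x_1,\dots,x_k) \fro t_1,\dots,t_r$ via an assignment $h\colon X\to A$ such that $t_1,\dots,t_r$ all hold in the previous stage. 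Consider the assignment $f\circ h\colon X \to B$. By the inductive hypothesis, each atomic $t_i$ is preserved when replacing $h$ by $f\circ h$ and $\rel A_\psi$ by $\rel B_\psi$ at the previous stage (equality atoms are trivially preserved because $f$ is a function). Hence the same rule derives $(f(h(x_1)),\dots,f(h(x_k)))$ in $R^{\rel B_\psi}$, which is exactly the image of the added tuple under $f$.

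To conclude the lemma, I would apply the auxiliary claim to each component Datalog program $\phi_t$ and $\phi_R$ of the interpretation. Given $f\colon \rel A \to \rel B$, define $g_t\colon \phi_t(\rel A) \to \phi_t(\rel B)$ by componentwise application of $f$; well-definedness is exactly the auxiliary claim applied to $\phi_t$. The family $g = (g_t)_t$ is the required homomorphism $\phi(\rel A) \to \phi(\rel B)$: membership of a tuple in $R^{\phi(\rel A)}$ is, by the definition in Definition~\ref{def:datalog-interpretation}, equivalent to the concatenation of its components lying in $\phi_R(\rel A)$; applying the auxiliary claim to $\phi_R$ shows that the concatenation of the images lies in $\phi_R(\rel B)$, so the image tuple lies in $R^{\phi(\rel B)}$.

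There is no serious obstacle here; the reasoning is essentially the standard fact that Datalog is monotone because its rules are built from positive atoms. The only thing to be careful about is the notational bookkeeping in the multisorted setting — keeping track of the sort of each variable in every rule, and of the way that the arity of $\phi_R$ is the concatenation of the arities of the $\phi_{\ar_R(i)}$ — so that the componentwise action of $f$ is applied in the correct sort at every position.
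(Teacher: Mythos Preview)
Your argument is correct and is the standard one: monotonicity of a single Datalog program by induction on the derivation stage, then applied componentwise to the interpretation. The paper itself does not give a proof of this lemma; it is stated as an immediate consequence of the fact that rule bodies contain only positive atoms and equalities (no parameters, no inequality), which is exactly the observation your argument unfolds.
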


Datalog interpretations are powerful when used as reductions; they can reduce a substantial class of CSPs, including 2SAT and Horn-3SAT, to the trivial problem. More precisely, it can be observed that $\PCSP(\rel A, \rel A')$ reduces to $\CSP(\bot)$ by a Datalog reduction if and only if it is expressible in Datalog, i.e., whenever there is a Datalog sentence which is \emph{true} in all negative instances, and \emph{false} in all positive instances of $\PCSP(\rel A, \rel A')$. This is exemplified in the following example.

\begin{example}[2-colouring]
  We shall define a Datalog interpretation $\phi$ that provides a reduction $\CSP(\rel K_2) \leq \CSP(\bot)$. To define such interpretation, we only need ot provide a nullary Datalog program $\phi_C$ which is based on a well-known program solving 2-colouring (see, e.g., \cite[p.~14]{KV00}):
  \begin{align*}
    P(x, y) &\fro E(x, y) \\
    P(x, y) &\fro E(y, x) \\
    P(x, y) &\fro P(x, z), P(z, w), P(w, y) \\
    C &\fro P(x, x)
  \end{align*}
  It is straightforward to check that $C$ is derived if and only if the input graph has an odd cycle. Consequently, we get $\phi(\rel G) = \top$ if and only if $\rel G$ is not 2-colourable. Since $\top$ is the negative instance of $\CSP(\bot)$, this concludes that $\phi$ is a reduction between the two problems.
\end{example}

Further examples of Datalog interpretations providing tractability of some (promise) CSPs can be obtained through a framework of Ciardo and Živný \cite{CZ23-hierarchies}; more precisely, it may be observed that the tensor construction introduced in \cite{CZ23-hierarchies} can be also expressed as a Datalog interpretation.
For more details about tensor hierarchies, see Appendix~\ref{app:tensors}.

Let us continue with an example of a reduction used to provide NP-hardness of the approximate graph colouring.

\begin{example}[Line digraph construction] \label{ex:arc-graph}
  The \emph{line digraph construction} $\delta$, described below, was used by Wrochna and Živný \cite{WZ20} to provide hardness of some cases of approximate graph colouring. They obtained these results by iterating (i.e., composing) $\delta$ to reduce from $\PCSP(\rel K_k, \rel K_{2^{\Omega(k^{1/3})}})$ for large enough $k$ (known to be NP-hard by a result of Huang \cite{Hua13}) to $\PCSP(\rel K_k, \rel K_{b(k)})$, where $b(k) = \binom k{\lfloor k/2 \rfloor} - 1$, for all $k\geq 4$. This result improved the state-of-the-art for all $k > 5$ and matched it for $k = 5$.
  The same reduction is also used by Guruswami and Sandeep \cite{GS20} to provide conditional hardness of approximate graph colouring under $d$-to-1 conjecture, and by Ciardo and Živný \cite{CZ22-lp-aip} to provide a negative result about solvability of approximate graph colouring by a combination of Sherali--Adams with affine integer programming.

  The \emph{line digraph construction} is a Datalog interpretation from digraphs to digraphs that changes the domain of an input digraph. It can be defined by a Datalog interpretation $\delta = (\delta_V; \delta_E)$ where $\delta_V$ is the program
  \[
    V'(x, y) \fro E(x, y)
  \]
  with output $V'$, and $\delta_E$ is the program
  \[
    E'(x, y, y, z) \fro E(x, y), E(y, z)
  \]
  with output $E'$.
  The application on a digraph $\rel G$ then yields a digraph $\delta(\rel G)$ whose vertices are edges of $\rel G$, and two such `vertices' are connected by an edge if they are incident, i.e., of the form $(u,v)$ and $(v, w)$.
\end{example}

The monotone version of Datalog interpretations that we defined above cannot fully emulate gadget reductions. In particular, it cannot express taking disjoint unions of domains and relations. Atserias, Bulatov, and Dawar \cite{ABD09} use parameters in Datalog interpretations to express these disjoint unions up to a finite number of exceptions --- we note that allowing parameters (or inequality) would yield reductions that are not monotone. Instead, we extend Datalog interpretations with disjoint unions.
Extending logical interpretations to allow taking disjoint unions of definable relations is not completely unusual (see, e.g., \cite[Definition 3.1]{BojanczykK24}). Here, we take an advantage from the multisorted approach to provide a technical solution that helps us to track that the relations are well-defined.

\begin{definition} \label{def:union-gadget}
  Assume that $\Pi$ and $\Sigma$ are two relational signatures. A \emph{union gadget} $\upsilon$ is defined by a pair of mappings $(d, r)$, where $d$ maps $\Pi$-types to $\Sigma$-types and $r$ maps $\Pi$-symbols to $\Sigma$-symbols, such that for each $\Pi$-symbol $R$, we have $\ar_{r(R)} = d\circ \ar_R$.

  Such a union gadget can be then applied on a structure $\rel A$ with disjoint domains\footnote{We can always enforce disjoint domains by replacing an element $a$ of type $t$ with a pair $(a; t)$.} to produce a $\Sigma$-structure $\upsilon(\rel A)$ whose $t$-th domain is the (disjoint) union of $A_i$'s with $d(i) = t$, and similarly, $S^{\upsilon(\rel A)} = \bigcup_{R\in r^{-1}(S)} R^{\rel A}$ for each $\Sigma$-symbol $S$.
\end{definition}

It is not difficult to observe that disjoint unions are also expressible as gadget replacements. 
 
\begin{definition}
  A \emph{Datalog$^\cup$ reduction} is a composition $\upsilon\circ \phi$ where $\upsilon$ is a union gadget and $\phi$ a Datalog interpretation.
  We write
  \[
    \PCSP(\rel A, \rel A') \leqdl \PCSP(\rel B, \rel B')
  \]
  if there exists a Datalog$^\cup$ reduction between the two problems.
\end{definition}

Note that a \Datalog reduction between two single-sorted CSPs may use a multisorted structure as an intermediate step even though we are reducing between two single-sorted CSPs.

The fact that \Datalog reductions compose follows from observing that Datalog interpretations compose, and that they can be permuted with disjoint unions, which also compose. This makes the relation $\leqdl$ (on promise CSPs) transitive, and hence it defines a preorder.

\begin{theorem} \label{thm:ddatalog-composes}
  Let $\phi$ and $\chi$ be two Datalog$^\cup$ reductions such that the output signature of $\phi$ coincides with the input signature of $\chi$. Then there is a Datalog$^\cup$ reduction $\psi$ such that $\psi(\rel A)$ and $\chi\phi(\rel A)$ are isomorphic for all structures $\rel A$.
\end{theorem}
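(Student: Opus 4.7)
The plan is to decompose each reduction as $\phi = \upsilon_1 \circ \phi_1$ and $\chi = \upsilon_2 \circ \phi_2$ per the definition of a Datalog$^\cup$ reduction, and then reshape the fourfold composition $\upsilon_2 \circ \phi_2 \circ \upsilon_1 \circ \phi_1$ into the canonical form $\upsilon' \circ \phi'$. To do this I would establish three closure lemmas: that union gadgets compose to union gadgets, that Datalog interpretations compose to Datalog interpretations, and that any composition $\phi_2 \circ \upsilon_1$ of a Datalog interpretation after a union gadget can be rewritten as $\upsilon_1^\# \circ \phi_2^\#$ for a fresh union gadget $\upsilon_1^\#$ and Datalog interpretation $\phi_2^\#$. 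Given these, the theorem follows since $\chi \circ \phi = (\upsilon_2 \circ \upsilon_1^\#) \circ (\phi_2^\# \circ \phi_1)$.

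The closure of union gadgets is immediate: for $\upsilon_i = (d_i, r_i)$, the pair $(d_2 \circ d_1, r_2 \circ r_1)$ satisfies the arity condition and implements $\upsilon_2 \circ \upsilon_1$ on the nose. For the closure of Datalog interpretations $\phi_1 \colon \Pi \to \Sigma$ and $\phi_2 \colon \Sigma \to \Xi$, I would use a standard \emph{inlining} construction: for each $\Xi$-type $u$ (and analogously for each $\Xi$-symbol) translate $\phi_{2,u}$ into a $\Pi$-Datalog program by (i) replacing each $\Sigma$-variable $x$ of type $s$ by a fresh tuple $\vec y$ of $\Pi$-variables matching $\ar_{\phi_{1,s}}$, (ii) conjoining a domain-membership atom $D_s(\vec y)$ in each rule body, where $D_s$ is the output IDB of $\phi_{1,s}$, (iii) replacing each EDB atom $S(x_1, \dots, x_k)$ on a $\Sigma$-symbol $S$ by the IDB atom of $\phi_{1,S}$ applied to the corresponding concatenation of $\Pi$-tuples, and (iv) importing all rules of $\phi_{1,s}$ and $\phi_{1,S}$ with IDBs renamed to avoid collisions. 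A routine induction on fixed-point stages shows the resulting program outputs precisely $\phi_{2,u}(\phi_1(\rel A))$.

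The main technical step is the swap lemma for $\phi_2 \circ \upsilon_1$, where $\upsilon_1 = (d,r) \colon \Sigma \to \Sigma'$ and $\phi_2 \colon \Sigma' \to \Xi$. The intuition is that $\upsilon_1$ forgets, for each element, which of the several $\Sigma$-types it originated from, and we must recover this information at the Datalog level before reapplying a fresh union gadget. I would introduce an intermediate signature $\Sigma^\#$ containing a type $u^{\vec\sigma}$ for each $\Xi$-type $u$ and each $\vec\sigma \in d^{-1}(\ar_{\phi_{2,u}})$, and analogously refined symbols. Then $\phi_2^\#$ expands each rule of $\phi_{2,u}$ into finitely many $\Sigma^\#$-rules, one per consistent assignment $\vec\sigma$ of $\Sigma$-types to the $\Sigma'$-variables in the rule: each EDB atom $S(x_1,\dots,x_k)$ with $S$ a $\Sigma'$-symbol is replaced by $S'(x_1,\dots,x_k)$ for each $S' \in r^{-1}(S)$ whose arity is compatible with $\vec\sigma$ (expansions with no such $S'$ are simply dropped), and each IDB atom is indexed by the corresponding subtuple of $\vec\sigma$. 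The union gadget $\upsilon_1^\#$ then collapses each refined symbol or type back to the original $\Xi$ one.

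The principal obstacle, beyond bookkeeping, is correctness: I must verify that $\upsilon_1^\#(\phi_2^\#(\rel B)) = \phi_2(\upsilon_1(\rel B))$ for every $\Sigma$-structure $\rel B$. The key observation underwriting this is that an atomic $\Sigma'$-formula $S(x_1,\dots,x_k)$ evaluated in $\upsilon_1(\rel B)$ holds exactly when some $S' \in r^{-1}(S)$ holds of $(x_1,\dots,x_k)$ in $\rel B$ with $\Sigma$-typing of the arguments matching $\ar_{S'}$; hence the derivations of $\phi_{2,u}$ over $\upsilon_1(\rel B)$ are, refined by the $\Sigma$-typing of the participating elements, in bijection with derivations of the expanded programs. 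A straightforward induction on fixed-point stages turns this into the required equality, after which the three closure lemmas combine to yield the desired Datalog$^\cup$ reduction $\psi$.
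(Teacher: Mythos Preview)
Your proposal is correct and follows essentially the same approach as the paper: the paper proves the theorem via a lemma (Lemma~\ref{lem:cases}) with exactly your three parts---union gadgets compose via $(d'\circ d, r'\circ r)$, Datalog interpretations compose via inlining, and a union gadget followed by a Datalog interpretation can be swapped by refining each IDB and output symbol with a tuple of preimage types under $d$. Your description of the swap construction (indexing by $\vec\sigma \in d^{-1}(\ar_{\phi_{2,u}})$ and expanding rules over compatible type assignments) matches the paper's construction almost verbatim.
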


\begin{example} \label{ex:bk-do-not-compose}
  The \emph{$k$-reductions} \cite[see p.~48]{KO22} used by Barto and Kozik~\cite{BK22} do not compose unlike \Datalog reductions. These weaker reductions can be described briefly as a combination of Datalog interpretations without recursion and gadget reductions.
  We denote by $\rel P_n$ the oriented path with $n$ edges, i.e., the digraph $(\{0, \dots, n\}; \{(i, i+1) \mid i < n\})$.
  It can be observed that: (1) $\CSP(\rel P_2)$ reduces to $\CSP(\rel P_1)$ by a gadget reduction which is a special case of a $k$-reduction, (2) $\CSP(\rel P_1)$ reduces to $\CSP(\bot)$ by a $k$-reduction, and (3) $\CSP(\rel P_2)$ does not reduce to $\CSP(\bot)$ by a $k$-reduction (this is since $\CSP(\rel P_2)$ does not have \emph{finite duality} \cite[Definition 3]{BulatovKL08}).
\end{example}

Clearly, every Datalog interpretation is a \Datalog reduction (compose with the trivial union gadget). We show that every gadget replacement can be also expressed as a Datalog$^\cup$ reduction up to homomorphic equivalence.

\begin{theorem} \label{thm:gadget-is-ddatalog}
  For every gadget $\gamma$ there is a Datalog$^\cup$ reduction $\psi$ such that, for all $\rel A$, $\gamma(\rel A)$ and $\psi(\rel A)$ are homomorphically equivalent.
\end{theorem}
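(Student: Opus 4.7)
The plan is to realise $\gamma$ as a Datalog$^\cup$ reduction $\psi = \upsilon \circ \phi$, with $\phi$ producing a multisorted intermediate structure in which every canonical element of $\gamma(\rel A)$ lives in its own sort, and $\upsilon$ the union gadget collapsing these sorts according to their underlying $\Sigma$-type.

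First I would unpack the gadget $\gamma$ as the data: a $\Sigma$-structure $V_t$ (the vertex gadget) for each $\Pi$-type $t$, a $\Sigma$-structure $E_R$ (the edge gadget) for each $\Pi$-symbol $R$ of arity $(t_1, \dots, t_k)$, and homomorphisms $\iota_i^R \colon V_{t_i} \to E_R$ recording the identifications. On input $\rel A$, the structure $\gamma(\rel A)$ is the disjoint union of one copy of $V_t$ per element of $A_t$ and one copy of $E_R$ per tuple in $R^{\rel A}$, quotiented by the $\iota_i^R$.

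To build $\phi$ I would introduce an intermediate signature $\Theta$ with one sort $(t, a)$ per pair with $a \in V_t$ and one sort $(R, b)$ per pair with $b \in E_R \setminus \bigcup_i \iota_i^R(V_{t_i})$; the underlying $\Sigma$-type of each sort is the $\Sigma$-type of its naming element. For every $\Sigma$-symbol $S$ and every tuple in $S^{V_t}$ or $S^{E_R}$ I would add a fresh $\Theta$-symbol whose arity puts each position into the sort of its corresponding gadget element, using the vertex-gadget sort $(t_j, a)$ whenever the position is shared via $\iota_j^R$. The Datalog programs write themselves: the sort $(t, a)$ is populated by $D_{(t, a)}(y_i) \fro R(y_1, \dots, y_k)$ for every $R$ and every position $i$ with $\ar_R(i) = t$, and the sort $(R, b)$ by $D_{(R, b)}(y_1, \dots, y_k) \fro R(y_1, \dots, y_k)$. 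A $\Theta$-symbol arising from a tuple in $S^{V_t}$ fires on a vertex-gadget element and repeats the same variable in every position, whereas a $\Theta$-symbol from $(d_1, \dots, d_m) \in S^{E_R}$ fires on $R(y_1, \dots, y_k)$ and feeds $y_j$ into position $i$ when $d_i = \iota_j^R(a)$, and the full tuple $(y_1, \dots, y_k)$ into position $i$ when $d_i$ is strictly new. The union gadget $\upsilon$ then sends each $\Theta$-sort to its underlying $\Sigma$-type and each $\Theta$-symbol to the $\Sigma$-symbol it labels.

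Correctness would follow by direct inspection: the elements of $\psi(\rel A)$ biject, via $((t, a), v) \mapsto (v, a)$ and $((R, b), (v_1, \dots, v_k)) \mapsto ((v_1, \dots, v_k), b)$, with those elements of $\gamma(\rel A)$ descending from elements and tuples of $\rel A$ witnessed by at least one constraint, and the chosen rules place exactly the gadget-prescribed tuples into each $\Sigma$-relation, yielding a homomorphism $\psi(\rel A) \to \gamma(\rel A)$. I expect the main technical obstacle to be elements $v \in A_t$ occurring in no relation of $\rel A$, for which $\gamma(\rel A)$ contains a free copy of $V_t$ that $\phi$ cannot see (no rule fires on $v$). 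To secure the reverse homomorphism $\gamma(\rel A) \to \psi(\rel A)$ I would argue that whenever $\phi$ fires at all, every vertex gadget already appears as a subgadget of some edge gadget in $\psi(\rel A)$ and so provides a folding target for the floating vertex-gadget copies; the genuinely degenerate case where $\rel A$ carries no relational tuples at all would be handled separately by checking that both structures collapse to the same fixed shape.
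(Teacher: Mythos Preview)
Your construction has a genuine gap: it uses only non-recursive Datalog rules, and this cannot capture the transitive closure of identifications that the quotient in $\gamma(\rel A)$ performs whenever the maps $\iota_j^R$ have overlapping images or fail to be injective.

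Concretely, take the gadget $\gamma$ from Example~\ref{ex:strict-gadget} reducing $\CSP(\rel K_2)$ to $\CSP(\rel K_\infty)$: here $V_t = E_R = \rel K_2$ and $\iota_1, \iota_2$ are the two automorphisms of $\rel K_2$. Every element of $E_R$ lies in the image of \emph{both} $\iota_1$ and $\iota_2$, so your rule ``feed $y_j$ into position $i$ when $d_i = \iota_j^R(a)$'' is ambiguous, and no fixed choice (nor taking all choices) works. For the triangle $\rel C_3$, the quotient $\gamma(\rel C_3)$ is a single loop, because the identifications chain around the cycle: $(a;0)=(b;1)$, $(b;0)=(c;1)$, $(c;0)=(a;1)$, hence $(a;0)=(a;1)$. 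Your $\psi(\rel C_3)$, built from single-rule firings on each edge separately, contains no loop --- each firing only relates elements coming from the two endpoints of one edge, never closing the cycle. Hence $\gamma(\rel C_3)\not\to\psi(\rel C_3)$. More abstractly, if your $\psi$ were correct then composing it with the width-$1$ rule $C\fro E(x,x)$ would give a non-recursive Datalog sentence deciding bipartiteness, which is impossible (bipartiteness has no finite duality).

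The paper's proof handles exactly this by introducing binary IDBs $I_{h_1,h_2}(x,y)$ meant to express $[x;h_1]=[y;h_2]$, and computing them via explicit reflexivity, symmetry, and transitivity rules (Lemma~\ref{lem:gadget-into-datalog}); only after this transitive closure is taken does one read off the relations of $\gamma(\rel A)$. The paper remarks that this is the sole place recursion is needed, and that it is unavoidable. Your proposal would be salvageable if you added analogous recursive rules to propagate identifications between sorts, but as written the forward direction $\gamma(\rel A)\to\psi(\rel A)$ fails.
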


\subsection{Combinatorial reductions}
  \label{sec:local-consistency} \label{sec:consistency-reduction}

The combinatorial counterpart of Datalog$^\cup$ reductions is based on the local consistency algorithm for CSPs and the universal gadget reduction.
Consistency reductions are defined by the two templates and a single parameter $k$. Let us assume that we are trying to reduce $\PCSP(\rel A, {*})$ to $\PCSP(\rel B, {*})$; the second part of the templates are irrelevant for the definition of the reduction, but play an important role for the soundness which we do not characterise here.

The $k$-consistency reduction is a combination of two procedures: \emph{Enforcing $k$-consistency} which is possibly the most intuitive approach to solving CSPs, and it is used in many CSPs algorithms (including, e.g., Zhuk's polynomial-time algorithm \cite{Zhu20}), and \emph{the universal gadget replacement} which is based on a standard way to reduce from label cover to $\CSP(\rel B)$ using a gadget consisting of powers of $\rel B$ (see, e.g., \cite[Section 3.3]{BBKO21}).

Let $\rel X$ and $\rel A$ be structures of the same signature and let $K \subseteq X$. A \emph{partial homomorphism} from $K$ to $\rel A$ is a mapping $f\colon K \to A$ that preserves types and relations, i.e., it satisfies the definition of a homomorphism when all variables are quantified in $K$ instead of $X$. Intuitively, a partial homomorphism is simply a partial solution to the instance defined on the given set $K$.
We denote by $\binom{X}{\leq k}$ the set of all at most $k$-element subsets of~$X$.

\begin{definition}[$k$-consistency reduction] \label{alg:k-consistency-reduction}
  Let $k \geq 1$ and let $\rel A$ be a $\Pi$-structure and $\rel B$ be a $\Sigma$-structure.\footnote{It is useful but not necessary to assume that $k \geq m$, where $m$ is the maximal arity of the relations of $\rel A$, since the procedure below ignores all constraints of arity bigger than $k$.}
  The $k$-consistency reduction maps every $\Pi$-structure $\rel X$ to a $\Sigma$-structure $\rel Y$ constructed by the following procedure:
  \begin{enumerate}
    \item For each $K \in \binom X{\leq k}$, let $\mathcal F_K$ be the set of all partial homomorphisms from $K$ to $\rel A$.
    \item Ensure that for each $L \subset K \in \binom X{\leq k}$, the sets $\mathcal F_K$ and $\mathcal F_L$ are \emph{consistent}, i.e., remove from $\mathcal F_L$ all $h\colon L \to A$ that do not extend to a $g\in \mathcal F_K$, and remove from $\mathcal F_K$ all $g$ whose restriction $g|_L$ is not in $\mathcal F_L$.
    \item Repeat (2), while anything changes.
    \item For each $K \in \binom X{\leq k}$, include in $\rel Y$ a copy of $\rel B^{\mathcal F_K}$ whose elements are denoted by $(K; b)$ where $b\colon \mathcal F_K \to B$.
    \item For each $L \subseteq K \in \binom X{\leq k}$ and $b\colon \mathcal F_L \to B$, identify the elements $(K; b \circ \rho)$ and $(L; b)$ where $\rho \colon \mathcal F_K \to F_L$ is the restriction $\rho(g) = g|_L$.
  \end{enumerate}
  We denote the output $\rel Y$ by $\kappa_k^{\rel A,\rel B}(\rel X)$.
\end{definition}

In a nutshell the $k$-consistency reduction is a concatenation of the $k$-consistency algorithm (steps (1), (2), and (3)) and the universal gadget reduction (steps (4) and (5)).
Here, in steps (1), (2), and (3) we are using the standard formulation of the $k$-consistency algorithm as a winning strategy in the existential $k$-pebble game. Note that these steps can be turned into a decision algorithm for CSPs by outputting \emph{no} if one of the sets $\mathcal F_K$ (and consequently each of them) is empty, and outputting \emph{yes} if all $\mathcal F_K$'s are non-empty; this coincides with \cite[Algorithm 1 on p.5]{BK14}. We shall refer to this decision algorithm as the \emph{$k$-consistency test} to distinguish it from the $k$-consistency algorithm corresponding to the application of steps (1), (2), and (3). In turn, they should not be confused with the $k$-consistency reduction defined above.
Steps (4) and (5) correspond to the universal gadget reduction; these two steps are a standard way of reducing from Label Cover to another CSP used in the algebraic approach \cite[Section 3.3]{BBKO21}.

\begin{example} \label{ex:hypergraphs}
  Barto and Kozik \cite{BK22} provided a sufficient condition for the existence of an efficient reduction between two promise CSPs, which is a weaker version of the $k$-consistency reduction.
  Many known NP-hard promise CSPs can be explained by a reduction from graph $3$-colouring using this sufficient condition.
  For example, the approximate hypergraph colouring (first proved to be NP-hard by Dinur, Regev, and Smyth \cite{DRS05}) allows a $k$-consistency reduction from $\CSP(\rel K_3)$ as shown by Wrochna \cite{Wro22a}. The same approach can also provide the NP-hardness results in \cite{AGH17,KO19,FKOS19,WZ20,BWZ21}.
  We return to this sufficient condition in Appendix~\ref{app:sufficient-condition}.
\end{example}

The main result of this section is that \Datalog reductions and consistency reductions have the same power in the scope of promise CSPs. In fact we prove a refined version of the statement that relates the parameters of the two reductions.

\begin{restatable}{theorem}{canonicalwidth} \label{thm:canonical-width}
  Fix $k \geq 1$ and let $\rel A, \rel A'$ and $\rel B, \rel B'$ be two promise templates. The following are equivalent:
  \begin{enumerate}
    \item There exists a Datalog interpretation $\phi$ of width $k$ and a gadget $\gamma$ such that $\gamma\circ\phi$ is a reduction from $\PCSP(\rel A, \rel A')$ to $\PCSP(\rel B, \rel B')$.
    \item $\PCSP(\rel A, \rel A') \leqcons \PCSP(\rel B, \rel B')$.
  \end{enumerate}
\end{restatable}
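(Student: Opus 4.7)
My plan is to prove the two implications separately. The unifying viewpoint is that $\kappa_k^{\rel A,\rel B}$ itself factors as a canonical width-$k$ Datalog interpretation composed with the universal gadget for $\rel B$; hence it is both an instance of (1) and, I will argue, a homomorphic ``upper bound'' on every other such instance.

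For (2) $\Rightarrow$ (1), I would realise $\kappa_k^{\rel A,\rel B}$ directly as $\gamma\circ\phi$. The interpretation $\phi$ is built from the canonical width-$k$ Datalog program for $k$-consistency: its output signature has a type $t_{j,\mathbf a}$ for each $j \le k$ and $\mathbf a \in A^j$ (finitely many, since $\rel A$ is fixed and finite), and $\phi_{t_{j,\mathbf a}}(\rel X)$ collects those $j$-tuples $(x_1,\dots,x_j)$ of input variables for which the partial assignment $x_i\mapsto a_i$ survives the iterative removal in steps~(1)--(3) of Definition~\ref{alg:k-consistency-reduction}; auxiliary relations record the subset/restriction structure among these tuples. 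The gadget $\gamma$ then replaces each element of type $t_{j,\mathbf a}$ by a single $\rel B$-element inside a copy of $\rel B^{\mathcal F_K}$, and its relation-gadgets implement both the product relations of step~(4) and the identifications of step~(5). Checking $\gamma(\phi(\rel X))\cong\kappa_k^{\rel A,\rel B}(\rel X)$ is then a matter of unfolding definitions.

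For (1) $\Rightarrow$ (2), completeness of $\kappa_k^{\rel A,\rel B}$ is independent of the hypothesis: given $h\colon\rel X\to\rel A$, the rule $(K;b)\mapsto b(h|_K)$ respects the identifications of step~(5) (since $(b\circ\rho)(h|_K)=b(h|_L)$ whenever $L\subseteq K$) and the product relations, hence defines a homomorphism $\kappa_k^{\rel A,\rel B}(\rel X)\to\rel B$. Soundness is the substantive content, and I would deduce it from the single claim
\[
  \gamma(\phi(\rel X))\to\kappa_k^{\rel A,\rel B}(\rel X) \quad\text{for every input } \rel X,
\]
since then $\kappa_k^{\rel A,\rel B}(\rel X)\to\rel B'$ composes to $\gamma(\phi(\rel X))\to\rel B'$, and soundness of the assumed reduction $\gamma\circ\phi$ yields $\rel X\to\rel A'$. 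To build this homomorphism I would factor it through the canonical composite $\gamma_{\mathrm{can}}\circ\phi_{\mathrm{can}}$ of direction (2)$\Rightarrow$(1) in two steps: (a) exhibit $\phi(\rel X)\to\phi_{\mathrm{can}}(\rel X)$, which follows from the least-fixed-point semantics of Datalog together with the characterisation of width-$k$ Datalog via existential $k$-pebble games (the canonical interpretation derives every fact derivable by any width-$k$ interpretation, modulo a translation of output predicates); and (b) show that completeness of $\gamma\circ\phi$ forces $\gamma$ to factor, up to homomorphism, through the universal gadget $\gamma_{\mathrm{can}}$.

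The main obstacle is step~(b): a multisorted, parametric variant of Theorem~\ref{thm:gadget-characterisation} in which the label-cover-like intermediate instance is not given but generated by $\phi$, so the polymorphism/minion argument must be carried out uniformly in $\rel X$. My approach is to exploit completeness of $\gamma\circ\phi$ on carefully chosen ``tautological'' inputs---one for each partial homomorphism $h\in\mathcal F_K$, obtained by an appropriate pull-back of $\rel X$ along $h$---and to collect the resulting homomorphisms into a single map that, at each element of $\phi_{\mathrm{can}}(\rel X)$ indexed by $K$, lands inside a direct product indexed by $\mathcal F_K$, which is precisely a copy of $\rel B^{\mathcal F_K}$ inside $\kappa_k^{\rel A,\rel B}(\rel X)$. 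Monotonicity of Datalog interpretations (Lemma~\ref{lem:datalog-is-monotone}) and of gadgets then guarantees that the pointwise construction assembles into a genuine homomorphism of $\Sigma$-structures.
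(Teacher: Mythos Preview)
Your high-level architecture matches the paper's: show $\gamma\phi(\rel X)\to\kappa_k^{\rel A,\rel B}(\rel X)$ for soundness of (2), and express $\kappa_k^{\rel A,\rel B}$ as a width-$k$ interpretation followed by a gadget for (1). But both directions, as you have sketched them, contain real gaps.

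\textbf{Direction (2)$\Rightarrow$(1).} Your interpretation indexes output types by \emph{single} tuples $\mathbf a\in A^j$. The gadget for a type must be a fixed $\Sigma$-structure independent of the input, so the gadget attached to $t_{j,\mathbf a}$ cannot ``see'' which other $\mathbf a'$ happen to lie in $\mathcal F_K$ for a given $K$. In particular it cannot build $\rel B^{\mathcal F_K}$, whose size depends on $|\mathcal F_K|$ and hence on $\rel X$. The paper avoids this by indexing types not by elements of $A^j$ but by the \emph{whole} Datalog-definable relation $R\subseteq A^m$ (there are finitely many, since $\rel A$ is finite); the canonical interpretation $\psi^{k,\rel A}$ outputs a label cover instance whose element $w$ has type $R=\mathcal F_{\im w}\circ w$, and then the universal gadget $\pi_{\rel B}$ replaces type $R$ by $\rel B^R$. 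This is the point where the proof needs Lemma~\ref{lem:canonical-datalog}.

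\textbf{Direction (1)$\Rightarrow$(2).} Step~(a) as written asks for a homomorphism $\phi(\rel X)\to\phi_{\mathrm{can}}(\rel X)$ between structures in \emph{different} signatures, which is not well-typed. The paper resolves this by first reifying $\phi(\rel X)$ into a label cover instance via $\rho^{\phi(\rel A)}$, and then comparing label cover instances through a ``weak homomorphism'' into $\pi_{\rel P}\kappa_k^{\rel A}(\rel X)$ (Lemma~\ref{lem:universality-of-kcons}); this is where Lemma~\ref{lem:consistency-and-datalog} is used. For step~(b), your plan to invoke completeness once per partial homomorphism $h\in\mathcal F_K$ via ``pull-backs of $\rel X$ along $h$'' does not work as stated: $h$ is only partial, so there is no instance $\rel X_h$ with $\rel X_h\to\rel A$ on which to invoke completeness, and the local nature of width-$k$ derivations does not by itself assemble into a global homomorphism. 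The paper's move is much simpler and uses completeness exactly once, on the input $\rel A$ itself, to obtain $\gamma\phi(\rel A)\to\rel B$. After factoring $\gamma=\gamma'\circ\rho^{\phi(\rel A)}$ (Lemma~\ref{lem:projective-decomposition}), this says $\gamma'$ maps a reduct of the label cover template to $\rel B$; the universality lemma (Lemma~\ref{lem:universality-of-gadget}) then yields $\gamma'(\rel S)\to\pi_{\rel B}(\rel S)$ for \emph{every} $\rel S$, which combined with step~(a) and Lemma~\ref{lem:universal-gadget-is-monotone} gives $\gamma\phi(\rel X)\to\kappa_k^{\rel A,\rel B}(\rel X)$. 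Your product-over-$\mathcal F_K$ intuition is essentially what the universality lemma encapsulates, but the clean formulation requires passing through label cover rather than working directly with $\phi$'s output signature.
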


We prove the theorem by showing that the $k$-consistency reduction is essentially a composition of a \emph{canonical Datalog interpretation} and a \emph{universal gadget}. This claim immediately gives one of the two implications. The other implication is proved by showing that both the canonical Datalog interpretation and the universal gadget have a certain universal property. Loosely speaking, the canonical Datalog interpretation of width $k$ can be used in place of any other Datalog interpretation of width $k$, and the universal gadget can be used in place of any other gadget.
This general idea has a lot of subtle details that are dealt with in the technical part of the paper, e.g., what if the output signatures of the Datalog interpretation does not agree with the output signature of the canonical Datalog program?

As a direct corollary of the above theorem, we get that $k$-con\-sis\-tency reductions are equivalent to Datalog reductions in the following sense.

\begin{corollary} \label{cor:canonical-form} \label{cor:equivalence}
  $\PCSP(\rel A, \rel A') \leqdl \PCSP(\rel B, \rel B')$ if and only if there exists $k \geq 1$ such that $\PCSP(\rel A, \rel A') \leqcons \PCSP(\rel B, \rel B')$.
\end{corollary}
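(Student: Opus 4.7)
The plan is to derive both directions of the corollary as an almost immediate consequence of Theorems~\ref{thm:canonical-width}, \ref{thm:gadget-is-ddatalog}, and \ref{thm:ddatalog-composes}, without unpacking the definitions of Datalog interpretations or the $k$-consistency algorithm.

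For the $(\Rightarrow)$ direction, I would start with an arbitrary Datalog$^\cup$ reduction realising $\leqdl$, which by definition has the form $\upsilon \circ \phi$ with $\phi$ a Datalog interpretation (of some width $k$) and $\upsilon$ a union gadget. The paper remarks right after Definition~\ref{def:union-gadget} that disjoint unions are expressible as gadget replacements, so $\upsilon$ can be regarded as a gadget $\gamma$. Thus $\upsilon \circ \phi$ already has the form required by item~(1) of Theorem~\ref{thm:canonical-width} with width $k$, and that theorem directly yields $\PCSP(\rel A, \rel A') \leqcons \PCSP(\rel B, \rel B')$.

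For the $(\Leftarrow)$ direction, given $\PCSP(\rel A, \rel A') \leqcons \PCSP(\rel B, \rel B')$ for some $k$, I would invoke Theorem~\ref{thm:canonical-width} to produce a gadget $\gamma$ and a width-$k$ Datalog interpretation $\phi$ with $\gamma \circ \phi$ a reduction, then apply Theorem~\ref{thm:gadget-is-ddatalog} to obtain a Datalog$^\cup$ reduction $\psi$ for which $\gamma(\rel X)$ and $\psi(\rel X)$ are homomorphically equivalent for every $\rel X$. The single observation that needs a moment's care is that swapping $\gamma$ for $\psi$ preserves the reduction: completeness and soundness of a promise CSP reduction depend only on whether the output admits a homomorphism to $\rel B$ (respectively to $\rel B'$), and this depends only on the homomorphic equivalence class of the output by transitivity of the homomorphism relation together with Lemma~\ref{lem:datalog-is-monotone} applied to $\phi$. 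Regarding $\phi$ itself as a Datalog$^\cup$ reduction via the trivial union gadget, Theorem~\ref{thm:ddatalog-composes} then collapses the composition $\psi \circ \phi$ into a single Datalog$^\cup$ reduction, delivering $\leqdl$.

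The main obstacle is essentially bookkeeping rather than mathematics: verifying that the passage through homomorphic equivalence in the $(\Leftarrow)$ direction does not break the reduction property, and that the extra union-gadget layer introduced by Theorem~\ref{thm:gadget-is-ddatalog} can be absorbed into a single Datalog$^\cup$ reduction by composition. Both points are handled by the cited results and the transitivity of the homomorphism preorder, so no new technical machinery is required.
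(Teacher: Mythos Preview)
Your proposal is correct and follows essentially the same approach as the paper's proof: both directions are derived from Theorem~\ref{thm:canonical-width}, using that union gadgets are gadget replacements for $(\Rightarrow)$, and Theorems~\ref{thm:gadget-is-ddatalog} and~\ref{thm:ddatalog-composes} for $(\Leftarrow)$. Your additional remark about why homomorphic equivalence preserves the reduction property is a useful spelling-out of what the paper leaves implicit, though the invocation of Lemma~\ref{lem:datalog-is-monotone} is not actually needed---the homomorphic equivalence of $\gamma(\phi(\rel X))$ and $\psi(\phi(\rel X))$ follows directly from Theorem~\ref{thm:gadget-is-ddatalog} applied to the structure $\phi(\rel X)$.
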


\begin{proof}
  By definition, every \Datalog reduction is a composition of a Datalog interpretation $\phi$ and a union gadget $\upsilon$, which is equivalently expressed as a gadget reduction. Hence Theorem~\ref{thm:canonical-width} applies for $k$ equal to the width of $\phi$. Consequently, $\PCSP(\rel A, \rel A') \leqcons \PCSP(\rel B, \rel B')$.

  If we assume that $\PCSP(\rel A, \rel A') \leqcons \PCSP(\rel B, \rel B')$ for some $k\geq 1$, then the same theorem provides a Datalog interpretation $\phi$ and a gadget $\gamma$ whose composition gives a reduction. Since both $\phi$ and $\gamma$ are equivalent to Datalog$^\cup$ reductions (the latter by Theorem~\ref{thm:gadget-is-ddatalog}), we get that $\PCSP(\rel A, \rel A') \leqdl \PCSP(\rel B, \rel B')$ by composition (Theorem~\ref{thm:ddatalog-composes}).
\end{proof}

\begin{example}
  Building on Example~\ref{ex:hypergraphs}, note that the \NP-hard\-ness of $\PCSP(\rel K_3, \rel K_5)$ follows by a $k$-consistency reduction from $\CSP(\rel K_3)$ since
  \[
    \CSP(\rel K_3) \leqcons \PCSP(\rel H_2, \rel H_{17480}) \leqdl \PCSP(\rel K_3, \rel K_5)
  \]
  where $\PCSP(\rel H_2, \rel H_{17480})$ is the approximate hypergraph colouring problem mentioned in Example~\ref{ex:hypergraphs}. The first reduction is the one described in Example~\ref{ex:hypergraphs} (it is expressible as a \Datalog reduction by the above corollary), and the second reduction is provided by a combination of \cite[Theorem 6.5]{BBKO21} and Theorem~\ref{thm:gadget-is-ddatalog}. Consequently, since \Datalog reductions compose, we get that $\CSP(\rel K_3) \leqdl \CSP(\rel K_3, \rel K_5)$.

  Naturally, we could ask for a concrete Datalog interpretation (together with the associated union gadget) that provides the above reduction. Such an interpretation is far from obvious:
  First, the \Datalog reduction from hypergraph colouring to graph colouring is an unfolding of the gadget reduction provided by the algebraic approach, and hence quite complex to write down explicitly.
  Second, \cite[Theorem 5.1]{BK22} does not provide an explicit bound on $k$ for the $k$-consistency reduction from graph colouring to hypergraph colouring (although this bound could be reconstructed from the proof).
  We also do not know what is the smallest $k$, s.t., $\CSP(\rel K_3) \leqcons \CSP(\rel K_3, \rel K_5)$.
\end{example}

\section{Hierarchies and CSP algorithms}
  \label{sec:hierarchies}

Several hierarchies of algorithms are studied for the tractability of CSPs and promise CSPs. The most prominent are arguably the local consistency hierarchy, the Sherali--Adams hierarchy \cite{SA90}, and the Lasserre hierarchy of semi-definite programming relaxations. Furthermore, Ciardo and Živný \cite{CZ23-hierarchies} introduced a framework to create more hierarchies through a tensor construction. There is ongoing research aiming to characterise which problems are solvable by some level of such a hierarchy (e.g., Ciardo and Živný \cite{CZ22-lp-aip}).

In this section we provide an alternative look at these hierarchies through the lenses of the $k$-consistency reduction. In particular, we associate a hierarchy to every (promise) CSP, where the $k$-th level of the hierarchy consists of all (promise) CSPs that reduce to it by the $k$-consistency reduction. As we will show below, by choosing adequately the initial (promise) CSP, we recover several of the above hierarchies. In particular, the bounded width hierarchy is the `consistency hierarchy' associated to the trivial problem, and the Sherali--Adams hierarchy is the `consistency hierarchy' associated to linear programming.
One of the benefits of expressing hierarchies in this way is that, by Theorems~\ref{thm:gadget-is-ddatalog} and \ref{thm:ddatalog-composes}, we immediately get that the consistency hierarchy of a fixed problem is closed under gadget reductions, which in particular means that there is a possibility to describe membership in this hierarchy by the means of polymorphisms. We note that this is not always clear for the analogous tensor hierarchies.
Finally, we consider new hierarchies not previously studied, namely, the hierarchy associated to solving systems of linear equations over integers, or over a finite (Abelian) group.

In this section, we work with infinite template CSPs, e.g., we express linear programming as a CSP with domain $\mathbb Q$. 
This creates some difficulties since, if $\mathbf B$ is an infinite structure in Definition~\ref{alg:k-consistency-reduction}, then the output of the $k$-consistency reduction will be an infinite instance as well.
Nevertheless, if $\mathbf B$ is one of the infinite templates considered in this section, it is always possible to produce an equivalent, but finite, instance. We provide proofs of this statement for each template separately in the corresponding subsection.

\subsection{Bounded width}
  \label{sec:bw}

As a warm-up, we start with the hierarchy of promise CSPs with bounded width. This hierarchy corresponds to the smallest (easiest) class, w.r.t., \Datalog reductions, of promise CSPs. In the above sense, it can be described as the $k$-consistency hierarchy associated to the trivial problem $\CSP(\False)$.
Traditionally, the problems belonging to this hierarchy are said to be of \emph{bounded width} --- a (promise) CSP is said to be of \emph{width at most $k$} if the $k$-consistency test solves the problem, or equivalently if it is expressible by a Datalog sentence of width $k$ (as we mentioned before, the proof of the corresponding theorem for standard CSPs by Kolaitis and Vardi \cite{KV00} applies to the promise setting). We show that this property is also equivalent to having a $k$-consistency reduction to the trivial problem.

\begin{theorem} \label{thm:bw}
  $\PCSP(\rel A, \rel A')$ has width $k$ if and only if it reduces to $\CSP(\False)$ by the $k$-consistency reduction.
\end{theorem}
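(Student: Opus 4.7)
The plan is to directly unpack the $k$-consistency reduction when the target template is $\False$, and observe that it succeeds if and only if the $k$-consistency test does. Recall that the width $k$ condition on $\PCSP(\rel A, \rel A')$ means the $k$-consistency test correctly decides the problem: completeness (if $\rel X \to \rel A$ then the test outputs ``yes'') is automatic, since any homomorphism $\rel X \to \rel A$ witnesses the nonemptiness of every $\mathcal F_K$. So the substantive content is soundness --- if the test outputs ``yes'' on $\rel X$ (equivalently, every $\mathcal F_K$ is nonempty after the consistency loop terminates), then $\rel X \to \rel A'$.

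First I would analyse $\rel Y = \kappa_k^{\rel A, \False}(\rel X)$. Since $\False$ has no sorts and only a single nullary symbol $C$ with $C^\False = \emptyset$, $\rel Y$ has no elements and the only data is the nullary relation $C^{\rel Y}$. By steps (4) and (5) of Definition~\ref{alg:k-consistency-reduction}, $C^{\rel Y}$ is the union of $C^{\False^{\mathcal F_K}}$ over $K \in \binom{X}{\leq k}$; step (5) is vacuous, as there are no elements to identify. By the defining condition of the power (vacuous satisfaction), $C^{\False^{\mathcal F_K}}$ is nonempty precisely when $\mathcal F_K = \emptyset$. Moreover, once the consistency loop has stabilised, emptiness of any single $\mathcal F_K$ forces emptiness of all of them: the consistency requirement applied to $\emptyset \subseteq K$ removes the empty map from $\mathcal F_\emptyset$, and then, applied to $\emptyset \subseteq K'$ for every other $K'$, empties every $\mathcal F_{K'}$ as well.

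Combining these observations, $C^{\rel Y} \neq \emptyset$ if and only if the $k$-consistency test outputs ``no''. Since $\rel Y \to \False$ if and only if $C^{\rel Y} = \emptyset$, we conclude that $\kappa_k^{\rel A, \False}(\rel X) \to \False$ if and only if the $k$-consistency test succeeds on $\rel X$ against $\rel A$. Hence $\kappa_k^{\rel A, \False}$ is a valid reduction $\PCSP(\rel A, \rel A') \to \CSP(\False)$ if and only if the $k$-consistency test correctly decides $\PCSP(\rel A, \rel A')$, which is precisely the definition of width $k$.

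The main obstacle is really just notational --- carefully tracking the behaviour of the degenerate template $\False$ through the powers $\False^{\mathcal F_K}$, especially the edge case when $\mathcal F_K$ is empty (where vacuous satisfaction flips the nullary relation to nonempty). An alternative route would derive the result from Theorem~\ref{thm:canonical-width} combined with the Kolaitis--Vardi equivalence between $k$-consistency solvability and Datalog width $k$ expressibility (valid in the promise setting as noted in the preliminaries), since any gadget reducing into the trivial problem $\CSP(\False)$ contributes essentially nothing; however, the direct unpacking above gives a more transparent picture of what the reduction is actually computing.
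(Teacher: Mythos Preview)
Your proof is correct. Both you and the paper unpack $\kappa_k^{\rel A,\False}(\rel X)$ directly for the implication ``$k$-consistency reduction $\Rightarrow$ width $k$'': the paper's one-line observation that $\False^D = \False$ for $D\neq\emptyset$ is exactly your analysis of when $C^{\False^{\mathcal F_K}}$ is nonempty. The difference lies in the other direction. For ``width $k$ $\Rightarrow$ $k$-consistency reduction'', the paper passes through the Datalog characterisation of width~$k$ (Kolaitis--Vardi) and then invokes Theorem~\ref{thm:canonical-width}, whereas your argument handles both directions simultaneously via the single biconditional ``$\kappa_k^{\rel A,\False}(\rel X)\to\False$ iff the $k$-consistency test accepts $\rel X$''. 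Your route is more elementary and self-contained for this particular statement --- it needs neither Theorem~\ref{thm:canonical-width} nor the Kolaitis--Vardi equivalence --- while the paper's route serves as a first illustration of how Theorem~\ref{thm:canonical-width} converts Datalog-side information into consistency-side information. You in fact anticipate this alternative at the end of your write-up; that is precisely what the paper does.
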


\begin{proof}
  If $\PCSP(\rel A, \rel A')$ is expressible by a Datalog formula of width $k$, there is Datalog interpretation $\phi$ of width $k$ that reduces from $\PCSP(\rel A, \rel A')$ to $\PCSP(\False)$. Consequently, by Theorem~\ref{thm:canonical-width}, we get the required $k$-consistency reduction.

  If $\PCSP(\rel A, \rel A') \leqcons \CSP(\False)$ and $\rel X$ is accepted by the $k$-consistency test, then $\kappa_k^{\rel A,\False}(\rel X) = \False$ since $\False^D = \False$ for any $D \neq\emptyset$, and hence there are no constraints. Consequently, we get that $\rel X \to \rel A'$ by the soundness of the $k$-consistency reduction.
\end{proof}

While the previous result is somewhat expected there is more than meets the eye as the statement implies that bounded width promise CSP problems are closed under gadget reductions. Although this has been previously shown \cite[Lemma 7.5]{BBKO21}, a direct proof requires a bit of a case analysis.

We also note without a proof that this hierarchy corresponds to the $k$-consistency hierarchy associated to Horn-SAT.

\subsection{Sherali--Adams}
  \label{sec:sa}

There are several slightly different ways to define the Sherali--Adams (SA) relaxation for CSPs and promise CSPs, e.g., \cite{TZ17,CZ23-hierarchies,BD21,BB22}.
Although they might differ in minor technical details, they are all based on the scheme introduced by Sherali and Adams \cite{SA90} to generate increasingly tight relaxations of a linear program. In all cases, a CSP instance is turned into an instance of linear programming whose tightness can be adjusted using a parameter $k$.

Linear programming can be viewed as a fixed template CSP though with infinite domain and infinitely many relations: Namely, the domain is $\mathbb Q$,\footnote{Although maybe a more natural choice of the domain would be $\mathbb R$, we prefer $\mathbb Q$ since it is countable, and hence there is less discussion about encoding of coefficients and solutions.} and the relations are all relations defined by affine inequalities, e.g., of the form
\[
  a_1 x_1 + \dots + a_n x_n \leq b
\]
for some $a_1, \dots, a_n, b \in \mathbb Q$.
In an instance of linear programming, the relations are usually given as the tuples of their coefficients. Nevertheless, all of these relations are in fact expressible (more precisely, definable by a \emph{primitive positive formula}) using the following three types of inequalities: $x \leq y$, $x_1 + x_2 = y$, and $y = 1$.
With some care, the length of these primitive positive definitions becomes proportionate to the binary encoding of the coefficients.
This means that up to some simple gadget reductions, we can define the template of linear programming, denoted $\rel Q_\conv$, as the structure with these three relations.

In this section we shall show that the Sherali--Adams hierarchy coincides with the $k$-consistency hierarchy associated to $\rel Q_\conv$. While a weaker version of our results (where the hierarchies are only required to interleave) holds for any natural variant of the SA relaxation, we can additionally prove that both hierarchies align perfectly if we choose it suitably.  Our definition agrees with the SA$(k, k)$ system of Thapper and Živný \cite[Section 3.1]{TZ17} assuming that $k$ is at least the maximal arity of the constraints.

In plain words, the goal of the $k$-th level of Sherali--Adams relaxation is to find a collection of probability distributions on partial solutions on each of the subsets of variables of size at most $k$ that have consistent marginals.

\begin{definition}[$k$-th level of Sherali--Adams relaxation] \label{def:sa}
  Fix $k \geq 1$ and a structure $\rel A$. The $k$-th level of Sherali--Adams relaxation with template $\rel A$ is the mapping that given a structure $\rel X$ of the same signature as $\rel A$ produces a linear program in the following way:
  For each $K \in \binom X{\leq k}$, let $\mathcal F_K$ be the set of all partial homomorphisms from $K$ to $\rel A$.
  The $k$-th level of Sherali--Adams relaxation, denoted by $\SA^k(\rel X)$, is the following linear program with variables $x_{K,f} \in [0,1]$ for each $K\in \binom X{\leq k}$ and $f\in \mathcal F_K$:
  \begin{align*}
    \sum_{f\in \mathcal F_K} x_{K,f} &= 1 & \text{for each } K\in \textstyle\binom X{\leq k}, \\
    \sum_{f\in \mathcal F_K, f|_L = g} x_{K,f} &= x_{L,g}  & \text{for each } L \subset K\in \textstyle\binom X{\leq k}, \text{ and } g\in \mathcal F_L.
  \end{align*}
\end{definition}

Observe that $\SA^k(\rel X)$ has a 0-1 solution if there is a homomorphism $h\colon \rel X \to \rel A$; assign $x_{f, K} = 1$ if $f = h|_K$ and $x_{f, K} = 0$ otherwise. This observation provides completeness of the Sherali--Adams relaxation.
We say that the $k$-th level of the Sherali--Adams relaxation \emph{solves} $\PCSP(\rel A, \rel A')$ if it is also sound in the sense that if $\rel X \not\to \rel A'$, for some instance $\rel X$ of the promise CSP, then the $\SA^k(\rel X)$ does not have a feasible solution.
Having settled on the definition, we formulate the main theorem of this subsection.

\begin{theorem} \label{thm:sa}
  $\PCSP(\rel A, \rel A')$ is solvable by the $k$-th level of the Sherali--Adams relaxation if and only if it reduces to $\CSP(\rel Q_\conv)$ by the $k$-consistency reduction.
\end{theorem}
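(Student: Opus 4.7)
The proof plan is to establish a direct bijective correspondence between homomorphisms $\kappa_k^{\rel A, \rel Q_\conv}(\rel X) \to \rel Q_\conv$ and feasible solutions of $\SA^k(\rel X)$. Given this, both directions of the theorem fall out immediately: completeness of the $k$-consistency reduction is automatic via the correspondence applied to the $0$-$1$ solution of $\SA^k(\rel X)$ induced by a homomorphism $\rel X \to \rel A$, while soundness of the reduction is precisely soundness of $k$-th SA on the promise instance. As a byproduct, this correspondence also supplies the finite instance promised in the section's preamble, since $\SA^k(\rel X)$ is a finite linear program even though $\rel Q_\conv$ is infinite.

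Write $\rel Y := \kappa_k^{\rel A, \rel Q_\conv}(\rel X)$. By construction, $\rel Y$ is the disjoint union of the powers $\rel Q_\conv^{\mathcal F_K}$ for $K \in \binom X{\leq k}$, subject to the identifications $(K; b\circ\rho) = (L; b)$ for $L \subseteq K$ with $\rho\colon \mathcal F_K \to \mathcal F_L$ the restriction. The relations in $\rel Y$ are inherited pointwise from the three relations of $\rel Q_\conv$, namely $x \leq y$, $x_1 + x_2 = y$, and $y = 1$.

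Suppose $h\colon \rel Y \to \rel Q_\conv$ is a homomorphism, and write $e_f \in \mathbb Q^{\mathcal F_K}$ for the indicator function of $f \in \mathcal F_K$. Set $x_{K,f} := h((K; e_f))$. The key intermediate step is to show that $h$ restricted to each block $\mathbb Q^{\mathcal F_K}$ is $\mathbb Q$-linear, so that $h((K; b)) = \sum_f b(f)\, x_{K,f}$ for all $b$: applying the addition constraint to $0 + 0 = 0$ forces $h((K; 0)) = 0$, and then the identities $b + (-b) = 0$ and $(b/n) + \dots + (b/n) = b$ upgrade $\mathbb Z$-linearity to $\mathbb Q$-linearity. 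With linearity in hand, the three SA constraints appear directly: the $y = 1$ relation applied to the constant-$1$ function $\mathbf{1} = \sum_f e_f$ yields $\sum_f x_{K,f} = 1$; the inequality relation applied to $0 \leq e_f \leq \mathbf{1}$ yields $0 \leq x_{K,f} \leq 1$; and the marginalisation equality $x_{L,g} = \sum_{f|_L = g} x_{K,f}$ follows from the identification $(K; e_g \circ \rho) = (L; e_g)$ together with $e_g \circ \rho = \sum_{f|_L = g} e_f$.

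Conversely, given feasible SA variables $(x_{K,f})$, set $h((K; b)) := \sum_f b(f)\, x_{K,f}$. The marginalisation equalities of SA guarantee that $h$ respects the identifications in $\rel Y$, and the pointwise nature of the relations of $\rel Q_\conv^{\mathcal F_K}$ together with linearity of $h$ gives that $h$ preserves each of the three relations of $\rel Q_\conv$. The main technical hurdle is the first direction, namely upgrading the purely relational additive constraints of $\rel Q_\conv$ to full $\mathbb Q$-linearity of $h$, and handling the edge case $\mathcal F_K = \emptyset$, where the unique element of $\rel Q_\conv^\emptyset$ inherits both $y = 1$ and $x_1 + x_2 = y$ vacuously, so no homomorphism exists; this matches the vacuous SA constraint $\sum_{f \in \emptyset} x_{K,f} = 1$ being infeasible.
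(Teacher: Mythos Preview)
Your approach is the same as the paper's sketch --- the formulas $h([K;q]) = \sum_f x_{K,f}\, q(f)$ and $x_{K,f} = h([K;e_f])$ are exactly what the paper writes down --- but there is a real gap. You silently identify the sets $\mathcal F_K$ appearing in $\SA^k$ with those appearing in $\rel Y = \kappa_k^{\rel A,\rel Q_\conv}(\rel X)$, and they are not the same: by Definition~\ref{def:sa} the Sherali--Adams system uses \emph{all} partial homomorphisms $K \to \rel A$ (step~(1) of Definition~\ref{alg:k-consistency-reduction} only), while $\rel Y$ is built from the $\mathcal F_K$ \emph{after} steps~(2)--(3) have enforced consistency.

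This breaks your converse direction as written. Given a feasible SA assignment $(x_{K,f})$ indexed by the pre-consistency $\mathcal F_K$, the map $h((K;b)) = \sum_f b(f)\,x_{K,f}$ must sum over the post-consistency $\mathcal F_K$ (since that is where $b$ lives). For $h$ to send the constant-$1$ tuple to $1$ you need $\sum_{f\in\mathcal F_K^{\text{post}}} x_{K,f} = 1$, but SA only guarantees this sum over $\mathcal F_K^{\text{pre}}$; the identifications and your edge-case analysis for $\mathcal F_K = \emptyset$ suffer the same mismatch (the vacuous SA constraint you invoke is over $\mathcal F_K^{\text{pre}}$, which may well be nonempty when $\mathcal F_K^{\text{post}} = \emptyset$). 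The missing ingredient, which the paper states explicitly, is that the support $\{f : x_{K,f} > 0\}$ of any feasible SA solution is itself a consistent family, hence contained in the post-consistency $\mathcal F_K$; once you have that, the two sums agree and your argument goes through. (The paper's detailed proof in Appendix~\ref{app:proofs-iv} takes an alternative, more structural route via the isomorphism $\Pol(\rel Q_\conv)\cong\clo Q_\conv$ and the map $\clo Q_\conv \to \omega(\clo Q_\conv)$ together with Lemma~\ref{lem:arc-adjoint}, which absorbs the consistency step abstractly.)
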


Let us briefly sketch the proof. The theorem is proven by a step-by-step comparison of the $k$-consistency reduction and the $\SA^k$.
We first note that the system $\mathcal F_K$ constructed in $\SA^k$ is precisely equivalent to step (1) of the $k$-consistency procedure.
We then show that steps (2) and (3) of the $k$-consistency are not needed since any solution to the linear program $\SA^k$ yields a consistent system; indeed, it is not hard to check that if $x_{K,f}$ is a solution to $\SA^k$, then
\[
  \mathcal F'_K = \{ f\in \mathcal F_K \mid x_{K,f} > 0 \}
\]
is a consistent system. This means that replacing $\mathcal F_K$ in the construction of the $\SA^k$ system with the sets obtained by enforcing $k$-consistency, i.e., at the end of step (3) of the procedure, does not change the output of $\SA^k$.
Next, we show that we can replace the equations in $\SA^k$ with the universal gadget for linear programming (as in steps (4) and (5) of the $k$-consistency) without increasing (or decreasing) the power of the relaxation, i.e., we show that $\SA^k(\rel X)$ has a feasible solution if and only if $\kappa_k^{\rel A, \rel Q_\conv}(\rel X) \to \rel Q_\conv$.
For the `only if' direction, we can construct, from a feasible solution of $\SA^k(\rel X)$, a homomorphism $h \colon \kappa_k^{\rel A, \rel Q_\conv}(\rel X) \to \rel Q_\conv$ by letting
\begin{equation} \label{eq:sa-homomorphism}
  h([K; q]) = \sum_{f\in \mathcal F_K} x_{f, K}q(f)
\end{equation}
where $K\in \binom X{\leq k}$ and $q\colon \mathcal F_K \to \mathbb Q$. It is relatively straightforward to check that $h$ is indeed a homomorphism. For the `if' direction, starting with a homomorphism $h\colon \kappa_k^{\rel A, \rel Q_\conv}(\rel X) \to \rel Q_\conv$, we can define a solution to the Sherali--Adams system by
\begin{equation} \label{eq:sa-solution}
  x_{K, f} = h([K; e_f])
\end{equation}
where $e_f\colon \mathcal F_K \to \mathbb Q$ is defined by $e_f(f) = 1$ and $e_f(g) = 0$ if $g\neq f$. Again, checking that the $x_{K, f}$'s, thus defined, is a feasible solution of $\SA^k(\rel X)$ is rather straightforward.

In the detailed proof (presented in Appendix~\ref{app:proofs-iv}), we show a more general statement that uses abstract properties of linear programming and its polymorphisms rather than formalising the above argument directly. This argument can be generalised to provide analogues of Theorem~\ref{thm:sa} for other hierarchies. In particular, it can proven that the hierarchy of a~\emph{conic minion} $\clo M$ \cite[Definition 3.6]{CZ23-hierarchies} coincides with $k$-consistency hierarchy for the corresponding promise version of label cover (i.e., the problem sometimes denoted by $\PMC_{\!\clo M}$ \cite[Definition 3.10]{BBKO21}) assuming $k$ is greater or equal than the arity of any relation of the template.

\begin{example}
  Ciardo and Živný \cite{CZ22-lp-aip} prove that $\PCSP(\rel K_c, \rel K_d)$ cannot be solved by any level of a hierarchy of combined LP and AIP for any $d \geq c \geq 3$ (which we call \emph{LP+AIP hierarchy}).
  The hierarchy was introduced by Brakensiek, Guruswami, Wrochna, and Živný \cite{BGWZ20}, and can be described as a tensor hierarchy of a conic minion, and hence as a $k$-consistency hierarchy of a fixed problem for which an analogue of Theorem~\ref{thm:sa} is valid.

  The proof that approximate graph colouring is not solved by this hierarchy relies on the line digraph reduction from Example~\ref{ex:arc-graph}.
  First, Ciardo and Živný show that $\PCSP(\rel K_c, \rel K_d)$ is not solved by the $k$-th level of the hierarchy for any $c \geq {(k^2 + k)/2}$ (we assume $k \geq 2$). Then they use the line digraph construction to reduce a general case to this case.

  The reduction from a general case can be explained using our framework as follows: Observe that the line digraph construction $\delta$ provides a reduction
  \[
    \PCSP(\rel K_{b(c)}, \rel K_{b(d)}) \leqdl \PCSP(\rel K_c, \rel K_d)
  \]
  where $b(n) = \binom n{\lfloor n/2 \rfloor}$ \cite[Lemma 4.16]{KOWZ23}.
  Furthermore, it is easy to observe that for each Datalog interpretation $\phi$ of width $k$, $\phi\circ \delta$ is expressible as a Datalog interpretation of width $2k$ (by inspection of our proof of Theorem~\ref{thm:ddatalog-composes}).
  Hence, if $\PCSP(\rel K_c, \rel K_d)$ is solvable by the $k$-th level of the hierarchy, then $\PCSP(\rel K_{b(c)}, \rel K_{b(d)})$ is solvable by the $2k$-th level of the hierarchy by Theorem~\ref{thm:canonical-width}.
  Chaining these reductions, we obtain that if $\PCSP(\rel K_c, \rel K_d)$ is solvable by the $k$-th level, then $\PCSP(\rel K_{b^n(c)}, \rel K_{b^n(d)})$ is solvable by the $2^nk$-th level. Since $b$ increases exponentially (more precisely $b(c) \geq 2^c/c$), we eventually get that 
  $b^n(c) \geq ((2^nk)^2 + 2^nk)/2$, and we may apply that $\PCSP(\rel K_{b^n(c)}, \rel K_{b^n(d)})$ is not solved by the $2^nk$-th level to finish the prove.

  Our argument applies to any hierarchy which can be described using $k$-consistency reductions, which includes all tensor hierarchies of conic minions for which an argument is given in the full version of \cite{CZ22-lp-aip}.
  We believe that the argument presented here is simpler than the one presented in \cite{CZ22-lp-aip}, and moreover, it still holds if one replaces $\delta$ with any \Datalog reduction (since the width of the composition of two Datalog interpretations is always bounded by the product of the widths).
\end{example}

Finally, we have the following immediate corollary of Theorems~\ref{thm:sa}, \ref{thm:gadget-is-ddatalog}, and \ref{thm:canonical-width}, which we believe has not been shown before in the scope of promise CSPs. The analogous statement in the non-promise setting follows from the characterisation of Sherali--Adams for CSPs \cite{BK14,TZ17}.
Again, the same can be proven for any hierarchy of a conic minion, and in particular for the Lasserre hierarchy.

\begin{corollary} \label{cor:sa}
  Let $\rel A, \rel A'$ and $\rel B, \rel B'$ be two promise templates such that $\Pol(\rel B, \rel B') \to \Pol(\rel A, \rel A')$. If $\PCSP(\rel B, \rel B')$ is solvable by some level of Sherali--Adams hierarchy, then so is $\PCSP(\rel A, \rel A')$.
\end{corollary}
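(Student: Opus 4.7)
The plan is to chain together the results cited in the preceding discussion. Assume $\PCSP(\rel B, \rel B')$ is solvable by the $k$-th level of Sherali--Adams. The first step is to apply the nontrivial direction of Theorem~\ref{thm:sa}, obtaining a $k$-consistency reduction $\PCSP(\rel B, \rel B') \leqcons[k] \CSP(\rel Q_\conv)$. Separately, the hypothesis $\Pol(\rel B, \rel B') \to \Pol(\rel A, \rel A')$ supplies, via Theorem~\ref{thm:gadget-characterisation}, a gadget reduction from $\PCSP(\rel A, \rel A')$ to $\PCSP(\rel B, \rel B')$, which Theorem~\ref{thm:gadget-is-ddatalog} expresses (up to homomorphic equivalence) as a \Datalog reduction.

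Next I would combine these two reductions. The $k$-consistency reduction is, by Theorem~\ref{thm:canonical-width}, the composition of a width-$k$ Datalog interpretation with a gadget; together with the union gadget this yields a \Datalog reduction $\PCSP(\rel B, \rel B') \leqdl \CSP(\rel Q_\conv)$. Composing this with the \Datalog reduction $\PCSP(\rel A, \rel A') \leqdl \PCSP(\rel B, \rel B')$ via Theorem~\ref{thm:ddatalog-composes} gives $\PCSP(\rel A, \rel A') \leqdl \CSP(\rel Q_\conv)$. Invoking Corollary~\ref{cor:equivalence}, this \Datalog reduction is realised as a $k'$-consistency reduction for some $k' \geq 1$.

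Finally, I would apply the converse direction of Theorem~\ref{thm:sa} to conclude that $\PCSP(\rel A, \rel A')$ is solvable by the $k'$-th level of the Sherali--Adams hierarchy, and in particular by \emph{some} level, as required. Since the statement asks only for solvability at some level, there is no need to track the exact dependence of $k'$ on $k$ and on the arity/width parameters of the gadget reduction; the bound produced by chasing through the composition (roughly, the product of the width of the interpretation extracted from Theorem~\ref{thm:gadget-is-ddatalog} and $k$) is more than adequate.

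The argument is essentially a bookkeeping exercise: no new content is needed beyond the three cited theorems together with the composition lemma (Theorem~\ref{thm:ddatalog-composes}) and the gadget characterisation (Theorem~\ref{thm:gadget-characterisation}). The only mild subtlety is ensuring that the \Datalog reduction coming out of Theorem~\ref{thm:gadget-is-ddatalog} composes cleanly with the one coming out of Theorem~\ref{thm:canonical-width}, but this is precisely what Theorem~\ref{thm:ddatalog-composes} guarantees, so there is no real obstacle.
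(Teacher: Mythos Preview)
Your proposal is correct and follows essentially the same route as the paper, which presents this as an immediate corollary of Theorems~\ref{thm:sa}, \ref{thm:gadget-is-ddatalog}, and \ref{thm:canonical-width}; you have simply spelled out the chaining explicitly, including the uses of Theorem~\ref{thm:gadget-characterisation} and Theorem~\ref{thm:ddatalog-composes} that are implicit in the paper's one-line justification.
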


An important consequence of the above corollary is that a characterisation of the applicability of the Sherali--Adams algorithm in the scope of promise CSPs could be theoretically described by an algebraic condition on polymorphisms.

\subsection{Hierarchies of groups}
  \label{sec:groups}

Solving systems of equations over a (finite) Abelian group is a well-known CSP that cannot be solved by neither the $k$-consistency  nor the Sherali--Adams relaxations.
We start with formally defining group CSPs.%
\footnote{We note that what our definition of a group CSP differs from Berkholz and Grohe \cite{BerkholzG15,BerkholzG17}; for Abelian groups this difference does not matter, but the Berkholz--Grohe CSP of a non-Abelian finite group is polynomial time solvable while ours is \NP-complete (see Goldmann and Russell \cite{GR02}).
}

\begin{definition} \label{def:csp-g}
  Let $\mathbb G$ be an Abelian group. We use $\CSP(\mathbb G)$ to denote the following problem: given a systems of equations of the form
  \begin{equation} \label{eq:group-csp}
    a_1 x_1 + \dots + a_n x_n = b
  \end{equation}
  where $a_1, \dots, a_n \in \mathbb Z$ and $b\in \mathbb G$, decide whether it has a feasible solution.
  This problem can be formulated as the CSP with template $\rel G$ whose domain is $G$ and relations are $x_1 + x_2 = y$ and $y = b$ for each $b \in \mathbb G$. Moreover, it is enough to consider $b$ from a fixed generating set\footnote{A set $A \subseteq G$ is said to \emph{generate} a group $\mathbb G$ if there is no proper subgroup of $\rel G$ that contains $A$.} of $\mathbb G$ in the relations of the second form.

  Similarly, if $\mathbb G$ is a finite non-Abelian group, we let $\CSP(\mathbb G)$ to be the CSP with template $\rel G$ defined in the same way as above (note that the equational definition is more tricky in this case). %
\end{definition}

For simplicity, we restrict our investigations to cyclic groups. For several reasons, little generality is loss due to this restriction. First, solving systems of equations over a non-Abelian finite group $\mathbb G$, and hence $\CSP(\mathbb G)$, is \NP-complete. Further, every finite Abelian group $\mathbb G$ is a product of cyclic groups, and finally, the infinite cyclic group $\mathbb Z$ provides a uniform algorithm, the so-called basic affine integer relaxation, for solving $\CSP(\mathbb G)$ where $\mathbb G$ is an Abelian group.
In particular, for every finite Abelian group $\mathbb G$, $\CSP(\mathbb G)$ is reducible via a gadget reduction to $\CSP(\mathbb Z)$ since $\rel G$ has an \emph{alternating polymorphism} \cite[Section 7.3]{BBKO21}.

In the rest of the section, we assume that $\mathbb G$ is a cyclic group generated by an element denoted by $1$ (i.e., it is either isomorphic to the group $\mathbb Z_n$ of addition modulo $n \in \mathbb Z$, or to $\mathbb Z$ itself). We introduce the following relaxation.

\begin{definition}($\mathbb G$-affine $k$-consistency relaxation)
  \label{al:affine}
  Fix $k \geq 1$ and a structure $\rel A$. The $\mathbb G$-affine $k$-consistency relaxation with template $\rel A$ is the mapping that, given a structure $\rel X$ of the same signature as $\rel A$, produces a system of equations over $\mathbb G$ in the following way:
  Run the $k$-consistency to compute sets $\mathcal F_K$, i.e., execute steps (1), (2), and (3) in Definition~\ref{alg:k-consistency-reduction}.
  Consider the same system of linear equations as in Definition~\ref{def:sa} except that the variables $x_{K, f}$ are now valued over~$\mathbb G$.
\end{definition}

Recall that it can be decided in polynomial time whether the resulting system of equations has a feasible solution. Also, it can be easily verified that it has a feasible solution whenever there is an homomorphism $h\colon \rel X \to \rel A$. In particular, the assignment $x_{K, h|_K} = 1$ extends to a solution of the system by assigning $0$ to all other variables.

The $\mathbb Z$-affine $k$-consistency relaxations (for varying $k$) are similar to relaxations considered by Berkholz and Grohe \cite[see Section 2.2]{BerkholzG17}; the system of Berkholz and Grohe is obtained by skipping consistency enforcement, i.e., steps (2) and (3). Even weaker hierarchy of AIP was consider by Ciardo and Živný \cite{CZ23-hierarchies}.

The following lemma shows that the $\mathbb G$-affine $k$-consistency relaxation is equivalent to the $k$-consistency reduction to $\CSP(\mathbb G)$.

\begin{lemma} \label{lem:V.7}
  Let $\mathbb G$ be a cyclic group. The $\mathbb G$-affine $k$-consistency relaxation solves $\PCSP(\rel A, \rel A')$ if and only if $\PCSP(\rel A, \rel A')$ reduces to $\CSP(\mathbb G)$ by the $k$-consistency reduction.
\end{lemma}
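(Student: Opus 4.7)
The plan is to follow the proof sketch of Theorem~\ref{thm:sa} essentially verbatim, with the template $\rel Q_\conv$ of linear programming replaced by the template $\rel G$ of $\CSP(\mathbb G)$, and rational convex combinations replaced by $\mathbb Z$-linear combinations in $\mathbb G$; the latter make sense because a cyclic group is a $\mathbb Z$-module in which every element is an integer multiple of the chosen generator $1$. A first simplification is that Definition~\ref{al:affine} already specifies the post-enforcement sets $\mathcal F_K$, so the $\mathbb G$-affine relaxation and the $k$-consistency reduction to $\CSP(\mathbb G)$ use literally the same sets, and we need say nothing about consistency. It therefore suffices to establish a bijection between feasible $\mathbb G$-valued solutions of the affine system and homomorphisms $\kappa_k^{\rel A, \rel G}(\rel X) \to \rel G$; from there, completeness and soundness of the relaxation and of the $k$-consistency reduction coincide.

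For the forward direction of the bijection, given a feasible solution $\{x_{K,f}\}$, I define
\[
  h([K; b]) = \sum_{f \in \mathcal F_K} b(f) \cdot x_{K, f},
\]
where $b(f) \in G$ is read as a $\mathbb Z$-scalar acting on $x_{K, f}$ through the generator $1$. Well-definedness under the identifications of step (5) reduces to the marginal equations via the rewriting $\sum_{f \in \mathcal F_K} b(f|_L) x_{K, f} = \sum_{g \in \mathcal F_L} b(g) \sum_{f : f|_L = g} x_{K, f} = \sum_g b(g) x_{L, g}$. Preservation of the ternary relation $y_1 + y_2 = y_3$ is immediate from bilinearity of the pairing $(b, x) \mapsto \sum_f b(f) x_{K, f}$, while preservation of the unary relation $y = 1$ reduces to the normalization $\sum_f x_{K, f} = 1$, since $(y = 1)^{\rel G^{\mathcal F_K}}$ is the singleton containing the constant function $\mathbf 1_K$.

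For the converse, given $h\colon \kappa_k^{\rel A, \rel G}(\rel X) \to \rel G$, I set $x_{K, f} = h([K; e_f])$, where $e_f$ is the indicator of $f$ in $\mathcal F_K$. Since $\rel G^{\mathcal F_K}$ contains every function $\mathcal F_K \to G$, all intermediate sums lie in the instance, so iterating the ternary addition relation shows that $h$ preserves arbitrary finite sums. Normalization follows from $\sum_f e_f = \mathbf 1_K$ and $h(\mathbf 1_K) = 1$, and the marginal equations follow from the identification $[K; e_g \circ \rho] = [L; e_g]$ of step (5) combined with $e_g \circ \rho = \sum_{f : f|_L = g} e_f$. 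The main obstacle I anticipate is making the $\mathbb Z$-module arithmetic precise and carefully propagating binary-addition preservation to arbitrary finite sums; the restriction to cyclic $\mathbb G$ in the hypothesis is precisely what allows us to treat $G$ uniformly as a $\mathbb Z$-module and to represent every coefficient as an integer multiple of the distinguished generator, and it is also what makes the two hierarchies coincide level-by-level rather than merely interleave.
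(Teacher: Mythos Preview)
Your proposal is correct and follows essentially the same route as the paper's sketch: both construct the homomorphism $h$ and the solution $x_{K,f}$ via the formulae analogous to \eqref{eq:sa-homomorphism} and \eqref{eq:sa-solution}, and both rely on the fact that a cyclic group carries a ring structure (the paper says this directly, while you unpack it as ``read $b(f)$ as a $\mathbb Z$-scalar through the generator $1$''). Your observation that Definition~\ref{al:affine} already uses the post-enforcement sets $\mathcal F_K$, so no separate consistency argument is needed here (unlike in the Sherali--Adams case), is a genuine simplification that the paper's sketch leaves implicit.
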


\begin{proof}[Proof sketch]
  We only need to show that replacing the linear system in Definition~\ref{al:affine} by steps (4) and (5) of the $k$-consistency reduction does not yield a more powerful reduction. This is argued in a similar way as in the sketch of the proof of Theorem~\ref{thm:sa}. Since $\mathbb G$ is either $\mathbb Z_n$, or $\mathbb Z$, we can assume a ring structure on $\mathbb G$.
  This allows us to use analogous definitions for $h$ and $x$ as in the case of linear programming above, i.e., as in Equations~\eqref{eq:sa-homomorphism} and \eqref{eq:sa-solution} replacing $\mathbb Q$ by $\mathbb G$ in both definitions.
\end{proof}

\subsubsection*{Obstructions to consistency reductions and a conjecture}

In this subsection we shall explore the role of group CSPs as `obstacles' for $k$-consistency reductions. A seminal example is the fact that $k$-consistency does not solve $\CSP(\mathbb G)$ for any non-trivial finite Abelian group $\mathbb G$ \cite[Theorem 30]{FV98}, or equivalently, $\CSP(\mathbb G) \not\leqdl \CSP(\False)$. It then follows that, e.g., $\CSP(\rel K_3)\not\leqdl \CSP(\False)$ since, say, $\CSP(\mathbb Z_3)$ reduces to $\CSP(\rel K_3)$ by a gadget reduction and \Datalog reductions are transitive by Theorem~\ref{thm:ddatalog-composes}.
The following proposition allows us to push this approach a bit further.

\begin{proposition}
  $\CSP(\mathbb Z_p)\not\leqdl\CSP(\mathbb Z_q)$ for any primes $p \neq q$.
\end{proposition}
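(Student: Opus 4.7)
The plan is to derive a contradiction. Assume $\CSP(\mathbb Z_p) \leqdl \CSP(\mathbb Z_q)$. Corollary~\ref{cor:canonical-form} then produces a $k \geq 1$ with $\CSP(\mathbb Z_p) \leqcons \CSP(\mathbb Z_q)$, and Lemma~\ref{lem:V.7} translates this into the statement that the $\mathbb Z_q$-affine $k$-consistency relaxation solves $\CSP(\mathbb Z_p)$. Thus it is enough to construct, for this fixed $k$, a single instance $\rel X$ of $\CSP(\mathbb Z_p)$ that has no solution over $\mathbb Z_p$ but is nevertheless accepted by the $\mathbb Z_q$-affine $k$-consistency relaxation.

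For the construction I would use a Tseitin-style instance on a $3$-uniform hypergraph $H = (V, E)$ that is $r$-regular with $p \mid r$ and whose incidence bipartite graph has girth strictly greater than $k$; such hypergraphs exist by standard probabilistic arguments. Put one variable $x_v$ per vertex and one ternary equation $x_{v_1} + x_{v_2} + x_{v_3} = b_e$ per hyperedge $e = \{v_1, v_2, v_3\}$, with the $b_e \in \mathbb Z_p$ chosen so that $\sum_e b_e \not\equiv 0 \pmod p$ (for instance, one $b_{e_0} = 1$ and the rest zero). Summing all equations yields $r \sum_v x_v \equiv 0 \pmod p$ on the left by regularity, contradicting the nonzero right-hand side, so $\rel X$ has no solution over $\mathbb Z_p$. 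These equations can be rewritten into the canonical signature of $\CSP(\mathbb Z_p)$ (which offers only $x_1 + x_2 = y$ and $y = 1$) using elementary gadgets, at the cost of enlarging $k$ by a constant factor but preserving unsatisfiability and the local affine structure needed below.

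The verification that the $\mathbb Z_q$-affine relaxation accepts $\rel X$ then reduces to a one-line computation. By the girth hypothesis, for every $K \in \binom{X}{\leq k}$ the constraints whose scope lies inside $K$ form a hypergraph-acyclic subsystem, so the set $\mathcal F_K$ of partial homomorphisms is a nonempty affine subspace of $\mathbb Z_p^K$ of some dimension $d_K$, and for every $L \subseteq K$ the restriction map $\mathcal F_K \to \mathcal F_L$ is surjective with all fibres of the same cardinality $|\mathcal F_K|/|\mathcal F_L|$. In particular, the $k$-consistency enforcement in Definition~\ref{al:affine} leaves each $\mathcal F_K$ untouched. Setting $x_{K, f} = |\mathcal F_K|^{-1}$ in $\mathbb Z_q$ — well defined because $|\mathcal F_K| = p^{d_K}$ is invertible modulo $q$ since $\gcd(p, q) = 1$ — the first family of equations collapses to $|\mathcal F_K| \cdot |\mathcal F_K|^{-1} = 1$ and the marginal equations to $(|\mathcal F_K|/|\mathcal F_L|) \cdot |\mathcal F_K|^{-1} = |\mathcal F_L|^{-1} = x_{L, g}$, exhibiting the required feasible assignment and contradicting the starting hypothesis.

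The main technical obstacle is the combinatorial part: securing a regular hypergraph whose girth is large enough that every $K$ of size at most $k$ induces a genuinely acyclic subsystem (so that $\mathcal F_K$ is a full affine subspace with surjective restrictions), and checking that the gadget that rewrites the Tseitin equations into the canonical signature does not disturb this affine structure. Once this forest-like local picture is established, the refutation collapses to the single number-theoretic input that $p$ is invertible modulo $q$, which is precisely where the hypothesis $p \neq q$ enters.
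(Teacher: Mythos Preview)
Your proposal is correct and follows essentially the same endgame as the paper: exhibit an unsatisfiable instance of $\CSP(\mathbb Z_p)$ for which the uniform assignment $x_{K,f} = \lvert \mathcal F_K\rvert^{-1}$, interpreted in $\mathbb Z_q$ via $\gcd(p,q)=1$, satisfies the $\mathbb Z_q$-affine $k$-consistency system. The difference lies in how you guarantee that each $\mathcal F_K$ is a nonempty affine subspace with surjective restriction maps. You achieve this by an explicit high-girth Tseitin construction, so that the constraints inside any $K$ of size $\le k$ are acyclic and the consistency procedure is vacuous; this is concrete but costs you the combinatorial existence argument and the somewhat delicate check that the gadget translation into the canonical signature preserves local acyclicity. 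The paper instead starts from \emph{any} $k$-consistent unsolvable instance (citing \cite{FV98,ABD09}) and makes the structural observation that the consistency procedure itself preserves affineness: each $\mathcal F_K$ begins as an affine subspace, and each pruning step intersects with a projection of an affine subspace, hence the fixed point is affine and the restriction maps are affine with uniform fibres. This route is shorter and avoids both the hypergraph construction and the signature-translation issue entirely; your route, on the other hand, makes the instance explicit and sidesteps having to analyse the dynamics of the consistency enforcement.
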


\begin{proof}
  We start with a $k$-consistent but unsolvable instance $\rel X$ of $\CSP(\mathbb Z_p)$, i.e., with a system of equations modulo $p$ which is not solvable, but is accepted by the $k$-consistency test. Such a system is known to exist (see, e.g., \cite{FV98, ABD09}). Observe that, after each step in the $k$-consistency procedure, the sets $\mathcal F_K$ are always affine subspaces of $\mathbb Z_p^K$ (to begin with they are defined by linear equations, and in each step, we intersect with a projection of another affine subspace). Since the system is consistent, the subspaces are non-empty.

  Further observe that the maps $\pi\colon \mathcal F_K \to \mathcal F_L$ defined by restriction to $L$ are affine, and hence the size of the preimage of an element $f\in \mathcal F_L$ under $\pi$ does not depend on $f$ --- it only depends on the dimension of the kernel of $\pi$. Hence the uniform probability on $\mathcal F_K$, i.e., $\lambda\colon \mathcal F_K \to \mathbb Q$ defined as $\lambda(f) = 1/p^d$ where $d$ is the dimension of $\mathcal F_K$, solves the corresponding Sherali--Adams linear program. Since $p$ and $q$ are coprime, we can interpret the expression $1/p^d$ as an element of $\mathbb Z_q$. Consequently, this assignment is a feasible solution of the $\mathbb Z_q$-affine $k$-consistency relaxation with input~$\rel X$.
\end{proof}

We note that the above proposition could be also proved by using the fact that $\CSP(\mathbb Z_p)$ is not expressible in fixed-point logic with modulo $q$ rank operators (see, e.g., Holm~\cite{Hol11}). This alternative approach has been communicated to us by Anuj Dawar \cite{Daw22} and precedes the proof given here.

It follows from the previous proposition that $\CSP(\rel K_3) \not\leqdl \CSP(\mathbb Z_2)$. More generally, Ciardo and Živný \cite{CZ22-lp-aip} proved that, for any $c\geq 3$ and $k \geq 2$, $\PCSP(\rel K_3, \rel K_c)$ is not solved by the $k$-th level of LP+AIP hierarchy, which implies that $\CSP(\rel K_3) \not\leqdl \CSP(\mathbb Z)$.
Since $\CSP(\rel K_3)$ allows a gadget reduction from any CSP with a finite template, we can view this lack of a reduction as witnessed by $\CSP(\mathbb G)$ where $\mathbb G$ is not Abelian.

This suggests the following intriguing question: Could it be that group CSPs constitute a \emph{complete} set of obstructions for consistency reductions between finite-template CSPs? More concretely, is it true that for every finite structures $\rel A$ and $\rel B$, $\CSP(\rel A) \leqdl \CSP(\rel B)$ if and only if, for every finite group $\mathbb G$, $\CSP(\mathbb G) \leqdl \CSP(\rel A)$ implies $\CSP(\mathbb G) \leqdl \CSP(\rel B)$ or, alternatively,
does every equivalence class of finite-template CSPs up to consistency reductions contain a group CSP?
The positive verification in the particular case $\rel B = \False$ is essentially the notorious bounded width conjecture of Larose and Zádori \cite{LZ07} confirmed by Barto and Kozik \cite{BK14}.
Although extending this proof to all $\rel B$ is currently out of reach, we believe that the answer will be affirmative. In particular, we conjecture the following.

\begin{conjecture} \label{the-conjecture}
  For every finite structure $\rel A$, either $\CSP(\mathbb G) \leqdl \CSP(\rel A)$ for a non-Abelian finite group $\mathbb G$, or $\CSP(\rel A) \leqdl \CSP(\mathbb Z)$.
\end{conjecture}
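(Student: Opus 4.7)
Since the statement is a conjecture, my approach would be to view it as a refinement of the Bulatov--Zhuk dichotomy that demands the witnesses of hardness and tractability live at the level of \Datalog reductions, and then to attack each side of the dichotomy separately.

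For the hard side, where $\CSP(\rel A)$ is \NP-complete, the Bulatov--Zhuk theorem already supplies a gadget reduction $\CSP(\rel K_3) \to \CSP(\rel A)$, which by Theorem~\ref{thm:gadget-is-ddatalog} is a \Datalog reduction. By transitivity of $\leqdl$ (Theorem~\ref{thm:ddatalog-composes}), it would then suffice to exhibit a single non-Abelian finite group $\mathbb G$ (e.g., $A_5$) together with a \Datalog reduction $\CSP(\mathbb G) \leqdl \CSP(\rel K_3)$. I would look for this reduction by analysing the classical gadget from $\CSP(\rel K_3)$ to group word problems (as in Goldmann--Russell) and checking whether it can be made oblivious and monotone; if not, I would try to reach $\CSP(\mathbb G)$ through an intermediate bounded-width instance, exploiting that the non-Abelian group provides the only obstruction that survives consistency. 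This step is genuinely delicate because \NP-hardness does not automatically give \Datalog reductions: indeed the paper explicitly flags the related question for Huang's theorem as open.

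For the tractable side, the plan is to walk through Zhuk's algorithm and express each of its recursive steps as a \Datalog reduction whose ultimate target is $\CSP(\mathbb Z)$. Zhuk's algorithm alternates PC (``linear'') reductions, reductions into absorbing or central subuniverses, and reductions modulo centralising congruences. Affine steps should translate directly to $\mathbb Z$ via Lemma~\ref{lem:V.7}, using that $\mathbb Z$ surjects onto every finite cyclic group and hence, after gadgetry, onto every finite Abelian group CSP. Subalgebra/central steps resemble gadget replacements and are \Datalog-expressible by Theorem~\ref{thm:gadget-is-ddatalog}. The remaining composition must be shown to remain sound. An alternative route, which I would pursue in parallel, is to directly generalise the Barto--Kozik proof of the Larose--Zádori bounded-width theorem, replacing the sink $\bot$ (i.e., the $k$-consistency test) by $\CSP(\mathbb Z)$ (i.e., the $\mathbb Z$-affine $k$-consistency relaxation), since the conjecture specialises to bounded width when the only forbidden obstructions are Abelian groups.

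The main obstacle is clearly the tractable case. Zhuk's algorithm is inherently \emph{adaptive}: the choice between PC, centralising, and absorbing steps depends on the current instance, whereas a \Datalog reduction must be oblivious and monotone (Lemma~\ref{lem:datalog-is-monotone}). One therefore cannot literally translate the algorithm; what is needed is a uniform, algebraic characterisation of when the combined $\mathbb Z$-affine $k$-consistency relaxation is sound for $\CSP(\rel A)$, and a proof that the absence of non-Abelian group obstructions forces that characterisation to hold. In essence, a successful proof would extend the characterisation of the arc-consistency reduction (Theorem~\ref{thm:unary}) to the full $k$-consistency reduction and then show that the resulting polymorphism-level condition is completely determined by the class of group CSPs reducing into $\Pol(\rel A)$. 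Producing such a uniform witness from a non-uniform tractability proof appears to be the core conceptual hurdle and is why the authors leave the statement as a conjecture.
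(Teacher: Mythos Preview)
The statement is a conjecture; the paper does not prove it, and neither do you. What you have written is a research programme, which is the appropriate response, so the question is only whether your plan lines up with the paper's own discussion of the evidence.

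Your treatment of the hard side is overcomplicated and slightly misdirected. If $\CSP(\rel A)$ is Bulatov--Zhuk hard, then $\CSP(\rel K_3) \leqdl \CSP(\rel A)$ via a gadget reduction, as you note. The remaining step, $\CSP(\mathbb G) \leqdl \CSP(\rel K_3)$ for some non-Abelian $\mathbb G$, is \emph{immediate}: the paper explicitly records (just above the conjecture) that $\CSP(\rel K_3)$ allows a gadget reduction from \emph{every} finite-template CSP, so in particular from $\CSP(\mathbb G)$. There is no delicate construction to carry out here, and your appeal to Huang's theorem is a red herring --- that open question concerns \emph{promise} CSPs, whereas the conjecture is stated for finite-template non-promise CSPs, where the algebraic machinery already supplies all the gadget reductions one needs on the hard side.

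On the tractable side your diagnosis is correct and matches the paper's own assessment: the obstacle is expressing a uniform, monotone reduction to $\CSP(\mathbb Z)$ for every Taylor template, and adaptive algorithms such as Zhuk's do not translate directly. The paper singles out the Mal'cev case (a subclass of few subpowers) as the likely bottleneck, which is a sharper target than ``walk through Zhuk's algorithm'': the bounded-width part is already handled by Theorem~\ref{thm:bw}, and finite Abelian groups by Lemma~\ref{lem:V.7}, so the genuinely new content is combining these with the Mal'cev/few-subpowers world. Your suggestion to generalise Theorem~\ref{thm:unary} from arc-consistency to $k$-consistency is a natural ambition, but note that even such a characterisation would not by itself resolve the conjecture --- one would still need to verify the resulting polymorphism condition for every Taylor template.
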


The conjecture has an important consequence. If true, it would show that all tractable CSPs are solved by the $\mathbb Z$-affine $k$-consistency relaxation for some $k \geq 1$.
Combining results of Section~\ref{sec:local-reductions} and known results it follows that this algorithm solves correctly all CSPs solvable by the $k$-consistency algorithm as well as systems of linear equations over a finite Abelian group, i.e., it solves the two prime examples of tractable CSPs.

Let us note that another polynomial-time CSP algorithm coming from the algebraic approach, \emph{few subpowers}, does not compare well with consistency reductions. In particular, the class of CSPs having few subpowers (i.e., the class of CSPs that are solved by this algorithm) contains all CSPs of Abelian groups, and hence infinitely many problems that are incomparable by consistency reductions. We believe that confirming (or disproving) the conjecture in the case of CSPs with a Mal'cev polymorphism, which form a small subclass of few subpowers, would likely lead us close to a complete resolution.

We conclude by observing that several relaxation variants previously defined are at least as powerful as the $\mathbb Z$-affine $k$-consistency relaxation introduced here: the \emph{LP+AIP hierarchy} \cite{BGWZ20}, and \emph{cohomological $k$-consistency} introduced by Ó~Conghaile~\cite{O-Con22}.
The power of these relaxations is not yet well understood even in the case of CSPs. For example, the comparison of the power of cohomological $k$-consistency with respect to the hierarchy of LP+AIP is not clear.
Our conjecture implies that all variants solve exactly the same CSPs, namely, all tractable ones (assuming $\Ptime \neq \NP$).

\section{Characterisation of the arc-consistency reduction}
  \label{sec:arc-consistency}

In this section, we characterise the applicability of a special case of a \Datalog reduction. Namely, a reduction that is obtained from the $k$-consistency reduction by replacing enforcing $k$-consistency with enforcing \emph{arc-con\-sis\-ten\-cy} instead.
This characterisation can be used together with a result of Barto and Kozik \cite{BK22} to obtain a new sufficient condition for \Datalog reductions (see Appendix~\ref{app:sufficient-condition}).

We start with describing the reduction which is a combination of enforcing arc-consistency and the universal gadget reduction.

\begin{definition}[the arc-consistency reduction] \label{def:arc-consistency}
  Let $\rel A$ be a $\Pi$-struc\-ture, and $\rel B$ be a $\Sigma$-structure.  The corresponding \emph{arc-consistency reduction} applied to a $\Pi$-structure $\rel X$ outputs a $\Sigma$-structure $\rel Y$ constructed as follows:
  \begin{enumerate}
    \item For each variable $v \in X_t$ of type $t$, set $\mathcal F_v = A_t$.
    \item Ensure that for each constraint $(v_1, \dots, v_k) \in R^{\rel X}$, the sets $\mathcal F_{v_i}$ are consistent, i.e., for each $i$, update $\mathcal F_{v_i}$ to be equal to the $i$-th projection of $R^{\rel A} \cap (\mathcal F_{v_1} \times \dots \times \mathcal F_{v_k})$.
    \item Repeat (2), while anything changes.
    \item For each $v \in X$, include in $\rel Y$ a copy of $\rel B^{\mathcal F_v}$ whose elements are denoted by $(v; b)$ where $b\colon \mathcal F_v \to B$.
    \item For constraint $(v_1, \dots, v_k) \in R^{\rel X}$, include in $\rel Y$ a copy of $\rel B^{\mathcal C_{v, R}}$ where $\mathcal C_{v, R} = R^{\rel A} \cap (\mathcal F_{v_1} \times \dots \times \mathcal F_{v_k})$ whose elements are denoted by $((v, R); b)$ where $b\colon \mathcal C \to B$.
    \item For each constraint as above, each $i\in [k]$, and $b\colon \mathcal F_{v_i} \to B$, identify the elements $((v, R); b \circ \pi_i)$ and $(v_i; b)$ where $\pi_i \colon \mathcal C_{v, R} \to \mathcal F_{v_i}$ is the $i$-th projection.
  \end{enumerate}
  We denote the resulting structure $\ac^{\rel A, \rel B}(\rel X)$.
\end{definition}

In order to formulate our main result in this section we recall the basic definitions of the algebraic theory of promise CSPs \cite[Section 2]{BBKO21}. In this paper we work with abstract minions, as opposed to function minions defined in \cite[Definition 2.20]{BBKO21}.

\begin{definition} \label{def:minion}
  An \emph{(abstract) minion} is a functor $\clo M$ from finite sets to sets which preserves emptyness of sets.
  In detail, $\clo M$ is a mapping that assigns to each finite set $X$ a set $\clo M^{(X)}$, and to each function $\pi \colon X \to Y$ between two finite sets $X$, $Y$, a function $\pi^\clo M\colon \clo M^{(X)} \to \clo M^{(Y)}$, such that $1_X^\clo M = 1_{\clo M^{(X)}}$ (recall that $1_X$ denotes the identity map on $X$), and $\pi^\clo M \circ \sigma^\clo M = (\pi \circ \sigma)^\clo M$ for all $\pi$ and $\sigma$ where the composition makes sense.
  Moreover, we require that $\clo M^{(X)} = \emptyset$ if and only if $X = \emptyset$.
  When the minion is clear from the context, we write $f^\pi$ for $\pi^\clo M(f)$.

  A \emph{minion homomorphism} from $\clo M$ to $\clo N$ is a natural transformation $\xi\colon \clo M \to \clo N$, i.e., a collection of maps $\xi_X\colon \clo M^{(X)} \to \clo N^{(X)}$, one for each finite set $X$, such that $\pi^\clo N \circ \xi_X = \xi_Y \circ \pi^\clo M$ for all $\pi \colon X \to Y$.
\end{definition}

A different view on a minion homomorphism is to see it as a structure homomorphism where each minion is viewed as a typed structure, with an element $f$ of type $X$ for each $f \in \clo M^{(X)}$, and a binary relation $R_\pi = \{(f, f^\pi) \mid f\in \clo M^{(X)}\}$ for each $\pi \colon X \to Y$.

\begin{definition} \label{def:polymorphism-minion}
  The \emph{polymorphism minion} $\Pol(\rel A, \rel B)$ of a promise template $\rel A, \rel B$ is the minion $\clo M$ with $\clo M^{(X)} = \{ f \colon \rel A^X \to \rel B \}$ where $\pi^\clo M$ is defined by $f_t^\pi(x) = f_t(x \circ \pi)$ for all types $t$, $x \in X_t^N$, and $\pi\colon N \to N'$.
  Elements of $\clo M^{(X)}$ are called \emph{polymorphisms of arity~$X$}.
\end{definition}

The characterisation of the arc-consistency reduction uses minion homomorphisms together with the following transformation on minions.

\begin{definition} \label{def:omega}
  Let $\clo M$ be an abstract minion; we define a minion $\omega(\clo M)$ as follows:
  \[
    \omega(\clo M)^{(X)} = \{ (Y, f) \mid Y\subseteq X, f\in \clo M^{(Y)} \}
  \]
  The minor-taking operation for $\pi \colon X \to Y$, is defined by $(Z, f)^\pi = (\pi(Z), f^{\pi|_Z})$ where $\pi(Z) = \{\pi(x) \mid x \in Z\}$.
\end{definition}

Intuitively, we interpret the construction above as `remembering' for each element of $\clo M$ on which coordinates it depends. This influences the definition of the minor by taking into account the following argument: if $g$ only depends on coordinates in $Y$ then $g^\pi$ only depends on coordinates in $\pi(Y)$.
The algebraic intuition behind $\omega(\clo M)$ is that this minion is constructed in such a way that it satisfies precisely \emph{balanced minor conditions} (i.e., sets of identities of the form $f^\pi = g$ where $\pi$ is surjective) satisfied in $\clo M$.

We note that $\omega(\Pol(\False))$ is isomorphic to the polymorphism minion of Horn-3SAT; see~\cite[Example 4.2]{BBKO21}. This is related to the fact that arc-consistency test is characterised via gadget reductions to Horn-3SAT \cite[Theorem 7.4]{BBKO21}.

Finally, we can formulate the characterisation of arc-consistency reduction.
The proof relies on Theorem~\ref{thm:gadget-characterisation} and its proof \cite[Section 3]{BBKO21}; the novel ingredient is a connection between $\omega$ and the arc-consistency enforcement (see Lemma~\ref{lem:arc-adjoint} in the appendix).

\begin{theorem} \label{thm:unary}
  Let $\rel A, \rel A'$ and $\rel B, \rel B'$ be two promise templates. Then $\PCSP(\rel A, \rel A') \leqarc \PCSP(\rel B, \rel B')$ if and only if there is a minion homomorphism $\omega(\Pol(\rel B, \rel B')) \to \Pol(\rel A, \rel A')$.
\end{theorem}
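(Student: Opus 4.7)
The plan is to mirror the proof of Theorem~\ref{thm:gadget-characterisation} by decomposing the arc-consistency reduction $\ac^{\rel A,\rel B}$ into two stages: the arc-consistency enforcement (steps (1)--(3) of Definition~\ref{def:arc-consistency}), which produces a multisorted label-cover-style instance with ``labels'' $\mathcal F_v$ and constraint relations $\mathcal C_{v,R}$; followed by the universal gadget replacement (steps (4)--(6)), which is governed by Theorem~\ref{thm:gadget-characterisation}. The role of $\omega$ is to algebraically capture how the enforcement stage transforms polymorphism minions, an adjunction-like property that would be formalised as Lemma~\ref{lem:arc-adjoint}.

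\textbf{The ``if'' direction.} Given a minion homomorphism $\xi \colon \omega(\Pol(\rel B, \rel B')) \to \Pol(\rel A, \rel A')$ and a homomorphism $h \colon \ac^{\rel A,\rel B}(\rel X) \to \rel B'$, we build $g \colon \rel X \to \rel A'$ as follows. For each $v \in X_t$, the restriction of $h$ to the sub-gadget $\rel B^{\mathcal F_v}$ yields a homomorphism $\rel B^{\mathcal F_v} \to \rel B'$, i.e., an element $f_v \in \Pol(\rel B, \rel B')^{(\mathcal F_v)}$; since $\mathcal F_v \subseteq A_t$, the pair $(\mathcal F_v, f_v)$ lies in $\omega(\Pol(\rel B, \rel B'))^{(A_t)}$. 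Apply $\xi$ and evaluate the $t$-component of the resulting polymorphism $\rel A^{A_t} \to \rel A'$ on the identity tuple $A_t \to A_t$ to define $g(v) \in A'_t$. Constraint preservation follows from the identifications in step~(6): each constraint $(v_1, \dots, v_k) \in R^{\rel X}$ provides, via $h$ on $\rel B^{\mathcal C_{v,R}}$, a common ``lift'' $f_{v,R}$ satisfying $f_{v_i} = f_{v,R}^{\pi_i}$, and naturality of $\xi$ with respect to these projections, together with the minor rule $(Y, f)^\pi = (\pi(Y), f^{\pi|_Y})$ defining $\omega$, yields $(g(v_1), \dots, g(v_k)) \in R^{\rel A'}$.

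\textbf{The ``only if'' direction.} Apply the reduction to the free instances $\rel A^S$ for every finite set $S$. Since $\rel A^S \to \rel A'$ via any coordinate projection composed with $\rel A \to \rel A'$, completeness of the reduction yields $\ac^{\rel A,\rel B}(\rel A^S) \to \rel B'$. Inspecting the effect of steps (1)--(3) on $\rel A^S$, every subset $Y \subseteq A_t$ arises as $\mathcal F_v$ for a suitable $v \in (A_t)^S$---up to a canonical restriction map relating $S$ and $Y$; this is the content of Lemma~\ref{lem:arc-adjoint}. The chosen homomorphism into $\rel B'$ then assigns, to each $(Y, f) \in \omega(\Pol(\rel B, \rel B'))^{(S)}$, a polymorphism of arity $S$ of $(\rel A, \rel A')$, and this assignment commutes with all minor maps $\pi \colon S \to S'$ because the identifications imposed in steps (5)--(6) translate precisely into the minor-taking rule of $\omega$; hence one obtains a minion homomorphism $\omega(\Pol(\rel B, \rel B')) \to \Pol(\rel A, \rel A')$.

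\textbf{Main obstacle.} The core technical step is establishing Lemma~\ref{lem:arc-adjoint}, which must make precise the intuition that arc-consistency enforcement acts as a left adjoint to $\omega$. Concretely, one must show that in the free instance $\rel A^S$ the sets $\mathcal F_v$ produced by enforcement are exactly the subsets $Y \subseteq A_t$ that can index elements of $\omega(\Pol(\rel B, \rel B'))^{(S)}$, and that the corresponding restriction maps---together with their counterparts for the constraint sets $\mathcal C_{v, R}$---match the minor-taking on $\omega$ functorially. The multisorted bookkeeping, which must simultaneously track subsets of different type domains and the mixed-type subsets $\mathcal C_{v,R}$, is the primary source of technical subtlety; everything else reduces to a careful unpacking of Theorem~\ref{thm:gadget-characterisation} applied to the intermediate label-cover-style instance.
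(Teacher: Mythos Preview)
Your ``if'' direction is correct and is essentially an unfolded version of the paper's argument: the paper factors $\ac^{\rel A,\rel B} = \pi_{\rel B}\circ\ac\circ\rho^{\rel A}$ and chains Lemma~\ref{lem:3.16}, Lemma~\ref{lem:arc-adjoint}, and the given minion homomorphism, whereas you carry out the same computation concretely on the gadgets $\rel B^{\mathcal F_v}$ and $\rel B^{\mathcal C_{v,R}}$.

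Your ``only if'' direction, however, has a genuine gap. You invoke \emph{completeness} to obtain $\ac^{\rel A,\rel B}(\rel A^S)\to\rel B'$, but completeness of the arc-consistency reduction holds unconditionally (Lemma~\ref{lem:completeness-of-consistency} adapts verbatim). The substantive content of the hypothesis $\PCSP(\rel A,\rel A')\leqarc\PCSP(\rel B,\rel B')$ is \emph{soundness}, and your argument never uses it; as written it would produce a minion homomorphism $\omega(\Pol(\rel B,\rel B'))\to\Pol(\rel A,\rel A')$ for every pair of templates, which is false (take $\rel B=\rel B'=\bot$ and $\rel A=\rel A'=\rel K_3$). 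Relatedly, the test instance $\rel A^S$ is the wrong one: applying soundness to it only tells you $\rel A^S\to\rel A'$, which you already knew via a projection. What you describe as ``the content of Lemma~\ref{lem:arc-adjoint}'' is also not what that lemma says: it is an adjunction statement for label cover instances, $\ac(\rel S)\to\clo M \iff \rel S\to\omega(\clo M)$, not a claim about which subsets $Y\subseteq A_t$ arise as $\mathcal F_v$ in $\rel A^S$.

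The paper's route is to feed soundness an instance that actually encodes $\omega(\clo B)$, namely the free structure $\pi_{\rel A}\omega(\clo B)$ (or, by compactness, its finite substructures $\rel F$). One then checks $\ac^{\rel A,\rel B}(\rel F)\to\rel B'$ via the chain $\rho^{\rel A}(\rel F)\to\omega(\clo B)$ (Lemma~\ref{lem:4.3}), $\ac\rho^{\rel A}(\rel F)\to\clo B$ (Lemma~\ref{lem:arc-adjoint}), $\pi_{\rel B}\ac\rho^{\rel A}(\rel F)\to\rel B'$ (Lemma~\ref{lem:3.16}); soundness then yields $\rel F\to\rel A'$, and hence $\pi_{\rel A}\omega(\clo B)\to\rel A'$, which by Lemma~\ref{lem:3.16} is exactly the desired minion homomorphism.
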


\section{Conclusion}
  \label{sec:conclusion}

In the present paper, we introduced a general framework of well-structured polynomial-time computable reductions between (promise) CSPs. We see this theory as a foundational work for further study of the complexity of CSPs and promise CSPs. Apart from the obvious future goal of characterising \Datalog reductions by the means of some mathematical invariants of the problems considered, we enclose a few concrete interesting research directions for future work.

\subsubsection*{Hierarchies of polynomial time algorithms}

The following two questions, which have been resolved for non-promise CSPs, remain open for promise CSPs: (1) a characterisation of promise CSPs of bounded width, and (2) a characterisation of promise CSPs solvable by some level of the Sherali--Adams hierarchy. Both of these classes of problems can be described as a $k$-consistency hierarchy associated to some problem, and hence it is plausible that our approach might be useful here.

For example, we have observed that $\CSP(\mathbb G)$ is not solved by any level of the Sherali--Adams hierarchy for any non-trivial group $\mathbb G$, and hence no promise CSP that allows a \Datalog reduction from one of these problems can be solved by some level of the Sherali--Adams hierarchy. Is the converse true, i.e., is $\PCSP(\rel A, \rel B)$ solved by some level of the Sherali--Adams hierarchy if and only if $\CSP(\mathbb G) \not\leqdl \PCSP(\rel A, \rel B)$ for all non-trivial groups $\mathbb G$?

\subsubsection*{Fragments of \Datalog reductions}

We have aimed to define a class of reductions that is as expressive as possible while remaining polynomial-time computable. Nevertheless, there are fragments of our reductions that are worth studying for several reasons. Naturally, it is useful if this fragment is expressive enough to be able to emulate gadget reductions, and that it composes similarly to \Datalog reductions.

For example, we could use a log-space computable fragment of Datalog, called \emph{symmetric Datalog}, which has been introduced by Egri, Larose, and Tesson \cite{EgriLT07}, and is able to express reflexive symmetric transitive closure which is the essential part of proving that gadget reductions are expressible as \Datalog reductions. In view of the discussion in Kazda \cite[Section 8]{Kaz18}, these \emph{symmetric \Datalog reductions} could be a good approximation for all log-space reductions between (promise) CSPs.

\subsubsection*{Descriptive complexity of CSPs}

We mentioned some results on the descriptive complexity of CSPs in passing: Atserias, Bulatov, and Dawar \cite{ABD09} proved that a finite-template CSP is expressible in the \emph{fixed-point logic with counting operators} $\FPC$ if and only if it has bounded width, and Holm \cite{Hol11} showed that $\CSP(\mathbb Z_p)$ is not expressible in the \emph{fixed-point logic with rank mod $q$ operator} $\FPR_q$ for any two distinct primes $p$ and $q$. We may ask which finite-template (promise) CSPs are expressible in these logics in general?

It can be verified that if a (promise) CSP is expressible in a logic that extends Datalog, then all the promise CSPs that reduce to it via a \Datalog reduction are as well. For example, all the CSPs solvable by some level of the $\mathbb Z_2$-affine $k$-consistency relaxation are expressible in $\FPR_2$. Is the converse true? Could the converse be true in the scope of promise CSPs? Naturally, we can ask the same questions for any prime in place of $2$.

\appendix
\section{Preliminaries to the appendices}
  \label{app:preliminaries}

We recall some definitions that will be useful in the following appendices.

\subsection{Label cover}
  \label{sec:label-cover}

The label cover problem will play a key role in this paper, since many reductions will be either implicitly, or explicitly reductions between variants of label cover and other CSPs. The \emph{label cover problem} is a binary CSP where each constraint is of the form $\pi(u) = v$ for some mapping $\pi \colon D_u \to D_v$. We can view this problem as a fixed-template CSP using the multi-sorted approach although we have to allow an infinite signature.

\begin{definition}(Label cover problem) \label{def:label-cover}
  The signature of label cover has a type $X$ for each finite set $X$. Further, for each pair of finite sets $X$, $Y$, and every mapping $\pi\colon X \to Y$ it has a binary relation symbol $E_\pi$ with arity $(X, Y)$.

  The \emph{label cover problem} is $\CSP(\rel P)$ where its template $\rel P$ is the structure with $P_X = X$ for each finite set $X$ and $E_\pi^{\rel P} = \{(x, \pi(x)) \mid x\in X\}$ for each $\pi \colon X \to Y$.
\end{definition}

We refine the definition of finite structure to account for infinite signatures, and encoding of a label cover instance in a finite amount of space. We say that a structure $\rel D$ in the above signature is \emph{finite} if $D_X$ is non-empty and finite for finitely many types $X$, and empty for all other types. Note that this condition implies that $E_\pi^{\rel D} \neq \emptyset$ for finitely many relational symbols $E_\pi$.
Such a structure can be then encoded as a finite list of all its elements and a finite list of all its non-empty relations. We will call such structures \emph{label cover instances}.

We say that a signature $\Sigma$ is a \emph{finite reduct} of the label cover signature if the types of $\Sigma$ consist of a finite list of label cover types, i.e., a finite list of sets, and relations of $\Sigma$ consist of finitely many relational symbols of label cover. We say that a $\Sigma$-structure $\rel A$ is a \emph{$\Sigma$-reduct} of $\rel B$ if $A_X = B_X$ for each $\Sigma$-type $X$, and $E_\pi^{\rel A} = E_\pi^{\rel B}$ for each $\Sigma$-relation $E_\pi$.
In all cases our reductions produce a finite reduct of label cover whose signature depends on the original template. We use the full signature for convenience.

Finally, let us mention that a promise version of (a reduct of) the label cover problem was used to provide the generalisation of the algebraic approach to promise CSP, see \cite[Definition 3.10 \& Theorem 3.12]{BBKO21}. We will also use similarly defined promise versions of label cover in the present paper, but postpone the definition until Appendix~\ref{app:ac-proof}.

\subsection{Gadget reductions}
 \label{app:gadget-reductions}

Gadget reductions are the reductions that were classified by the algebraic approach. Note that, customarily, the algebraic approach focuses on the templates. That is, it aims to explore under which conditions on $\rel A$ and $\rel B$ there exists a gadget reduction from $\CSP(\rel B)$ to $\CSP(\rel A)$. This relationship between $\rel A$ and $\rel B$ is explained using \emph{pp-interpretations} (which are a special case of Datalog interpretations). However, in the present paper, our focus shifts to understanding  \emph{what the reduction does to an instance $\rel X$ of $\CSP(\rel B)$?}
We start with giving a formal definition of a gadget. Informally, a gadget reduction replaces each constraint of the input structure with several constraints of the output structure. These new constraints overlap depending on which variables appear in the original constraints and on what position. Formally, we replace each constraint with a structure, and we introduce new constrains also by replacing variables (again encoded in a structure); we encode a recipe for overlapping of the constraints by providing homomorphisms between these structures.

\begin{definition} \label{def:gadget}
  Let $\Pi$ and $\Sigma$ be relational signatures. A \emph{gadget} $\gamma$ mapping $\Pi$-structures to $\Sigma$-structures has three ingredients: a $\Sigma$-structure $\rel D_t^\gamma$ for each $\Pi$-type $t$, a $\Sigma$-structure $\rel R^\gamma$ for each $\Pi$-symbol $R$, and a homomorphism $p_{R, i}^\gamma \colon \rel D_{\ar_R(i)}^\gamma \to \rel R^\gamma$ for each $\Pi$-symbol $R$ of arity $k$ and $i \in [k]$.

  A gadget $\gamma$ induces a function, which we call \emph{gadget replacement} and denote by the same symbol $\gamma$, mapping every $\Pi$-structure $\rel A$ to a $\Sigma$-structure $\gamma(\rel A)$ defined as follows:
  \begin{enumerate}
    \item For each $a\in A_t$ of type $t$, introduce to $\gamma(\rel A)$ a copy of $\rel D_t^\gamma$, whose elements will be denoted as $(a; d)$ where $d\in \rel D_t^\gamma$.
    \item For each $R$ and $a \in R^{\rel A}$, introduce to $\gamma(\rel A)$ a copy of $\rel R^\gamma$, whose elements will be denoted as $(a; e)$ for $e\in \rel R^\gamma$.
    \item For each $R$ of arity $k$, $a\in R^{\rel A}$, $i \in [k]$, and $e\in \rel D_{\ar_R(i)}^\gamma$, add an equality constraint $(a; p_{R,i}(e)) = (a_i; e)$.
    \item Collapse all equality constraints (i.e., identify all pairs of elements involved in one of the constraints introduced in the previous step).
    We denote by $[a; d]$ the class of an element $(a; d)$ after collapsing.
  \end{enumerate}
\end{definition}

It is well-known that such a gadget replacement can be computed in log-space (the fact that collapsing equality constraints is in log-space is due to \citet{Rei08}).

\begin{example} \label{ex:manuel-gadget}
  Gadget reductions were extensively used by Hell and Nešetřil \cite{HN90} to provide a dichotomy for CSPs of graphs. Let us give as an example a gadget reduction from $\CSP(\rel K_5)$ to $\CSP(\rel C_5)$, where $\rel C_5$ denotes the unoriented 5-cycle. In words, this reduction is: Replace each edge on the input with a path of length 3 connecting the two original vertices.

  Formally, both signatures $\Pi$ and $\Sigma$ are signatures with one binary relation. We ignore issues with the orientation as they are irrelevant for the two CSPs. To define a gadget $\gamma$, we need two graphs $\rel D_t^\gamma$ (for the unique type $t$) and $\rel E^\gamma$ and two homomorphisms $p^\gamma_{E, 0}$ and $p^\gamma_{E, 1}$; for simplicity we will omit $\gamma$, $t$, and $E$ from the indices.
  The graph $\rel D$ is defined by what happens to the vertices. Since we keep the original vertices, $\rel D$ is a singleton graph with no edges, i.e., $\rel D = (\{{*}\}; \emptyset)$.
  The graph $\rel E$ is defined by what happens to the edges, hence $\rel E = \rel P_3$ where $\rel P_3 = (\{0, 1, 2, 3\}; \{(0, 1), (1, 2), (2, 3)\})$ is a path of length 3. The homomorphisms $p_1$ and $p_2\colon \rel D \to \rel P_3$ then define how these gadgets are pasted together. We have $p_1({*}) = 0$ and $p_2({*}) = 3$.

  The fact that $\gamma$ is a valid reduction between the two problems is proven in the same way as in \cite[Sec.~A]{HN90}.
\end{example}

\begin{example} \label{ex:strict-gadget}
  A slightly more complicated example is the following gadget reduction from $\CSP(\rel K_2)$ to $\CSP(\rel K_\infty)$ where $\rel K_\infty$ is the countable clique.\footnote{In this and connected examples, we are deviating from the convention that requires that the template is finite. Also, we treat graphs as relational structures with one binary relation, in particular the instances of $\CSP(\rel K_2)$ can be directed graphs with loops, though the orientation of edges does not matter for this example.}
  Since we are defining a graph to graph gadget $\gamma$, we need to provide two graphs $\rel D^\gamma$ and $\rel E^\gamma$ together with two homomorphisms 
  \[
    p_{E, 1}^\gamma, p_{E, 2}^\gamma \colon \rel D^\gamma \to \rel E^\gamma.
  \]
  We let $\rel D^\gamma = \rel E^\gamma = \rel K_2 = (\{0, 1\}; {\neq})$, and define $p_{E, 1}^\gamma, p_{E,2}^\gamma$ to be the two distinct automorphisms of $\rel K_2$, e.g., $p_{E,1}^\gamma(x) = x$ and $p_{E,2}^\gamma(x) = 1 - x$.

  The gadget replacement $\gamma$ then produces from a graph $\rel G$ another graph $\rel H$ in the following way:
  \begin{enumerate}
    \item Replace each vertex $v$ of $\rel G$ with a pair of vertices $(v; 0)$ and $(v; 1)$, which we will simply denote by $v_0$ and $v_1$, connected by an edge, i.e., replace each vertex with a copy $\rel K_2$.
    \item Replace each edge $e = (u, v)$ of $\rel G$ with an edge between two new vertices $e_0$ and $e_1$ (again, we write $e_i$ instead of $(e; i)$).  Introduce equality constraints $e_0 = u_0 = v_1$ and $e_1 = u_1 = v_0$, essentially identify the edges $(e_0, e_1)$, $(u_0, u_1)$ and $(v_1, v_0)$, in particular we identify the two edges introduced by replacing the two vertices \emph{in the reverse orientation}.
    \item Collapse all equality constraints as in Definition~\ref{def:gadget}.
  \end{enumerate}
  This gadget produces for each connected component of the input $\rel G$ either a loop, if the component contains an odd cycle, or an edge, if the component is bipartite. Armed with this observation, it is not hard to check that $\gamma$ is a valid reduction from $\CSP(\rel K_2)$ to $\CSP(\rel K_\infty)$.
\end{example}

An important example for us is the ability of gadget replacements to emulate disjoint unions (see Definition~\ref{def:union-gadget}).

\begin{example}[A union gadget] \label{ex:unions-as-gadget}
  We describe an example of a gadget that expresses disjoint unions. Let us for example consider a signature $\Pi$ consisting of several types $t, s, \dots$ and several binary relational symbols $R, S, \dots$. We construct a gadget $\gamma$ from $\Pi$ into digraphs (i.e., the signature with a single type and single binary relation $E$) whose application can be loosely described as `forgetting the types and names of relations':
  Let $\rel D_t^\gamma$ be the digraph with 1 vertex (element) and no edges for each $\Pi$-type $t$, and let $\rel R^\gamma$ be the digraph consisting of a single oriented edge, i.e., $\rel R^\gamma = (\{0, 1\}; \{(0, 1)\})$, for each $\Pi$-symbol $R$, and homomorphisms $p_{R,1}^\upsilon$ and $p_{R, 2}^\upsilon$ map the unique vertex of $\rel D_{\ar_R(1)}$ and $\rel D_{\ar_R(2)}$, respectively, to the initial and terminal vertex of the edge.

  Applying this gadget on a $\Pi$-structure $\rel X$ then yields a digraph whose domain (respectively edge-set) is the disjoint union of all domains (respectively relations) of $\rel X$. More concretely, the domain of $\gamma(\rel X)$ is $\{ (x; t) \mid x\in X_t \}$ and two vertices $(x; t)$ and $(y; s)$ are related in $\gamma(\rel X)$ by an edge if there is a $\Pi$-symbol $R$ with $ar_R(1) = t$ and $\ar_R(2) = s$ such that $(x, y) \in R^{\rel X}$.

  A similar construction works for any union gadget.
\end{example}

\begin{example} \label{ex:incidence-graph}
  Another interesting example of a gadget replacement is so-called \emph{incidence graph}. The incidence graph of a structure $\rel X$ is defined as a directed graph whose vertices are the disjoint union of all the domains of $\rel X$ together with all the relations of $\rel X$, and there is an edge from $y$ to $x$ if $y\in R^{\rel X}$ for some $R$, and $x = y(i)$ for some $i$. It is defined by the gadget composed of the following structures: $\rel D_i$ is the singleton graph with no edges, i.e., $\rel D_i = (\{v\}; \emptyset)$ for all types $i$. And, for a relational symbol $R$ of arity $k$, $\rel D_R$ is the graph with $k+1$ vertices $v_1, \dots, v_k, r$, and edges $(r, v_1)$, \dots, $(r, v_k)$. Further $p_{R,j}\colon \rel D_{i_j} \to \rel D_R$ is defined by $p_{R,j}(v) = v_j$.
\end{example}

In many of our proofs, it will be useful to decompose a reduction into a composition of several simpler reductions. In particular, we decompose a gadget reduction into two steps: the first step produces a structure with binary relations, or a label cover instance and can be easily described as a Datalog interpretation, the second step then places a restriction on the gadget that makes our subsequent proofs easier.
The first step has two slightly different variants --- the first one, which we call \emph{reification} and denote by $\rho$, does not depend on the template, and the second one, which we denote by $\rho^{\rel A}$, produces a label cover instance which is essentially equivalent to the construction $\Sigma(\rel A, \rel X)$ from \cite[Section 3.2]{BBKO21}.

\begin{definition} \label{def:reification}
  Assume $\rel X = (X_t, X_s, \dots; R^{\rel X}, S^{\rel X}, \dots)$ is a $\Pi$-struc\-tu\-re. The \emph{reification} of $\rel X$ is a structure $\rho(\rel X)$ constructed as follows.
  The signature $\Pi^*$ of $\rho(\rel X)$ consists of all the types of $\Pi$ together with a type $R$ for each $\Pi$-relation $R$, and a binary relation $P_{R,i}$ for each $\Pi$-symbol $R$ of arity $k$ and $i \in [k]$.
  We define $\rho(\rel X)$ as
  \[
    \rho(\rel X) = (X_t, X_s, \dots, R^{\rel X}, S^{\rel X}, \dots; P_{R,i}^{\rho(\rel X)}, \dots)
  \]
  where
  \[
    P_{R,i}^{\rho(\rel X)} =
      \{ ((a_1, \dots, a_k), a_i) \mid (a_1, \dots, a_k)\in R^{\rel X} \}
  \]
  for each $\Pi$-symbol $R$ of arity $k$ and $i\in [k]$.
\end{definition}

Reification is expressed as a Datalog interpretation $\rho$ as follows:
The domains are defined in the obvious way, i.e., the $t$-th domain is defined by the program $\rho_t$ consisting of a single rule $D_t(x) \fro x = x$ where $x$ is of type $t$ for each $\Pi$-type $t$, and the $R$-th domain is defined by the program $\rho_R$ consisting of a single rule $D_R(x_1, \dots, x_k) \fro R(x_1, \dots, x_k)$ for each $\Pi$-symbol $R$ (which is then a $\Pi^*$-type).
A relation $P_{R,i}$ is then defined by the program $\rho_{P_{R,i}}$ with a single rule
\[
  P_{R, i} (x_1, \dots, x_k, x_i) \fro R(x_1, \dots, x_k).
\]

\begin{definition} \label{def:reification-to-lc}
  Let $\Pi$ be a signature, and fix a $\Pi$-structure $\rel A$. We define a transformation $\rho^\rel A$ that maps every $\Pi$-structure $\rel X$ to a label cover instance $\rho^{\rel A}(\rel X)$ in a similar way as in Definition~\ref{def:reification} except that we rename types and symbols of $\Pi^*$ to get a signature, which is a reduct of the label cover signature:
  We use the type $A_t$ instead of $t$ and the type $R^{\rel A}$ instead of $R$, and a relation $E_\pi$, where $\pi\colon R^{\rel A} \to A_{\ar_R(i)}$ is the $i$-th projection, instead of $P_{R,i}$.
\end{definition}

Clearly, this other flavour of reification can be expressed by the Datalog program obtained from the Datalog program for reification by substituting symbols.

The restriction on the gadgets in the second step is given by the following definition.

\begin{definition} \label{def:strict-gadget}
  Let $\Sigma$ and $\Pi$ be relational signatures, and assume that each $\Pi$-symbol is binary.
  A \emph{projective gadget} $\gamma$ (of $\Sigma$ into $\Pi$) has two ingredients: a $\Sigma$-structure $\rel D_t$ for each $\Pi$-type $t$, and a homomorphism $p_R\colon \rel D_s \to \rel D_t$ for each $\Pi$-symbol $R$ with $\ar_R = (t, s)$ --- note that the order of $t$ and $s$ is reversed.
  In the same way as for a general gadget, a projective gadget $\gamma$ induces a function, which we will call \emph{(projective) gadget replacement} and denote by the same symbol $\gamma$, mapping every $\Pi$-structure $\rel A$ to a $\Sigma$-structure $\gamma(\rel A)$ defined as follows:
  \begin{enumerate}
    \item For each $a\in A_t$ introduce to $\gamma(\rel A)$ a copy of $\rel D_t$, whose elements will be denoted as $(a; d)$ for $d\in \rel D_t$.
    \item For each $R$ with $\ar_R = (t, s)$, $(a, b)\in R^{\rel A}$, and every $d\in \rel D_s$, add an equality constraint $(a; p_R(d)) = (b; d)$.
    \item Collapse all equality constraints (i.e., identify all pairs of elements involved in one of the constraints introduced in the previous step).
  \end{enumerate}
\end{definition}

\begin{example} \label{ex:strict-gadget-ii}
  The gadget from Example~\ref{ex:strict-gadget} is also expressible as a projective gadget: This projective gadget is defined by a graph $\rel D_1 = \rel K_2 = (\{0, 1\}; {\neq})$, and a map $p_E\colon \rel K_2 \to \rel K_2$ that switches $0$ and $1$ (the non-trivial automorphism of $\rel K_2$).
\end{example}

As we mentioned before, we can decompose each gadget replacement as a reification followed by a projective gadget.

\begin{lemma} \label{lem:projective-decomposition}
  For each gadget $\gamma$, there is a projective gadget $\gamma'$ such that $\gamma(\rel X)$ and $\gamma'\rho(\rel X)$ are isomorphic for all $\rel X$.
\end{lemma}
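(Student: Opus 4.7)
The plan is to show that the projective gadget $\gamma'$ obtained by ``recycling'' the pieces of $\gamma$ works, once we pass through the reification. Recall that $\rho(\rel X)$ lives in the extended signature $\Pi^*$, whose types are all $\Pi$-types together with a type $R$ for every $\Pi$-symbol $R$, and whose relations are the binary projections $P_{R,i}$ of arity $(R, \ar_R(i))$. I will define $\gamma'$ as the projective gadget of $\Sigma$ into $\Pi^*$ given by
\[
  \rel D_t^{\gamma'} = \rel D_t^\gamma \quad (\text{for }\Pi\text{-types } t), \qquad
  \rel D_R^{\gamma'} = \rel R^\gamma \quad (\text{for }\Pi\text{-symbols } R),
\]
and, for each $\Pi^*$-symbol $P_{R,i}$, the homomorphism
\[
  p_{P_{R,i}}^{\gamma'} = p_{R,i}^\gamma \colon \rel D_{\ar_R(i)}^\gamma \to \rel R^\gamma,
\]
which has the correct type since $\ar_{P_{R,i}} = (R, \ar_R(i))$ and the projective-gadget convention requires $p_{P_{R,i}}^{\gamma'} \colon \rel D_{\ar_R(i)}^{\gamma'} \to \rel D_R^{\gamma'}$.

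Next I would unfold the two constructions side by side and observe that they literally coincide before the final collapsing step. On the $\gamma(\rel X)$ side, step~(1) of Definition~\ref{def:gadget} introduces a copy $(v; d)$ of $\rel D_t^\gamma$ for each $v \in X_t$, and step~(2) introduces a copy $(r; e)$ of $\rel R^\gamma$ for each $r = (a_1,\dots,a_k)\in R^{\rel X}$. On the $\gamma'(\rho(\rel X))$ side, step~(1) of Definition~\ref{def:strict-gadget} introduces the same copies, because $\rho(\rel X)_t = X_t$ contributes $\rel D_t^{\gamma'} = \rel D_t^\gamma$ and $\rho(\rel X)_R = R^{\rel X}$ contributes $\rel D_R^{\gamma'} = \rel R^\gamma$. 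The $\Sigma$-relations inherited from these copies therefore match on both sides.

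It then remains to check that the equality constraints generated in step~(3) of Definition~\ref{def:gadget} coincide with those generated in step~(2) of Definition~\ref{def:strict-gadget}. On the $\gamma(\rel X)$ side, for each $r = (a_1,\dots,a_k) \in R^{\rel X}$, each $i\in[k]$, and each $e \in \rel D_{\ar_R(i)}^\gamma$, we identify $(r; p_{R,i}^\gamma(e))$ with $(a_i; e)$. On the $\gamma'(\rho(\rel X))$ side, the pair $(r, a_i) \in P_{R,i}^{\rho(\rel X)}$ together with $e \in \rel D_{\ar_R(i)}^{\gamma'}$ yields exactly the identification $(r; p_{P_{R,i}}^{\gamma'}(e)) = (a_i; e)$, which is the same equality by the choice of $p_{P_{R,i}}^{\gamma'}$. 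Hence the two pre-collapse structures are identical, and after collapsing the (identical) equality constraints in step~(4) / step~(3) we obtain isomorphic $\Sigma$-structures.

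I do not anticipate a real obstacle here; the only potentially subtle point is making sure the arities of the projective gadget maps line up with the convention $p_R \colon \rel D_s \to \rel D_t$ for $\ar_R = (t,s)$ in Definition~\ref{def:strict-gadget} (note the reversal), which is precisely why the reification assigns $\ar_{P_{R,i}} = (R, \ar_R(i))$ so that the direction of $p_{R,i}^\gamma$ is correct without any adjustment.
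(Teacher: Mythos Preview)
Your proposal is correct and follows exactly the same approach as the paper: you define $\gamma'$ by $\rel D_t^{\gamma'} = \rel D_t^\gamma$, $\rel D_R^{\gamma'} = \rel R^\gamma$, and $p_{P_{R,i}}^{\gamma'} = p_{R,i}^\gamma$, which is precisely the paper's construction. The paper leaves the verification as ``straightforward to check'', whereas you have carefully unfolded both constructions and matched the equality constraints; your treatment of the arity reversal in Definition~\ref{def:strict-gadget} is also correct.
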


\begin{proof}
  Let $\Pi$ be the input signature of $\gamma$. The projective gadget $\gamma'$ is defined by the structures $\rel D_t^{\gamma'} = \rel D_t^\gamma$ for each $\Pi$-type $t$ and $\rel D_R^{\gamma'} = \rel R^\gamma$ for each $\Pi$-symbol $R$, together with maps $p_{P_{R, i}}^{\gamma'} = p_{R, i}^\gamma$ for each symbol $R$ and $i$.
  It is straightforward to check that indeed $\gamma'\rho(\rel X)$ and $\gamma(\rel X)$ are isomorphic for all $\rel X$.
\end{proof}

\begin{example} \label{ex:manuel-projective}
  Following on Example~\ref{ex:manuel-gadget}, we convert the gadget described above into a projective gadget. Reification converts a graph $(V; E)$ into a 2-sorted structure $(V, E; S, T)$ where $S, T \subseteq V \times E$ is defined as $S = \{ (u, (u, v)) \mid (u, v) \in E \}$ and $T = \{ (v, (u, v)) \mid (u, v) \in E \}$. We will denote the types of this signature by $v$ and $e$. The projective gadget $\gamma$ is then defined as $\rel D^\gamma_v = (\{*\}; \emptyset)$, $\rel D^\gamma_e = \rel P_3$ where $\rel P_3$ is the path of length $3$, i.e., $\rel P_3 = (\{0, 1, 2, 3\}; \{(i, i+1) \mid i = 0, 1, 2\})$, and $p^\gamma_S({*}) = 0$, $p^\gamma_T({*}) = 3$.
\end{example}

The general theory of gadget reductions also provides the `best projective gadget' from label cover to any other signature. 
It is used to reduce from label cover to any (promise) CSP \cite[Theorem 3.12]{BBKO21}, and it constructs the so-called indicator structure for a \emph{minor condition} denoted by $\rel I_\Sigma(\rel B)$ in \cite[Section 3.3]{BBKO21}. Here we will apply it on label cover instances instead of minor conditions and denote it by $\pi_{\rel B}$. The name comes from a universal property of this gadget, which we explicitly prove in Lemma~\ref{lem:universality-of-gadget}.

\begin{definition}[Universal gadget] \label{def:universal-gadget}
  Let $\rel B$ be a $\Sigma$-structure. The \emph{universal gadget for $\rel B$} is the projective gadget $\pi_\rel B$ from the label cover signature to $\Sigma$ defined as follows: $\rel D_X^{\pi_{\rel B}} = \rel B^X$ for all types $X$, and $p_{E_\sigma}^{\pi_{\rel B}}(b) = b \circ \sigma$ for each $\sigma\colon X \to Y$.\footnote{This gadget is the contravariant functor $X \mapsto \rel B^X$.}
\end{definition}

\subsection{Elements of consistency reductions}
  \label{sec:appendix-consistency}

As with the gadgets above, it will be useful to decompose consistency reductions into smaller steps. We can describe both $k$-consistency reduction and arc-consistency reduction as a three step process: (1) a reduction from $\PCSP(\rel A, {*})$ to label cover, (2) a transformation of a label cover instance (this step enforces consistency), and (3) encoding the resulting label cover as an instance of $\PCSP(\rel B, {*})$ using the universal gadget. The difference between arc-consistency and $k$-consistency is only in the first step. Let us nevertheless start with step (2) which is essentially just arc-consistency run on label cover instances (see Definition~\ref{def:arc-consistency}).

\begin{definition} \label{def:ac-on-lc}
  Let $\rel S$ be a label cover instance.
  We construct $\ac(\rel S)$ as a fixed-point:
  \begin{enumerate}
    \item For each variable $v\in \rel S$, set $\mathcal F_v = D_v$ where $D_v$ is the type of~$v$.
    \item Ensure that, for each constraint $(v, w) \in E_\pi^{\rel S}$, the sets $\mathcal F_v$ and $\mathcal F_w$ are consistent, i.e., update $\mathcal F_w$ by removing all elements that are not in the image of $\mathcal F_v$ under $\pi$, and update $\mathcal F_v$ by removing all elements that do not map to $\mathcal F_w$ by $\pi$.
    \item Repeat (2) while anything changes.
    \item Output the label cover instance with a variable $v$ of type $\mathcal F_v$ for each $v \in \rel S$, and a constraint $(v, w) \in E_{\pi'}^{\ac(\rel S)}$ for each constraint $(v, w) \in E_\pi^{\rel S}$ where $\pi' = \pi|_{\mathcal F_v}$; note that the image of the restriction is a subset of $\mathcal F_w$ due to (2).
  \end{enumerate}
\end{definition}

Observe that the arc-consistency reduction is simply a composition of $\rho^\rel A$, $\ac$, and $\pi_\rel B$, i.e., that
\begin{equation}
  \ac^{\rel A, \rel B} = \pi_\rel B \circ \ac \circ \rho^{\rel A}.
\end{equation}
Step (1) of Definition~\ref{def:arc-consistency} loosely corresponds to $\rho^\rel A$, Steps (2)--(3) to $\ac$, and steps (4)--(6) to $\pi_\rel B$.
The $k$-consistency is expressed similarly, except we change the first step to the following construction.

\begin{definition}
  Fix a $\Pi$-structure $\rel A$ and $k \geq 1$. For each $\rel X$ construct a label cover instance $\sigma_k^{\rel A}(\rel X)$ as follows:
  \begin{enumerate}
    \item Introduce a variable $v_K$ for each $K \in \binom X{\leq k}$ with type (domain) $\mathcal D_K$ which is the set of all partial homomorphisms from $K$ to $\rel A$.
    \item For each $L \subset K \in \binom X{\leq k}$, add a constraint $(v_K, v_L) \in E_\pi$ where $\pi \colon \mathcal D_K \to \mathcal D_L$ is the restriction map, i.e., $\pi(f) = f|_L$.
  \end{enumerate}
\end{definition}

Again, it is not hard to observe that
\begin{equation}
  \kappa_k^{\rel A, \rel B} = \pi_\rel B \circ \ac \circ \sigma_k^{\rel A}.
\end{equation}
The three constructions in this composition, $\pi_\rel B$, $\ac$, and $\sigma_k^{\rel A}$ correspond loosely to the steps in the $k$-consistency reduction (Definition~\ref{alg:k-consistency-reduction}) as follows: Step (1) corresponds to $\sigma^\rel A_k$, Steps (2) and (3) to $\ac$, and Steps (4) and (5) to $\pi_\rel B$. Occasionally, it will be useful to also talk about the composition $\ac\circ \sigma_k^{\rel A}$ which we denote by $\kappa_k^{\rel A}$.

\section{Proofs of Theorems~\ref{thm:ddatalog-composes}, \ref{thm:gadget-is-ddatalog}, and \ref{thm:canonical-width}}
  \label{app:proofs-iii}

In this section, we prove three core theorems of the general theory of \Datalog reductions presented in Section~\ref{sec:local-reductions}.

\subsection{\Datalog reductions compose}
  \label{sec:datalog-composes}

We prove Theorem~\ref{thm:ddatalog-composes}, which claims that \Datalog reductions compose, by showing that both Datalog interpretations and union gadgets compose, and that a union gadget and a Datalog interpretation can be permuted in the following sense.

\begin{lemma} \label{lem:cases}
  \begin{enumerate}
    \item Let $\upsilon$ and $\upsilon'$ be union gadgets such that the output signature of $\upsilon$ and the input signature of $\upsilon'$ coincide. There is a union gadget $\mu$ such that, for all structures $\rel A$, $\mu(\rel A)$ and $\upsilon'\upsilon(\rel A)$ are isomorphic.
    \item Let $\phi$ and $\phi'$ be Datalog interpretations such that the output signature of $\phi$ and the input signature of $\phi'$ coincide. There is a Datalog interpretation $\psi$ such that, for all structures $\rel A$, $\psi(\rel A)$ and $\phi'\phi(\rel A)$ are isomorphic.
    \item Let $\upsilon$ be a union gadget and $\phi$ be a Datalog interpretation such that the output signature of $\upsilon$ and the input signature of $\phi$ coincide. There exist a union gadget $\upsilon'$ and a Datalog interpretation $\phi'$ such that, for all structures $\rel A$, $\upsilon'\phi'(\rel A)$ and $\phi\upsilon(\rel A)$ are isomorphic.
  \end{enumerate}
\end{lemma}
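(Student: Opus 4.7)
The proof is by explicit construction in each case. For part (1), set $\mu := (d' \circ d, r' \circ r)$: the arity compatibility $\ar_{(r'\circ r)(R)} = (d'\circ d)\circ \ar_R$ is inherited from that of $\upsilon = (d,r)$ and $\upsilon' = (d',r')$, and the isomorphism $\mu(\rel A) \cong \upsilon'\upsilon(\rel A)$ follows immediately because, for each output type $u$, the $u$-th domain of both structures equals the disjoint union $\bigsqcup_{t : (d'\circ d)(t)=u} A_t$, and likewise for each relation.

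For part (2), given $\phi\colon \Pi\to\Sigma$ and $\phi'\colon\Sigma\to\Omega$, I build $\psi\colon\Pi\to\Omega$ by syntactic substitution. For each $\Omega$-type $u$, start from $\phi'_u$ (a Datalog program with EDBs in $\Sigma$) and rewrite it over $\Pi$ as follows: expand each variable of $\Sigma$-type $s$ into a tuple of $\Pi$-typed variables whose length and types match the arity of $\phi_s$; replace each $\Sigma$-EDB atom $P(\bar x)$ in a body by the body of a rule deriving $P$ in $\phi_P$, importing the IDBs of $\phi_P$ into $\psi_u$; rename each IDB of $\phi'_u$ to have its expanded $\Pi$-arity; translate each $\Sigma$-equality $x=y$ into componentwise equality of the expanded tuples; and conjoin the atom $D_s(\bar x)$ of $\phi_s$ to every expanded variable of type $s$ to ensure that the tuple actually represents an element of the $s$-domain of $\phi(\rel A)$. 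Doing the analogous construction for each $\Omega$-symbol, a routine induction on derivation length shows that the tuples derived by $\psi_u$ on $\rel A$ match those derived by $\phi'_u$ on $\phi(\rel A)$ under the natural identification of elements with their representing tuples.

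For part (3), the construction introduces an intermediate signature $\Sigma'$ whose types "tag" each $\Omega$-type by the $\Pi$-origins of the components of a representing tuple. Writing $(s^u_1,\dots, s^u_{k_u})$ for the $\Sigma$-arity of $\phi_u$, for each $\tau = (t_1,\dots, t_{k_u}) \in \prod_i d^{-1}(s^u_i)$ introduce a $\Sigma'$-type $u^\tau$, and proceed analogously for symbols. Define $\upsilon'=(d',r')$ sending each tagged type/symbol back to its untagged version. The Datalog program $\phi'_{u^\tau}$ is obtained from $\phi_u$ by refining each rule: for each assignment of a $\Pi$-type $t \in d^{-1}(s)$ to each variable $x$ of $\Sigma$-type $s$ (consistently across all occurrences), and for each choice of a $\Pi$-EDB symbol $R' \in r^{-1}(R)$ of compatible $\Pi$-arity for each $\Sigma$-EDB atom $R(\cdots)$ in the body, include one refined rule whose IDB occurrences are subscripted by the induced $\Pi$-tag; designate the output predicate of $\phi'_{u^\tau}$ as the refinement whose head carries tag $\tau$.

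The main obstacle is verifying part (3). The key claim, proved by induction on Datalog derivation length, is that for each IDB $P$ of $\phi_u$ and each $\Pi$-tag $\tau$, the refined IDB $P^\tau$ computed by $\phi'_{u^\tau}$ on input $\rel A$ equals precisely the subset of the IDB $P$ computed by $\phi_u$ on $\upsilon(\rel A)$ consisting of those tuples whose components have $\Pi$-origins matching $\tau$. The observation driving the induction is that, since $\upsilon$ produces a disjoint union, every element of $\upsilon(\rel A)$ of $\Sigma$-type $s$ has a unique $\Pi$-origin $t \in d^{-1}(s)$, so every derivation of $\phi_u$ on $\upsilon(\rel A)$ induces a canonical $\Pi$-tag on each of its variables, which is realised by exactly one refined rule of $\phi'$; conversely, each refined derivation on $\rel A$ projects to a derivation on $\upsilon(\rel A)$ by forgetting the tag. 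Taking disjoint unions over tags then yields $\upsilon'\phi'(\rel A) \cong \phi\upsilon(\rel A)$ as required.
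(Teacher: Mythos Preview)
Your proposal is correct and follows essentially the same approach as the paper: part~(1) is identical, part~(2) is the standard syntactic substitution (with your added care about domain atoms $D_s(\bar x)$, which the paper's sketch omits), and part~(3) is the same tagging construction---the only cosmetic difference is that the paper handles $\Sigma$-EDBs by introducing auxiliary rules $S^{\ar_R}(\bar x)\fro R(\bar x)$ for each $R\in r^{-1}(S)$, whereas you inline the choice of $R'\in r^{-1}(R)$ directly into the refined rule bodies. Your induction-on-derivation-length argument for correctness in part~(3) is more explicit than the paper's, which simply asserts the verification is immediate.
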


\begin{proof}
  \begin{enumerate}
    \item
      Let $\upsilon$ be defined by $(d, r)$ and $\upsilon'$ by $(d',r')$. It is immediate that the union gadget defined by $(d'\circ d,r'\circ r)$ gives isomorphic outputs to $\upsilon'\circ\upsilon$.

    \item
      It is well-known that Datalog programs are closed under composition. We sketch how to extend it to Datalog interpretations. We construct a new Datalog interpretation $\psi$ as follows. For every Datalog program $\phi'_j$ in $\phi'$ we include a Datalog program $\psi_j$ in $\psi$ obtained from $\phi'_j$ and $\phi$ as follows: Start with the union of all programs in $\phi$. For each symbol $R$ of arity $i_1\dots i_k$ in $\phi'_j$, include in $\psi_j$ a new symbol $R'$ whose arity is the concatenation of arities of $\phi_{i_1}$, \dots, $\phi_{i_k}$. Then, for each rule $r$ of $\phi'_j$, include in $\psi_j$ a rule $r'$ obtained from $r$ as follows. First, in every atomic formula in $r$ we replace its predicate $R$ by $R'$ and replace every variable $x$ occurring in it by a tuple $(x_{j_1}, \dots, x_{j_k})$ of fresh variables where $i$ is the type of $x$, $j_1\dots j_k = \ar_{S_i}$, and $S_i$ is the output symbol of $\phi_i$. It is immediate to verify that $\psi$ and $\phi'\circ\phi$ give isomorphic outputs.

    \item
      Let $\upsilon = (d, r)$ be a union gadget that maps $\Pi$-structures to $\Delta$-structures, and let $\phi$ be a Datalog interpretation that maps $\Delta$-structures to $\Sigma$-structures. 

      We start with briefly sketching the idea of the proof. Generally, we would want to change the input signature of $\phi$ from $\Delta$ to $\Pi$ by adding into $\phi$ a rule
      \[
        S(x_1, \dots, x_k) \fro R(x_1, \dots, x_k)
      \]
      for each $\Delta$-symbol $S$ and $\Pi$-symbol $R$ such that $S = r(R)$. These rules obviously compute the union of $R$'s with $r(R) = S$.
      Nevertheless, there is a formal problem with the rule above: the variables $x_i$ present in the rule are not properly typed (the type of $x_i$ on the left hand side is $\ar_R(i)$ but for distinct $R$ and $T$ with $r(R) = r(T)$, we could have $\ar_R(i) \neq \ar_T(i)$).
      We resolve this problem by adding a copy $S^t$ of each relational symbol $S$ appearing in $\phi$ for each tuple $t$ of $\Pi$-types such that $d\circ t = \ar_S$ (i.e., $d(t_i) = \ar_S(i)$ for all $i$).
      Naturally, we will need to change the rules of $\phi$ accordingly. Moreover, the output predicate of a component $\phi_*$ of $\phi$ (where $*$ stands either for a $\Sigma$-type $t$, or a $\Sigma$-symbol $R$) is split to several copies as well. Consequently, we change the output signature of $\phi'$ by creating several copies of each type and each relational symbol, one for each new copy $S^t$ of the output predicate $S$ of the corresponding component of $\phi$. The new union gadget $\upsilon'$ then unites all these copies into one.

      More precisely, let us start with defining the output signature of $\phi'$ which we denote by $\Sigma'$. This signature has a type $s^t$ for each $\Sigma$-type $s$, and every tuple $q$ of $\Pi$ types such that the arity tuple of (the output predicate of) $\phi_s$ is equal to $d \circ q$, and similarly, $\Sigma'$ has a symbol $S^q$ for each $\Sigma$-symbol $S$ and a tuple $q$ such that the arity tuple of $\phi_S$ is equal to $d \circ q$.
      The components of the interpretation are then defined as follows: Let $s^q$ be a $\Sigma'$-type, i.e., $s$ is a $\Sigma$-type and $q$ is a tuple of $\Pi$-types satisfying the above condition.
      The program $\phi_{s^q}'$ is defined from $\phi_s$: Replace each symbol $S$ in $\phi_s$ by symbols of the form $S^t$ where $t$ ranges over all the tuples of $\Pi$-types such that $d\circ t = \ar_S$.
      Replace each rule in $\phi_s$ with variables $x_1, \dots, x_n$ of $\Delta$-types $p_1, \dots, p_n$ with a copy of this rule for each $\Pi$-types $t_1, \dots, t_n$ with $d(t_i) = p_i$ for all $i\in [n]$ where each variable $x_i$ is given the $\Pi$-type $t_i$ (instead of $p_i$), and each atomic formula $S(x_{i_1}, \dots, x_{i_m})$ is replaced by $S^{(t_{i_1}, \dots, t_{i_m})}(x_{i_1}, \dots, x_{i_m})$. Finally, add the rules of the form
      \[
        S^{\ar_R}(x_1, \dots, x_k) \fro R(x_1, \dots, x_k)
      \]
      for each $\Delta$-symbol $S$ and each $\Pi$-symbol $R$ such that $S = r(R)$. Note that in this rule, we have $d \circ \ar_R = \ar_S$ since this is required by the definition of a union gadget. Finally, we designate $D_s^q$ to be the output predicate of $\phi_{s^q}'$ where $D_s$ is the output predicate of $\phi_s$.
      The programs $\phi'_{S^q}$ where $S^q$ is a $\Sigma'$-symbol are defined by an analogous transformation of $\phi_S$.

      Finally, the union gadget $\upsilon'$ is defined by $(d', r')$ where $d'(s^t) = s$ and $r'(S^t) = S$. It is immediate to verify that $\upsilon' \circ \phi'$ and $\phi \circ \upsilon$ give isomorphic outputs.
    \qedhere
  \end{enumerate}
\end{proof}

\begin{example} \label{ex:switching}
  To bring a little light to the proof of case (3), let us describe the construction of $\phi'$ and $\upsilon'$ in one particular case which is not fully general.
  We consider a union gadget $\upsilon$ falling into the scheme of Example~\ref{ex:unions-as-gadget}, namely, a gadget whose input signature has two types named simply as $0$ and $1$, and four binary relations $R, S, T, U$ of arities $(0, 0)$, $(0, 1)$, $(1, 0)$, and $(1, 1)$, respectively. The output are digraphs, i.e., a signature with one type $v$ and one binary relation $E$. The mappings $d$ and $r$ then map everything to $v$ and $E$, respectively.
  For the Datalog interpretation $\phi$ let us take a rather trivial program from digraphs to the signature of the trivial CSP (i.e., with no types and a single nullary symbol $C$). The interpretation is given by the program $\phi_C$ with the rule
  \[
    C \fro E(x, x).
  \]

  The Datalog interpretation $\phi'$ has again just one component $\phi'_C$ (since there is only one tuple $t$ of $\Sigma$ types such that $d\circ t$ is the empty tuple). The rules of $\phi'_C$ are
  \begin{align*}
    E^{0, 0}(x, y) &\fro R(x, y) & E^{0, 1}(x, y) &\fro S(x, y) \\
    E^{1, 0}(x, y) &\fro T(x, y) & E^{1, 1}(x, y) &\fro U(x, y) \\
    C &\fro E^{0, 0}(x, x) & C &\fro E^{1, 1}(x, x)
  \end{align*}
  Naturally, this program simplifies to
  \[
    C \fro R(x, x) \qquad C \fro U(x, x).
  \]
  The union gadget $\upsilon'$ is trivial in this case, i.e., both $d$ and $r$ are identity maps, since the output signature of $\phi'$ agrees with the output signature of $\phi$.
\end{example}

Since a \Datalog reduction is defined as a composition of a Datalog interpretation with a union gadget, Theorem~\ref{thm:ddatalog-composes} is a direct corollary of Lemma~\ref{lem:cases}.

\subsection{Gadget reductions are \Datalog reductions}
  \label{sec:gadget-is-datalog}

The goal of this subsection is to prove Theorem~\ref{thm:gadget-is-ddatalog} which claims that every gadget replacement is also expressible as a Datalog interpretation.
Before we delve into the proof, let us present two examples of \Datalog reductions constructed from gadget reductions.

\begin{example} \label{ex:manuel}
  In this example, we express the gadget reduction $\gamma$ from $\CSP(\rel K_5)$ to $\CSP(\rel C_5)$ given in Example~\ref{ex:manuel-gadget} as a \Datalog reduction.

  Recall that the gadget reduction replaces each edge on the input with a path of length 3, introducing two new vertices into the instance. To emulate introducing two new vertices, we use disjoint unions: the new domain is the disjoint union $V \cup E \cup E$. This means that the output signature of our Datalog interpretation will have three types, let us call them $v$, $1$, and $2$, and three binary relations $S, Q, T$ with arities $(v, 1)$, $(1, 2)$, and $(2, v)$, respectively.

  We then define a Datalog interpretation $\phi$ as follows:
  \begin{align*}
    D_v(x) &\fro x = x &
    S(x, x, y)  &\fro E(x, y) \\
    D_1(x, y) &\fro E(x, y) &
    Q(x, y, x, y)  &\fro E(x, y) \\
    D_2(x, y) &\fro E(x, y) &
    T(x, y, y)  &\fro E(x, y)
  \end{align*}
  Given a graph $\rel G = (V^{\rel G}; E^{\rel G})$, the structure $\phi(\rel G)$ is then
  \[
    (V^{\rel G}, E^{\rel G}, E^{\rel G}; S^{\phi(\rel G)}, Q^{\phi(\rel G)}, T^{\phi(\rel G)})
  \]
  where $S \subseteq V \times E$ consists of pairs $(u, e)$ where $u$ is the initial vertex of $e$, and analogously $T \subseteq E \times V$ consists of pairs $(e, v)$ where $v$ is the terminal vertex of $e$. Finally, the relation $Q \subseteq E \times E$ is the equality relation on $E$. Note that $S$ relates the $V^{\rel G}$ with the first copy of $E^{\rel G}$, $Q$ relates both copies of $E^{\rel G}$, and $T$ relates the second copy of $E^{\rel G}$ with $V^{\rel G}$.

  If we were to draw the resulting structure on a piece of paper, we would obtain the same graph as via the gadget reduction described above --- the only difference is that the vertices are labelled by their types: $v, 1, 2$, and edges are labelled by $S, Q, T$. The union gadget allows us to `forget this labelling'.
  Since the signature of graphs is single-sorted and has only one binary relation, there is only one way to define the union gadget $(d, r)$: $d(v) = d(1) = d(2)$ is the type of vertices, and $r(S) = r(Q) = r(T) = E$. (This is a special case of the union gadget from Example~\ref{ex:unions-as-gadget}.)
  Observe that the composition of $\phi$ and this union gadget results in the same graph as the gadget reduction.

  In this case, we managed to construct a Datalog interpretation $\phi$ such that $\phi(\rel G)$ and $\gamma(\rel G)$ are isomorphic for all graphs $\rel G$.
\end{example}

More complicated Datalog reductions are necessary in cases when the mappings $p_{i, R}^\gamma$ are not injective or with overlapping images, e.g., the following example.

\begin{example} \label{ex:k2-to-kinfty-datalog-1}
  Recall the gadget reduction $\gamma$ from $\CSP(\rel K_2)$ to $\CSP(\rel K_\infty)$ from Examples~\ref{ex:strict-gadget} and \ref{ex:strict-gadget-ii}. Here, we express an equivalent \Datalog reduction.

  We start with describing a Datalog interpretation $\phi$.
  The intermediate signature (output signature of $\phi$) has two types, $0$ and $1$, which mirror elements of the gadget $\rel K_2$. We define the domains by programs
  \[
    D_0(x) \fro x = x \qquad D_1(x) \fro x = x.
  \]
  Furthermore, we introduce four relational symbols to the output signature of $\phi$, namely $E_{0,1}$, $E_{1,0}$, $E_{0,0}$, and $E_{1,1}$. The arity of symbol $E_{i,j}$ is $(i,j)$. The inclusion of symbols $E_{0,1}$ and $E_{1,0}$ is intuitive --- they correspond to the two edges of the gadget $\rel K_2$. The other two symbols are needed to emulate the last step of the gadget replacement $\gamma$: instead of glueing vertices $u$ and $v$, we will treat the two copies as identical introducing an edge $(u,w)$ whenever $(v,w)$ is present. Since the types of $u$ and $v$ might not agree, this will lead to creating edges in $E_{0,0}$ and $E_{1,1}$. This idea is implemented by the following Datalog program:
  \begin{align*}
    E_{1-j,i}(y, z) &\fro E(x, y), E_{j,i}(x, z) &
    E_{0,1}(x, x) &\fro x = x \\
    E_{i,1-j}(z, y) &\fro E(x, y), E_{i,j}(z, x) &
    E_{1,0}(x, x) &\fro x = x \\
    E_{1-j,i}(x, z) &\fro E(x, y), E_{j,i}(y, z) \\
    E_{i,1-j}(z, x) &\fro E(x, y), E_{i,j}(z, y)
  \end{align*}
  where $i,j \in \{0, 1\}$. The programs $\phi_{E_{ij}}$ only differ in the choice of the output symbol.

  The union gadget $\upsilon$ is similar as in the previous examples; there is a unique choice of $d$ and $r$ so that the final output signature is that of graphs.

  It can be checked that indeed $\upsilon\phi(\rel G)$ and $\gamma(\rel G)$ are homomorphically equivalent, although they might not be isomorphic. For example, $\upsilon\phi(\rel C_4)$ is isomorphic to the complete bipartite graph $\rel K_{4,4}$ while $\gamma(\rel C_4)$ is a single edge, i.e., it is isomorphic to $\rel K_2$.
\end{example}

We proof Theorem~\ref{thm:gadget-is-ddatalog} for projective gadget reductions. This assumption is made without loss of generality: Every gadget reduction is a composition of the reification and a projective gadget using Lemma~\ref{lem:projective-decomposition}. Since reification is a Datalog interpretation, and \Datalog reductions compose, knowing that the projective gadget is expressible as a \Datalog reduction is enough to derive that the original gadget reduction is expressible as one as well.
For the rest of this subsection, we assume that $\gamma$ is a projective gadget mapping $\Pi$-structures to $\Sigma$-structures.

The first step is to prove that a Datalog program can `emulate' a single relation in the $\gamma$-image in the sense of the following lemma.
To simplify the statement, we assume that the structures $\rel D_t^\gamma$ (where $t$ ranges through $\Pi$-types) from the gadget $\gamma$ are pairwise disjoint, and we denote by $\rel D^\gamma$ the (disjoint) union of $\rel D_t^\gamma$ over all $\Pi$-types $t$.

\begin{lemma} \label{lem:gadget-into-datalog}
  For each $\Sigma$-symbol $R$ of arity $k$, and each $k$-tuple $g\in (\rel D^\gamma)^k$, there is a Datalog program $\phi_{R,g}$ such that:
  For every $\Pi$-structure $\rel A$ and every tuple $a\in A^k$,
  $a\in \phi_{R,g}(\rel A)$ if and only if $([a_1; g_1], \dots, [a_k; g_k]) \in R^{\gamma(\rel A)}$.
\end{lemma}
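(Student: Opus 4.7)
The plan is to reduce the problem to computing the equivalence relation $\equiv$ on the set $\{(a; d) \mid a \in \rel A, d \in \rel D^\gamma\}$ generated by the equality constraints introduced in step~(2) of Definition~\ref{def:strict-gadget}. The key observation is that $([a_1; g_1], \dots, [a_k; g_k]) \in R^{\gamma(\rel A)}$ holds if and only if there exist a $\Pi$-type $t$, an element $a \in A_t$, and a tuple $(d_1, \dots, d_k) \in R^{\rel D_t^\gamma}$ such that $(a_i, g_i) \equiv (a, d_i)$ for every $i \in [k]$. Indeed, the only tuples directly placed into $R^{\gamma(\rel A)}$ by the gadget are those of the form $((a; d_1), \dots, (a; d_k))$ with $a \in A_t$ and $(d_1, \dots, d_k) \in R^{\rel D_t^\gamma}$; every other tuple in $R^{\gamma(\rel A)}$ arises from collapsing equality constraints, i.e., from the equivalence $\equiv$.

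To capture this equivalence in Datalog, I introduce, for every pair $(d, d') \in \rel D^\gamma \times \rel D^\gamma$ whose components share the same $\Sigma$-type, an IDB $\mathrm{Eq}_{d, d'}$ of arity $(t(d), t(d'))$, where $t(d)$ denotes the unique $\Pi$-type $t$ with $d \in \rel D_t^\gamma$. The intended meaning is $\mathrm{Eq}_{d, d'}(x, y)$ iff $(x, d) \equiv (y, d')$. The rules are: (i) reflexivity, $\mathrm{Eq}_{d, d}(x, x) \fro x = x$ for each $d$; (ii) atomic equalities from the gadget, $\mathrm{Eq}_{p_{R'}(e), e}(x, y) \fro R'(x, y)$ and $\mathrm{Eq}_{e, p_{R'}(e)}(y, x) \fro R'(x, y)$, for every $\Pi$-symbol $R'$ of arity $(t, t')$ and every $e \in \rel D_{t'}^\gamma$; and (iii) transitivity, $\mathrm{Eq}_{d, d''}(x, z) \fro \mathrm{Eq}_{d, d'}(x, y), \mathrm{Eq}_{d', d''}(y, z)$, for every triple $d, d', d''$ of matching $\Sigma$-type. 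The output predicate $\phi_{R, g}$ is then defined, for each $\Pi$-type $t$ and each tuple $(d_1, \dots, d_k) \in R^{\rel D_t^\gamma}$ with each $d_i$ sharing the $\Sigma$-type of $g_i$, by the rule
\[
  \phi_{R, g}(x_1, \dots, x_k) \fro \mathrm{Eq}_{g_1, d_1}(x_1, y), \dots, \mathrm{Eq}_{g_k, d_k}(x_k, y),
\]
where $y$ has $\Pi$-type $t$.

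Correctness splits into soundness and completeness. Soundness is a straightforward induction on Datalog derivations: each of the rules (i)--(iii) corresponds to a valid implication about $\equiv$ in $\gamma(\rel A)$, and the rule for $\phi_{R, g}$ fires a valid tuple because $(d_1, \dots, d_k) \in R^{\rel D_t^\gamma}$ yields $([y; d_1], \dots, [y; d_k]) \in R^{\gamma(\rel A)}$, to which each $[x_i; g_i]$ is equal via the derived $\mathrm{Eq}_{g_i, d_i}(x_i, y)$. Completeness of the $\mathrm{Eq}$ predicates follows by induction on the length of an equivalence chain witnessing $(a, d) \equiv (b, d')$: the length-zero case uses reflexivity, a single-step witness uses one of the atomic equality rules depending on the direction of the constraint, and longer chains are composed via transitivity. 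Completeness of $\phi_{R, g}$ is then immediate from the key observation above: the witness $(t, a, (d_1, \dots, d_k))$ fires the appropriate rule under the assignment $y \mapsto a$.

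The main obstacle is bookkeeping: one must track both the $\Pi$-type and the $\Sigma$-type of each element of $\rel D^\gamma$ so that every atom is well-typed, and verify that under the Datalog semantics stated in the paper the reflexivity rule $\mathrm{Eq}_{d, d}(x, x) \fro x = x$ indeed produces $\mathrm{Eq}_{d, d}(a, a)$ for every element $a$ of the appropriate $\Pi$-type in the domain of $\rel A$ (including isolated elements, which matters because such elements can legitimately appear in tuples of $R^{\gamma(\rel A)}$). Beyond this formality, the argument amounts to a standard use of Datalog to compute a transitive closure paired with existential projection.
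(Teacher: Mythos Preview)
Your proposal is correct and follows essentially the same approach as the paper: both introduce a family of binary IDBs indexed by pairs of gadget elements (your $\mathrm{Eq}_{d,d'}$ is the paper's $I_{h_1,h_2}$) whose fixpoint computes the equivalence $\equiv$, and both define the output predicate by existentially projecting onto a witness element $y$ carrying a tuple from some $R^{\rel D_t^\gamma}$. The only cosmetic difference is that you close under symmetry by adding the reversed atomic rule $\mathrm{Eq}_{e,p_{R'}(e)}(y,x)\fro R'(x,y)$, whereas the paper adds an explicit symmetry rule $I_{h_1,h_2}(x,y)\fro I_{h_2,h_1}(y,x)$; both yield the reflexive transitive closure of the symmetrised generators and hence the same relation.
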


In the statement above, we implicitly assume that, for each $i\in [k]$, the $\Sigma$-type of $g_i$ is $\ar_R(i)$ and that the $\Pi$-type type $t_i$ of $a_i$ satisfies $g_i \in \rel D_{t_i}^\gamma$. Consequently, the arity of the program $\phi_{R,g}$ is $(t_1, \dots, t_k)$.

\begin{proof}
  We construct the program $\phi_{R,g}$ with the required properties. In addition to all $\Pi$-symbols which are EDBs, $\phi_{R,g}$ has a binary IDB $I_{h_1, h_2}$ for each $h_1, h_2 \in \rel D^\gamma$, and an IDB $R_g$ of arity $k$, and $\ar_R(i) = t_i$ for all $i$, which is designated as output.
  Now, let us describe the rules of $\phi^g$.
  \begin{enumerate}
    \item For every $\Pi$-symbol $S$ and $h \in \rel D_{\ar_2(S)}^\gamma$, include the rule:
      \[
        I_{p_S(h), h}(x, y) \fro S(x, y)
      \]
    \item Include the rules:
      \begin{align*}
        I_{h_1, h_1}(x, x) &\fro x = x\\
        I_{h_1, h_2}(x, y) &\fro I_{h_2, h_1}(y, x) \\
        I_{h_1, h_3}(x, z) &\fro I_{h_1, h_2}(x, y), I_{h_2, h_3}(y, z)
      \end{align*}
      for all $h_1, h_2, h_3 \in \rel D^\gamma$.
    \item Finally, add the rule:
      \[
        R_g(x_1, \dots, x_k) \fro
        I_{g_1, h_1}(x_1, y), \dots, I_{g_k, h_k}(x_k, y)
      \]
      for each $h \in R^{\rel D^\gamma}$.
  \end{enumerate}

  Let us show that $\phi_{R,g}$ has the required property. We start by observing that a relation $I_{h_1, h_2}(x, y)$ is derived in $\rel A$ if and only if $[x; h_1] = [y; h_2]$ --- which follows since rules in item (1) introduce into $I$ the equality constraints of the gadget, and rules in item (2) compute the transitive symmetric reflexive closure of $I$. Observing this, it is straightforward to see that
  \[
    ([a_1; g_1], \dots, [a_k; g_k]) \in R^{\gamma(\rel A)}
  \]
  if and only if there exist a $\Pi$-type $s$, $b \in A_s$, and $h \in R^{\rel G_s}$ such that $[b; h_i] = [a_i; g_i]$ for all $i \in [k]$, which is equivalent to triggering the rule (3).
\end{proof}

The above lemma is the only place in the proof where we use recursion in our Datalog program. In particular, it may be observed that recursion is used only to resolve symmetric transitive closure of the relations $I_{i,j}$. This means that the proof provides an argument for any fragment of Datalog that is able to express that, e.g., the \emph{symmetric Datalog} introduced by Egri, Larose, and Tesson~\cite{EgriLT07}.

We provide an example of more detailed implementations of the above lemma building on Example~\ref{ex:strict-gadget}.

\begin{example} \label{ex:gadget-into-datalog}
  Let us describe programs $\phi_{E,(i,j)}$ (where $i,j \in \{0, 1\}$) satisfying the claim of the above lemma for the relational symbol $E$ and the gadget replacement $\gamma$ described in Example~\ref{ex:strict-gadget}.
  All of the programs $\phi_{E,(i,j)}$ contain the following rules:
  \begin{align*}
    I_{0,1}(x, y) &\fro E(x, y) \\
    I_{1,0}(x, y) &\fro E(x, y) \\
    I_{i,i}(x, x) &\fro x = x \\
    I_{i,j}(x, y) &\fro I_{j,i}(y, x) \\
    I_{i,k}(x, z) &\fro I_{i,j}(x, y), I_{j,k}(y, z)
  \end{align*}
  for all $i$, $j$, $k \in \{0, 1\}$. Further, fixing $i$ and $j$, we add into $\phi_{E,(i,j)}$ the following two rules
  \begin{align*}
    E_{i,j}(x, y) &\fro I_{i,0}(x, z), I_{j,1}(y, z) \\
    E_{i,j}(x, y) &\fro I_{i,1}(x, z), I_{j,0}(y, z)
  \end{align*}
  where $E_{i,j}$ is the output predicate.
  To recall the intuition, $I_{i,j}(x, y)$ is derived if $x_i$ and $y_j$ were identified in the third step of application of $\gamma$.

  Observe that $I_{0,1}(x, y)$ and $I_{1,0}(x, y)$ are derived in $\rel G$ if and only if $x$ and $y$ are connected by a path of odd length in $\rel G$, and $I_{0,0}(x, y)$ and $I_{1,1}(x, y)$ are derived if and only if $x$ and $y$ are connected by a path of even length. This exactly corresponds to when $x_i$ and $y_j$ are identified in the third step of application of $\gamma$. Consequently, we get that, if $i\neq j$, $E_{i,j}(x, y)$ is derived if and only if $x$ and $y$ are connected by a path of even length, and $E_{i,i}$ is derived if and only if $x$ and $y$ are connected by a path of odd length.
\end{example}

We are ready to prove Theorem~\ref{thm:gadget-is-ddatalog}. As mentioned above, it is enough to prove this theorem for projective gadgets, i.e., the following lemma.

\begin{lemma}
  For every projective gadget $\gamma$ of $\Sigma$ into $\Pi$, there is a \Datalog reduction $\psi$ such that, for all $\Pi$-structures $\rel A$, $\gamma(\rel A)$ and $\psi(\rel A)$ are homomorphically equivalent.
\end{lemma}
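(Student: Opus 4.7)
The plan is to reduce to the case of projective gadgets via Lemma~\ref{lem:projective-decomposition} (reification $\rho$ is itself a Datalog interpretation and \Datalog reductions compose by Theorem~\ref{thm:ddatalog-composes}), and then to construct $\psi = \upsilon \circ \phi$ explicitly, following the pattern of Examples~\ref{ex:manuel} and~\ref{ex:k2-to-kinfty-datalog-1}. The heavy lifting has already been done by Lemma~\ref{lem:gadget-into-datalog}, so what remains is essentially assembling the ingredients and choosing the intermediate signature so that a union gadget can collapse the ``gadget-vertex'' markers.

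I would take the output signature $\Sigma'$ of $\phi$ to have one type $(t, d)$ for each pair of a $\Pi$-type $t$ and an element $d \in \rel D_t^\gamma$, and one symbol $(R, g)$ for each $\Sigma$-symbol $R$ of arity $k$ together with a tuple $g = (g_1, \dots, g_k) \in (\rel D^\gamma)^k$ such that the $\Sigma$-type of each $g_i$ equals $\ar_R(i)$; the arity of $(R, g)$ is $((t_1, g_1), \dots, (t_k, g_k))$, where $g_i \in \rel D^\gamma_{t_i}$. The Datalog interpretation $\phi$ is then straightforward: the $(t, d)$-domain is defined by the single rule $D_{(t, d)}(x) \fro x = x$ with $x$ of $\Pi$-type $t$, so that the $(t, d)$-domain of $\phi(\rel A)$ is a copy of $A_t$; the symbol $(R, g)$ is defined by the Datalog program $\phi_{R, g}$ furnished by Lemma~\ref{lem:gadget-into-datalog}. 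The union gadget $\upsilon$ is given by the mappings $(t, e) \mapsto$ ($\Sigma$-type of $e$) and $(R, g) \mapsto R$; the compatibility condition of Definition~\ref{def:union-gadget} is built into the admissibility constraint on pairs $(R, g)$.

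To finish, I would verify that $\gamma(\rel A)$ and $\upsilon\phi(\rel A)$ are homomorphically equivalent. The elements of $\upsilon\phi(\rel A)$ are naturally pairs $(a, (t, d))$ with $a \in A_t$ and $d \in \rel D_t^\gamma$, and the obvious quotient $q \colon (a, (t, d)) \mapsto [a; d]$ is a homomorphism onto $\gamma(\rel A)$: by Lemma~\ref{lem:gadget-into-datalog}, a tuple $((a_1, (t_1, g_1)), \dots, (a_k, (t_k, g_k)))$ lies in $R^{\upsilon\phi(\rel A)}$ (via some $(R, g)$ that $\upsilon$ identifies with $R$) precisely when $([a_1; g_1], \dots, [a_k; g_k]) \in R^{\gamma(\rel A)}$. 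Conversely, picking for each equivalence class $[a; d] \in \gamma(\rel A)$ an arbitrary representative in $q^{-1}([a; d])$ and using the same characterisation yields a homomorphism $\gamma(\rel A) \to \upsilon\phi(\rel A)$ in the opposite direction.

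The main obstacle is not conceptual but purely bookkeeping: keeping the several layers of types (the $\Pi$-types, the $\Sigma$-types of gadget elements, and the derived $\Sigma'$-types) straight so that the compatibility condition of the union gadget is satisfied and so that the Datalog programs of Lemma~\ref{lem:gadget-into-datalog} can be reused with variables of the correct $\Pi$-type at each position.
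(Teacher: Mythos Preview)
Your proposal is correct and follows essentially the same approach as the paper: the intermediate signature $\Sigma'$ with types indexed by gadget elements, the domain programs $D_{(t,d)}(x)\fro x=x$, the relation programs $\phi_{R,g}$ from Lemma~\ref{lem:gadget-into-datalog}, the union gadget collapsing the gadget-element tags, and the verification of homomorphic equivalence via the quotient $q$ and a section of $q$ are all exactly what the paper does. Your opening remark about reducing to projective gadgets via Lemma~\ref{lem:projective-decomposition} is redundant here (the lemma is already stated for projective gadgets), but it correctly describes how the paper then derives Theorem~\ref{thm:gadget-is-ddatalog} from this lemma.
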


\begin{proof}
  We construct a Datalog interpretation $\phi$ with output signature $\Sigma'$ (which we define below) and a union gadget $\upsilon$; the required \Datalog reduction $\psi$ is the composition of $\phi$ and $\upsilon$. As above, we assume that the structures $\rel D_t^\gamma$ are disjoint and that $\rel D^\gamma$ is their disjoint union.

  The signature $\Sigma'$ contains a type $g$ for each $g\in \rel D^\gamma$, and a symbol $R_g$ for each $\Sigma$-symbol $R$ of arity $k$ and $(g\in \rel D^\gamma)^k$, where $g_i$ is of type $\ar_R(i)$ for all $i\in [k]$. The arity tuple of $R_g$ is $(g_1, \dots, g_k)$.
  We define the Datalog interpretation $\phi$ using the following components:
  For each $\Pi$-type $t$ and $g \in \rel G_t$, $\phi_g$ is the Datalog program composed of the rule $D_g(x) \fro x = x$, where $x$ is of type $t$.
  The relation $R_g$ is defined by the program $\phi_{R,g}$ as in Lemma~\ref{lem:gadget-into-datalog}, i.e., we let $\phi_{R_g} = \phi_{R,g}$.
  The union gadget $\upsilon$ is defined by maps $(d, r)$ where $d$ maps $g$ to its $\Sigma$-type, and $r(R_g) = R$.

  Let us now prove that $\upsilon\phi(\rel A)$ and $\gamma(\rel A)$ are homomorphically equivalent. Observe that we have the following chain of equivalent statements for all $k$-ary $\Sigma$-symbols $R$, and $k$-tuples $a$ and $g$ (such that $g_i \in \rel D_{s_i}^\gamma$ where $s_i$ is the type of $a$, and the type of $g_i$ is $\ar_R(i)$):
  \begin{multline*}
    ([a_1; g_1], \dots, [a_k; g_k]) \in R^{\gamma(\rel A)} \iff
    a \in \phi_{R,g}^{\rel A}  \iff \\
    R_g^{\phi(\rel A)} \iff
    ((a_1; g_1), \dots, (a_k; g_k)) \in R^{\upsilon\phi(\rel A)}
  \end{multline*}
  where the first equivalence is by Lemma~\ref{lem:gadget-into-datalog} and the last equivalence follows from the definition of $\upsilon$.
  Hence, the mapping $f\colon \upsilon\phi(\rel A)\to \phi(\rel A)$ defined by $f(a; g) = [a; g]$ is a homomorphism, and any mapping $g$ such that $g(b) \in f^{-1}(b)$ is a homomorphism in the opposite direction.
\end{proof}

We finish this subsection with two examples.

\begin{example} \label{ex:gadget-cont}
  Let us describe a \Datalog reduction obtained from the projective gadget replacement $\gamma$ from Example~\ref{ex:strict-gadget-ii}.
  The reduction is composed of a Datalog interpretation $\phi$ and a union gadget $\upsilon$.
  The signature $\Sigma'$ from the above proof consists of two types, $0$ and $1$ (corresponding to the elements of $\rel K_2$), and binary relational symbols $E_{i,j}$ for $i, j \in \{0, 1\}$.
  The interpretation $\phi$ is given by the tuple
  \[
    ( \phi_0, \phi_1;
      \phi_{E, (0, 0)}, \phi_{E, (0, 1)}, \phi_{E, (1, 0)}, \phi_{E, (1, 1)})
  \]
  where $\phi_{E, (i, j)}$ are the programs from Example~\ref{ex:gadget-into-datalog}, and $\phi_i$, for $i \in \{0, 1\}$, are programs with a single rule $D_i(x) \fro x = x$.
  The union gadget $\upsilon$ is one of the constructions described in Example~\ref{ex:unions-as-gadget}. In this particular instance, it is defined by mappings $(d,r)$ where $d$ maps both $0$ and $1$ to the only type of vertices of (di)graphs, and $r$ maps $E_{i,j}$ to $E$.

  We claim that the composition $\upsilon\circ\phi$ produces homomorphically equivalent outputs to $\gamma$. To show that, let us assume for simplicity that the input is a connected unoriented graph $\rel G$. We distinguish two cases:
  \begin{enumerate}
    \item $\rel G$ is bipartite, and $\gamma(\rel G) \simeq \rel K_2$. Since $\rel G$ is connected it splits to two parts $A$ and $B$ uniquely. Observe that $x$ and $y$ are in the same part if and only if $x$ and $y$ are connected by a path of even length, and they are in distinct parts if and only if $x$ and $y$ are connected by a path of odd length. Hence $\upsilon\phi(\rel G)$ is the complete bipartite graph with parts $(A \times \{0\}) \cup (B \times \{1\})$ and $(A\times \{1\}) \cup (B\times \{0\})$. Clearly, it is homomorphically equivalent to $\rel K_2$.

    \item $\rel G$ contains an odd cycle, and $\gamma(\rel G)$ is a loop. In this case there is an odd path from $x$ to $x$ for each $x$. Hence all pairs of elements $x$ and $y$ are connected by both odd and even path. This implies that $\upsilon\phi(\rel G)$ is a clique with all loops, and hence clearly homomorphically equivalent to a loop.
  \end{enumerate}
  Consequently, we have that $\upsilon \circ \phi$ is a reduction from $\CSP(\rel K_2)$ to $\CSP(\rel K_\infty)$.
\end{example}

\begin{example} \label{ex:2-colouring-reduces}
  We can connect the above example with a rather easy Datalog reduction from $\CSP(\rel K_\infty)$ to $\CSP(\False)$, which is defined by the Datalog interpretation $\psi$ where $\psi_C$ is the program with the rule $C \fro E(x, x)$.
  Connecting these two reductions together, we get that
  \[
    \CSP(\rel K_2) \leqdl \CSP(\rel K_\infty) \leqdl \CSP(\False)
  \]
  where the full reduction from left to right is $\psi\circ \upsilon \circ\phi$ where $\upsilon$ and $\phi$ are as in the example above.
  Theorem~\ref{thm:ddatalog-composes} asserts that $\CSP(\rel K_2) \leqdl \CSP(\False)$, i.e., that there is a Datalog interpretation $\phi'$ and a union gadget $\upsilon'$ such that $\upsilon'\circ\phi'$ gives homomorphically equivalent outputs to $\psi\circ \upsilon$. In fact, the union gadget $\upsilon'$ is not needed in this case (due to a property of the structure $\False$). Let us describe how to construct this $\phi'$ from $\phi$, $\upsilon$, and $\psi$.

  First, we switch the order of $\psi$ and $\upsilon$ in the same way as in  Example~\ref{ex:switching} (except that the binary relations have a different name). The Datalog interpretation $\psi'$ has one component $\psi'_C$ with rules
  \[
    C \fro E_{0,0} \qquad C \fro E_{1,1}.
  \]

  The Datalog interpretation $\phi'$ obtained as a composition $\psi'\circ \phi$ is a valid reduction from $\CSP(\rel K_2)$ to $\CSP(\rel K_\infty)$.  The only component of $\phi'$ is the Datalog program $\phi'_C$ which is the union of $\phi_{E, (0,0)}$, $\phi_{E, (1, 1)}$, and the two rules of $\psi'_C$ above, i.e., the following program:
  \begin{align*}
    I_{0,1}(x, y) &\fro E(x, y) &
    I_{i,k}(x, z) &\fro I_{i,j}(x, y), I_{j,k}(y, z) \\
    I_{1,0}(x, y) &\fro E(x, y) &
    E_{i,i}(x, y) &\fro I_{i,0}(x, z), I_{i,1}(y, z) \\
    I_{i,i}(x, x) &\fro x = x &
    E_{i,i}(x, y) &\fro I_{i,1}(x, z), I_{i,0}(y, z) \\
    I_{i,j}(x, y) &\fro I_{j,i}(y, x) &
    C &\fro E_{i,i}
  \end{align*}
  where $i,j,k \in \{0,1\}$.
  Note that this $\psi'_C$ decides whether the input has an odd cycle, or not. This program that we constructed by joining our proofs is a more verbose version of a simpler Datalog program solving $\CSP(\rel K_2)$ described in \cite[Section 4.1]{KV00}.

  Finally, let us remark that although we started with a projective gadget replacement $\gamma$ and a Datalog interpretation $\phi$ of width $1$ that does not use recursion (i.e., the graph whose vertices are relational symbols appearing in a program, and there is an edge from $R$ to $S$ if $S$ appears in the head of a rule which has $R$ in the body, does not have directed cycles), the program $\psi'_C$ has width $3$. It can be shown, using \cite{KV00}, that this cannot be avoided: there is no Datalog program of width $2$ that could be used instead of $\psi'_C$, and no Datalog program that does not use recursion that could be used instead of~$\psi'_C$.
\end{example}

\subsection{The equivalence of \texorpdfstring{\Datalog}{Datalog} and \texorpdfstring{$k$}{k}-consistency reductions}
  \label{sec:datalog-is-k-consistency}

The goal of this section is to prove Theorem~\ref{thm:canonical-width}.
Let us briefly outline the structure of the proof and the intuition behind it. We fix structures $\rel A$, $\rel B$, and a parameter $k \geq 1$. The theorem claims two implications about reductions from $\PCSP(\rel A, \rel A')$ to $\PCSP(\rel B, \rel B')$, namely that:
(1) If $\phi$ is a Datalog interpretation of width $k$ and $\gamma$ a gadget, such that $\gamma\circ\phi$ is a reduction between the two promise CSPs, then so is $\kappa_k^{\rel A, \rel B}$.
(2) If $\kappa_k^{\rel A, \rel B}$ is a reduction between the two promise CSPs then so is $\pi_\rel B\circ \psi^{k, \rel A}$, where $\pi_\rel B$ is the universal gadget, and $\psi^{k, \rel A}$ is a Datalog interpretation of width $k$ which is based on the \emph{canonical Datalog program}, and we define it below.

A feature of the proof is that neither of the constructed reductions depends on the structures $\rel A'$ and $\rel B'$, which play a role in the soundness of reductions. We compare the two reductions without referring to these two structures, and show that in both implications the two involved reductions, let us denote them by $\phi$ and $\psi$, satisfy $\psi(\rel X) \to \phi(\rel X)$ for all instances $\rel X$. This relation can be then used to transfer the soundness of $\psi$ to the soundness of $\phi$.
As usual, completeness is easier to show.

In order to compare reductions in both implications, we decompose $\kappa_k^{\rel A, \rel B}$ into $\kappa_k^{\rel A} = \ac \circ \sigma_k^{\rel A}$, that produces a label cover instance and depends only on $\rel A$, and the universal gadget $\pi_\rel B$.

To compare label cover instances produced as an intermediate step we leverage an algebraic intuition: First, we use a universal property of $\pi_\rel B$ which, loosely speaking, says that $\pi_\rel B$ is the `best gadget' for a reduction from label cover to a $\PCSP(\rel B, {*})$. More precisely, reading the proof of Theorem~\ref{thm:gadget-characterisation} in detail, we may observe that if a gadget $\gamma$ provides a reduction from $\PCSP(\rel P, {*})$ to $\PCSP(\rel B, {*})$, then $\pi_\rel B$ is also such a reduction. We will actually show that this statement is provided by a homomorphism $\gamma(\rel X) \to \pi_\rel B(\rel X)$ which exists whenever $\gamma(\rel P) \to \rel B$, i.e., that the soundness of $\pi_\rel B$ can be derived from the soundness of $\gamma$ by the method we described above.

Second, in both implications, we will need to compare different instances of label cover whose types do not align well, e.g., the elements of $\kappa_k^{\rel A}(\rel X)$ are of types $\mathcal F_K$, which is a subset of partial homomorphism from $K$ to $\rel A$ for some $K\in \binom X{\leq k}$, while the elements of $\psi^{k,\rel A}(\rel X)$ are of types which correspond to relations $R \subseteq A^m$ for $m \leq k$. The set $\mathcal F_K$ and the relation $R$ can be compared using known facts about Datalog and consistency.
Nevertheless, the fact that these sets relate to each other is not enough to provide a homomorphism between the two label cover instances for formal reasons. Instead, we relax the notion of a homomorphism (again guided by an algebraic intuition) to a notion of a `weak homomorphism'.\footnote{If a label cover instance is interpreted as a \emph{minor condition}, these `weak homomorphisms' correspond exactly to `syntactic implications between minor conditions'.} We express these `weak homomorphisms' in the combinatorial language as follows: `$\rel S$ maps weakly to $\rel T$' if $\rel S \to \pi_\rel P(\rel T)$; recall, that $\rel P$ denotes the template of the label cover problem. It turns out that the existence of this homomorphism is equivalent to existence of homomorphisms $\pi_\rel B (\rel S) \to \pi_\rel B (\rel T)$ for all structures $\rel B$. Consequently, we get that a `weak homomorphism' is mapped by a universal gadget to a normal homomorphism which we will use in our proof.
In this instance, we want to show that $\kappa_k^{\rel A}(\rel X) \to \pi_\rel P\psi^{k, \rel A}(\rel X)$ which is a straightforward technical exercise (using the connection between $\mathcal F_K$ and $R$ outlined above). Alternatively, we could again leverage the universality of $\pi_\rel P$ by observing that we can recover $\psi^{k, \rel A}(\rel X)$ from $\kappa_k^{\rel A}(\rel X)$ by application of a simple gadget reduction~$\gamma$.

The soundness of $\kappa_k^{\rel A, \rel B}$ in the other implication is obtained in a similar way.  Let us recall that our goal is to show that $\gamma\phi(\rel X) \to \kappa_k^{\rel A, \rel B}(\rel X)$ where $\phi$ is a Datalog interpretation of width $k$ and $\gamma$ is a gadget.
In order to compare these two reductions, we factor each of them through label cover, and then again use `weak homomorphisms' to compare the intermediate label cover instances. In particular, the reduction $\gamma\circ \phi$ is decomposed as $\gamma' \circ \phi'$ where $\gamma'$ is obtained by decomposing $\gamma$ as $\gamma' \circ \rho^{\phi(\rel A)}$ using Lemma~\ref{lem:projective-decomposition}; $\phi'$ is then $\rho^{\phi(\rel A)} \circ \phi$ which is expressible as a Datalog interpretation of the same width as $\phi$.
We then prove that
\begin{equation} \label{eq:width}
  \phi'(\rel X) \to \pi_\rel P\kappa_k^{\rel A}(\rel X)
\end{equation}
for all structures $\rel X$ with the same signature as $\rel A$. This statement is again proved by leveraging the known connection of Datalog and consistency. The proof of the implication is then concluded by the universality of $\pi_\rel B$.

Incidentally, the proof of \eqref{eq:width} applies without any significant changes to the canonical Datalog program $\psi^{k, \rel A}$ in place of $\phi'$. Consequently, the structures $\pi_\rel B\psi^{k, \rel A}(\rel X)$ and $\kappa_k^{\rel A, \rel B}(\rel X)$ are homomorphically equivalent (in fact, with a bit more effort, it can be shown that they are isomorphic). This observation provides ground for saying that $k$-consistency reduction \emph{is} a Datalog reduction (as opposed to only being equivalent to one). And we also use it to derive the completeness of $\pi_\rel B \circ \psi^{k, \rel A}$ from the completeness of $\kappa_k^{\rel A, \rel B}$.

The rest of this appendix is a formalisation of the ideas described above. The proofs are presented in full technical detail.
We start by introducing some notation, introducing the components of our reductions, and recalling known results.

\subsubsection{Notation}

We will extensively use a functional notation for tuples, i.e., we view $w \in X^n$ as a function $w\colon [n] \to X$. In particular, $\im w$ is the set of all entries of $w$, i.e., $\im w = \{w_1, \dots, w_n\}$, and if $\pi\colon [m] \to [n]$ then $w\circ \pi = (w_{\pi(1)}, \dots, w_{\pi(n)})$.
We write $R\circ \pi$ for the $m$-ary relation $\{ a\circ \pi \mid a\in R \}$.
Finally, we will denote by $X^{\leq k}$ the set of all tuples of elements from $X$ of length at most~$k$.
Occasionally in longer expressions, we will write function composition simply as juxtaposition, i.e., $fg$ instead of $f\circ g$. Note that composition takes precedence, i.e., $fg(x) = f(g(x))$ not $f\circ g(x)$.

\subsubsection{The universality of gadgets}

Let us start with the universality of gadgets $\pi_\rel B$, i.e., the following lemma. Let us reiterate that the intuition is provided by \cite{BBKO21} which implicitly claims that if a gadget reduction is a valid reduction from a reduct of label cover to $\PCSP(\rel B, {*})$ then so is $\pi_\rel B$.

Fix a reduct of label cover signature $\Sigma$, and let $\rel P^\Sigma$ denote the $\Sigma$-reduct of the label cover template $\rel P$, and fix a structure $\rel B$.

\begin{lemma} \label{lem:universality-of-gadget}
  If $\gamma$ is a gadget such that $\gamma(\rel P^\Sigma) \to \rel B$, then  $\gamma(\rel S) \to \pi_{\rel B}(\rel S)$ for every $\Sigma$-structure $\rel S$.
\end{lemma}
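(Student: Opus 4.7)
The plan is to construct, from a given homomorphism $h\colon \gamma(\rel P^\Sigma)\to \rel B$, an explicit homomorphism $f\colon \gamma(\rel S)\to \pi_\rel B(\rel S)$ by using $h$ to ``extract coordinates'' for the $\rel B^X$-valued elements of $\pi_\rel B(\rel S)$. The key observation motivating the construction is that $\rel P^\Sigma$ is as ``generic'' as possible: every $\Sigma$-type $X$ has underlying set $X$ itself, and every relation $E_\pi^{\rel P^\Sigma}$ is precisely the graph $\{(x,\pi(x)) \mid x \in X\}$ of $\pi$. Hence, for any $d \in \rel D_X^\gamma$, the value $h([x;d])$ is defined for all $x \in X$ and naturally assembles into an element of $B^X$, which is exactly the correct target type in $\pi_\rel B(\rel S)$.

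Concretely, I would define $f$ on the two kinds of representatives in $\gamma(\rel S)$ as follows. For a ``vertex'' element $[s;d]$ with $s \in \rel S_X$ and $d \in \rel D_X^\gamma$, set $f([s;d]) = [s;\beta_d]$, where $\beta_d\colon X \to B$ is $\beta_d(x) = h([x;d])$. For a ``constraint'' element $[(s_1,s_2);c]$ arising from $(s_1,s_2) \in E_\pi^{\rel S}$ with $\pi\colon X_1 \to X_2$ and $c \in \rel E_\pi^\gamma$, set $f([(s_1,s_2);c]) = [s_1;\eta_c]$, where $\eta_c\colon X_1 \to B$ is $\eta_c(x) = h([(x,\pi(x));c])$.

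Next I would verify well-definedness under the gadget identifications $(s_i;d) = ((s_1,s_2);p_{E_\pi,i}^\gamma(d))$ that are collapsed by $\gamma$: for $i=1$, the equality $\beta_d = \eta_{p_{E_\pi,1}^\gamma(d)}$ is immediate because $[x;d] = [(x,\pi(x));p_{E_\pi,1}^\gamma(d)]$ already inside $\gamma(\rel P^\Sigma)$ and $h$ is well-defined on classes; for $i=2$ an analogous computation yields $\eta_{p_{E_\pi,2}^\gamma(d)}(x) = h([\pi(x);d]) = \beta_d(\pi(x))$, i.e.\ $\eta_{p_{E_\pi,2}^\gamma(d)} = \beta_d \circ \pi$, which is exactly the cross-type identification $[s_1;b\circ\pi] = [s_2;b]$ enforced by $\pi_\rel B$ via $p_{E_\pi}^{\pi_\rel B}(b) = b\circ\pi$. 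For preservation of relations, a tuple in $R^{\gamma(\rel S)}$ comes either from $(d_1,\ldots,d_k) \in R^{\rel D_X^\gamma}$ or from $(c_1,\ldots,c_k) \in R^{\rel E_\pi^\gamma}$; its image under $f$ lies in $R^{\pi_\rel B(\rel S)}$ iff for every $x$ in the relevant index set, the tuple $(h([x;d_1]),\ldots,h([x;d_k]))$ (respectively $(h([(x,\pi(x));c_1]),\ldots)$) lies in $R^\rel B$, which is immediate since the corresponding tuples already belong to $R^{\gamma(\rel P^\Sigma)}$ and $h$ is a homomorphism.

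The argument is therefore almost entirely bookkeeping; the main obstacle I anticipate is merely keeping the notation straight so that the two equality-constraint cases of $\gamma$ align correctly with the single cross-type identification $p_{E_\pi}^{\pi_\rel B}(b) = b\circ\pi$ of the universal gadget. Reducing first to projective gadgets via Lemma~\ref{lem:projective-decomposition} would eliminate the ``constraint'' case, but the signature produced by reification is not a reduct of label cover, so $\pi_\rel B$ does not act directly on $\rho(\rel S)$; I therefore prefer to handle both cases in the direct argument sketched above.
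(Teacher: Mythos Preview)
Your proposal is correct and essentially identical to the paper's proof: the paper defines the same map (with different variable names), sending $[x;g]$ to $[x;d_g]$ where $d_g(a)=b([a;g])$ and $[(x,y);g]$ to $[x;d_g]$ where $d_g(a)=b([(a,\sigma(a));g])$, and then checks the same two equality-constraint cases and relation preservation exactly as you outline. Your observation that the projective-gadget route does not directly apply here is also accurate.
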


\begin{proof}
  Let $b\colon \gamma(\rel P^\Sigma) \to \rel B$ be a homomorphism.
  We define a homomorphism $h\colon \gamma(\rel S) \to \pi_{\rel B}(\rel S)$ by
  \[
    h([x; g]) = [x; d_g]
  \]
  for each $x \in S_X$ and $g\in \rel D_X^\gamma$ where $d_g\colon X\to B$ is defined by $d_g(a) = b([a; g])$; and
  \[
    h([(x,y); g]) = [x; d_g]
  \]
  for each $(x, y) \in E_\sigma^\rel S$, and $g\in \rel E_\sigma^\gamma$, where $d_g\colon X \to B$ is defined by $d_g(a) = b([(a, \sigma(a)); g])$.
  We need to show that $h$ is a well-defined homomorphism which is relatively straightforward.

  Let us start with checking that $h$ preserves the equality constraints introduced to $\gamma(\rel S)$ in Step (3) of the gadget replacement.
  The equality constraints are of two forms,
  \begin{align}
    ((x, y); p_{E_\sigma, 1}^\gamma(g)) &= (x; g) \label{eqc1}\\
    ((x, y); p_{E_\sigma, 2}^\gamma(g)) &= (y; g) \label{eqc2}
  \end{align}
  where $(x, y) \in E_\sigma^{\rel S}$ for some $\sigma\colon X\to Y$, and $g\in \rel D_X$ in the first case, or $g\in \rel D_Y$ in the second case. For simplicity, we will omit $E_\sigma$ from the lower index in $p_i^\gamma$ for $i = 1, 2$.
  Let us start with \eqref{eqc1}: Observe that, for all $a \in X$,
  \[
    d_g(a) = b([a; g]) = b([(a, \sigma(a)); p_1^\gamma(g)]) = d_{p_1^\gamma(g)}(a)
  \]
  where the middle equality holds since $(a, \sigma(a))\in E^\rel P_\sigma$. This equality implies that $h([x; g]) = h([(x, y); p_1^\gamma(g)])$ since the definitions coincide.
  For \eqref{eqc1} we observe similarly that, for all $a \in X$,
  \[
    d_g(\sigma(a)) = b([\sigma(a); g]) = b([(a, \sigma(a)); p_1^\gamma(g)]) = d_{p_2^\gamma(g)}(a).
  \]
  This implies that $d_g \circ \sigma = d_{p_2^\gamma(g)}$, and consequently that
  \[
    h([y; g])
    = [y; d_g]
    = [x; d_g \circ \sigma]
    = [x; d_{p_2^\gamma(g)}]
    = h([(x, y); p_2^\gamma(g)])
  \]
  as we wanted to show.

  To show that $h$ preserves relations, let
  \[
    ([x_1; g_1], \dots, [x_\ell; g_\ell]) \in R^{\gamma(\rel S)}.
  \]
  Since $h$ is well-defined, we may assume without loss of generality that $x_1 = \dots = x_\ell = x$ and that $(g_1, \dots, g_\ell) = g$ is in the $R$-relation of the corresponding component of $\gamma$. Furthermore, let us assume that $x$ is an element of type $X$, and $g\in R^{\rel D_X^\gamma}$ --- the other case, when $x \in E_\sigma^\rel S$ for some $\sigma$ is analogous.
  Since $h([x; g_i]) = [x; d_{g_i}]$, it is enough to prove that $(d_{g_1}, \dots, d_{g_\ell}) \in R^{\rel B^X}$ which is immediate from the definition of the power $\rel B^X$ since $g \in R^{\rel D_X^\gamma}$ and $b$ is a homomorphism.
\end{proof}

The following is a direct corollary of the above.

\begin{lemma} \label{lem:universal-gadget-is-monotone}
  Let $\rel S$ and $\rel T$ be two label cover instances such that $\rel S \to \pi_\rel P(\rel T)$. Then $\pi_{\rel B}(\rel S) \to \pi_{\rel B}(\rel T)$ for all structures $\rel B$.
\end{lemma}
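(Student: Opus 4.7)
The plan is to deduce the lemma from Lemma~\ref{lem:universality-of-gadget} applied to the composite gadget $\pi_{\rel B} \circ \pi_{\rel P}$, combined with the functoriality of $\pi_{\rel B}$. Concretely, given $h \colon \rel S \to \pi_{\rel P}(\rel T)$, I would factor the desired homomorphism as
\[
  \pi_{\rel B}(\rel S) \to \pi_{\rel B}(\pi_{\rel P}(\rel T)) \to \pi_{\rel B}(\rel T),
\]
where the first arrow is $\pi_{\rel B}(h)$. This uses that any gadget replacement is functorial: a homomorphism $f \colon \rel U \to \rel V$ lifts to $\pi_{\rel B}(f)\colon \pi_{\rel B}(\rel U) \to \pi_{\rel B}(\rel V)$ by $[u; d] \mapsto [f(u); d]$, which is immediate from the construction of a projective gadget replacement.

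For the second arrow, I would view $\gamma := \pi_{\rel B} \circ \pi_{\rel P}$ as a single gadget (the composition of two projective gadgets is again a gadget in the sense of Definition~\ref{def:gadget}) and apply Lemma~\ref{lem:universality-of-gadget} to $\gamma$ and $\rel T$. This produces $\gamma(\rel T) \to \pi_{\rel B}(\rel T)$, i.e.\ exactly $\pi_{\rel B}(\pi_{\rel P}(\rel T)) \to \pi_{\rel B}(\rel T)$, provided the hypothesis $\gamma(\rel P^\Sigma) \to \rel B$ holds, where $\Sigma$ is the finite reduct of label cover containing $\rel T$.

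The main (and essentially only) obstacle is verifying $\pi_{\rel B}(\pi_{\rel P}(\rel P^\Sigma)) \to \rel B$. I would handle this by chaining two explicit ``evaluation'' homomorphisms: $\pi_{\rel P}(\rel P^\Sigma) \to \rel P^\Sigma$ defined by $[x; q] \mapsto q(x)$ for $x \in P_X = X$ and $q \in \rel P^X$, and $\pi_{\rel B}(\rel P^\Sigma) \to \rel B$ defined by $[x; b] \mapsto b(x)$ for $x \in X$ and $b \in B_t^X$. Both are routine: the $\pi_{\rel P}$-identification $[x; q\circ\sigma] = [\sigma(x); q]$ along $E_\sigma^{\rel P^\Sigma}$ is respected since $(q\circ\sigma)(x) = q(\sigma(x))$, the $\pi_{\rel B}$-case is analogous, and preservation of relations follows by evaluating the defining tuples of $R^{\rel B^X}$ or $E_\tau^{\rel P^X}$ at a single coordinate. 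Applying $\pi_{\rel B}$ functorially to the first map and composing with the second then yields $\gamma(\rel P^\Sigma) = \pi_{\rel B}(\pi_{\rel P}(\rel P^\Sigma)) \to \pi_{\rel B}(\rel P^\Sigma) \to \rel B$, completing the argument.
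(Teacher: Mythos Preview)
Your proof is correct and follows essentially the same route as the paper: both factor through $\pi_{\rel B}(\rel S) \to \pi_{\rel B}\pi_{\rel P}(\rel T) \to \pi_{\rel B}(\rel T)$, use monotonicity of $\pi_{\rel B}$ for the first arrow, and invoke Lemma~\ref{lem:universality-of-gadget} with $\gamma = \pi_{\rel B}\circ\pi_{\rel P}$ for the second, verifying the hypothesis via the evaluation map $[x;p]\mapsto p(x)$. The paper is slightly terser, packaging your two evaluation maps into the single observation $\pi_{\rel X}(\rel P)\to\rel X$ for all $\rel X$, but the argument is the same.
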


\begin{proof}
  Since $\pi_\rel B$ is monotone, we have that $\pi_\rel B(\rel S) \to \pi_\rel B \pi_\rel P(\rel T)$.
  Observe that $\pi_\rel X(\rel P) \to \rel X$ for all $\rel X$ by mapping $[x, p] \mapsto p(x)$.
  Consequently, $\pi_\rel B \circ \pi_\rel P$ is a gadget such that $\pi_\rel B \pi_\rel P(\rel P) \to \rel B$, and hence the previous lemma applies, and we get $\pi_\rel B\pi_\rel P(\rel T) \to \pi_\rel B(\rel T)$, and the desired by composing the two homomorphisms.
\end{proof}

Let us also note that the converse of the lemma is true, though it is not necessary for our proof: If $\pi_\rel B(\rel S) \to \pi_\rel B(\rel T)$ for all $\rel B$, then in particular $\pi_\rel P(\rel S) \to \pi_\rel P(\rel T)$, and it is not hard to observe that $\rel S \to \pi_\rel P(\rel S)$ (alternatively, we can argue that this homomorphism follows from the universality of $\pi_\rel P$ since the identity mapping is a gadget replacement), and therefore $\rel S \to \pi_\rel P(\rel T)$.

\subsubsection{Canonical Datalog interpretation}

We define the canonical interpretation which is based on the \emph{canonical Datalog program} $\psi$ of width $k$ for a structure $\rel A$. We define this Datalog program as a set of rules without an output predicate --- the Datalog interpretation will use copies of this program with different output predicates as components.

\begin{definition}
  Fix a structure $\rel A$ and $k \geq 1$.
  We denote by $\Delta^{k, \rel A}$ the set of all relations $R \subseteq A^m$, where $m\leq k$, that are definable in $\rel A$ by a Datalog program of width at most $k$, i.e., all relations $R$ of the form $R = \phi(\rel A)$ where $\phi$ is a Datalog program of width at most~$k$.

  The signature of the \emph{canonical program} $\psi$ for $\rel A$ of width $k$ consists (apart from the basic symbols of $\rel A$) of a symbol $R$ for each $R \in \Delta^{k, \rel A}$. We treat each such relation as an abstract symbol of the same arity as the Datalog program $\phi$ that defined it. 
  The rules of the canonical Datalog are all the rules (with at most $k$ variables) that are satisfied in $\rel A$ when each rule is interpreted as an implication $t_0\leftarrow t_1 \wedge \dots \wedge t_r$ where $t_0$ is the head and $t_1, \dots, t_r$ is the body (and the basic symbols $R$ are interpreted as $R^\rel A$).
\end{definition}

With this program at hand, we can construct the canonical Datalog interpretation.

\begin{definition}
  Fix a structure $\rel A$ and $k\geq 1$. The \emph{canonical Datalog interpretation} $\psi^{k, \rel A}$ is defined as follows. The output of $\psi^{k, \rel A}$ is a reduct of a label cover instance. More precisely, the output signature consists of a type $R$ for each $R \in \Delta^{k, \rel A}$ which is viewed as the set $R \subseteq A^m$, and a binary symbol $E_\pi$ for each $\pi \colon R \to S$ of the form $\pi(a) = a\circ p$ for some $p\colon [m_S] \to [m_R]$ where $m_R$ and $m_S$ denote the arity of $R$ and $S$, respectively.

  The domains are defined by the Datalog programs $\psi^{k, \rel A}_R$ obtained from the canonical Datalog program by designing $R$ as the output.
  The relation $E_\pi$ where $\pi\colon R \to S$ is of the form $\pi(a) = a\circ p$ is defined by the program $\psi^{k, \rel A}_{E_\pi}$ obtained from the canonical program by adding the rule
  \[
    E_\pi(x_1, \dots, x_n, x_{p_\pi(1)}, \dots, x_{p_\pi(m)})
      \fro R(x_1, \dots, x_n).
  \]
\end{definition}

Note that $\psi^{k, \rel A}_R(\rel A) = R$ for each $R \in \Delta^{k, \rel A}$, and that $\psi^{k, \rel A}_{E_\pi}(\rel A) = \{ (a, a\circ p) \mid a\in R \}$. Hence $\psi^{k, \rel A}(\rel A)$ is a reduct of $\rel P$ whose domains consist of all relations definable in $\rel A$ by a Datalog program of width~$k$.

Let us recall a few known facts about the connection between the $k$-consistency algorithm and Datalog program of width $k$. This connection is well understood, and we refer to, e.g., \citet{KV95,FV98} for more details. We will use three lemmas which are either implicitly, or explicitly proved in \cite{KV95} (see, e.g., \cite[Theorem 4.8]{KV95}).

\begin{lemma}[\cite{KV95}] \label{lem:fromKV}
  Let $\rel A$ and $\rel X$ are structures with the same signature, $k \geq 1$, and $\mathcal F_K$ denote the sets output by the $k$-consistency enforcement applied on the instance $\rel X$, i.e., $\mathcal F_K$ is the type of the variable $v_K \in \kappa_k^\rel A(\rel X)$.
  Let $w$ be a tuple such that $\im w \in \binom X{\leq k}$, and $R = \mathcal F_{\im w} \circ w$.
  \begin{enumerate}[(1),ref=\ref{lem:fromKV}(\arabic*)]
    \item \label{lem:consistency-and-datalog}
      For each Datalog program $\phi$ of width at most $k$, if $w \in \phi(\rel X)$, then $a\in \phi(\rel A)$ for all $a\in R$.\footnote{This applies to all tuples in $w\in \phi(\rel X)$ since if $\phi$ has width at most $k$, $\im w$ has at most $k$ elements.}
    \item \label{lem:winning-strategy}
      If $w \in X^{\leq k}$, then $R$ is the smallest relation $S \in \Delta^{k, \rel A}$ (w.r.t., inclusion) such that $w \in S$.
    \item \label{lem:canonical-datalog}
      If $w \in X^{\leq k}$, then $w \in \psi^{k, \rel A}_R(\rel X)$.
  \end{enumerate}
\end{lemma}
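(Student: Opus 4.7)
My plan is to prove (1) by induction on Datalog-evaluation stages, then (3) by induction on the $k$-consistency enforcement stages, and finally derive (2) from (1) and (3). For (1), the inductive claim to maintain is that for every evaluation stage $t$, every IDB $Q$ of $\phi$, and every tuple $v$ with $\im v \in \binom{X}{\leq k}$: if $v \in Q_t^{\rel X}$, then $f \circ v \in Q_t^{\rel A}$ for every $f \in \mathcal F_{\im v}$. The base case (only EDBs are nonempty at stage $0$) uses that elements of $\mathcal F_{\im v}$ are partial homomorphisms. For the inductive step, if $v$ is derived by a rule under an assignment $h$, then $|\im h| \leq k$ since the rule has width at most $k$; so given $f \in \mathcal F_{\im v}$, consistency of the $\mathcal F$-sets provides an extension $g \in \mathcal F_{\im h}$ of $f$, and the inductive hypothesis applied to each body atom (whose argument image lies inside $\im h$) via a restriction of $g$ shows that the rule fires in $\rel A$ under $g \circ h$, placing $f \circ v = g \circ v$ into $Q_{t+1}^{\rel A}$. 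Taking $v = w$ and $Q$ to be the output predicate of $\phi$ yields~(1).

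For (3), I first reduce to the case that $w$ is an injective enumeration of $K := \im w$: a width-$\leq k$ Datalog rule expresses $R_{K, w}$ as a reordering/duplication of $R_{K, w'}$ for an injective enumeration $w'$, and derives $w$ from $w'$ on $\rel X$. In the injective case I argue by induction on the $k$-consistency stage $t$, simultaneously for all $(K, w')$ with $K \in \binom{X}{\leq k}$ and $w'$ injective, maintaining the joint claim that $R_{K, w'}^t := \mathcal F_K^t \circ w'$ lies in $\Delta^{k, \rel A}$ and that $w' \in \psi^{k, \rel A}_{R_{K, w'}^t}(\rel X)$. At stage $0$, $R_{K, w'}^0$ is pp-defined in $\rel A$ by the atomic type of $K$ in $\rel X$ (using only $|K| \leq k$ variables), and the canonical rule witnessing this pp-definition fires on $\rel X$ under the assignment sending each variable to the element of $K$ it represents, thereby deriving $w'$.

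The main obstacle is the inductive step: to express
\[
  R_{K, w'}^{t+1} = R_{K, w'}^t \cap \bigcap_{L \subsetneq K} C_L\bigl(R_{L, w'|_L}^t\bigr) \cap \bigcap_{\substack{K' \supsetneq K \\ |K'| \leq k}} \pi_K\bigl(R_{K', w'_{K'}}^t\bigr)
\]
(where $C_L$ denotes cylindrification along $L \hookrightarrow K$ and $\pi_K$ denotes projection onto the $K$-coordinates of an injective enumeration $w'_{K'}$ extending $w'$), a single Datalog rule would need all the existential witnesses simultaneously and exceed width $k$. I will bypass this by introducing auxiliary IDBs $Q^{K, w', L}$ and $Q^{K, w', K'}$, each defined by a single rule of width $|K|$ or $|K'|$ respectively (both $\leq k$) that refers to an IDB supplied by the inductive hypothesis, and then combining them by a final rule of width $|K|$. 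All of these rules are valid in $\rel A$ and hence belong to the canonical program; applied on $\rel X$ they fire at the appropriate tuples ($w'$, $w'|_L$, or $w'_{K'}$) by the inductive hypothesis, ultimately deriving $w' \in \psi^{k, \rel A}_{R_{K, w'}^{t+1}}(\rel X)$.

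Finally, (2) follows: for any $S \in \Delta^{k, \rel A}$ with $w \in \psi^{k, \rel A}_S(\rel X)$, applying (1) to $\phi := \psi^{k, \rel A}_S$ together with the identity $\psi^{k, \rel A}_S(\rel A) = S$ yields $R \subseteq S$; and (3) shows that $R$ itself is a valid choice of $S$, so $R$ is the smallest.
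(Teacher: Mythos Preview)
The paper does not prove this lemma; it explicitly attributes all three items to Kolaitis and Vardi~\cite{KV95} (``We will use three lemmas which are either implicitly, or explicitly proved in \cite{KV95}'') and gives no argument of its own. Your proposal supplies a correct self-contained proof where the paper offers none.

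Your argument for (1) is the standard stage induction, and you use exactly the two directions of consistency that are needed: upward extension of $f \in \mathcal F_{\im v}$ to $g \in \mathcal F_{\im h}$, and downward restriction of $g$ into each $\mathcal F_{\im u}$ for body atoms. Part~(3) is the more delicate piece, and you handle the main obstacle correctly: although the one-round update of $\mathcal F_K$ references all supersets $K' \supsetneq K$ simultaneously, each auxiliary projection rule has width $\lvert K'\rvert \le k$ and the final intersection rule has width $\lvert K\rvert \le k$, so no rule of width exceeding $k$ is ever required. It is worth noting explicitly (you gesture at it) that the rules you build are instance-dependent --- they mention relations $R^t_{K',w'_{K'}}$ for particular $K' \subseteq X$ --- but this is harmless, since the canonical program contains \emph{all} width-$\le k$ rules valid in $\rel A$, and you only need each rule you use to be among them. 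Your derivation of (2) from (1) and (3) reads ``$w \in S$'' as $w \in \psi^{k,\rel A}_S(\rel X)$, which is the intended reading in this context.
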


\subsubsection{Soundness}

We fix structures $\rel A$ and $\rel B$, and a parameter $k \geq 1$. We first prove the implication claiming that if $\kappa_k^{\rel A, \rel B}$ is a reduction then so is $\pi_\rel B \circ \psi^{k, \rel A}$. As outlined above, the first step is to provide the following homomorphism.

\begin{lemma} \label{lem:datalog-and-kappa}
  For each structure $\rel X$ with the same signature as $\rel A$,
  \[
    \kappa_k^\rel A(\rel X) \to \pi_\rel T\psi^{k, \rel A}(\rel X).
  \]
\end{lemma}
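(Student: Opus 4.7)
The plan is to construct an explicit homomorphism $h\colon \kappa_k^\rel A(\rel X) \to \pi_\rel P\psi^{k,\rel A}(\rel X)$ (reading $\rel T$ in the statement as the label-cover template $\rel P$, in line with the outline preceding the lemma), by leveraging the correspondence between the $k$-consistency sets $\mathcal F_K$ and the relations in $\Delta^{k,\rel A}$ given by Lemma~\ref{lem:fromKV}.

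For each $K\in\binom X{\leq k}$, I would fix once and for all an enumeration $w_K=(x_1,\dots,x_n)$ of $K$ and set $R_K := \mathcal F_K\circ w_K \subseteq A^{|K|}$. By Lemma~\ref{lem:winning-strategy} we have $R_K\in\Delta^{k,\rel A}$, and by Lemma~\ref{lem:canonical-datalog} $w_K\in\psi^{k,\rel A}_{R_K}(\rel X)$, so $w_K$ is a genuine variable of type $R_K$ in $\psi^{k,\rel A}(\rel X)$. Because the entries of $w_K$ are distinct, the map $f\mapsto f\circ w_K$ is a bijection $\mathcal F_K\to R_K$, whose inverse I denote $b_K\colon R_K\to\mathcal F_K$. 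Then $[w_K; b_K]$ is an element of type $\mathcal F_K$ in $\pi_\rel P\psi^{k,\rel A}(\rel X)$, matching the type of $v_K$, and I set $h(v_K) := [w_K; b_K]$.

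The only relations to preserve are the constraints $(v_K,v_L)\in E_\pi$ for $L\subseteq K$, where $\pi\colon \mathcal F_K\to\mathcal F_L$ is the restriction map. Let $p\colon[|L|]\to[|K|]$ be the unique position map satisfying $w_L=w_K\circ p$, and define $\tau\colon R_K\to R_L$ by $\tau(t) := t\circ p$. A direct unpacking shows $b_L\circ\tau=\pi\circ b_K$, and $\tau$ really lands in $R_L$ precisely because $k$-consistency guarantees that $f\in\mathcal F_K$ implies $f|_L\in\mathcal F_L$, so that $\tau(f\circ w_K)=(f|_L)\circ w_L\in R_L$. The canonical Datalog interpretation adds the rule $E_\tau(x_1,\dots,x_n,x_{p(1)},\dots,x_{p(m)})\fro R_K(x_1,\dots,x_n)$ to $\psi^{k,\rel A}_{E_\tau}$, which fires on $w_K$ and produces $(w_K,w_L)\in E_\tau^{\psi^{k,\rel A}(\rel X)}$. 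The universal gadget $\pi_\rel P$ then identifies $[w_K; b_L\circ\tau]=[w_L; b_L]$, and this by $b_L\circ\tau=\pi\circ b_K$ equals $[w_K;\pi\circ b_K]$. Inside the copy of $\rel P^{R_K}$ attached to $w_K$ we trivially have $(b_K,\pi\circ b_K)\in E_\pi^{\rel P^{R_K}}$, hence the pair $([w_K; b_K],[w_K;\pi\circ b_K])$ lies in $E_\pi^{\pi_\rel P\psi^{k,\rel A}(\rel X)}$; that is, $(h(v_K),h(v_L))\in E_\pi$, as required.

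The main obstacle is pure bookkeeping: navigating the two layers (the Datalog interpretation $\psi^{k,\rel A}$ and the universal gadget $\pi_\rel P$), tracking which label-cover type each element lives in, and selecting the right representatives modulo the identifications imposed by $\pi_\rel P$. The only substantive use of the $k$-consistency hypothesis appears in checking that $\tau$ is well-defined; everything else is a formal consequence of the definition of $\pi_\rel P$ and of the extra rule added to $\psi^{k,\rel A}_{E_\tau}$ by the canonical Datalog interpretation.
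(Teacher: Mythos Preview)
Your proof is correct and follows essentially the same route as the paper's: fix an enumeration $w_K$ of each $K$, define $h(v_K)=[w_K;b_K]$ with $b_K$ the inverse of $f\mapsto f\circ w_K$, and verify preservation of each $E_\pi$ constraint via the auxiliary map $\tau(a)=a\circ p$ and the identity $b_L\circ\tau=\pi\circ b_K$. The only cosmetic difference is that the paper packages the final step (passing from $(w_K,w_L)\in E_\tau$ and the commuting square to $(h(v_K),h(v_L))\in E_\pi$) into a separate technical lemma (Lemma~\ref{lem:technical}), which you unfold inline; your explicit check that $\tau$ lands in $R_L$ is a detail the paper leaves implicit.
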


Intuitively, the above homomorphism exists since $\kappa_k^\rel A(\rel X)$ is essentially a substructure of $\psi^{k, \rel A}(\rel X)$ if we were allowed to map a variable $v_K$ where $K = \{x_1, \dots, x_\ell\}$ to a tuple $w = (x_1, \dots, x_\ell) \in \psi^{k, \rel A}_R(\rel X)$ where $R = \mathcal F_K \circ w$ (which is Datalog definable by Lemma~\ref{lem:winning-strategy}). The issue is that for a proper homomorphism, we would need that $R = \mathcal F_K$. Instead we provide a mapping $\kappa_k^\rel A(\rel X) \to \pi_\rel P \psi^{k, \rel A}(\rel X)$ by choosing a bijection between $K$ and $[\ell]$ which induces a bijection between $R$ and $\mathcal F_K$.

In the formal proof, we will use the following technical lemma.

\begin{lemma} \label{lem:technical}
  Assume that $\rel S$ is a label cover instance.
  For all $s \in S_{X_s}$, $t\in S_{X_t}$, $f_s\colon X_s \to Y_s$, $f_t\colon X_t \to Y_t$, $\sigma \colon X_s \to X_t$, and $\sigma' \colon Y_s \to Y_t$ such that $\sigma' \circ f_s = f_t \circ \sigma$, and $(s, t) \in E_\sigma^\rel S$, we have
  \[
    ([s, f_s], [t, f_t]) \in E_{\sigma'}^{\pi_\rel P(\rel S)}.
  \]
\end{lemma}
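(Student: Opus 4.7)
The plan is to unfold the definition of the universal gadget $\pi_\rel P$ applied to the label cover instance $\rel S$. Since $\pi_\rel P$ is a \emph{projective} gadget (Definition~\ref{def:universal-gadget}), Step~(2) of its replacement introduces only equality constraints, while the relations of $\pi_\rel P(\rel S)$ are inherited from the relations of the copies $\rel D_X^{\pi_\rel P} = \rel P^X$. Consequently, every tuple in $E_{\sigma'}^{\pi_\rel P(\rel S)}$ is represented, before collapsing, by a pair lying entirely in one such copy. My strategy is to exhibit such a representing pair inside the copy indexed by $s$ and then use the equality constraints associated with the edge $(s, t)\in E_\sigma^\rel S$ to identify that pair with $([s, f_s], [t, f_t])$.

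For the first step, I would look inside the copy of $\rel P^{X_s}$ attached to $s$. Elements of type $Y_s$ in $\rel P^{X_s}$ are maps $X_s \to Y_s$, and, unfolding the definition of the $X_s$-fold power together with $E_{\sigma'}^\rel P = \{(x, \sigma'(x))\mid x\in Y_s\}$, the relation $E_{\sigma'}^{\rel P^{X_s}}$ consists of exactly the pairs $(b_1, b_2)$ with $b_2 = \sigma'\circ b_1$. Taking $b_1 = f_s$ yields $(f_s, \sigma'\circ f_s)\in E_{\sigma'}^{\rel P^{X_s}}$, and hence $([s, f_s], [s, \sigma'\circ f_s])\in E_{\sigma'}^{\pi_\rel P(\rel S)}$.

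For the second step, I would invoke the equality constraints triggered by the edge $(s, t)\in E_\sigma^\rel S$. By Step~(2) of Definition~\ref{def:strict-gadget} applied to $\pi_\rel P$, for every $d\in \rel P^{X_t}$ one identifies $(s;\, p_{E_\sigma}^{\pi_\rel P}(d))$ with $(t;\, d)$; since $p_{E_\sigma}^{\pi_\rel P}(d) = d\circ\sigma$, choosing $d = f_t$ gives $[s,\, f_t\circ\sigma] = [t,\, f_t]$. The compatibility hypothesis $\sigma'\circ f_s = f_t\circ\sigma$ rewrites the left-hand side as $[s,\, \sigma'\circ f_s]$, and combining with the first step yields the lemma. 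I do not expect any real obstacle; the only point requiring care is matching the direction of $\sigma$ on the power side, where $p_{E_\sigma}$ acts by \emph{pre}composition, with the commuting-square hypothesis, which is exactly tailored to make this matching work.
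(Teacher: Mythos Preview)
Your proposal is correct and follows essentially the same approach as the paper's proof: both establish $([s, f_s], [s, \sigma'\circ f_s])\in E_{\sigma'}^{\pi_\rel P(\rel S)}$ from the power structure $\rel P^{X_s}$, observe $[t, f_t] = [s, f_t\circ\sigma]$ from the gluing constraint along $(s,t)\in E_\sigma^{\rel S}$, and combine via the hypothesis $\sigma'\circ f_s = f_t\circ\sigma$. Your write-up is simply more explicit about unfolding the definitions of the projective gadget and the power.
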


\begin{proof}
  First observe that $[t, f_t] = [s, f_t\circ \sigma]$ and $([s, f_s], [s, \sigma' \circ f_s]) \in E_{\sigma'}^{\pi_\rel P(\rel S)}$ follows directly from the definition of $\pi_\rel P(\rel S)$.
  Using the assumption $\sigma' \circ f_s = f_t \circ \sigma$, the former observation gives that $[t, f_t] = [s, \sigma' \circ f_s]$. This equality together with the latter observation yields the desired.
\end{proof}

\begin{proof}[Proof of Lemma~\ref{lem:datalog-and-kappa}]
  To simplify the notation, let us write $\kappa$ instead of $\kappa_k^\rel A$ and $\psi$ instead of $\psi^{k, \rel A}$.

  To define a homomorphism $h\colon \kappa(\rel X) \to \pi_\rel P\psi(\rel X)$. Fix a bijection $w_K \colon [\lvert K\rvert] \to K$ for each $K \in \binom X{\leq k}$ and let $w_K^{-1}$ denote the inverse of this bijection.
  By Lemma~\ref{lem:canonical-datalog}, $w_K \in \psi_{R_K}(\rel X)$ where $R_K = \mathcal F_K \rcirc w_K$. Hence we may interpret $w_K$ as an element of type $R_K$ in $\psi(\rel X)$. We define $h$ by
  \[
    h(v_K) = [w_K, p_K]
  \]
  where $p_K\colon R_K \to \mathcal F_K$ is defined by $p_K(a) =  a\circ w_K^{-1}$. In the rest of the proof, we check that $h$ is indeed a homomorphism.

  To show that $h$ preserves the relations, we need to show that  if $(v_K, v_L) \in E_\sigma^{\kappa(\rel X)}$, i.e., if $L \subset K$ and $\sigma(f) = f|_L$, then
  \[
    ([w_K; p_K], [w_L; p_L]) \in E_\sigma^{\psi(\rel X)}.
  \]
  By Lemma~\ref{lem:technical}, it is enough to provide $\sigma' \colon R_K \to R_L$ such that $p_L \circ \sigma' = \sigma \circ p_K$, and $(w_K, w_L) \in E_{\sigma'}^{\psi(\rel X)}$. We let $p \colon [\lvert L\rvert] \to [\lvert K\rvert]$ to be the mapping $p = w_K^{-1} \circ w_L$, i.e., the mapping which for each $i$ returns the index of $w_L(i)$ in $w_K$, and define $\sigma'$ by $\sigma'(a) = a \circ p$. Clearly,
  \[
    \sigma'(w_K) = w_K \circ p = w_K w_K^{-1} w_L = w_L,
  \]
  and consequently, $(w_K, w_L) \in E_{\sigma'}^{\psi(\rel X)}$. Next, we check that $\sigma'$ satisfies the equality above. Let $a\in R_K$, then
  \[
    p_L \sigma'(a) = a p w_L^{-1} = a w_K^{-1} w_L w_L^{-1} = (a w_K^{-1})|_L = \sigma p_K(a)
  \]
  as we wanted to show.
\end{proof}

The following corollary, which is enough to provide soundness of $\pi_\rel B \circ \psi^{k, \rel A}$ assuming soundness of $\kappa_k^{\rel A, \rel B}$, follows directly from the above lemma and Lemma~\ref{lem:universal-gadget-is-monotone}.

\begin{corollary} \label{cor:soundness-of-canonical}
  For every structure $\rel X$ with the same signature as $\rel A$,
  \[
    \kappa_k^{\rel A, \rel B} (\rel X) \to \pi_\rel B \psi^{k, \rel A} (\rel X)
  \]
\end{corollary}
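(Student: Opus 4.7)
The plan is to chain the two preceding lemmas through the factorisation of the $k$-consistency reduction into its label-cover component and the universal gadget. Recall from Appendix~\ref{sec:appendix-consistency} that $\kappa_k^{\rel A, \rel B} = \pi_\rel B \circ \kappa_k^\rel A$, so the desired conclusion can be rewritten as $\pi_\rel B \kappa_k^{\rel A}(\rel X) \to \pi_\rel B \psi^{k, \rel A}(\rel X)$. This is exactly the shape that Lemma~\ref{lem:universal-gadget-is-monotone} produces from a hypothesis of the form $\rel S \to \pi_\rel P(\rel T)$.

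Concretely, I would apply Lemma~\ref{lem:datalog-and-kappa} to obtain the homomorphism $\kappa_k^\rel A(\rel X) \to \pi_\rel P \psi^{k, \rel A}(\rel X)$, set $\rel S := \kappa_k^\rel A(\rel X)$ and $\rel T := \psi^{k, \rel A}(\rel X)$, and invoke Lemma~\ref{lem:universal-gadget-is-monotone} to promote this `weak homomorphism' to a genuine homomorphism $\pi_\rel B(\rel S) \to \pi_\rel B(\rel T)$, i.e.,
\[
  \pi_\rel B \kappa_k^{\rel A}(\rel X) \to \pi_\rel B \psi^{k, \rel A}(\rel X).
\]
Substituting $\pi_\rel B \kappa_k^{\rel A} = \kappa_k^{\rel A, \rel B}$ on the left yields the statement of the corollary.

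There is essentially no obstacle here: all the substantive work has been done upstream. The technical content of Lemma~\ref{lem:datalog-and-kappa} (aligning the types $\mathcal F_K$ of $\kappa_k^\rel A(\rel X)$ with the Datalog-definable relations $R \in \Delta^{k, \rel A}$ via chosen bijections $w_K$) captures the `weak homomorphism' between the two label-cover intermediates, and Lemma~\ref{lem:universal-gadget-is-monotone} turns it into a genuine homomorphism after applying $\pi_\rel B$. The only thing to double-check is the identification $\kappa_k^{\rel A, \rel B} = \pi_\rel B \circ \kappa_k^\rel A$, which is immediate from Definition~\ref{alg:k-consistency-reduction} and the description of $\pi_\rel B$ in Appendix~\ref{sec:appendix-consistency}. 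Hence the proof is a one-line composition of the two lemmas.
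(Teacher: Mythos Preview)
Your proposal is correct and matches the paper's approach exactly: the paper states that the corollary ``follows directly from the above lemma [Lemma~\ref{lem:datalog-and-kappa}] and Lemma~\ref{lem:universal-gadget-is-monotone},'' and your write-up simply spells out this two-step composition via the factorisation $\kappa_k^{\rel A, \rel B} = \pi_\rel B \circ \kappa_k^\rel A$.
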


We move onto proving soundness of $\kappa_k^{\rel A, \rel B}$ assuming that there is a Datalog interpretation $\phi$ and a gadget $\gamma$ such that $\gamma\circ \phi$ is a valid reduction. The first step is the universality of $\kappa_k$ phrased in the following lemma in which we, again, leverage the connection of Datalog and $k$-consistency algorithm.

\begin{lemma} \label{lem:universality-of-kcons}
  Let $\phi$ be a Datalog interpretation of width at most $k$, then for every structure $\rel X$ with the same signature as $\rel A$,
  \[
    \rho^{\phi(\rel A)}\phi(\rel X) \to \pi_\rel P \kappa_k^{\rel A}(\rel X).
  \]
\end{lemma}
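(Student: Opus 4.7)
The plan is to construct an explicit homomorphism $h\colon \rho^{\phi(\rel A)}\phi(\rel X) \to \pi_\rel P \kappa_k^{\rel A}(\rel X)$, following exactly the same scheme as in the proof of Lemma~\ref{lem:datalog-and-kappa}. The key observation is that every element of $\rho^{\phi(\rel A)}\phi(\rel X)$ is, by construction, a tuple $w$ that lies in $\phi_t(\rel X)$ for some $\Sigma$-type $t$ or in $\phi_R(\rel X)$ for some $\Sigma$-symbol $R$. Since $\phi$ has width at most $k$, the arity of each such $\phi_*$ is at most $k$, so $\im w \in \binom X{\leq k}$ and the variable $v_{\im w}$ of $\kappa_k^\rel A(\rel X)$ is available.

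I then set $h(w) = [v_{\im w};\, b_w]$, where $b_w\colon \mathcal F_{\im w} \to \phi_*(\rel A)$ is defined by $b_w(f) = f\circ w$. Well-definedness (and the correctness of the target type) is precisely the content of Lemma~\ref{lem:consistency-and-datalog}: for $w \in \phi_*(\rel X)$ and any $f\in \mathcal F_{\im w}$, the tuple $f\circ w$ lies in $\mathcal F_{\im w}\circ w \subseteq \phi_*(\rel A)$, so $b_w$ is a legitimate element of $\rel P^{\mathcal F_{\im w}}$ of the required type $D_t^{\phi(\rel A)}$ or $R^{\phi(\rel A)}$.

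To check $h$ is a homomorphism I only need to handle the binary projection relations $E_{p_i}$ supplied by the reification $\rho^{\phi(\rel A)}$. Given such a pair $(a, a^{(i)})$ with $a \in R^{\phi(\rel X)}$, note that $a^{(i)}$ is a subtuple of $a$ and hence $\im a^{(i)} \subseteq \im a$, so $(v_{\im a}, v_{\im a^{(i)}})$ lies in $E_\sigma^{\kappa_k^\rel A(\rel X)}$ for the restriction map $\sigma\colon \mathcal F_{\im a} \to \mathcal F_{\im a^{(i)}}$. By Lemma~\ref{lem:technical}, it then suffices to verify the commutation $p_i \circ b_a = b_{a^{(i)}} \circ \sigma$, which reduces for each $f \in \mathcal F_{\im a}$ to the identity $f\circ a^{(i)} = (f|_{\im a^{(i)}})\circ a^{(i)}$; this is immediate since $a^{(i)}$ takes values only in $\im a^{(i)}$.

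The main obstacle is essentially bookkeeping across the various signatures and sort conventions — in particular remembering that an element of $R^{\phi(\rel X)}$ corresponds to a single long tuple in $\phi_R(\rel X)$ partitioned into sub-tuples of sizes matching the arities of $\phi_{\ar_R(1)}, \dots, \phi_{\ar_R(k)}$, and that the $i$-th projection in $\rho^{\phi(\rel A)}$ picks out a sub-tuple rather than an individual coordinate. Once the right $h$ is written down, both well-definedness and the homomorphism property follow from one application of Lemma~\ref{lem:consistency-and-datalog} and one application of Lemma~\ref{lem:technical}, respectively, entirely parallel to the argument already given for Lemma~\ref{lem:datalog-and-kappa}.
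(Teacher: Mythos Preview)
Your proposal is correct and follows essentially the same approach as the paper: you define the same homomorphism $h(w)=[v_{\im w};\,f\mapsto f\circ w]$, invoke Lemma~\ref{lem:consistency-and-datalog} for well-definedness, and Lemma~\ref{lem:technical} for the preservation of the projection relations $E_\sigma$. The paper merely packages things slightly differently by first writing down the composite interpretation $\phi'=\rho^{\phi(\rel A)}\circ\phi$ and remarking that, although $\phi'$ itself need not have width $k$, its domain programs $\phi'_T$ coincide with the original $\phi_t$ and $\phi_R$ (which do have width $\le k$) --- exactly the fact you use when asserting that $\im w\in\binom X{\le k}$.
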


\begin{proof}
  Let us assume that $\Sigma$ is the output signature of $\rho^{\phi(\rel A)}$, which is a reduct of the label cover signature, and that the domains of $\rel A$ are disjoint.
  First, we derive a few properties of the Datalog interpretation $\phi'$ obtained as a composition $\rho^{\phi(\rel A)}\circ \phi$ (see Lemma~\ref{lem:cases}(2)).
  Namely, $\phi'$ is obtained as follows:
  For each type $t$, and each relational symbol $R$ (in the signature of $\phi(\rel A)$), we let $\phi'_{A_t} = \phi_t$ and $\phi'_{R^\rel A} = \phi_R$. These programs define the domains of $\phi'(\rel X)$. The relation corresponding to $E_\sigma$, for a symbol $R$ of arity $\ell$, $i\in [\ell]$, and $\sigma\colon R^\rel A \to A_{\ar(i)}$ is the $i$-th projection, is defined by the program $\phi'_{E_\sigma}$ which is obtained from $\phi_R$ by adding the rule 
  \begin{equation} \label{eq:p-sigma}
    E_\sigma(x_1, \dots, x_m, x_{p_\sigma(1)}, \dots, x_{p_\sigma(n_i)})
      \fro R(x_1, \dots, x_m)
  \end{equation}
  where $n_i$ is the arity of $\phi_{\ar_R(i)}$ and $p_\sigma \colon [n_i] \to [m]$ is the mapping that selects the $n_i$ consecutive coordinates of $R$ that corresponds to the coordinates defining the $i$-th projection of $R$, i.e., $p_\sigma(x) = x + \sum_{j<i} n_j$ (see Definition~\ref{def:datalog-interpretation}), and changing the output predicate accordingly.
  Remark that $\phi'$ is not necessarily of width $k$ --- we will only use the fact that $\phi'_t$ and $\phi'_R$ are of width $k$ and that $\phi'_{E_\sigma}$ is of the form above.

  From now on we will we work with the interpretation $\phi'$ instead of $\phi$, and write $\kappa$ instead of $\kappa_k^\rel A$.
  We define a homomorphism $h \colon \phi'(\rel X) \to \pi_\rel P \kappa(\rel X)$. We have to be a bit careful, since the domains of $\phi'(\rel X)$ are not necessarily disjoint. To avoid confusion, we will define $h$ as a collection of mappings $h_T\colon \phi'_T(\rel X) \to \kappa(\rel X)$ where $T$ is a $\Sigma$-type. Note that each such type corresponds to a domain of $\phi'(\rel A)$, i.e., $T$ is a Datalog definable relation in $\rel A$.
  By Lemma~\ref{lem:consistency-and-datalog}, we have that $\mathcal F_{\im w} \rcirc w \subseteq \phi'_T(\rel A) = T$ for each $w\in \phi'_T(\rel X)$. This observation allows us to define $h_T$ as
  \[
    h_T(w) = [v_{\im w}; p_w]
  \]
  where $p_w\colon \mathcal F_K \to T$ is defined by $p_w(f) = f \circ w$.

  Let us prove that $h$ is indeed a weak homomorphism.
  We use Lemma~\ref{lem:technical} to show that if $(w, w') \in E_\sigma^{\phi'(\rel X)}$, then
  \[
    ([v_{\im w}; p_w], [v_{\im w'}; p_{w'}]) \in E_\sigma^{\pi_\rel P\kappa(\rel X)},
  \]
  i.e., we show that there is $\sigma'$ such that $(v_{\im w}, v_{\im w'}) \in E_{\sigma'}^{\kappa(\rel X)}$ and
  \begin{equation} \label{eq:required}
    p_{w'} \circ \sigma' = \sigma \circ p_w.
  \end{equation}
  Note that $\sigma\colon R \to S$ is of the form $\sigma(a) = a \circ p_\sigma$ and $w' = w \circ p_\sigma$ since $E_\sigma^{\phi'(\rel X)}$ is defined by the rule \eqref{eq:p-sigma}.
  Consequently $\im w' \subseteq \im w$, hence $(\im w, \im w') \in E_{\sigma'}$ where $\sigma'\colon \mathcal F_{\im w} \to \mathcal F_{\im w'}$ is the restriction $\sigma'(f) = f|_{\im w'}$. Let us check the identity \eqref{eq:required} by applying both sides to $f\in \mathcal F_{\im w}$:
  \[
    p_{w'}\sigma' (f)
    = p_{w'}(f|_{\im w'})
    = f w'
    = f w p_\sigma
    = \sigma (f w)
    = \sigma p_w (f).
  \]
  This concludes the proof.
\end{proof}

We now conclude the proof of the soundness of $\kappa_k^{\rel A, \rel B}$ assuming soundness of $\gamma\circ \phi$ with the following corollary.

\begin{corollary} \label{cor:soundness-of-consistency}
  Let $\phi$ be a Datalog interpretation of width at most $k$ and $\gamma$ be a gadget, such that $\gamma\phi(\rel A) \to \rel B$.
  For every structure $\rel X$ with the same signature as $\rel A$,
  \[
    \gamma\phi(\rel X) \to \kappa_k^{\rel A, \rel B}(\rel X)
  \]
\end{corollary}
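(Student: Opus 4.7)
The plan is to factor the reduction $\gamma\circ\phi$ through a label cover instance so that the two universality results already proved, Lemmas~\ref{lem:universality-of-gadget} and~\ref{lem:universality-of-kcons}, can be combined. The crucial bookkeeping observation is that reifying $\phi$ into label-cover form sends $\rel A$ to the universal label cover structure, which is exactly what the hypothesis of the gadget universality lemma expects.

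In detail, first I would apply Lemma~\ref{lem:projective-decomposition} (after the evident renaming of its intermediate signature) to write $\gamma=\gamma'\circ\rho^{\phi(\rel A)}$ for a projective gadget $\gamma'$; setting $\phi'=\rho^{\phi(\rel A)}\circ\phi$, one then has $\gamma\circ\phi=\gamma'\circ\phi'$, and by Lemma~\ref{lem:cases}(2) $\phi'$ is a Datalog interpretation whose output signature $\Sigma$ is a reduct of the label cover signature. The width of $\phi'$ need not be $k$ on its relation components, but it has exactly the shape required in Lemma~\ref{lem:universality-of-kcons} (the domain programs have width $k$ and each $\phi'_{E_\sigma}$ is built from such a domain program by appending a single short rule).

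Next I would verify, by unfolding Definitions~\ref{def:reification-to-lc} and~\ref{def:label-cover}, that $\phi'(\rel A)=\rho^{\phi(\rel A)}(\phi(\rel A))$ is literally $\rel P^\Sigma$: each $\Sigma$-type $A_t$ carries the set $A_t$, each $\Sigma$-type $R^{\phi(\rel A)}$ carries the set $R^{\phi(\rel A)}$, and each binary relation $E_\pi$ is precisely the graph of the corresponding projection $\pi$. The hypothesis $\gamma\phi(\rel A)\to\rel B$ of the corollary thus becomes $\gamma'(\rel P^\Sigma)\to\rel B$, which is exactly the hypothesis of Lemma~\ref{lem:universality-of-gadget}. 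Applying that lemma with $\rel S=\phi'(\rel X)$ yields
\[
\gamma\phi(\rel X)=\gamma'\phi'(\rel X)\;\to\;\pi_{\rel B}\phi'(\rel X).
\]

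To finish, I would invoke Lemma~\ref{lem:universality-of-kcons} on $\phi$ itself, obtaining $\phi'(\rel X)=\rho^{\phi(\rel A)}\phi(\rel X)\to\pi_{\rel P}\kappa_k^{\rel A}(\rel X)$, and then apply Lemma~\ref{lem:universal-gadget-is-monotone} to transfer this homomorphism through $\pi_{\rel B}$, giving $\pi_{\rel B}\phi'(\rel X)\to\pi_{\rel B}\kappa_k^{\rel A}(\rel X)=\kappa_k^{\rel A,\rel B}(\rel X)$. Composing the two homomorphisms above produces the desired $\gamma\phi(\rel X)\to\kappa_k^{\rel A,\rel B}(\rel X)$. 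The main obstacle I anticipate is not conceptual but presentational: carefully identifying $\phi'(\rel A)$ with $\rel P^\Sigma$ through the renaming introduced by $\rho^{\phi(\rel A)}$, since this identification is exactly what recasts the completeness of the given reduction into the hypothesis that triggers Lemma~\ref{lem:universality-of-gadget}; once it is made, what remains is a routine chase of two universality homomorphisms and monotonicity of $\pi_{\rel B}$.
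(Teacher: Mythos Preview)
Your proposal is correct and follows essentially the same route as the paper's proof: decompose $\gamma$ via Lemma~\ref{lem:projective-decomposition}, identify $\rho^{\phi(\rel A)}\phi(\rel A)$ with the relevant reduct $\rel P^\Sigma$ so that Lemma~\ref{lem:universality-of-gadget} applies, then chain with Lemma~\ref{lem:universality-of-kcons} and Lemma~\ref{lem:universal-gadget-is-monotone}. Your write-up is in fact a touch more explicit than the paper's about why $\phi'(\rel A)=\rel P^\Sigma$ and about the width bookkeeping for $\phi'$, but the argument is the same.
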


\begin{proof}
  Decompose $\gamma$ as $\gamma_{\phi(\rel A)} \circ \rho^{\phi(\rel A)}$ using Lemma~\ref{lem:projective-decomposition}. Consequently, we have
  \begin{equation} \label{eq:A.4}
    \rho^{\phi(\rel A)}\phi(\rel X) \to \pi_\rel P \kappa_k^{\rel A}(\rel X)
  \end{equation}
  by Lemma~\ref{lem:universality-of-kcons}. 
  Observe that $\gamma_{\phi(\rel A)}$ maps the corresponding reduct of the label cover template $\rho^{\phi(\rel A)}\phi(\rel A)$ to $\rel B$, which allows us to apply Lemma~\ref{lem:universality-of-gadget} to get a homomorphism
  \[
    \gamma \phi(\rel X)
      = \gamma_{\phi(\rel A)} \rho^{\phi(\rel A)} \phi(\rel X)
      \to \pi_\rel B \rho^{\phi(\rel A)} \phi(\rel X).
  \]
  Further, by Lemma~\ref{lem:universal-gadget-is-monotone} and \eqref{eq:A.4} we have that
  \[
      \pi_\rel B \rho^{\phi(\rel A)} \phi(\rel X)
      \to \pi_\rel B \kappa_k^{\rel A}(\rel X)
      = \kappa_k^{\rel A, \rel B}(\rel X)
  \]
  which together with the above homomorphism gives the desired.
\end{proof}

\subsubsection{Completeness}

Let us start with the completeness of the $k$-consistency reduction which can be proved by the following straightforward argument.

\begin{lemma} \label{lem:completeness-of-consistency}
  If $\rel X \to \rel A$, then $\kappa_k^{\rel A, \rel B}(\rel X) \to \rel B$.
\end{lemma}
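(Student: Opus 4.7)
The plan is to construct an explicit homomorphism $h \colon \kappa_k^{\rel A, \rel B}(\rel X) \to \rel B$ directly from a hypothesised homomorphism $f \colon \rel X \to \rel A$. The starting observation is that, for every $K \in \binom{X}{\leq k}$, the restriction $f|_K$ is itself a partial homomorphism from $K$ to $\rel A$, so $f|_K$ belongs to the initial set $\mathcal F_K$ produced in Step~(1) of Definition~\ref{alg:k-consistency-reduction}. Moreover, the family $\{f|_K \mid K \in \binom{X}{\leq k}\}$ is automatically consistent in the sense of Step~(2): whenever $L \subseteq K$ we have the tautological identity $(f|_K)|_L = f|_L$. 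Consequently, none of the elements $f|_K$ is ever discarded by the fixed-point loop of Steps~(2)--(3), and $f|_K \in \mathcal F_K$ also in the final sets.

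With this invariant in hand, I would define $h$ on each generator $(K; b)$ of the output (with $b \colon \mathcal F_K \to B_t$ for some type $t$) by evaluation at $f|_K$, namely
\[
    h\bigl((K; b)\bigr) \;=\; b(f|_K).
\]
The first thing to verify is that $h$ descends to the quotient introduced in Step~(5). Fix $L \subseteq K$, $b \colon \mathcal F_L \to B_t$, and the restriction $\rho \colon \mathcal F_K \to \mathcal F_L$ given by $\rho(g) = g|_L$. Then
\[
    (b \circ \rho)(f|_K) \;=\; b\bigl((f|_K)|_L\bigr) \;=\; b(f|_L),
\]
so the identified pair $(K; b \circ \rho)$ and $(L; b)$ is sent to the same element of $B_t$.

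The second thing to verify is that $h$ preserves relations. Any relational tuple of $\kappa_k^{\rel A, \rel B}(\rel X)$ can be written, possibly after rewriting its coordinates through the identifications of Step~(5), as a tuple $\bigl((K; b_1), \dots, (K; b_\ell)\bigr)$ inside a single copy of $\rel B^{\mathcal F_K}$. Membership in $R^{\rel B^{\mathcal F_K}}$ means, by the pointwise definition of the power, that $(b_1(g), \dots, b_\ell(g)) \in R^{\rel B}$ for \emph{every} $g \in \mathcal F_K$; specialising to $g = f|_K$, which lies in $\mathcal F_K$ by the invariant above, yields $\bigl(h((K; b_1)), \dots, h((K; b_\ell))\bigr) \in R^{\rel B}$, as required.

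I do not foresee any substantial obstacle here: the argument is a short bookkeeping exercise whose whole content is the observation that the ``diagonal'' family $\{f|_K\}$ survives consistency enforcement and that evaluation at such a consistent family commutes with the restriction maps $\rho$. The only point that requires minor care is the well-definedness check under Step~(5), and it reduces to the calculation displayed above. Note that the proof uses no hypothesis relating $\rel A$ to $\rel B$, because the power $\rel B^{\mathcal F_K}$ always admits a projection onto $\rel B$ along any fixed element of $\mathcal F_K$, and the family $\{f|_K\}$ supplies such elements coherently across all $K$.
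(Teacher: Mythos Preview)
Your proposal is correct and follows essentially the same approach as the paper: define $h$ by evaluation at the restrictions $f|_K$, verify that these survive consistency enforcement, check well-definedness under the identifications of Step~(5) via $(f|_K)|_L = f|_L$, and note that relation preservation follows from the pointwise definition of $\rel B^{\mathcal F_K}$. The paper's proof is slightly terser (declaring the relation-preservation step ``immediate''), but the argument is the same.
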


\begin{proof}
  Let $g\colon \rel X\to \rel A$ be a homomorphism.  First, we observe that, for each $K \in \binom X{\leq k}$, the restriction $g|_K$, satisfies $g|_K \in \mathcal F_K$: this is since the restrictions are locally consistent partial homomorphisms and will never be removed from $\mathcal F_K$.
  We use this observation to define a homomorphism $h\colon \kappa^{\rel A, \rel B}_k(\rel A) \to \rel B$. We define $h$ by
  \[
    h([v_K; b]) = b(g|_K)
  \]
  where $K \in \binom X{\leq k}$ and $b\in B^{\mathcal F_K}$.
  We need to show that it is well-defined, i.e., that the value of $h$ does not depend on the choice of representative. It is enough to check that for pairs $(v_K; b)$ and $(v_L; b')$ where $L \subset K$ and $b(f) = b'(f|_L)$.  In that case, we have
  \[
    h([v_K; b]) = b(g|_K) = b'((g|_K)|_L) = b'(g|_L) = h([v_L; b']).
  \]
  The fact that $h$ preserves relations is immediate.
\end{proof}

The completeness of $\pi_{\rel B}\circ \psi^{k, \rel A}$ can be proved in several ways. First, observing that both reductions are monotone, it is enough to check that $\pi_{\rel B}\psi^{k, \rel A}(\rel A) \to \rel B$, which can be done in two steps by checking that $\psi^{k, \rel A}(\rel A) \to \rel P$ and that $\pi_{\rel B}(\rel P) \to \rel B$. Alternatively, we may observe the following fact which follows from our proofs of soundness, and allows us to infer the completeness of $\pi_{\rel B}\circ \psi^{k, \rel A}$ from the completeness of $\kappa_k^{\rel A, \rel B}$.

\begin{lemma} \label{lem:consistency-is-datalog+}
  For all structures $\rel X$ of the same signature as $\rel A$, $\pi_{\rel B}\psi^{k, \rel A}(\rel X)$ and $\kappa_k^{\rel A, \rel B}(\rel X)$ are homomorphically equivalent.
\end{lemma}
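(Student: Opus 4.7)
The plan is to obtain the two required homomorphisms by invoking the two soundness corollaries that have already been established in this appendix. One direction is essentially tautological, and the other follows from a universal-property argument that reduces to a routine check.

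For the forward direction $\kappa_k^{\rel A, \rel B}(\rel X) \to \pi_\rel B \psi^{k, \rel A}(\rel X)$, I would simply quote Corollary~\ref{cor:soundness-of-canonical}, which states precisely this homomorphism for every $\rel X$ of the same signature as $\rel A$. No additional work is needed.

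For the reverse direction $\pi_\rel B \psi^{k, \rel A}(\rel X) \to \kappa_k^{\rel A, \rel B}(\rel X)$, I would apply Corollary~\ref{cor:soundness-of-consistency} with $\phi \coloneqq \psi^{k, \rel A}$ and $\gamma \coloneqq \pi_\rel B$. Two hypotheses need to be checked. First, $\psi^{k,\rel A}$ is a Datalog interpretation of width at most $k$: each domain program is a copy of the canonical Datalog program of width $k$ with a distinguished output symbol, and each relation program $\psi^{k,\rel A}_{E_\pi}$ adds the single rule $E_\pi(x_1,\dots,x_n, x_{p_\pi(1)},\dots,x_{p_\pi(m)}) \fro R(x_1,\dots,x_n)$, whose $n$ variables are at most $k$ in number. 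Second, $\pi_\rel B \psi^{k,\rel A}(\rel A) \to \rel B$: by construction $\psi^{k, \rel A}(\rel A)$ is a reduct of the label-cover template $\rel P$ whose $R$-sort is populated by the set $R\subseteq A^m$ itself, and I define $h([a;b]) \coloneqq b(a)$ for $a\in R$, $b\in B^R$. Well-definedness on the equality constraints collapsing $[a; b\circ \pi]$ with $[\pi(a); b]$ is immediate since $(b\circ\pi)(a)=b(\pi(a))$, and $h$ preserves relations directly from the definition of $\rel B^R$. This verifies the hypothesis of Corollary~\ref{cor:soundness-of-consistency}, which then yields the desired homomorphism.

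The argument has no real obstacle: the heavy lifting has been done in Corollaries~\ref{cor:soundness-of-canonical} and~\ref{cor:soundness-of-consistency}. The only non-trivial point worth emphasising in the write-up is the verification that $\pi_\rel B \psi^{k,\rel A}(\rel A)\to \rel B$, since this is what justifies that the canonical interpretation composed with the universal gadget can indeed play the role of an arbitrary $\gamma\circ \phi$ in Corollary~\ref{cor:soundness-of-consistency}. Once this is recorded, the lemma follows by chaining the two corollaries.
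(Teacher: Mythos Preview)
Your forward direction is identical to the paper's: both invoke Corollary~\ref{cor:soundness-of-canonical}.

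For the reverse direction, your route differs from the paper's, though both are correct. The paper does \emph{not} invoke Corollary~\ref{cor:soundness-of-consistency}; instead it observes that $\psi^{k,\rel A}$ already has the shape of the intermediate interpretation $\phi'$ constructed inside the proof of Lemma~\ref{lem:universality-of-kcons} (domain programs of width $k$ and relation programs given by the single projection rule), so that proof yields $\psi^{k,\rel A}(\rel X)\to\pi_\rel P\kappa_k^{\rel A}(\rel X)$ directly, and then Lemma~\ref{lem:universal-gadget-is-monotone} finishes. This avoids the reification detour hidden in Corollary~\ref{cor:soundness-of-consistency} and, in particular, avoids having to verify $\pi_\rel B\psi^{k,\rel A}(\rel A)\to\rel B$. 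Your approach, by contrast, treats $\psi^{k,\rel A}$ as the $\phi$ (not the $\phi'$) in Corollary~\ref{cor:soundness-of-consistency}, which is legitimate since $\psi^{k,\rel A}$ genuinely has width at most $k$, and then pays the price of checking the completeness hypothesis $\pi_\rel B\psi^{k,\rel A}(\rel A)\to\rel B$. Your verification of that hypothesis via $h([a;b])=b(a)$ is correct. The upshot: your argument is slightly more indirect (it passes through an unnecessary reification inside the corollary) but has the virtue of citing only black-box statements, whereas the paper's version is shorter but requires the reader to revisit the proof of Lemma~\ref{lem:universality-of-kcons}.
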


\begin{proof}
  A homomorphism $\kappa_k^{\rel A, \rel B}(\rel X) \to \pi_\rel B\psi^{k, \rel A}(\rel X)$ is proved in Corollary~\ref{cor:soundness-of-canonical}. A homomorphism $\psi^{k, \rel A}(\rel X) \to \pi_\rel P\kappa_k^{\rel A}(\rel X)$ is constructed during the proof of Lemma~\ref{lem:universality-of-kcons}: observe that the Datalog interpretation $\psi^{k, \rel A}$ has all the properties of $\phi'$ that have been used in the proof of the lemma, and hence the same proof works here as well. Consequently, we get a homomorphism $\pi_\rel B \psi^{k, \rel A}(\rel X) \to \kappa_k^{\rel A, \rel B}(\rel X)$ by Lemma~\ref{lem:universal-gadget-is-monotone}.
\end{proof}

\subsubsection{Putting everything together}

We conclude this appendix with a formal proof of the theorem which we recall for convenience.

\canonicalwidth*

\begin{proof}
  We start with the implication (1) $\to$ (2). We need to prove that $\kappa_k^{\rel A, \rel B}$ is sound and complete reduction between the two promise CSPs. Completeness is proved in Lemma~\ref{lem:completeness-of-consistency}, and the soundness follows from Corollary~\ref{cor:soundness-of-consistency} in the following way: Assume that $\kappa_k^{\rel A, \rel B}(\rel X) \to \rel B'$ then by the corollary, we have that $\gamma\phi(\rel X) \to \kappa_k^{\rel A, \rel B}(\rel X) \to \rel B'$, and hence $\rel X \to \rel A'$ since $\gamma\circ \phi$ is sound.

  The implication (2) $\to$ (1) is proved in a similar way. We let $\phi = \psi^{k, \rel A}$ and $\gamma = \pi_\rel B$. Completeness follows from Lemmas \ref{lem:consistency-is-datalog+} and \ref{lem:completeness-of-consistency}, and the soundness follows from Corollary~\ref{cor:soundness-of-canonical} by the same argument as above.
\end{proof}

Let us note that the proof does not assume that the templates involved are composed of finite structures; nevertheless the finiteness of $\rel A$ and $\rel B$ is useful to ensure that the reductions involved are computable in polynomial-time. No assumptions on finiteness of $\rel A'$ and $\rel B'$ are needed either way.

\section{More on arc-consistency reductions}
  \label{app:arc-consistency} \label{app:proofs-v}

In this appendix, we prove the characterisation of the arc-con\-sis\-ten\-cy reduction provided in Section~\ref{sec:arc-consistency}, and as a consequence provide a new sufficient condition for existence of a \Datalog reduction between two promise CSPs. Finally, we include an argument showing that the arc-consistency reductions compose.

\subsection{The characterisation of arc-consistency reductions}
  \label{app:ac-proof}

We prove Theorem~\ref{thm:unary}. The key step in the proof is that the construction $\omega$ is dual to the reduction $\ac$ (recall Definition~\ref{def:ac-on-lc}) in the sense of the following lemma.\footnote{The two constructions induce a pair of \emph{adjoint functors} between the posetal categories of relational structures of the corresponding signatures.}
To formally state the lemma, we treat minions as infinite structures in the signature of the label cover in a natural way, i.e., $\clo M$ is understood as the structure $\rel M$ with $M_X = \clo M^{(X)}$ for each finite set $X$, and $P_\pi^\rel M = \{ (f, f^\pi) \mid f\in \clo M^{(X)} \}$ for $\pi\colon X \to Y$. Note that minion homomorphisms coincide with homomorphisms of these structures.

\begin{lemma} \label{lem:arc-adjoint}
  Let $\rel X$ be a label cover instance and $\clo M$ a minion. Then $\ac(\rel X) \to \clo M$ if and only if $\rel X \to \omega(\clo M)$.
\end{lemma}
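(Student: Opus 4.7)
The plan is to construct the two homomorphisms explicitly, exploiting the fact that $\omega(\clo M)$ is designed to encode pairs $(Y, f)$ in which $Y$ records a sub-domain and the minor operation $(Z, f)^\pi = (\pi(Z), f^{\pi|_Z})$ transports this restriction along $\pi$, while the arc-consistency enforcement maintains restricted domains $\mathcal F_v \subseteq D_v$ that transform in exactly the same way. The single structural fact I will need is that, at the fixed point of $\ac$, for every constraint $(v, w) \in E_\pi^{\rel X}$ one has $\pi(\mathcal F_v) = \mathcal F_w$. Indeed, saturation under the two updates in Definition~\ref{def:ac-on-lc}(2) forces both $\pi(\mathcal F_v) \subseteq \mathcal F_w$ (no element of $\mathcal F_v$ was removed for failing to map into $\mathcal F_w$) and $\mathcal F_w \subseteq \pi(\mathcal F_v)$ (no element of $\mathcal F_w$ was removed for lying outside $\pi(\mathcal F_v)$).

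For the forward direction, given $h\colon \ac(\rel X) \to \clo M$, I would set $g(v) = (\mathcal F_v, h(v)) \in \omega(\clo M)^{(D_v)}$, which is well-typed since $v$ has type $\mathcal F_v$ in $\ac(\rel X)$ and hence $h(v) \in \clo M^{(\mathcal F_v)}$. For each constraint $(v, w) \in E_\pi^{\rel X}$, the arc-consistency step emits the edge $(v, w) \in E_{\pi|_{\mathcal F_v}}^{\ac(\rel X)}$, which under $h$ forces $h(v)^{\pi|_{\mathcal F_v}} = h(w)$; combined with the structural fact, this gives
\[
  g(v)^\pi = (\pi(\mathcal F_v), h(v)^{\pi|_{\mathcal F_v}}) = (\mathcal F_w, h(w)) = g(w),
\]
which is exactly the edge $(g(v), g(w)) \in E_\pi^{\omega(\clo M)}$.

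For the reverse direction, write $g(v) = (Y_v, f_v)$, so $Y_v \subseteq D_v$ and $f_v \in \clo M^{(Y_v)}$; the homomorphism property forces $\pi(Y_v) = Y_w$ and $f_v^{\pi|_{Y_v}} = f_w$ for every $(v, w) \in E_\pi^{\rel X}$. A short induction on the pruning rounds of $\ac$ shows that $Y_v \subseteq \mathcal F_v$ is maintained throughout: initially $\mathcal F_v = D_v \supseteq Y_v$, and if $Y_v \subseteq \mathcal F_v$ for all $v$, then $\pi(Y_v) = Y_w \subseteq \mathcal F_w$ yields $Y_v \subseteq \pi^{-1}(\mathcal F_w)$, so no element of $Y_v$ is pruned from $\mathcal F_v$ (and dually for $\mathcal F_w$). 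Letting $\iota_v\colon Y_v \incl \mathcal F_v$ denote inclusion, I would then define $h(v) = f_v^{\iota_v} \in \clo M^{(\mathcal F_v)}$. The check that $h$ preserves the edge $(v, w) \in E_{\pi|_{\mathcal F_v}}^{\ac(\rel X)}$ reduces to the factorisation $(\pi|_{\mathcal F_v}) \circ \iota_v = \iota_w \circ (\pi|_{Y_v})$ of maps $Y_v \to \mathcal F_w$, since by minion functoriality
\[
  h(v)^{\pi|_{\mathcal F_v}} = f_v^{(\pi|_{\mathcal F_v}) \circ \iota_v} = (f_v^{\pi|_{Y_v}})^{\iota_w} = f_w^{\iota_w} = h(w).
\]
The only real obstacle will be keeping straight which restriction of $\pi$ is in play at each step --- the original $\pi$ on $D_v$, the restriction $\pi|_{\mathcal F_v}$ labelling edges of $\ac(\rel X)$, and the further restriction $\pi|_{Y_v}$ arising inside the minor in $\omega(\clo M)$ --- but once the structural lemma and the inductive bound $Y_v \subseteq \mathcal F_v$ are in hand, both directions amount to pure definition chasing.
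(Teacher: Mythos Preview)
Your proof is correct and follows exactly the same construction as the paper: both directions use the maps $g(v)=(\mathcal F_v,h(v))$ and $h(v)=f_v^{\iota_v}$, with the key observation that $\pi(Y_v)=Y_w$ forces $Y_v\subseteq\mathcal F_v$ to survive the pruning. You have simply filled in the verification steps that the paper dismisses as ``straightforward to check.''
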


\begin{proof}
  We denote the type of $f \in \rel X$ by $N_f$ and the type of the same symbol $f \in \ac(\rel X)$ by $\mathcal F_f$.
  Let $h\colon \ac(\rel X) \to \clo M$ be a homomorphism. We define a homomorphism $g\colon \rel X \to \omega(\clo M)$ by
  \[
    g(f) = (\mathcal F_f, h(f)).
  \]
  It is straightforward to check that $g$ is indeed a homomorphism.

  For the other implication, assume that $g\colon \rel X \to \omega(\clo M)$ is a homomorphism. Let $(G_f, g_f) = g(f)$.
  Observe that, for each constraint $(f, f') \in E_\pi^{\rel X}$, we have $\pi(G_f) = G_{f'}$, and hence none of the elements of $G_f$'s are removed by enforcing arc-consistency. Consequently, $G_f \subseteq \mathcal F_f$, which allows us to define $h\colon \ac(\rel X) \to \clo M$ by
  \[
    h(f) = g_f^\iota
  \]
  where $\iota\colon G_f \incl \mathcal F_f$ is the inclusion mapping.
  Again, it is straightforward to check that $h$ is indeed a homomorphism.
\end{proof}

The rest of the proof relies on, and closely follows the proof of \cite[Theorem 3.12]{BBKO21}. For that reason, let us reformulate the problem of \emph{promise satisfaction of minor conditions} $\PMC_{\clo M}$ \cite[Definition 3.10]{BBKO21} in the language of promise label cover.\footnote{We note for the reader who is familiar with the notion of a minor condition, that there is a direct translation between label cover instances and minor conditions using the following dictionary: \emph{function symbol} $f$ of arity $D$ --- an \emph{element} $f$ of type $D$; an \emph{identity} $f \equals g^\pi$ --- a \emph{constraint} $(f, g) \in E_\pi$.}
This problem $\PMC_\clo M$ can be expressed as a promise CSP whose template consists of the minion of projections $\clo P$ and the minion $\clo M$ where both are treated as structures.
The minion of projections $\clo P$ coincides with the template of label cover $\rel P$ (see Definition~\ref{def:label-cover}). As an abstract minion is it defined as the identity functor, i.e., $\clo P^{(X)} = X$, and $\pi^\clo P(x) = \pi(x)$ for each $\pi\colon X\to Y$ and $x\in X$.

\begin{definition}
  The \emph{promise label cover problem} assigned to a minion $\clo M$, denoted simply as $\PCSP(\clo P, \clo M)$, is the problem whose input is a label cover instance $\rel S$, and the goal is to decide between the cases $\rel S \to \clo P$ and $\rel S \not\to \clo M$.
\end{definition}

Though the templates in the problem $\PCSP(\clo P, \clo M)$ are infinite, the instances are finite. Often it is enough to work with a finite reduct of the signature of label cover, and hence the structures of the template will become finite assuming that the minion $\clo M$ is \emph{locally finite}, i.e., that $\clo M^{(X)}$ is finite for all finite sets $X$ (which is true if $\clo M$ is a polymorphism minion of a finite promise template).

In this language, a core theorem of the algebraic approach \cite[Theorem 3.12]{BBKO21} states that $\PCSP(\rel A, \rel B)$ and $\PCSP(\clo P, \clo M)$ are interreducible by gadget reductions for any promise template $\rel A, \rel B$ whose polymorphism minion is $\clo M$. More precisely, the theorem states that if $\clo M = \Pol(\rel A, \rel B)$, then:
\begin{enumerate}
  \item $\PCSP(\rel A, \rel B)$ reduces to $\PCSP(\clo P, \clo M)$ by a gadget reduction;
  \item for every finite reduct $\Sigma$ of the label cover signature, there exists a gadget reduction from $\PCSP(\clo P, \clo M)$, with the inputs restricted to $\Sigma$-structures, to $\PCSP(\rel A, \rel B)$.
\end{enumerate}

We note that this statement could be proven as a corollary of Theorem~\ref{thm:gadget-characterisation} since every minion $\clo M$ is isomorphic to $\Pol(\clo P, \clo M)$.\footnote{This isomorphism is known in category theory as the Yoneda Lemma.} Let us nevertheless warn the reader that this `alternative' proof would be cyclic since Barto, Bulín, Krokhin, and Opršal \cite{BBKO21} prove Theorem~\ref{thm:gadget-characterisation} \cite[Theorem 3.1]{BBKO21} using \cite[Theorem 3.12]{BBKO21}.

Let us recall the two reductions used to prove the statement above. The reduction in item (1) is the universal gadget $\pi_\rel A$, the correctness of this reduction is proven using the following lemma.

\begin{lemma}[{\cite[Lemma 3.16 \& Remark 3.17]{BBKO21}}] \label{lem:3.16}
  Let $\rel A, \rel B$ be a promise template. For all label cover instances $\rel X$, $\pi_{\rel A}(\rel X) \to \rel B$ if and only if $\rel X \to \Pol(\rel A, \rel B)$.
\end{lemma}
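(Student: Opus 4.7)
The proof plan is to directly exhibit the correspondence between homomorphisms $h\colon \pi_{\rel A}(\rel X) \to \rel B$ and homomorphisms $g\colon \rel X \to \Pol(\rel A, \rel B)$ (with the minion viewed as a structure in the label cover signature, as above), exploiting the following observation: for each element $v$ of $\rel X$ of type $Y$, the projective gadget $\pi_{\rel A}$ places a copy of $\rel D_Y^{\pi_{\rel A}} = \rel A^Y$ inside $\pi_{\rel A}(\rel X)$, whose elements $[v; a]$ (for $a\colon Y \to A$) are in natural bijection with the arguments of a $Y$-ary polymorphism $\rel A^Y \to \rel B$.

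For the forward direction, given $h$, I would define $g$ by $g(v)(a) = h([v; a])$ for $v \in X_Y$ and $a \in A^Y$. The fact that $g(v)\colon \rel A^Y \to \rel B$ is a polymorphism follows because $h$ restricts to a homomorphism on each copy of $\rel A^Y$ sitting inside $\pi_{\rel A}(\rel X)$, and this copy inherits exactly the product relations of $\rel A^Y$. To see that $g$ is a homomorphism of structures, I would consider any $(v, w) \in E_\sigma^\rel X$ with $\sigma\colon X \to Y$: unpacking Definition~\ref{def:strict-gadget} for $\pi_{\rel A}$, the projective gadget imposes the equality constraints $[v; d \circ \sigma] = [w; d]$ for every $d \in A^Y$, and applying $h$ gives $g(v)(d \circ \sigma) = g(w)(d)$. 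Since the minor operation in $\Pol(\rel A, \rel B)$ is defined by $g(v)^\sigma(d) = g(v)(d \circ \sigma)$, this is precisely the assertion $(g(v), g(w)) \in E_\sigma^{\Pol(\rel A, \rel B)}$.

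For the converse, given $g$, I would define $h([v; a]) = g(v)(a)$. Well-definedness reduces precisely to checking that the equality constraints $[v; d \circ \sigma] = [w; d]$ are respected, which translates to $g(v)(d \circ \sigma) = g(w)(d)$, i.e., $g(v)^\sigma = g(w)$, guaranteed by $g$ preserving $E_\sigma$. Relation preservation of $h$ within each copy of $\rel A^Y$ is immediate from $g(v)$ being a polymorphism (and the gadget introduces no further relational constraints). There is no genuine obstacle here: the lemma simply expresses that the universal gadget $\pi_{\rel A}$ is, by design, the left adjoint of the functor $\rel B \mapsto \Pol(\rel A, \rel B)$, so the proof amounts to unfolding definitions and tracking the multisorted label cover typing carefully.
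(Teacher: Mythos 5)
Your proposal is correct, and since the paper does not actually reprove this lemma (it is quoted from \cite[Lemma 3.16 \& Remark 3.17]{BBKO21}), there is no in-paper proof to compare against; your argument matches the standard one in the cited source. The bijection $h \leftrightarrow g$ via $g(v)(a) = h([v;a])$ is exactly right, and you correctly track that for $(v,w) \in E_\sigma^{\rel X}$ with $\sigma\colon X \to Y$ the projective gadget $\pi_{\rel A}$ (being $\rel D_Y \to \rel D_X$, i.e., $\rel A^Y \to \rel A^X$, $d \mapsto d\circ\sigma$) imposes $[v;d\circ\sigma]=[w;d]$, which under $h$ becomes $g(v)^\sigma = g(w)$, matching the relation $E_\sigma$ in the minion-as-structure $\Pol(\rel A,\rel B)$. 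The only cosmetic issue is a clash of letters: you first introduce $v\in X_Y$ (so $Y$ names the type of $v$) but then fix $\sigma\colon X\to Y$ with $v$ of type $X$; reusing $X$ both for the instance $\rel X$ and for the source type of $\sigma$ is a common abuse but worth avoiding. Your well-definedness check for $h$ in the converse direction and the observation that relations in $\pi_{\rel A}(\rel X)$ lift, before collapsing, to tuples inside a single copy of $\rel A^X$ (so that preservation follows from $g(v)$ being a polymorphism) are both correct and complete.
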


Even though the above lemma is phrased in \cite{BBKO21} with the assumption that the structures $\rel A$, $\rel B$, and $\rel X$ are finite, it holds for infinite structures as well (although the reduction $\pi_{\rel A}$ ceases to be polynomial-time computable). In particular, we have that for each minion $\clo M$, $\pi_{\rel A}(\clo M) \to \rel B$ if and only if $\clo M \to \Pol(\rel A, \rel B)$.
We note without a proof that the structure $\pi_{\rel A}(\clo M)$ is in fact isomorphic to \emph{free structure} of $\clo M$ generated by $\rel A$, usually denoted by $\Free_\clo M(\rel A)$ \cite[Definition 4.1]{BBKO21} (see also \cite[Lemma 4.4]{BBKO21}).

The other reduction, from $\PCSP(\rel A, \rel B)$ to $\PCSP(\clo P, \clo M)$, is the reduction $\rho^{\rel A}$ which we introduced in Definition~\ref{def:reification-to-lc}. The following lemma follows from \cite[Lemmas 3.14 \& 4.3, and Theorem 4.12]{BBKO21}.

\begin{lemma}[\cite{BBKO21}] \label{lem:4.3}
  Let $\clo M$ be a minion, and $\rel A$, $\rel X$ be two structures of the same signature. Then $\rho^{\rel A}(\rel X) \to \clo M$ if and only if $\rel X \to \pi_{\rel A}(\clo M)$.
\end{lemma}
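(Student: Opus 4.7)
The plan is to establish the adjunction between $\rho^{\rel A}$ and $\pi_{\rel A}$ by exhibiting a natural bijection between homomorphisms $\rho^{\rel A}(\rel X)\to \clo M$ and homomorphisms $\rel X\to \pi_{\rel A}(\clo M)$. The two functors act on different categories (the first lands in the label cover signature, the second takes label cover structures to $\Sigma$-structures), and on the instance side one pairs each element of $\rel X$ with an identity-of-domain, while on the minion side one reads off a minor operation. The main technical obstacle will be checking that the homomorphism constructed from a map $g\colon \rel X\to \pi_{\rel A}(\clo M)$ is well-defined on the equivalence classes $[v;b]$ that constitute $\pi_{\rel A}(\clo M)$, and that the $R$-constraints of $\rho^{\rel A}(\rel X)$ translate correctly through the projection minors $\pi_i\colon R^{\rel A}\to A_{\ar_R(i)}$.

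First I would treat the forward direction. Given $h\colon \rho^{\rel A}(\rel X)\to \clo M$, for each $a\in X_t$ the value $h(a)$ lies in $\clo M^{(A_t)}$, so I set $g(a)=[h(a);1_{A_t}]$, which is an element of type $t$ in $\pi_{\rel A}(\clo M)$. To verify that $g$ preserves a relation $(a_1,\dots,a_k)\in R^{\rel X}$, I would look at $h(a_1,\dots,a_k)\in \clo M^{(R^{\rel A})}$, which spawns in $\pi_{\rel A}(\clo M)$ a copy of $\rel A^{R^{\rel A}}$. The tuple $(\pi_1,\dots,\pi_k)$ of projections satisfies $R^{\rel A^{R^{\rel A}}}$ because for every $\vec a\in R^{\rel A}$, $(\pi_1(\vec a),\dots,\pi_k(\vec a))=\vec a\in R^{\rel A}$. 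Hence $([h(\vec a);\pi_1],\dots,[h(\vec a);\pi_k])\in R^{\pi_{\rel A}(\clo M)}$. The constraint $(\vec a,a_i)\in E_{\pi_i}^{\rho^{\rel A}(\rel X)}$ gives $h(a_i)=h(\vec a)^{\pi_i}$, so the defining identification of $\pi_{\rel A}$ yields $[h(\vec a);\pi_i]=[h(\vec a);1_{A_{\ar_R(i)}}\circ \pi_i]=[h(a_i);1_{A_{\ar_R(i)}}]=g(a_i)$, as required.

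For the reverse direction, given $g\colon \rel X\to \pi_{\rel A}(\clo M)$, I would define $h\colon \rho^{\rel A}(\rel X)\to \clo M$ as follows. For $a\in X_t$, pick any representative $g(a)=[v_a;b_a]$ with $v_a\in \clo M^{(Y_a)}$ and $b_a\colon Y_a\to A_t$, and set $h(a)=v_a^{b_a}\in \clo M^{(A_t)}$. Well-definedness is the key check: the identifications defining $\pi_{\rel A}(\clo M)$ are generated by $[v;b\circ \sigma]=[v^{\sigma};b]$ for $\sigma\colon Y\to Z$, and the functorial property $(b\circ \sigma)^{\clo M}=b^{\clo M}\circ \sigma^{\clo M}$ gives $v^{b\circ \sigma}=(v^{\sigma})^{b}$, so the value $v^b$ is constant on equivalence classes. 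For $(a_1,\dots,a_k)\in R^{\rel X}$, since relations in a projective gadget replacement only arise inside a single copy $\rel A^Y$, I can find a common representative: some $v\in \clo M^{(Y)}$ and $(c_1,\dots,c_k)\in R^{\rel A^Y}$ with $g(a_i)=[v;c_i]$ for all $i$. I then set $h(a_1,\dots,a_k)=v^\sigma$, where $\sigma\colon Y\to R^{\rel A}$ is $\sigma(y)=(c_1(y),\dots,c_k(y))$; this is well typed since $(c_1,\dots,c_k)\in R^{\rel A^Y}$ means $\sigma(y)\in R^{\rel A}$ for each $y$.

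Finally I would check that $h$ preserves the constraints $E_{\pi_i}$. Directly from functoriality, $h(a_1,\dots,a_k)^{\pi_i}=(v^\sigma)^{\pi_i}=v^{\pi_i\circ \sigma}=v^{c_i}$, and by definition $h(a_i)=v^{c_i}$ (since $[v;c_i]$ is a representative of $g(a_i)$), so $h(a_i)=h(a_1,\dots,a_k)^{\pi_i}$ as needed. A brief verification that the two assignments $h\mapsto g$ and $g\mapsto h$ are mutually inverse closes the argument: composing them on $h$ sends $a\mapsto [h(a);1_{A_t}]$ and then back to $h(a)^{1_{A_t}}=h(a)$ (on element-type variables), and an analogous computation handles the tuple-type variables using the projection tuple $(\pi_1,\dots,\pi_k)$ as the common representative. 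The bulk of the work is bookkeeping with multisorted types and the minor operation; no recursion or fixpoint argument is needed.
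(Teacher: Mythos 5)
Your argument is correct, and it is essentially the same direct verification of the adjunction between $\rho^{\rel A}$ and $\pi_{\rel A}$ that underlies the cited material in \cite{BBKO21}; the paper itself does not give a proof of this lemma but defers to \cite[Lemmas~3.14~\&~4.3, Theorem~4.12]{BBKO21}, so you are supplying an argument the paper omits rather than diverging from one it gives.

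A few remarks on the structure of your argument. Your forward direction picks, for each element class $[v;b]$ of $\pi_{\rel A}(\clo M)$ of type $t$, the canonical representative $[v^b;1_{A_t}]$ with $v^b\in\clo M^{(A_t)}$; this is exactly the observation that makes $\pi_{\rel A}(\clo M)$ isomorphic to the free structure $\Free_{\clo M}(\rel A)$, which the paper also notes without proof. Your reverse direction correctly leans on the two structural facts that make projective gadgets tractable: (i) the identifications are generated by $[v;b\circ\sigma]=[v^\sigma;b]$, so functoriality gives well-definedness of $v\mapsto v^b$; and (ii) every tuple in a relation of $\gamma(\rel A)$ comes from a single introduced copy, so the needed common representative exists whenever $g$ preserves the constraint. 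One quietly handled corner case worth making explicit is $R^{\rel A}=\emptyset$ with $R^{\rel X}\neq\emptyset$: your common-representative step tacitly uses that $R^{\rel A^Y}\neq\emptyset$ forces $R^{\rel A}\neq\emptyset$ (for nonempty $Y$), and the minion axiom $\clo M^{(\emptyset)}=\emptyset$ rules out $Y=\emptyset$; in that degenerate case neither homomorphism exists, so the equivalence is vacuously true. Finally, the closing claim that the two maps are mutually inverse is slightly stronger than what is needed and is not quite literal as stated (the reverse-direction map involves an arbitrary choice of representative on tuple-typed variables), but it plays no role in the lemma, which asserts only an equivalence of existence.
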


We note that Theorem~\ref{thm:gadget-characterisation} can be proven directly using just Lemmas~\ref{lem:3.16} and \ref{lem:4.3} and  Krokhin, Opršal, Wrochna, and Živný \cite[Theorem~4.11]{KOWZ23}. We exploit this proof, and adapt it for characterisation of the arc-consistency reduction.

We may now proceed to prove the theorem.

\begin{proof}[Proof of Theorem~\ref{thm:unary}]
  Let $\clo A = \Pol(\rel A, \rel A')$, $\clo B = \Pol(\rel B, \rel B')$.
  First, we show that if there is a minion homomorphism $\xi\colon \omega(\clo B) \to \clo A$ then $\PCSP(\rel A, \rel A')$ reduces to $\PCSP(\rel B, \rel B')$ via an arc-consistency reduction.
  The completeness of the arc-consistency reduction is straightforward and analogous to the proof of the completeness of the $k$-consistency reduction (see Lemma~\ref{lem:completeness-of-consistency}).
  We focus on the soundness. Assume that $\pi_{\rel B}\ac\sigma^{\rel A}(\rel X) \to \rel B'$ with the aim to show that $\rel X \to \rel A'$.
  First, we have that $\ac\sigma^{\rel A}(\rel X) \to \clo B$ by Lemma~\ref{lem:3.16}. Further, $\sigma^{\rel A}(\rel X) \to \omega(\clo B)$ by Lemma~\ref{lem:arc-adjoint}, and consequently $\sigma^{\rel A}(\rel X) \to \clo A$. Finally, the latter implies that $\rel X \to \rel A'$ \cite[Lemma~3.14]{BBKO21} (or Lemmas~\ref{lem:3.16} and \ref{lem:4.3}).

  To show the other implication, assume that $\PCSP(\rel A, \rel A') \leqarc \PCSP(\rel B, \rel B')$, and consider the structure $\pi_{\rel A}\omega(\clo B)$.\footnote{This structure is the \emph{free structure} of $\omega(\clo B)$ generated by $\rel A$.}
  To obtain a minion homomorphism from $\omega(\clo B)$ to $\clo A$, it is enough to show that $\pi_\rel A\omega(\clo B) \to \rel A'$ (Lemma~\ref{lem:3.16}). We prove this by using the soundness of arc-consistency.
  Since $\pi_\rel A\omega(\clo B)$ may be infinite, we may not invoke soundness directly. This can be dealt with by the standard compactness argument (see, e.g., \cite[Remark 7.13]{BBKO21}) --- to show that a $\rel F$ maps homomorphically to a finite structure $\rel A'$, it is enough to show that every finite substructure of $\pi_\rel A\omega(\clo B)$ does. For the rest of the proof, we assume that $\rel F$ is an arbitrary finite substructure of $\pi_\rel A\omega(\clo B)$.
  We have that: $\sigma_\rel A(\rel F) \to \omega(\clo B)$ (Lemma~\ref{lem:4.3}), $\ac\sigma_\rel A(\rel F) \to \clo B$ (Lemma~\ref{lem:arc-adjoint}), and $\pi_\rel B\ac\sigma_\rel A(\rel F) \to \rel B'$ (Lemma~\ref{lem:3.16}). Since the last homomorphism witnesses that $\ac^{\rel A, \rel B}(\rel F)$ is not a negative instance of $\PCSP(\rel B, \rel B')$, we get that $\rel F \to \rel A'$ by the soundness of the arc-consistency reduction as we wanted to show.
\end{proof}

We briefly note that the above proof can be adapted for the cases when $\rel A$, $\rel A'$, $\rel B$, and $\rel B'$ are not necessarily finite: The reverse implication (showing that arc-consistency is a valid reduction) holds unconditionally, though it might not always give an efficient reduction. The other implication (showing necessity of a minion homomorphism) uses only finiteness of $\rel A'$. Alternatively, it can be proven under the assumption that $\pi_\rel A\omega(\clo B)$ is finite, e.g., if $\rel A$ is finite and $\clo B$ is locally finite.

\subsection{A sufficient condition for \Datalog reductions}
  \label{app:sufficient-condition}

We combine Theorem~\ref{thm:unary} with results of \citet{BK22} to provide a new sufficient condition for the existence of a \Datalog reduction. Namely, we can connect the construction $\omega$ with the notion of \emph{minion $(d, r)$-homomorphisms} \cite[Definition 5.1]{BK22}.

\begin{definition}
  \label{def:dr-homomorphism}
  A \emph{chain of minors} in a minion $\clo M$ of length $r$ is a sequence $t_0, \dots, t_r \in \clo M$ together with functions $\pi_{i, j}$ for $0 \le i < j \le r$ such that $t_j = t_i^{\pi_{i, j}}$ and $\pi_{i, k} = \pi_{j, k} \circ \pi_{i, j}$ for all $i < j < k$.

  Let $\clo M$ and $\clo N$ be two minions. A \emph{$(d, r)$-homomorphism} from $\clo M$ to $\clo N$ is a mapping $\xi$ that assigns to each element of $\clo M$ a non-empty subset of $\clo N$ of size at most $d$ such that
  \begin{enumerate}
    \item for all $f\in \clo M$, every $g\in \xi(f)$ has the same arity as $f$.
    \item for each chain of minors in $\clo M$ as above, there exist $i < j$ and $g_i \in \xi(t_i)$, $g_j \in \xi(t_j)$ such that $g_j = g_i^{\pi_{i, j}}$.
  \end{enumerate}
\end{definition}

\begin{corollary} \label{cor:sufficient-condition}
  Assume $\rel A, \rel A'$ and $\rel B, \rel B'$ are two promise templates. If there exist $d, r$ and a minion $(d, r)$-homomorphism from $\omega(\Pol(\rel B, \rel B'))$ to $\Pol(\rel A, \rel A')$, then $\PCSP(\rel A, \rel A') \leqdl \PCSP(\rel B, \rel B')$.
\end{corollary}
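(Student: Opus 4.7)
The plan is to factor the desired reduction through the promise label cover problem $\PCSP(\clo P, \omega(\clo B))$, where $\clo B = \Pol(\rel B, \rel B')$ and $\clo P$ is the minion of projections (equivalently, the label cover template). Writing $\clo A = \Pol(\rel A, \rel A')$, I would first invoke the main reduction theorem of Barto and Kozik \cite{BK22}: the hypothesised $(d, r)$-homomorphism $\omega(\clo B) \to \clo A$ yields a polynomial-time reduction from $\PCSP(\rel A, \rel A')$ to $\PCSP(\clo P, \omega(\clo B))$, which remains efficient because $\omega(\clo B)$ is locally finite. As remarked in the paragraph preceding Example~\ref{ex:bk-do-not-compose}, every Barto--Kozik $k$-reduction is captured by a $k$-consistency reduction, and hence by Corollary~\ref{cor:canonical-form} by a \Datalog reduction.

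Next, by the Yoneda identification recorded below Lemma~\ref{lem:3.16}, we have $\Pol(\clo P, \omega(\clo B)) \cong \omega(\clo B)$, so the identity gives a (plain) minion homomorphism $\omega(\Pol(\rel B, \rel B')) \to \Pol(\clo P, \omega(\clo B))$. Theorem~\ref{thm:unary} then furnishes an arc-consistency reduction
\[
  \PCSP(\clo P, \omega(\clo B)) \leqarc \PCSP(\rel B, \rel B'),
\]
which, being in particular a $1$-consistency reduction, is a \Datalog reduction. Composing the reductions from the two steps via Theorem~\ref{thm:ddatalog-composes} gives the sought $\PCSP(\rel A, \rel A') \leqdl \PCSP(\rel B, \rel B')$.

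The main technical obstacle lies in the first step: Barto--Kozik's theorem is originally phrased for source minions of the form $\Pol(\rel B, \rel B')$ with $\rel B$ a finite template, whereas I wish to apply it with source $\omega(\clo B)$, which need not be the polymorphism minion of any finite template. One resolution is to observe that the Barto--Kozik construction depends only on the abstract $(d, r)$-homomorphism together with the minion structure, so its proof extends verbatim to the locally finite source $\omega(\clo B)$. A more self-contained alternative, avoiding infinite intermediate templates entirely, is to apply $\ac^{\rel A, \rel B}$ directly and prove soundness from scratch: using Lemmas~\ref{lem:3.16} and \ref{lem:arc-adjoint}, a homomorphism $\ac^{\rel A, \rel B}(\rel X) \to \rel B'$ is converted into a homomorphism $\rho^\rel A(\rel X) \to \omega(\clo B)$; one then has to push this through the $(d, r)$-map $\xi\colon \omega(\clo B)\to \clo A$ to obtain $\rho^\rel A(\rel X) \to \clo A$, which is precisely the chain-of-minors/Ramsey-type argument at the heart of \cite{BK22}. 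From there, another application of Lemma~\ref{lem:4.3} (together with the standard fact that $\pi_\rel A(\clo A) \to \rel A'$) delivers $\rel X \to \rel A'$.
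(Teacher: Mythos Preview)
Your main plan—factoring through $\PCSP(\clo P, \omega(\clo B))$ with Barto--Kozik on the left and Theorem~\ref{thm:unary} on the right—is exactly the intuition the paper records immediately after stating the corollary (it writes the very same chain of reductions). The paper agrees that this can be made rigorous by inspecting the Barto--Kozik proof (your resolution (a)), but chooses instead a self-contained route: it invokes only the Combinatorial Gap Theorem of \cite{BK22} and shows directly that the $k$-consistency reduction $\kappa_k^{\rel A, \rel B}$ is sound for $k=k_0$ supplied by that theorem for the parameters $d,r$ and the template $\rel A'$. Concretely, from $\kappa_k^{\rel A,\rel B}(\rel X)\to\rel B'$ one obtains $\sigma_k^{\rel A}(\rel X)\to\omega(\clo B)$ via Lemmas~\ref{lem:3.16} and~\ref{lem:arc-adjoint}, applies $\xi$ to the image of each $v_K$ to manufacture a consistent sequence of partial solution schemes $\mathcal I_i$, and then appeals to the Combinatorial Gap Theorem to conclude $\rel X\to\rel A'$.

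Your alternative (b), however, has a real gap. Converting $\ac^{\rel A,\rel B}(\rel X)\to\rel B'$ into $\rho^{\rel A}(\rel X)\to\omega(\clo B)$ is fine, but you cannot then ``push this through'' a $(d,r)$-map $\xi$ to obtain $\rho^{\rel A}(\rel X)\to\clo A$. The chain-of-minors argument from \cite{BK22} needs chains of length $r$, and the label cover instance $\rho^{\rel A}(\rel X)$ only records individual variables and constraints—its minor chains have length at most one (a constraint restricted to a single coordinate). This is precisely why the paper works with $\sigma_k^{\rel A}(\rel X)$ for $k$ large enough that chains $K_0\supsetneq\dots\supsetneq K_r$ with $\lvert K_i\rvert\leq k_i$ exist, and why the conclusion is a $k$-consistency (hence \Datalog) reduction rather than an arc-consistency one. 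Put differently: if arc-consistency alone sufficed, Theorem~\ref{thm:unary} would force the existence of a genuine minion homomorphism $\omega(\clo B)\to\clo A$, which a $(d,r)$-homomorphism need not provide.
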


The proof is a combination of the proof of \cite[Theorem 5.1]{BK22} and the proof of Theorem~\ref{thm:unary}. Loosely speaking, we have the following chain of reductions:
\[
  \PCSP(\rel A, \rel A')
  \leqdl \PCSP(\clo P, \omega(\clo B))
  \leqarc \PCSP(\rel B, \rel B')
\]
where the \Datalog reduction is given by \cite[Theorem 5.1]{BK22}. To make this proof formal, we need to phrase \cite[Theorem 5.1]{BK22} in the language of Datalog reductions. This could be done by a careful inspection of the proof.
Nevertheless, we present a complete argument building on the proof of the said theorem assuming only the Combinatorial Gap Theorem \cite[Theorem~2.1]{BK22}. The core of the argument remains the same, though the argument is slightly simpler than in \cite{BK22} due to the fact that the typed setting is more natural here, as noted in \cite[Appendix B.4]{BK22}.

To formulate the Combinatorial Gap Theorem we need the following technical definition.

\begin{definition}
  Let $\rel X$ be an instance of $\CSP(\rel A)$. We say that a sequence $(\mathcal I_0, \dots, \mathcal I_r)$ is a~\emph{consistent sequence of partial solution schemes} of arities $k_0 \geq \dots \geq k_r$ and a value $d > 0$, if
  \begin{itemize}
    \item $\mathcal I_i$ assigns to each $U \in \binom X{\leq k_i}$ a set of at most $d$ partial solutions on $U$, and
    \item for every $U_0 \supseteq \dots \supseteq U_K$ with $U_i \in \binom X{\leq k_i}$, there exists $i < j$ such that
    \[
      \mathcal I_j(U_j) \cap (\mathcal I_i(U_i) \circ \iota_{i, j}) \neq \emptyset,
    \]
    where $\iota_{i, j} \colon U_j \incl U_i$ is the inclusion mapping.
  \end{itemize}
\end{definition}

\begin{theorem}[Combinatorial Gap Theorem \cite{BK22}]
  For all positive integers $m$, any (possibly infinite) structure $\rel D$ whose relations are of arity not more than $m$, and $r, d \geq 1$, there exists a sequence $k_0\geq \dots \geq k_r$ such that every instance that allows a consistent sequence of partial solution schemes of arities $k_0, \dots, k_r$ and value $d$ is solvable.
\end{theorem}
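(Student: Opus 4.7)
The plan is to prove this by induction on $r$, choosing the arity sequence $k_0 \geq \dots \geq k_r$ via a Ramsey-type function depending only on $m$, $r$, and $d$. The fact that the bounds are uniform in the domain of $\rel D$ (only the arity bound $m$ is used, not $|D|$) forces the argument to be combinatorial in the partial solution schemes themselves, rather than in the space of values; this is a strong constraint that shapes the entire proof.

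For the base case of small $r$, a consistent sequence of two or three schemes with sufficiently large arities and bounded value $d$ should force solvability by a direct compactness argument. Since each partial solution scheme has value at most $d$, the `tree of compatible local choices' over subsets of bounded size is finitely branching, and if the consistency condition propagates compatibility densely enough across chains, the tree has arbitrarily long branches, hence an infinite branch by K\"onig's lemma, which yields a global assignment that can then be checked to solve the instance.

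For the inductive step from $r$ to $r+1$, given a consistent sequence $\mathcal I_0, \dots, \mathcal I_{r+1}$, I would use the deepest level $\mathcal I_{r+1}$ as a \emph{witness} to prune each $\mathcal I_i$ with $i \leq r$: retain only those partial solutions $h \in \mathcal I_i(U)$ for which there is a small subset $W \subseteq U$ of size at most $k_{r+1}$ and some $g \in \mathcal I_{r+1}(W)$ compatible with $h$. If the $k_i$'s are chosen so that $k_0 \gg k_1 \gg \dots \gg k_{r+1}$ in a Ramsey sense, then for every chain $U_0 \supseteq \dots \supseteq U_r$ in the pruned sequence one can extend by a small $U_{r+1} \subseteq U_r$, invoke the original consistency on the extended chain, and transfer the resulting compatibility back to the pruned schemes. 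The pruned sequence is thus consistent of length $r+1$, with some value $d'$ bounded in terms of $d$ and $r$, so the induction hypothesis applies.

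The main obstacle is the Ramsey-type estimation. The $k_i$'s likely have to grow as a tower of height $r$, so that the pigeon-hole used in the pruning step actually succeeds for every possible chain. More delicate still is controlling the blow-up $d \mapsto d'$: a naive pruning could multiplicatively blow the value up at each inductive step, so one needs a careful representative-selection argument (e.g., a colouring-Ramsey applied to chains of subsets, with colours indexing which partial solutions `survive' the prune) to keep $d'$ bounded uniformly in $r$ and $d$. Finally, since $\rel D$ may be infinite, a compactness step on the space of assignments is required to lift the core combinatorial argument, which is naturally finitary, to the full generality of the statement.
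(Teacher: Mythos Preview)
The paper does not prove this theorem. It is stated with an explicit citation to \cite{BK22} and used as a black box: immediately after the statement the paper writes ``Using the above theorem, we can prove the corollary'' and proceeds to prove Corollary~\ref{cor:sufficient-condition} by invoking the Combinatorial Gap Theorem, not by establishing it. So there is no proof in the present paper against which your proposal can be compared.

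If you want feedback on the proposal itself: your outline is a plausible high-level shape (induction on $r$, Ramsey-type choice of $k_i$, compactness to handle infinite $\rel D$), and indeed the original argument in \cite{BK22} is inductive and Ramsey-flavoured. But several steps in your sketch are not yet arguments. The base case is asserted rather than proved: a bounded-value scheme on large subsets does not by itself give a K\"onig tree with long branches --- you need to say which tree and why consistency forces long branches. In the inductive step, your pruning produces a sequence of length $r+1$ (you wrote $r+1$, but you need length $r$ to invoke the hypothesis), and the claim that the pruned sequence remains consistent is exactly where the work lies; ``invoke the original consistency on the extended chain'' hides the difficulty, because the consistency condition only guarantees one compatible pair \emph{somewhere} in the chain, and you must arrange the arities so that this pair lands where you need it. You correctly identify the value blow-up $d \mapsto d'$ as delicate, but offer no mechanism to control it. To turn this into a proof you would need to consult \cite{BK22} directly.
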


Using the above theorem, we can prove the corollary.

\begin{proof}[Proof of Corollary~\ref{cor:sufficient-condition}]
  Let $k_0, \dots, k_r$ be as in the Combinatorial Gap Theorem for parameters $d$ and $r$ and template $\rel A'$. We claim that $\PCSP(\rel A, \rel A') \leqcons \PCSP(\rel B, \rel B')$ for $k = k_0$. As usual, it is enough to prove soundness of the $k$-consistency reduction. Assume $\kappa_k^{\rel A, \rel B}(\rel X) \to \rel B'$. Hence $\pi_{\rel B}\ac\sigma_k^\rel A(\rel X) \to \rel B'$ which implies that $\ac\sigma_k^\rel A(\rel X) \to \Pol(\rel B, \rel B')$ (Lemma~\ref{lem:3.16}). Consequently, $\sigma_k^\rel A(\rel X) \to \omega(\Pol(\rel B, \rel B'))$ by Lemma~\ref{lem:arc-adjoint}.
  The rest of the proof is a rephrasing of the proof of \citet[Theorem 5.1]{BK22}.

  We claim that the instance $\rel X$ of $\CSP(\rel A')$ allows a consistence sequence of partial assignment schemes of arities $k_0 \geq \dots \geq k_r$.
  We let
  \[
    \mathcal I_i(K) =
      \{ g(\mathcal F_K) \mid g\in \xi(f_K) \}
  \]
  where $g(\mathcal F_K)\colon K \to \rel A'$ denotes the component-wise application of $g$ to all the elements of $\mathcal F_K$. More precisely, we let $g(\mathcal F_K)\colon x \mapsto g(e_x)$ for each $x\in K$ where $e_x\colon \mathcal F_K \to A'$ is the evaluation at $x$, i.e., $e_x(f) = f(x)$.
  Since each $f \in \mathcal F_K$ is a partial homomorphism from $K$ to $\rel A$ and $g$ is a polymorphism, $g(\mathcal F_K)$ is a partial homomorphism from $K$ to $\rel A'$.
  We get the consistency relation from the fact that $\xi$ is a $(d,r)$-homomorphism: if $K_0 \supseteq \dots \supseteq K_r$, and $g_i \in \xi(f_{K_i})$, then for some $i < j$, $g_i^{\pi_{i,j}} = g_j$ where $\pi_{i,j} \colon \mathcal F_{K_i} \to \mathcal F_{K_j}$ is defined by $\pi_{i,j}(f) = f|_{K_j}$, and hence
  \begin{multline*}
    g_j(\mathcal F_{K_j})(x) = g_j(e_x) = g_i^{\pi_{i,j}}(e_x) \\
      = g_i (e_x \circ \pi_{i,j}) = g_i(e_x) = g(\mathcal F_{K_j})(x)
  \end{multline*}
  for all $x\in K_j$. Above, we used $e_x$ to denote both a function from $\mathcal F_{K_i}$ as well as from $\mathcal F_{K_j}$; we have $e_x(f) = e_x(f|_{K_j})$ since $x\in K_j \subseteq K_i$. This concludes checking that the assumptions of the Combinatorial Gap Theorem are satisfied, and hence $\rel X \to \rel A'$, as we wanted.
\end{proof}

We note that the condition given in Corollary~\ref{cor:sufficient-condition} is sufficient but not necessary as is shown in the following example.

\begin{example}
  We claim that there is no $(d, r)$-homomorphism from $\clo H$ to $\Pol(\rel K_2)$ for any $d, r \geq 1$ while $\CSP(\rel K_2)$ reduces to the trivial problem by the $3$-consistency reduction (as shown by Example~\ref{ex:2-colouring-reduces}).

  The key observation is that $\Pol(\rel K_2)$ contains no $f$ that satisfies $f^\pi = f$ for a permutation $\pi$ of order two without fixed points (i.e., with only cycles of length 2), while $\clo H$ contains an $f$ that satisfies $f^\pi = f$ for all such permutations on any fixed set. We use this $f\in \clo H$ to build a long chain of minors, and consequently force that some of the $d$ elements in the image of some $(d, r)$-homomorphisms satisfy $c_1^\pi = c_2$. If we manage to force enough of these relations, we will obtain a $c \in \chi(f)$ with $c^\pi = c$ which will yield a contradiction.

  Fix $d$ and $r$, and let $n \geq (r+1)d$. We start with constructing a permutation group where all elements are of order two and have no fixed points. For that, we take the Cayley representation of $\mathbb Z_2^n$ ---  i.e., let $\mathbb Z_2^n$ act on its base set $N = \{0, 1\}^n$ by left multiplication.
  We denote this permutation group by $\mathbb G$.

  Now let $f\in \clo H^{(N)}$ be an element satisfying $f^\pi = f$ for all $\pi\in \mathbb G$, i.e., $f = N$. To construct a chain of minors we need to fix $\pi_{i,j} \in \mathbb G$ such that $\pi_{j,k}\circ \pi_{i,j} = \pi_{i,k}$. We pick $\epsilon_0, \dots, \epsilon_{k-1}$ to be elements in $\mathbb G$ corresponding to a vector-space basis of $\mathbb Z_2^k$, and note that the product of $\prod_{i\in I} \epsilon_i$ is non-trivial for all $I \neq \emptyset$. Now, we can let
  \[
    \pi_{i,j} = \prod_{\ell = i+1}^j \epsilon_\ell.
  \]
  Indeed, we have that $f^{\pi_{i,j}} = f$, for all $i < j$, hence $f$ itself creates a chain of minors of length $(r+1)d$.

  Assume that $\chi\colon \clo H \to \Pol(\rel K_2)$ is a $(d,r)$-homomorphism. We consider a labelled graph $E$ with vertices $\chi(f)$ defined as follows: two vertices $u, v$ are connected by an edge labelled by a non-trivial element $\pi \in \mathbb G$ if $u^\pi = v$ (note that since $\pi$ is of order two, this condition is equivalent to $u = v^\pi$, hence the edges are unoriented).
  We use the long chain of minors to force edges into this graph to obtain a cycle: For all $k < d$, there exist $i, j$ with $(r+1)k \leq i < j < (r+1)(k + 1)$ such that
  \[
    \chi(f)^{\pi_{i,j}} \cap \chi(f^{\pi_{i,j}}) \neq \emptyset.
  \]
  Since $f^{\pi_{i,j}} = f$, this forces an edge into $E$ labelled by $\pi_{i,j}$. Hence the graph has at least $d$ edges. Moreover observe that any product of a subset of the labels of these edges is non-trivial. Since we have at least $d$ edges and only $d$ vertices, there is a cycle in $E$. Let $c_0, \dots, c_k$ be such a cycle with labels $\lambda_0$, $\lambda_1$, etc. We claim that $c_0^\pi = c_0$ where $\pi$ is the product of all the labels of edges of the cycle. By induction, we get
  \[
    c_{i+1}^{\prod_{j \leq i} \lambda_j}
      = (c_{i+1}^{\lambda_i})^{\prod_{j < i} \lambda_j}
      = c_i^{\prod_{j < i} \lambda_j}
      = c_0
  \]
  where $c_{k+1} = c_0$.
  Since $\pi = \prod_{i \leq k} \lambda_i$ is non-trivial, we get that $c_0^\pi = c_0$ is in contradiction with $c_0$ being a polymorphism of $\rel K_2$.

  Finally, let us note to the reader who is familiar with the reduction used in \cite{BK22}, described as \emph{$k$-reduction} in \cite[Section 4.2]{KO22}, that $\CSP(\rel K_2)$ reduces to Horn-3SAT by 3-reduction, hence $(d,r)$-homomorphisms are not even necessary for this $k$-reduction.
\end{example}

\subsection{Arc-consistency reductions are transitive}
  \label{sec:kleisli}

We argue that arc-consistency reductions compose by using the language of category theory that is incredibly elegant for this purpose. Essentially, this claim follows from an observation that $\omega$ is a \emph{comonad}, and Theorem~\ref{thm:unary} which characterises the arc-consistency reduction in terms of \emph{co-Kleisli arrows} of this comonad. Let us briefly outline the definitions. Although it is not the traditional way, we define these notions together to highlight their connection.

\begin{definition}
  Assume $\eta$ is an endofunctor of some category, and $A, B$ are two objects. A \emph{co-Kleisli arrow} is a morphism\footnote{Usually, a morphism $\eta(A) \to B$ is called a co-Kleisli arrow only if $\eta$ is a comonad, and in that case the `co-' prefix is often dropped since it is implied.}
  \[
    f\colon \eta(A) \to B.
  \]
  A functor $\eta$ is a \emph{comonad} if its co-Kleisli arrows form a category with the same objects as the underlying category, i.e., if there is an associative binary operator $\circ_\eta$ that assigns to every pair of co-Kleisli arrows $f\colon \eta(A)\to B$ and $g\colon \eta(B) \to C$ a co-Kleisli arrow $g \mathbin{\circ_\eta} f \colon \eta(A) \to C$, and for each object $A$, there is a co-Kleisli arrow $\nu^A\colon \eta(A) \to A$ that acts as the identity w.r.t.~$\circ_\eta$.
\end{definition}

\begin{lemma}
  $\omega$ is a comonad.
\end{lemma}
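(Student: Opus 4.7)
The plan is to exhibit $\omega$ as a comonad in the traditional sense, by constructing a natural counit $\epsilon_\clo M \colon \omega(\clo M) \to \clo M$ and a natural comultiplication $\delta_\clo M \colon \omega(\clo M) \to \omega(\omega(\clo M))$ satisfying the two triangle identities and coassociativity. Once these are in place, the co-Kleisli arrows $\omega(\clo M) \to \clo N$ automatically organise into a category with identities $\nu^\clo M := \epsilon_\clo M$ and composition $g \circ_\omega f := g \circ \omega(f) \circ \delta_\clo M$ for $f \colon \omega(\clo M) \to \clo N$ and $g \colon \omega(\clo N) \to \clo P$, which is the required structure.

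The natural definitions, motivated by the intuition that elements of $\omega(\clo M)$ carry a ``record'' of the coordinates on which they depend, are as follows. For $(Y, f) \in \omega(\clo M)^{(X)}$, with $Y \subseteq X$ and $f \in \clo M^{(Y)}$, set
\[
  \epsilon_\clo M\bigl((Y, f)\bigr) = f^{\iota},
  \qquad
  \delta_\clo M\bigl((Y, f)\bigr) = \bigl(Y, (Y, f)\bigr),
\]
where $\iota \colon Y \hookrightarrow X$ denotes the inclusion and where, in the comultiplication, the outer pair is viewed as an element of $\omega(\omega(\clo M))^{(X)}$ and the inner pair $(Y, f)$ as an element of $\omega(\clo M)^{(Y)}$. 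The counit reinstates the full arity by minoring along the inclusion, and the comultiplication simply duplicates the record of the active coordinates.

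The first step of the verification is that $\epsilon_\clo M$ and $\delta_\clo M$ are minion homomorphisms, i.e.\ natural with respect to the minor operation: for any $\pi \colon X \to X'$, both equalities
\[
  \epsilon_\clo M\bigl((Y, f)^\pi\bigr) = \epsilon_\clo M\bigl((Y, f)\bigr)^\pi
  \quad\text{and}\quad
  \delta_\clo M\bigl((Y, f)^\pi\bigr) = \delta_\clo M\bigl((Y, f)\bigr)^\pi
\]
reduce, by the definition of $\omega$ and functoriality of the minor in $\clo M$, to the elementary identity $\pi \circ \iota_Y^X = \iota_{\pi(Y)}^{X'} \circ \pi|_Y$ of maps $Y \to X'$. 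The second step is to check the comonad axioms on a generic element $(Y, f)$. Each axiom collapses after one unfolding: the triangle $\epsilon_{\omega(\clo M)} \circ \delta_\clo M = \mathrm{id}$ uses $\iota_Y^Y = 1_Y$ to recover $(Y, f)$; the triangle $\omega(\epsilon_\clo M) \circ \delta_\clo M = \mathrm{id}$ likewise yields $(Y, f^{1_Y}) = (Y, f)$; and coassociativity is witnessed by both sides producing $(Y, (Y, (Y, f)))$.

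I expect no genuine mathematical obstacle; the only mild difficulty is the bookkeeping of tracking which copy of $\omega$ each pair lives in and which inclusion is involved at each stage. Once the notation $\iota_Y^X$ is consistently used and one observes that every self-inclusion $\iota_Y^Y$ is the identity, each of the axioms becomes a one-line verification.
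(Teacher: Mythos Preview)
Your proof is correct and uses essentially the same content as the paper's, just packaged in the standard counit--comultiplication presentation rather than the co-Kleisli-extension presentation the paper adopts. Your $\epsilon_{\clo M}$ is exactly the paper's $\nu^{\clo M}$; the paper, instead of introducing $\delta$, directly defines the ``cobind'' $\xi^\sharp\colon \omega(\clo M)\to\omega(\clo N)$ by $\xi^\sharp_X(Y,f)=(Y,\xi_Y(Y,f))$ and sets $\xi\circ_\omega\zeta=\xi\circ\zeta^\sharp$, which unwinds to your $g\circ\omega(f)\circ\delta_{\clo M}$ since $\omega(f)\circ\delta_{\clo M}=f^\sharp$. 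The two formulations are standardly equivalent, and both reduce the axiom checks to the same one-line computations you sketch.
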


\begin{proof}
  The unit $\nu^{\clo M}$ is the map $\nu\colon \omega(\clo M) \to \clo M$ we mentioned above, i.e., $\nu^\clo M_X(Y, f) = f^\iota$ where $\iota\colon Y \incl X$ is the inclusion mapping. To define the composition, it is easier to show how to get from a co-Kleisli arrow $\xi\colon \omega(\clo M) \to \clo N$ a minion homomorphism $\xi^\bind\colon \omega(\clo M) \to \omega(\clo N)$; this $\xi^\bind$ is defined by $\xi_Y^\bind(X, f) = (X, \xi_X(X, f))$ where $Y \supseteq X$ and $f\in \clo M^{(X)}$. The composition is then defined as
  \[
    \xi \circ_\omega \zeta = \xi \circ \zeta^\bind.
  \]
  Checking that $\circ_\omega$ is associative and that $\nu^\clo M$ is the identity element of $\circ_\omega$ is straightforward.
\end{proof}

Remark that, in the above proof, we showed that $\omega(\clo M) \to \omega(\clo N)$ if and only if $\omega(\clo M) \to \clo N$.
Finally, as an easy corollary of the above lemma and the fact that co-Kleisli arrows of a comonad compose, we get that arc-consistency reductions are transitive.

\begin{corollary}
  The relation $\leqarc$ is transitive on promise CSPs.
\end{corollary}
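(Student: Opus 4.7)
The plan is to deduce transitivity directly from Theorem~\ref{thm:unary} together with the comonad structure on $\omega$ established in the preceding lemma. Writing $\clo A = \Pol(\rel A, \rel A')$, $\clo B = \Pol(\rel B, \rel B')$, and $\clo C = \Pol(\rel C, \rel C')$, suppose we have arc-consistency reductions
\[
  \PCSP(\rel A, \rel A') \leqarc \PCSP(\rel B, \rel B') \leqarc \PCSP(\rel C, \rel C').
\]
By Theorem~\ref{thm:unary}, these witness minion homomorphisms $\xi \colon \omega(\clo B) \to \clo A$ and $\zeta \colon \omega(\clo C) \to \clo B$. To prove transitivity it suffices, again by Theorem~\ref{thm:unary}, to exhibit a minion homomorphism $\omega(\clo C) \to \clo A$.

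The key step is to use the $(-)^\bind$ operation from the preceding lemma, which lifts any co-Kleisli arrow $\zeta \colon \omega(\clo M) \to \clo N$ to an actual minion homomorphism $\zeta^\bind \colon \omega(\clo M) \to \omega(\clo N)$ via $\zeta^\bind_Y(X, f) = (X, \zeta_X(X, f))$. Applied to $\zeta$, this produces a minion homomorphism $\zeta^\bind \colon \omega(\clo C) \to \omega(\clo B)$. Composing with $\xi$ in the usual category of minions then yields
\[
  \xi \circ \zeta^\bind \colon \omega(\clo C) \to \clo A,
\]
which is the desired minion homomorphism, completing the proof by another application of Theorem~\ref{thm:unary}.

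I do not expect any real obstacle here: the comonad structure has already done the heavy lifting, and transitivity of $\leqarc$ is just the statement that co-Kleisli arrows of $\omega$ compose. The only thing one might want to say for completeness is a sentence noting that $\xi \circ \zeta^\bind$ agrees with the co-Kleisli composition $\xi \circ_\omega \zeta$ defined in the lemma, so transitivity is literally the associativity of $\circ_\omega$ applied to the two given co-Kleisli arrows. No additional bookkeeping about the underlying structures $\rel A, \rel A', \rel B, \rel B', \rel C, \rel C'$ is needed, since Theorem~\ref{thm:unary} translates everything to the minion side.
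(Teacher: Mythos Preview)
Your proposal is correct and follows essentially the same approach as the paper: the corollary is stated as an immediate consequence of $\omega$ being a comonad together with Theorem~\ref{thm:unary}, and you have spelled out exactly this, using the co-Kleisli composition $\xi \circ_\omega \zeta = \xi \circ \zeta^\bind$ to produce the required minion homomorphism $\omega(\clo C) \to \clo A$.
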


\section{Sherali-Adams is a \texorpdfstring{$k$}{k}-consistency reduction to linear programming}
  \label{app:proofs-iv}

In this appendix, we provide a detailed proof of the fact that Sherali-Adams hierarchy coincides with the $k$-consistency hierarchy of linear programming. The proof is a comparison of Sherali-Adams understood as a reduction to linear programming and the $k$-con\-sis\-ten\-cy reduction. It requires two key properties of linear programming, and its polymorphisms: firstly, that the linear system produced by the Sherali-Adams is as good as the universal gadget, and secondly, that the polymorphisms of linear programming allow a homomorphism $\Pol(\rel Q_\conv) \to \omega(\Pol(\rel Q_\conv))$.

Let us start with recalling a description of the polymorphism minion of linear programming, see also \cite[Section~7.2]{BBKO21}. The minion $\Pol(\rel Q_\conv)$ is isomorphic to the minion $\clo Q_\conv$ described as follows:
$\clo Q_\conv^{(X)}$ consists of all rational probability distributions $\lambda$ on $X$, i.e., $\lambda\colon X\to \mathbb Q$ with $\sum_{x\in X} \lambda(x) = 1$ and $\lambda(x) \geq 0$ for all $x\in X$, and for $\pi\colon X \to Y$, we let $\lambda^\pi$ be the probability distribution of $\pi(x)$ when $x$ is sampled according to $\lambda$, i.e., $\lambda^\pi(x) = \sum_{y\in \pi^{-1}(x)} \lambda(y)$.

\begin{lemma} \label{lem:pol-qconv}
  The minions $\Pol(\rel Q_\conv)$ and $\clo Q_\conv$ are isomorphic.
\end{lemma}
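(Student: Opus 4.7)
The plan is to exhibit the isomorphism explicitly: for each finite set $X$, define $\Phi_X \colon \clo Q_\conv^{(X)} \to \Pol(\rel Q_\conv)^{(X)}$ by
\[
  \Phi_X(\lambda)(x) = \sum_{i \in X} \lambda(i)\, x(i)
\]
for $\lambda \in \clo Q_\conv^{(X)}$ and $x \in \mathbb Q^X$, i.e., map a distribution to the corresponding convex combination. First I would verify that $\Phi_X(\lambda)$ is indeed a polymorphism of $\rel Q_\conv$: preservation of $x = 1$ follows from $\sum_i \lambda(i) = 1$, preservation of $x_1 + x_2 = y$ from linearity of the sum, and preservation of $x \leq y$ from $\lambda(i) \geq 0$. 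Naturality (compatibility with minors) is a short calculation: for $\pi \colon X \to Y$ one computes
\[
  \Phi_X(\lambda)^\pi(y) = \sum_{i \in X} \lambda(i) y(\pi(i)) = \sum_{j \in Y} \Bigl( \sum_{i \in \pi^{-1}(j)} \lambda(i) \Bigr) y(j) = \Phi_Y(\lambda^\pi)(y),
\]
showing $\Phi$ is a minion homomorphism.

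Injectivity is easy: evaluating $\Phi_X(\lambda)$ at the indicator tuple $e_j$ (with $e_j(j) = 1$ and $e_j(i) = 0$ for $i \neq j$) recovers $\lambda(j)$, so distinct $\lambda$'s yield distinct polymorphisms. The main content of the lemma is surjectivity, which is the step I expect to require the most care. Given an arbitrary $f \in \Pol(\rel Q_\conv)^{(X)}$, I would like to set $\lambda(i) = f(e_i)$ and then show that $f$ equals $\Phi_X(\lambda)$; this requires proving that every polymorphism is in fact a $\mathbb Q$-linear combination of the coordinates, which is not built into the definition.

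To establish $\mathbb Q$-linearity, I would exploit the ternary relation $x_1 + x_2 = y$: applied coordinate-wise to any pair of tuples $x, x' \in \mathbb Q^X$ together with $x + x'$, preservation yields $f(x) + f(x') = f(x + x')$, i.e., $\mathbb Z$-linearity over $\mathbb Q^X$. A standard argument (using $f(nx) = nf(x)$ and then $f(x/n) = f(x)/n$ for positive integers $n$) upgrades this to $\mathbb Q$-linearity. Writing any $x \in \mathbb Q^X$ as $\sum_i x(i) e_i$, this gives $f(x) = \sum_i \lambda(i) x(i)$. It remains to verify that $\lambda$ is a probability distribution: summing, $\sum_i \lambda(i) = f(\mathbf{1}_X) = 1$ by preservation of the unary relation $y = 1$; and $\lambda(i) \geq 0$ follows from preservation of $\leq$ applied to $0 \leq e_i$, giving $0 = f(0) \leq f(e_i) = \lambda(i)$. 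Hence $\lambda \in \clo Q_\conv^{(X)}$ and $\Phi_X(\lambda) = f$, completing the isomorphism.
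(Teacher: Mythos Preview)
Your proposal is correct and takes essentially the same approach as the paper: both identify each polymorphism with the probability distribution $\lambda(i) = f(e_i)$ and verify that this is an isomorphism of minions. The paper's proof is much terser---it simply asserts that every polymorphism of $\rel Q_\conv$ has the form $f(x) = \sum_i \lambda_i x_i$ with the $\lambda_i$ a probability distribution and leaves the rest as routine---whereas you have filled in the details of additivity, $\mathbb Q$-linearity, and the checks on $\lambda$ explicitly.
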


\begin{proof}
  The key observation is that every polymorphism of $\rel Q_\conv$ is of the form
  \[
    f(x_1, \dots, x_n) = \sum_{i=1}^n \lambda_i x_i
  \]
  for some $\lambda_1, \dots, \lambda_n \geq 0$ such that $\sum_{i=1}^n \lambda_i = 1$. Identifying such polymorphism with the probability distribution $\lambda\colon i \mapsto \lambda_i$ then yields the required isomorphism. It is not hard to check that minors are indeed taken accordingly.
\end{proof}

The above isomorphism is implicitly used in characterisation of (promise) CSPs solved by the \emph{basic linear programming relaxation} of CSPs \cite[Theorem 7.9]{BBKO21}. This basic linear programming relaxation of $\CSP(\rel A)$ can be described as a composition $\lambda_\conv \circ \rho^\rel A$, where $\lambda_\conv$ is a gadget that produces from label cover instances to linear programs (i.e., structures similar to $\rel Q_\conv$). It is the same construction as in step (S2) of the Sherali-Adams relaxation. For simplicity, we treat the output of $\lambda_\conv$ as a linear program rather than a structure with the same signature as $\rel Q_\conv$.

\begin{definition}
  Let $\rel S$ be a label cover instance, $\lambda_\conv(\rel S)$ is the following linear program with variables $x_{s, i}$ for each type $X$, $s\in S_X$, and $i\in X$:
  \begin{align}
    \sum_{i \in X} x_{s, i} &= 0
      &\text{$\forall s\in S_X$} \\
    \sum_{i \in \pi^{-1}(j)} x_{s, i} &= x_{t, j}
      &\text{$\forall \pi\colon X \to Y$, $(s, t) \in E_\pi^\rel S$, $j \in Y$.}
  \end{align}
\end{definition}

It is not hard to check that $\lambda_\conv$ can be described as a gadget replacement, and that the $k$-th level of the Sherali-Adams relaxation of an instance $\rel X$ of $\CSP(\rel A)$ can be described as $\lambda_\conv\sigma_k^{\rel A}(\rel X)$. Recall that $\sigma_k^{\rel A}(\rel X)$ denotes the label cover instance obtained from an instance $\rel X$ of $\CSP(\rel A)$ by performing only steps (C1) and (C4) in the $k$-consistency enforcement, i.e., the label cover instance constructed in step (S1) of the Sherali-Adams relaxation.

The important property of $\lambda_\conv$ is that it is `as good as the universal gadget', i.e., that we can freely exchange $\lambda_\conv$ and the universal gadget $\pi_{\rel Q_\conv}$. We informally call this fact the \emph{short code property} of linear programming.\footnote{The gadget $\lambda_\conv$ is a shorter alternative to the universal gadget which is the natural \emph{long code test} for linear programming, hence the name.}

\begin{lemma} \label{lem:lp-short-code}
  For every label cover instance $\rel X$, $\lambda_\conv(\rel X) \to \rel Q_\conv$ if and only if $\pi_{\rel Q_\conv}(\rel X) \to \rel Q_\conv$.
\end{lemma}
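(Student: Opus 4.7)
The plan is to show both sides of the biconditional are equivalent to the existence of a minion homomorphism $\rel X \to \Pol(\rel Q_\conv)$, which by Lemma~\ref{lem:pol-qconv} is the same as $\rel X \to \clo Q_\conv$. The right-hand side equivalence is immediate: applying Lemma~\ref{lem:3.16} with $\rel A = \rel B = \rel Q_\conv$ yields $\pi_{\rel Q_\conv}(\rel X) \to \rel Q_\conv$ if and only if $\rel X \to \Pol(\rel Q_\conv)$. So the work all goes into the left-hand side.

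For the left-hand side, I would directly unfold the two definitions and exhibit a bijective correspondence between feasible solutions of $\lambda_\conv(\rel X)$ and minion homomorphisms $\rel X \to \clo Q_\conv$. Given a homomorphism $h \colon \lambda_\conv(\rel X) \to \rel Q_\conv$, for each type $X$ and each $s \in S_X$ let $\mu_s \colon X \to \mathbb{Q}$ be defined by $\mu_s(i) = h(x_{s,i})$. The non-negativity and normalisation constraints imposed on the variables $x_{s,i}$ (which are built into the relations of $\rel Q_\conv$ together with the single-sum equation of $\lambda_\conv$) guarantee that each $\mu_s$ is a rational probability distribution on $X$, i.e.\ $\mu_s \in \clo Q_\conv^{(X)}$. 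The projection equations of $\lambda_\conv$ state precisely that whenever $(s,t) \in E_\pi^{\rel X}$ one has $\mu_s^\pi = \mu_t$, i.e.\ that $(s, t) \mapsto (\mu_s, \mu_t)$ preserves the label-cover relations. Hence $s \mapsto \mu_s$ is a homomorphism $\rel X \to \clo Q_\conv$.

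Conversely, given a homomorphism $g \colon \rel X \to \clo Q_\conv$, setting $h(x_{s,i}) = g(s)(i) \in \mathbb{Q}$ gives rational values in $[0,1]$ whose sum over $i \in X$ is $1$ for every $s$, and which satisfy the projection equations because $g$ preserves the label-cover relations. Hence $h$ extends naturally to a homomorphism $\lambda_\conv(\rel X) \to \rel Q_\conv$ by interpreting the three generating relations of $\rel Q_\conv$ on $\mathbb{Q}$ in the standard way. This completes the equivalence on the left.

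The main technical care needed is to track that the relations of $\rel Q_\conv$ (namely $x \leq y$, $x_1 + x_2 = y$, and $y = 1$) together with the equations spelled out in the definition of $\lambda_\conv$ really do pin down the set of probability distributions on $X$; put differently, the implicit encoding of the simplex $\{\mu \geq 0, \sum \mu_i = 1\}$ inside $\lambda_\conv$ must match the minion elements of $\clo Q_\conv^{(X)}$. Once this is verified, the two implications above combine with Lemmas~\ref{lem:pol-qconv} and \ref{lem:3.16} to yield the claimed equivalence, with no further work required.
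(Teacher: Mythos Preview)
Your proposal is correct and follows essentially the same route as the paper: both argue that each side of the biconditional is equivalent to $\rel X \to \clo Q_\conv \cong \Pol(\rel Q_\conv)$, using Lemma~\ref{lem:3.16} for the $\pi_{\rel Q_\conv}$ side and a direct unfolding of the correspondence $x_{s,i} \leftrightarrow \mu_s(i)$ for the $\lambda_\conv$ side, then invoking Lemma~\ref{lem:pol-qconv}. Your write-up is simply more explicit than the paper's terse version.
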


\begin{proof}
  Observe that $\lambda_\conv(\rel X) \to \rel Q_\conv$ if and only if $\rel X$ is satisfied in $\clo Q_\conv$: the assignment $\lambda_s(i) = x_{s, i}$ gives the required probability distributions given that $x_{s, i}$ is a solution to $\lambda_\conv(\rel X)$, and vice-versa.
  We also have that $\pi_{\rel Q_\conv}(\rel X) \to \rel Q_\conv$ if and only if $\rel X$ is satisfied in $\Pol(\rel Q_\conv)$ (Lemma~\ref{lem:3.16}). Hence the statement follows from the isomorphism of the two minions (Lemma~\ref{lem:pol-qconv}).
\end{proof}

We turn our attention to the minion $\clo Q_\conv$. The following lemma implies that a (promise) CSP is reducible to linear programming by the arc-consistency reduction if and only if it is solved by the basic linear programming relaxation, i.e., the step enforcing consistency in the reduction can be skipped without any harm.

\begin{lemma} \label{lem:qconv-is-coalg}
  There is a minion homomorphism
  \[
    \xi\colon \clo Q_\conv \to \omega(\clo Q_\conv).
  \]
\end{lemma}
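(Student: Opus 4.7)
The plan is to define $\xi$ by sending each probability distribution to the pair consisting of its support together with its restriction to that support. Concretely, for a finite set $X$ and $\lambda \in \clo Q_\conv^{(X)}$, set
\[
  \xi_X(\lambda) = (\supp \lambda,\, \lambda|_{\supp \lambda})
\]
where $\supp \lambda = \{x \in X \mid \lambda(x) > 0\}$. First I would check that this is well-defined: since $\lambda$ sums to $1$, the support is nonempty, so $(\supp \lambda, \lambda|_{\supp \lambda})$ is a legitimate element of $\omega(\clo Q_\conv)^{(X)}$, and $\lambda|_{\supp \lambda}$ is itself a rational probability distribution on $\supp \lambda$.

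The main thing to verify is naturality, that is, $\xi_Y(\lambda^\pi) = \xi_X(\lambda)^\pi$ for every $\pi\colon X \to Y$. Unpacking the right-hand side using Definition~\ref{def:omega}, one has $\xi_X(\lambda)^\pi = (\pi(\supp \lambda),\, (\lambda|_{\supp \lambda})^{\pi|_{\supp \lambda}})$. So the verification splits into two parts:
\begin{enumerate}
  \item $\pi(\supp \lambda) = \supp(\lambda^\pi)$, which follows immediately because $\lambda^\pi(y) = \sum_{x\in \pi^{-1}(y)} \lambda(x)$ is positive exactly when some $x \in \pi^{-1}(y)$ lies in $\supp \lambda$;
  \item $(\lambda|_{\supp \lambda})^{\pi|_{\supp \lambda}} = \lambda^\pi|_{\pi(\supp \lambda)}$, which is a direct calculation: for $y \in \pi(\supp \lambda)$,
    \[
      (\lambda|_{\supp \lambda})^{\pi|_{\supp \lambda}}(y)
        = \sum_{\substack{x \in \supp \lambda \\ \pi(x) = y}} \lambda(x)
        = \sum_{x \in \pi^{-1}(y)} \lambda(x)
        = \lambda^\pi(y),
    \]
    since the omitted summands contribute $0$.
\end{enumerate}

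Both items are essentially bookkeeping, so I do not expect a serious obstacle. The only subtle point is aligning the convention in Definition~\ref{def:omega} (where the minor of $(Z,f)$ under $\pi$ is taken by first restricting $\pi$ to $Z$ and then pushing $f$ forward) with the fact that restricting $\lambda$ to its support does not change its pushforward under $\pi$. Once this is set up cleanly, naturality drops out and $\xi$ is the desired minion homomorphism.
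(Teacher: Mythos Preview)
Your proposal is correct and follows essentially the same approach as the paper: both define $\xi(\lambda) = (\supp\lambda,\, \lambda|_{\supp\lambda})$ and verify naturality by checking that $\supp(\lambda^\pi) = \pi(\supp\lambda)$ and that the restricted pushforward agrees with the pushforward of the restriction. Your write-up is in fact slightly more detailed than the paper's, which simply asserts the second identity on the grounds that both sides describe the distribution of $\pi(x)$ when $x$ is sampled from $\lambda$.
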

\begin{proof}
  We define $\xi$ as
  \[
    \xi(\lambda) = (\supp(\lambda), \lambda|_{\supp(\lambda)})
  \]
  where $\supp(\lambda)$ is the set of all values $x\in X$ that have a non-zero probability in $\lambda$.

  Checking that $\xi$ is a minion homomorphism is straightforward:
  First, observe that $\supp(\lambda^\pi) = \pi(\supp(\lambda))$ since $y$ has a non-zero probability according to $\lambda^\pi$ if only if there is $x$ with a non-zero probability in $\lambda$ such that $\pi(x) = y$.
  Further we have that
  \[
    \lambda^\pi|_{\supp(\lambda^\pi)} = (\lambda|_{\supp(\lambda)})^{\pi|_{\supp(\lambda)}}
  \]
  since both are probability distributions of $\pi(x)$ when $x$ is sampled according to~$\lambda$.
\end{proof}

Let us now prove Theorem~\ref{thm:sa}.

\begin{proof}[Proof of Theorem~\ref{thm:sa}]
  We prove the theorem by the comparison of the two reductions.
  Sherali-Adams accepts if and only if
  \begin{equation}
    \lambda_\conv\sigma_k^\rel A(\rel X) \to \rel Q_\conv.
    \label{eq:sa}
  \end{equation}
  Compare this to when the $k$-consistency reduction to $\CSP(\rel Q_\conv)$ accepts, which is if and only if
  \begin{equation}
    \pi_{\rel Q_\conv}\ac\sigma_k^\rel A(\rel X) \to \rel Q_\conv.
    \label{eq:cons-lp}
  \end{equation}
  By Lemma~\ref{lem:lp-short-code}, the condition \eqref{eq:sa} is equivalent to $\sigma_k^\rel A(\rel X) \to \clo Q_\conv$ which is equivalent to $\ac\sigma_k^\rel A(\rel X) \to \clo Q_\conv$ by Lemmas~\ref{lem:qconv-is-coalg} and \ref{lem:arc-adjoint}. The latter is equivalent to $\lambda_\conv\ac\sigma_k^\rel A(\rel X) \to \rel Q_\conv$ which is equivalent to the condition \eqref{eq:cons-lp} by Lemma~\ref{lem:lp-short-code}.
\end{proof}

Intuitively, the proof claims that the two gadgets are equivalent, and that the step enforcing arc-consistency can be omitted since linear programming can emulate arc-consistency. The latter is a general property of promise CSPs whose polymorphism minion $\clo M$ admits a homomorphism $\clo M \to \omega(\clo M)$, and hence the theorem and its proof can be generalised for several other hierarchies including a hierarchy that interweaves with the Lasserre hierarchy.

\subsection{On tensor hierarchies}
  \label{app:tensors}

Finally, let us briefly comment on a comparison of our hierarchies to those of \citet{CZ23-hierarchies}. These hierarchies are parametrised by a minion rather than a promise CSP, and moreover the paper deals only with single-sorted structures.
Let us therefore for this section fix a template $\rel A, \rel A'$ whose signature contains only one type, and a minion $\clo M$. We restate the original definition of \citet{CZ23-hierarchies} first as stated (without giving all the definition for which we refer to the cited paper), and then using the language of this paper.

\begin{definition}[Ciardo-Živný $k$-th tensor test]
  Let $\clo M$ be a minion, $k \geq 1$. Assuming that $\rel A$ and $\rel X$ are $k$-enhanced structures in the same signature, the $k$-th tensor test of $\clo M$ accepts if $\rel X^{\tensor k} \to \rel F_\clo M(\rel A^{\tensor k})$, and rejects otherwise.
\end{definition}

Instead of going through the definitions of the terms used above, let us paraphrase the test in the language of this paper. First, we will use the structure $\pi_\rel B(\clo M)$ instead of the free structure $\rel F_\clo M(\rel B)$. As we noted before, these structures are isomorphic. Their homomorphic equivalence, which is enough for our use, follows directly from Lemma~\ref{lem:4.3} and \cite[Lemma 4.11]{BBKO21}.
Second, we describe the $k$-th tensor power together with the operation of $k$-enhancing a structure as a Datalog interpretation in the following definition.

\begin{definition}
  Fix $k \geq 1$, and assume that the signature of $\rel A$ is $\Sigma$. We define a new signature $\Sigma_k$ as follows: it has a single type, a relational symbol $R_k$ of arity $m^k$ for each $\Sigma$-symbol $R$ of arity $m$, and a new relational symbol $T$ of arity $k^k$. The domain is defined by the Datalog program $\tau^k_D$ with the rule
  \[
    D(x_1, \dots, x_k) \fro x_1 = x_1, \dots, x_k = x_k.
  \]
  The relations $R_k$ are defined by the Datalog program $\tau^k_{R_k}$ with the rule
  \[
    R_k(x_{e_m(1)}, \dots, x_{e_m(km^k)}) \fro R(x_1, \dots, x_m)
  \]
  where $e_m$ is a function $e_m \colon [k] \times [m]^k \to [m]$ defined by $e(i, f) = f(i)$ and we identify the set $[km^k]$ with $[k] \times [m]^k$ by ordering the latter lexicographically. And finally, $T$ is defined by a program $\tau^k_T$ with the rule
  \[
    T(x_{e_k(1)}, \dots, x_{e_k(k^{k+1})}) \fro x_1 = x_1, \dots, x_k = x_k
  \]
  where $e_k\colon [k^{k+1}] \to [k]$ is as above.
\end{definition}

Note that the width of $\tau^k$ is the maximum of $k$ and $m$ where $m$ ranges through arities of $\Sigma$-symbols. The $k$-th \emph{tensor test} is then defined as follows.

\begin{definition}[$k$-th tensor test]
  The $k$-th tensor test of a minion $\clo M$ accepts an instance $\rel X$ of $\CSP(\rel A)$ if $\tau^k(\rel X) \to \pi_{\tau^k(\rel A)}(\clo M)$, and rejects otherwise.
\end{definition}

Using Lemma~\ref{lem:4.3}, it is easy to observe that the test correctly solves $\PCSP(\rel A, \rel A')$ if and only if $\rho^{\tau^k(\rel A)}\tau^k$ is a valid reduction from $\PCSP(\rel A, \rel A')$ to $\PCSP(\clo P, \clo M)$.
As a simple corollary of Theorem~\ref{thm:canonical-width}, we get that this reduction is covered by the $k$-consistency reduction, assuming that $k \ge m$ where $m$ is the maximal arity of relations of $\rel A$.

\begin{corollary}
  Let $\rel A, \rel A'$ be a promise template, $k \geq m$ where $m$ is the maximal arity of relations in $\rel A$, and $\clo M$ be a minion.
  If $\PCSP(\rel A, \rel A')$ is correctly solved by the $k$-th tensor test of $\clo M$, then it reduces to $\PCSP(\clo P, \clo M)$ by the $k$-consistency reduction.
\end{corollary}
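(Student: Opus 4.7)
The plan is a direct invocation of Theorem~\ref{thm:canonical-width}. The paragraph preceding the corollary already records that if the $k$-th tensor test of $\clo M$ correctly solves $\PCSP(\rel A, \rel A')$, then (by Lemma~\ref{lem:4.3}) the composition $\rho^{\tau^k(\rel A)} \circ \tau^k$ is a valid reduction from $\PCSP(\rel A, \rel A')$ to $\PCSP(\clo P, \clo M)$. Thus it suffices to realise this composition in the form $\gamma \circ \phi$, where $\phi$ is a Datalog interpretation of width at most $k$ and $\gamma$ is a gadget, after which Theorem~\ref{thm:canonical-width} delivers the $k$-consistency reduction directly.

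For the first factor I would read off the width of $\tau^k$ from its three defining rules: $\tau^k_D$ and $\tau^k_T$ each use $k$ variables, while $\tau^k_{R_k}$ uses only the $m$ variables $x_1,\dots,x_m$ (its head merely repeats these under the index function $e_m$). The overall width is therefore $\max(k,m)$, which collapses to $k$ under the hypothesis $k \geq m$. This is the only place where the hypothesis is used, and without it the interpretation could have width as large as $m$, pushing the conclusion to a higher level of the consistency hierarchy.

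For the second factor I would recast $\rho^{\tau^k(\rel A)}$ as a gadget in the sense of Definition~\ref{def:gadget}, along the lines of the incidence-graph construction in Example~\ref{ex:incidence-graph}: take $\rel D^\gamma_t$ to be a single element of the corresponding base type for each original type $t$; for each symbol $R_k$ of arity $\ell$ take $\rel R_k^\gamma$ to consist of one fresh element of the label-cover type $R^{\rel A}$ together with $\ell$ ``fingers'' of the appropriate base types, glued to the fresh element by the projection relations $E_{\pi_i}$ of the label cover signature; and let $p^\gamma_{R_k,i}$ map the unique element of the $i$-th base gadget to the $i$-th finger. A routine check confirms that $\gamma(\rel X)$ and $\rho^{\tau^k(\rel A)}(\rel X)$ coincide on every input. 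With both factorisations in hand, Theorem~\ref{thm:canonical-width} completes the proof; the entire argument is essentially bookkeeping, and there is no substantive mathematical obstacle beyond the width check.
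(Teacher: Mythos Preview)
Your proposal is correct and follows essentially the same route as the paper: both decompose the reduction as $\rho^{\tau^k(\rel A)} \circ \tau^k$, note that $\tau^k$ is a Datalog interpretation of width $k$ (using $k \geq m$) and that $\rho^{\tau^k(\rel A)}$ is a gadget replacement, and then invoke Theorem~\ref{thm:canonical-width}. Your write-up is more explicit about the width computation and the gadget construction for $\rho^{\tau^k(\rel A)}$, while the paper simply asserts these two facts, but the argument is the same.
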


\begin{proof}
  The tensor test accepts if and only if
  $
    \rho^{\tau^k(\rel A)}\tau^k(\rel X) \to \clo M.
  $
  We also have that if $\rel X \to \rel A$, then $\tau^k(\rel X) \to \tau^k(\rel A)$ and
  $
    \rho^{\tau^k(\rel A)}\tau^k(\rel X) \to \clo P.
  $
  This means that if the test correctly solves the promise CSP, then $\rho^{\tau^k(\rel A)}\circ \tau^k$ is a valid reduction from $\PCSP(\rel A, \rel A')$ to $\PCSP(\clo P, \clo M)$.
  Since $\rho^{\tau^k(\rel A)}$ is expressible as a gadget replacement and $\tau^k$ is a Datalog interpretation of width $k$, we get the required by Theorem~\ref{thm:canonical-width}.
\end{proof}

The converse can be proved under the assumption that $\clo M \to \omega(\clo M)$ using the method of the previous subsection combined with a direct generalisation of an argument in \cite{CZ23-hierarchies} which shows that if a minion is \emph{conic} then the $k$-th tensor test is able to find local homomorphisms on subsets of sizes at most $k$.

\bibliographystyle{ACM-Reference-Format}

\end{document}